\renewcommand{\bar}[1]{\mkern 1mu\overline{\mkern-1mu#1\mkern-1mu}\mkern 1mu}
\newtheorem{theorem}{Theorem}
\newtheorem{lemma}[theorem]{Lemma}
\newtheorem{corollary}[theorem]{Corollary}
\newtheorem{claim}[theorem]{Claim}
\newtheorem{observation}[theorem]{Observation}
\newtheorem{fact}[theorem]{Fact}
\newtheorem{definition}[theorem]{Definition}
\newtheorem{proposition}[theorem]{Proposition}
\crefname{observation}{Observation}{Observations}
\newenvironment{customthm}[1]
{\innercustomthm}
{\endinnercustomthm}
\newenvironment{customlem}[1]
{\innercustomlem}
{\endinnercustomlem}
\newcounter{theo}[section] \setcounter{theo}{0}
\newcommand{\ogood}{\tilde O_{\epsilon,\kappa}}
\newcommand{\opt}{\mathsf{opt}}
\renewcommand{\tilde}{\widetilde}
\newcommand{\Ohtilde}{\tilde{O}}
\newcommand{\R}{\mathbb{R}}
\newcommand{\dd}{\mathinner{.\,.}}
\newcommand{\dist}{\mathrm{dist}}
\newcommand{\sub}{\subseteq}
\newcommand{\sm}{\setminus}
\newcommand{\AG}{\mathsf{AG}}
\newcommand{\SAG}{\mathsf{SAG}}
\newcommand{\Oh}{O}
\newcommand{\LIS}{\mathsf{LIS}}
\newcommand{\tpi}{\tilde{\pi}}
\newcommand{\tx}{\tilde{x}}
\newcommand{\bx}{\bar{x}}
\newcommand{\by}{\bar{y}}
\newcommand{\bp}{\bar{p}}
\newcommand{\ty}{\tilde{y}}
\newcommand{\tS}{\tilde{S}}
\newcommand{\tG}{\tilde{G}}
\newcommand{\tsigma}{\tilde{\sigma}}
\newcommand{\row}{\mathsf{xi}}
\newcommand{\col}{\mathsf{yi}}
\newcommand{\Lft}{\mathcal{L}}
\newcommand{\Rgt}{\mathcal{R}}
\newcommand{\Zz}{\mathbb{Z}_{\ge 0}}
\newcommand{\dx}{\mathsf{dx}_{\bar{x}}}
\newcommand{\dl}{\mathsf{d}_{X,Y}}
\newcommand{\rank}{\mathsf{rk}}
\newcommand{\Gr}{\mathcal{G}}
\newcommand{\Diag}{\mathcal{D}}
\begin{document}
	
	\title{Improved Dynamic Algorithms for Longest Increasing Subsequence}
	\author{Tomasz Kociumaka\\
		University of California, Berkeley\\
		\texttt{kociumaka@berkeley.edu}
		\and
		Saeed Seddighin\footnote{Supported in part by an Adobe research award and a Google research gift.}\\
		Toyota Technological Institute at Chicago\\
		\texttt{saeedreza.seddighin@gmail.com}
	}
	\date{}
	\definecolor{printable}{RGB}{43,70,120}
	\maketitle

	\begin{abstract}
		\setcounter{page}{0}
We study dynamic algorithms for the longest increasing subsequence (\textsf{LIS}) problem.
A dynamic \textsf{LIS} algorithm maintains a sequence subject to operations of the following form arriving one by one:
(i) insert an element, (ii) delete an element, or (iii) substitute an element for another. 
After performing each operation, the algorithm must report the length of the longest increasing subsequence
of the current sequence.

Our main contribution is the first exact dynamic \textsf{LIS} algorithm with sublinear update time. More precisely, we present a randomized algorithm that performs each operation in time $\tilde O(n^{2/3})$ and after each update, reports the answer to the \textsf{LIS} problem correctly with high probability.
We use several novel techniques and observations for this algorithm that may find their applications in future work.

In the second part of the paper, we study approximate dynamic \textsf{LIS} algorithms, which are allowed to underestimate the solution size within a bounded multiplicative factor.
In this setting, we give a deterministic algorithm with update time $O(n^{o(1)})$ and approximation factor $1-o(1)$. This result substantially improves upon the previous work of Mitzenmacher and Seddighin (STOC'20) that presents an $\Omega(\epsilon ^{O(1/\epsilon)})$-approximation algorithm with update time $\tilde O(n^\epsilon)$ for any constant $\epsilon > 0$.

\thispagestyle{empty}
\newpage
	\end{abstract}
	
	\section{Introduction}\label{sec:intro}
Longest increasing subsequence (\textsf{LIS}) is a very old and classic problem in computer science. In this problem, a sequence $a = \langle a_1, a_2, \ldots, a_n\rangle$ of size $n$ is given as input and the \textsf{LIS} of the sequence is defined as the largest subset of the elements whose values are strictly increasing in the order of their indices.  \textsf{LIS} can also be thought of as a special case of the longest common subsequence (\textsf{LCS}) problem where the two inputs are permutations. Both problems date back to the 1950s and have been subject to a plethora of research~\cite{fredman1975computing,ramanan1997tight,DBLP:conf/soda/GopalanJKK07,DBLP:conf/focs/GalG07,DBLP:conf/stoc/ErgunKKRV98, DBLP:conf/random/DodisGLRRS99,DBLP:journals/eatcs/Fischer01,DBLP:conf/approx/AilonCCL04} especially in recent years~\cite{hajiaghayi2019approximating,saeedfocs19,our-stoc-paper,our-soda-paper}.

In this work, we focus on exact and approximation algorithms in the dynamic setting, where at each step, the sequence can be updated by inserting, deleting, or substituting an element.  The goal is to maintain the size of the longest increasing subsequence.
Many problems have been studied in dynamic settings; see e.g.~\cite{DBLP:conf/stoc/HenzingerKNS15,DBLP:conf/stoc/NanongkaiS17,gawrychowski2018optimal,DBLP:conf/stoc/AssadiOSS18,DBLP:conf/soda/AssadiOSS19,chen2013dynamic,DBLP:conf/stoc/LackiOPSZ15,DBLP:journals/corr/abs-1909-03478,DBLP:conf/focs/NanongkaiSW17}.
In general, in a dynamic setting, the goal is to develop an algorithm which updates the solution efficiently given incremental changes to the input.
In the context of graph algorithms~\cite{DBLP:conf/stoc/NanongkaiS17,DBLP:conf/focs/NanongkaiSW17,DBLP:conf/stoc/LackiOPSZ15,DBLP:conf/stoc/AssadiOSS18,DBLP:conf/soda/AssadiOSS19,DBLP:journals/corr/abs-1909-03478}, such changes are usually modeled by edge insertion or deletion. For string problems, changes are typically modeled with character insertion, deletion, and substitution~\cite{chen2013dynamic,CGP20,CKM20}, as we consider here. 

Our work is closely related to two previous works on dynamic \textsf{LIS}. (In what follows, when we refer to a solution, we typically refer to the size of the \textsf{LIS}, but also the corresponding increasing subsequence can be found in time proportional to its size.) Mitzenmacher and Seddighin~\cite{our-stoc-paper} give an $\Omega(\epsilon ^{O(1/\epsilon)})$-approximation algorithm for dynamic \textsf{LIS} whose update time is bounded by $\tilde O(n^\epsilon)$ for any constant $\epsilon > 0$. 
They also present a $(1+\epsilon)$-approximation algorithm for the dynamic variant of distance to monotonicity which is the dual of \textsf{LIS}. Their solution for distance to monotonicity requires polylogarithmic update time. 
Chen, Chu, and Pinkser~\cite{chen2013dynamic} also study dynamic \textsf{LIS} and give an exact solution whose update time depends on the size of the solution. More precisely, if we denote the solution size by $\opt$, then their algorithm requires update time $O(\opt \log \frac{n}{\opt})$. Notice that $\opt$ can be as large as $\Omega(n)$ and therefore their update time is $\Theta(n)$ in the worst case.

\definecolor{LightCyan}{rgb}{0.88,1,1}
\begin{table}[b!]
    \renewcommand{\arraystretch}{1.2}
	\centering
	\begin{tabular}{|l|c|c|c|}
		\hline
		Approximation factor & Update time & Reference\\
		\hline
		$1-\epsilon$ & $\tilde O(\sqrt{n})$ & \cite{our-stoc-paper}\\
		\hline	
        $\Omega(\epsilon^{O(1/\epsilon)})$ & $\tilde O(n^{\epsilon})$ & \cite{our-stoc-paper}\\
        \hline 
        exact & $O(\opt \log \frac{n}{\opt})$ & \cite{chen2013dynamic}\\
        \hline 
        exact & $O(n^{2/3} \log^4 n)$& \cref{thm:exact2}\\
        \hline
        $1-o(1)$ & $O(n^{o(1)})$ & \cref{theorem:third}\\
        \hline
	\end{tabular}
\caption{The results of this paper along with previous work on dynamic \textsf{LIS}.
Here, $\opt$ denotes the size of the longest increasing subsequence.
In the algorithm of \cref{theorem:third}, if we denote the approximation factor by $1-\epsilon$, the update time will be bounded by $O((\log n/\epsilon)^{O((\log n)^{2/3}/\epsilon)})$.}
\end{table}

We provide the first exact algorithm for \textsf{LIS} with sublinear update time regardless of the solution size. In other words, we give an algorithm that reports the correct value of the size of the longest increasing subsequence with high probability (at least $1-n^{-10}$) and, after each operation, updates the solution in time $O(n^{2/3} \log^4 n)$.
In addition to this, we also present a $(1-o(1))$-approximation algorithm for dynamic \textsf{LIS} that updates the solution in time $O(n^{o(1)})$.
Our method substantially advances the techniques of~\cite{our-stoc-paper} and significantly improves their main results. 

\subsection{Parallel and Independent Work}
Parallel to and independent of this work, Gawrychowski and Janczewski~\cite{gawrychowski2020fully} present a different $1-o(1)$ approximation algorithm for dynamic \textsf{LIS} that updates the solution in subpolynomial time. While the approximation factors of both algorithms are $1-o(1)$, their algorithm is faster: it achieves update time $O(\epsilon^{-5}\log^{11} n)$ while maintaining a $(1-\epsilon)$-approximation of \textsf{LIS}.
To derive a solution, both approximation algorithms generalize the dynamic \textsf{LIS} problem; our algorithm is able to answer more general queries, though.
We also remark that~\cite{gawrychowski2020fully} does not claim any progress towards obtaining an exact sublinear-time solution for dynamic \textsf{LIS} (which is our main result).

\subsection{Related Work}
In addition to previous dynamic algorithms for \textsf{LIS}~\cite{our-stoc-paper,chen2013dynamic}, this problem has received significant attention in other areas such as property testing~\cite{DBLP:conf/stoc/ErgunKKRV98, DBLP:conf/random/DodisGLRRS99,DBLP:journals/eatcs/Fischer01,DBLP:conf/approx/AilonCCL04}, streaming~\cite{DBLP:conf/soda/GopalanJKK07,DBLP:conf/focs/GalG07},
and massively parallel computation (MPC)~\cite{DBLP:conf/stoc/ImMS17}, as well as in the standard algorithmic setting~\cite{fredman1975computing,ramanan1997tight,saeedfocs19,saks2010estimating,our-soda-paper}. 

The classic \textit{patience sorting} solution for \textsf{LIS}  utilizes dynamic programming and binary search to solve \textsf{LIS}
exactly in time $O(n \log n)$.  
Matching lower bounds ($\Omega(n \log n)$) are known for comparison-based algorithms~\cite{fredman1975computing} and solutions based on algebraic decision trees~\cite{ramanan1997tight}.
For approximation algorithms, for any $\epsilon > 0$, a multiplicative $\Omega(n^{-\epsilon})$ approximate solution can be determined in truly sublinear time\footnote{Truly sublinear stands for $O(n^{1-\Omega(1)})$.} via random sampling\footnote{For an $\Omega(n^{-\epsilon})$-approximation algorithm, one can sample $O(n^{1-\epsilon})$ elements from the sequence and report the \textsf{LIS} of those samples.}. 
Surprisingly, not much is known that improves upon this algorithm generally, although when $n/\mathsf{LIS}(a)$ is subpolynomial in $n$, we can obtain better approximation guarantees  for \textsf{LIS}~\cite{saks2010estimating,saeedfocs19,our-soda-paper}.

From a complexity point of view, unconditional lower bounds apply to sublinear-time algorithms for \textsf{LIS}. For instance, any algorithm that obtains a $1/f(n)$-approximate solution for \textsf{LIS} has to make at least $n/(f(n)+1)$ value queries\footnote{A value query provides an $i$ as input and asks for the value of $a_i$.} to the elements of $a$ to distinguish the case that $a$ is decreasing from the case that $a$ has an increasing subsequence of length at least $f(n)+1$. Thus a subpolynomial-factor approximation algorithm for \textsf{LIS} in truly sublinear time is not possible in general. 
In contrast, positive results are given in previous work for a special case in which the solution size is at least $\lambda n$ for large enough $0 < \lambda \leq 1$. (Known query complexity lower bounds do not apply to this setting.) Saks and Seshadhri~\cite{saks2010estimating} present a $(1-\epsilon)$-approximation algorithm  for \textsf{LIS} in this case. The runtime of their algorithm is sublinear as long as $\lambda > \log \log n / \log n$. Moreover,  Rubinstein,
Seddighin, Song, and Sun~\cite{saeedfocs19} give an $\Omega(\lambda^3)$-approximation algorithm for this case in time $\tilde O(\sqrt{n}/\lambda^7)$. Very recently, Mitzenmacher and Seddighin~\cite{our-soda-paper}  improve the approximation factor to $\Omega(\lambda^\epsilon)$ for any constant $\epsilon > 0$ while keeping the runtime truly sublinear.

From a technical standpoint, our approximation algorithm is related to the recent results of~\cite{our-stoc-paper} and~\cite{our-soda-paper} that approximate \textsf{LIS} in the dynamic and standard settings, respectively.
Both these works use the grid packing technique for \textsf{LIS} to design their algorithms. We generalize this notion and prove that the generalized grid packing gives improved dynamic algorithms for \textsf{LIS}.
	\subsection{Preliminaries}
\textsf{LIS} is defined on a sequence of numbers. We assume for simplicity that all of the numbers are distinct and positive integers. In this problem, the goal is to find the length of the largest subsequence of elements such that their values increase according to their indices. We denote the size of the sequence by $n$ and use $a_1, \ldots, a_n$ to denote the sequence elements. We also give an alternative definition for the problem which represents the input as $n$ points on the 2D plane. In this representation, we have $n$ points on the plane with distinct coordinates. Similarly, we assume for simplicity that all the coordinates are positive integers. For any subset of points, its \textsf{LIS} is defined as the largest number of points such that if we sort them based on their $x$ or $y$ coordinates, we obtain the same ordering. 

We adapt the setting of~\cite{our-stoc-paper} for the dynamic \textsf{LIS} problem. Initially, the input sequence is empty ($|a| = 0$). At each step, an element can either be inserted at an arbitrary position of the sequence or removed from an arbitrary position of the sequence. (Element substitution can also be implemented with the previous two operations, so we consider only insertions and deletions.) 

We now define the operations more formally. Each insertion operation is of the form ``\textsf{insert $(i,x)$}" where $i$ is an integer between $1$ and the length of the current sequence plus one. The index $i$ specifies the position of element $x$. After this operation, all the elements whose previous index was at least $i$ will be shifted to the right. Similarly, an operation ``\textsf{delete $(i)$}" removes the $i$'th element. Likewise, all the elements whose previous index was at least $i$ will be shifted to the left. As discussed in previous work~\cite{our-stoc-paper}, one can use a balanced tree data structure that provides access to any element of the sequence in time $O(\log n)$. Thus, in the analysis of our time bounds for dynamic \textsf{LIS}, we consider an additional multiplicative $O(\log n)$ overhead for using this data structure and assume that random access is provided to any element.

To simplify the explanations, we often use the notation $\tilde O$ that hides the logarithmic factors. When other parameters such as $\epsilon$ or $\kappa$ are involved, we may use $\tilde O_{\epsilon}$ or $\tilde O_{\kappa}$ notations that also hide factors that only depend on $\epsilon$ or $\kappa$. Similarly, $\ogood$ hides all the factors that depend on $\epsilon$, $\kappa$, or (at most polynomially) on $\log n$.

	\section{Our Results and Techniques}
In this section, we present our results and techniques.
Our main contribution is a dynamic algorithm for \textsf{LIS} that maintains an exact solution with update time $\tilde O(n^{2/3})$.
We explain the high-level ideas behind this algorithm in Section~\ref{sec:results-exact}.
We then proceed by bringing the ideas behind our $(1-o(1))$-approximation algorithm in Section~\ref{sec:results-approx}. 
For both our dynamic algorithms, we use the notion of block-based algorithms~\cite{our-stoc-paper} to simplify the exposition.
This enables us to include preprocessing steps which may violate the worst-case update times.
Yet, a block-based algorithm can be turned into a dynamic algorithm whose worst-case update time is asymptotically equal to the block-based algorithm's amortized update time.
More precisely, a block-based algorithm starts with an array $a$ of size $n$. 
It is allowed to preprocess the array in time $f(n)$.
After the preprocessing step, the algorithm is required to execute $g(n)$ operations, each in worst-case time $h(n)$. 
After $g(n)$ operations, the block-based algorithm terminates.
It follows from~\cite{our-stoc-paper} that such an algorithm can be used to design a dynamic algorithm for \textsf{LIS} with worst-case update time $O(f(n)/g(n) + h(n))$.

\subsection{An Exact Algorithm with Sublinear Update Time}\label{sec:results-exact}
Our main result is an exact dynamic algorithm for \textsf{LIS} with sublinear update time. For this algorithm, we use several combinatorial techniques which we explain in the following. 
The first idea is a randomized coloring argument which lets us decompose the problem into smaller subproblems.
Recall that the algorithm of Chen, Chu, and Pinsker~\cite{chen2013dynamic} provides the exact \textsf{LIS} value with update time $O(\opt \log \frac{n}{\opt})$, where $\opt$ is the solution size. Thus, as long as $\opt$ is sublinear in $n$, their algorithm updates the solution in sublinear time. Intuitively, this signals that the real difficulty of the problem is for the case where the \textsf{LIS} is very large, namely, of size $\tilde \Omega(n)$.

We define $b_i$ as the size of the longest increasing subsequence ending at element $a_i$ and divide the sequence into disjoint layers. More precisely, let each layer $L_i = \{a_j \mid b_j = i\}$ be the set of elements whose corresponding $b_j$ is equal to $i$.
Indeed, the number of distinct layers is equal to the size of the solution and therefore, when the solution size is large, we expect that the size of the layers is small on average.
As an example, if the \textsf{LIS} of the sequence is of size $\Omega(n)$, then we expect the average size of the layers to be $O(1)$. The following observation enables us to decompose the \textsf{LIS} problem into smaller subproblems that can be updated independently: Let $\opt$ be the number of layers for a sequence. If, for some integer $t$, we color $t$  layers of the sequence uniformly at random and perform $\opt / (10t)$ arbitrary operations on the sequence, with probability at least $1/2$ there exists a longest increasing subsequence in the new sequence that has exactly $t$ colored elements.

A formal proof for the above claim is given in Section~\ref{sec:exact}. Since the element values are decreasing in each layer of the original sequence, no layer can contribute more than one element to any increasing subsequence. Moreover, since the solution size for the original sequence is $\opt$, after $\opt / (10t)$ operations, the size of the \textsf{LIS} changes by at most an additive $\opt / (10t)$ term. Therefore, if we fix a longest increasing subsequence in the new array (after all the operations are performed) and denote its size by $\opt'$, at least $\opt - \opt / (10t)$ layers of the original sequence contribute to $\opt'$.
Since we color $t$ layers uniformly at random, this proves that, with probability at least $1/2$, all of the colored layers contribute to $\opt'$.

Let us go back to our previous discussion. 
In a block-based algorithm, we can in time $O(n \log n)$ construct the layers and sample $t$ layers uniformly at random. 
For the next $g(n) = \opt/(10 t)$ operations, we can be sure that with probability at least $1/2$ we have a longest increasing subsequence that goes through all sampled layers. 
For now, we ignore the bad event and assume for simplicity that this property always holds. Thus, we only need to keep partial solutions between consecutive sampled layers. 
Since our aim is to design a dynamic algorithm for the case that the size of the longest increasing subsequence is large, we expect that the layer sizes are small.
Hence, assume that for every pair of elements $a_i$ and $a_j$ that belong to two consecutive sampled layers we are given the size of the longest increasing subsequence starting from $a_i$ and ending at $a_j$, and for every element of the first sampled layer we have the size of the longest increasing subsequence ending at that element. 
Similarly, assume that for each element of the last sampled layer, the size of the longest increasing subsequence starting from that element is available.
It follows that based on this information, we can recover the \textsf{LIS} of the whole sequence (ignoring the bad event). In the extreme case that the solution size $\opt$ is $\Omega(n)$, we expect the layer sizes to be $O(1)$ on average, which makes the total size of the information to be stored small.

The benefit of the above approach is obvious for the dynamic setting: If we only care about local solutions between consecutive sampled layers, whenever an operation is performed, we only need to update the local solutions. For this purpose, we only consider elements located between the two consecutive sampled layers. We remark that the positions of such elements in the sequence does not necessarily form an interval. On average, the expected number of such elements is $O(n/t)$. We bring an example to clarify the advantage of this approach: Assume that the \textsf{LIS} of the original sequence is equal to $\Omega(n)$ and the size of each layer is bounded by $O(1)$. If we set $t = \sqrt{n}$ and sample $t$ layers uniformly at random, we expect to have a solution that goes through all the sampled layers with constant probability for up to $\sqrt{n}/10$ steps. Thus, we set $g(n) = \sqrt{n}/10$ and assume that our block-based algorithm is only responsible for performing $g(n)$ operations. For simplicity, we ignore the bad event that a solution may not go through all the sampled layers. 
Since the size of each layer is $O(1)$, every time an operation arrives, we need to update the solution for at most $O(1)$ pairs of elements, and this can be done in time $\tilde O(\sqrt{n})$ since with high probability there are at most $\tilde O(\sqrt{n})$ layers between two consecutive sampled layers (each having $O(1)$ elements).
Thus, the update time is $\tilde O(\sqrt{n})$ and once the solutions between consecutive sampled layers are provided, we can find the longest increasing subsequence of the entire sequence in time $\tilde O(\sqrt{n})$.
Therefore, with preprocessing time $f(n) = \tilde O(n)$ and $g(n) = \Omega(\sqrt{n})$, we can update the solution in time $h(n) = \tilde O(\sqrt{n})$, which leads to a dynamic algorithm with update time $\tilde O(\sqrt{n})$.

\begin{figure}[ht]

\centering

\tikzset{every picture/.style={line width=0.75pt}} 

\begin{tikzpicture}[x=0.75pt,y=0.75pt,yscale=-0.6,xscale=0.6]

\draw   (310.5,47) -- (339.5,47) -- (339.5,334) -- (310.5,334) -- cycle ;
\draw   (270.5,171) -- (299.5,171) -- (299.5,334) -- (270.5,334) -- cycle ;
\draw   (230.5,171) -- (259.5,171) -- (259.5,334) -- (230.5,334) -- cycle ;
\draw   (190.5,247) -- (219.5,247) -- (219.5,334) -- (190.5,334) -- cycle ;
\draw   (150.5,248) -- (179.5,248) -- (179.5,334) -- (150.5,334) -- cycle ;
\draw   (110.5,247) -- (139.5,247) -- (139.5,334) -- (110.5,334) -- cycle ;
\draw   (70.5,247) -- (99.5,247) -- (99.5,334) -- (70.5,334) -- cycle ;
\draw   (390.5,171) -- (419.5,171) -- (419.5,334) -- (390.5,334) -- cycle ;
\draw   (350.5,171) -- (379.5,171) -- (379.5,334) -- (350.5,334) -- cycle ;
\draw   (550.5,246) -- (579.5,246) -- (579.5,334) -- (550.5,334) -- cycle ;
\draw   (510.5,247) -- (539.5,247) -- (539.5,334) -- (510.5,334) -- cycle ;
\draw   (470.5,246) -- (499.5,246) -- (499.5,334) -- (470.5,334) -- cycle ;
\draw   (430.5,246) -- (459.5,246) -- (459.5,334) -- (430.5,334) -- cycle ;

\draw (296,6.4) node [anchor=north west][inner sep=0.75pt]    {$\frac{n}{2 \log n}$};
\draw (238,130.4) node [anchor=north west][inner sep=0.75pt]    {$\frac{n}{4 \log n}$};
\draw (113,210.4) node [anchor=north west][inner sep=0.75pt]    {$\frac{n}{8 \log n}$};
\draw (-9,307.4) node [anchor=north west][inner sep=0.75pt]  [font=\Huge]  {$\dotsc $};
\draw (593,307.4) node [anchor=north west][inner sep=0.75pt]  [font=\Huge]  {$\dotsc $};
\draw (358,130.4) node [anchor=north west][inner sep=0.75pt]    {$\frac{n}{4 \log n}$};
\draw (473,210.4) node [anchor=north west][inner sep=0.75pt]    {$\frac{n}{8 \log n}$};

\end{tikzpicture}

\caption{The sizes of the layers in an example are shown in the figure. The middle layer has the largest size and the sizes decrease exponentially as the layers move to the sides.} \label{fig:bad-layers}
\end{figure}
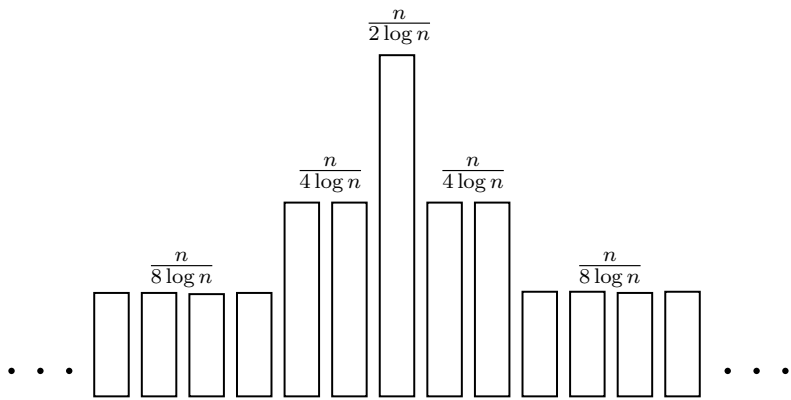

Roughly speaking, if we only aim to obtain a dynamic algorithm with sublinear update time, the assumption that the solution size is $\tilde\Omega(n)$ does not overly simplify the problem since otherwise we can use the algorithm of Chen, Chu, and Pinsker~\cite{chen2013dynamic} whose update time is sublinear in that case.
The probability $1/2$ that a solution may fail to go through all the sampled layers (failure probability) can be conveniently  reduced to $1-n^{-\Omega(1)}$ via standard techniques at the price of an $O(\log n)$ multiplicative overhead to the runtimes. 
However, one of the assumptions above does, in fact, oversimplify the problem and may give rise to fundamental issues in general. 
Even if we assume the solution size is $\tilde \Omega(n)$, we can only guarantee that the size of the layers is small on average and not in the worst case. To further clarify this issue, consider the example given in Figure~\ref{fig:bad-layers}.

The example of Figure~\ref{fig:bad-layers} shows why the above idea alone does not provide a sublinear-time algorithm. Although the size of the \textsf{LIS} is $\tilde \Omega(n)$ and thus the average layer size is $\tilde O(1)$, the two sampled layers that sandwich the middle layer do not yield an easier subproblem. Either the size of one of the sampled layers is large or the  number of layers included between them is large. More precisely, the size of the larger sampled layer multiplied by the number of layers between them is $\tilde \Omega(n)$. Therefore, regardless of whether we naively use the  patience sorting algorithm to update the local solutions or we use the more sophisticated algorithm of Chen, Chu, and Pinsker~\cite{chen2013dynamic}, the time required to update the local solutions for such sampled layers is $\tilde \Omega(n)$.

\input{figs/baskets2}

To resolve this issue, we devise a heavy-light decomposition technique to deal with different subproblems. In our algorithm, we make a basket out of consecutive layers that end with a sampled layer. Thus, we refer to each of the sampled layers as a \textit{boundary layer}. For simplicity, we omit some of the details of our algorithm here and only state the overall ideas. (For instance, in our algorithm we do not sample the layers completely at random to make the boundary layers.) Each basket then becomes one subproblem.
We define a basket to be \textit{light} if the total number of elements included in it as well as the size its boundary layer are both small. In other words, for each of the light baskets, we can store and update the local solutions in sublinear time.
A basket is \textit{heavy} if either its boundary size is large or it contains a large number of elements. In the example of Figure~\ref{fig:bad-layers}, the basket that contains the middle layer is a heavy basket. For such baskets, we do not store local solutions but, instead, we maintain global information: for each element of a heavy basket, we store the size of the longest increasing subsequence of the entire sequence ending that element.

Because the solutions for heavy baskets are not local, once an operation arrives, after updating the local solution for the corresponding basket (if it is light), we need to propagate the effect of the change. That is, if the local solution for a light basket is affected by a modification, we need to update our solution for all the heavy baskets. To keep the update time sublinear, we make one more observation: It is possible to update the solution for a heavy basket in time proportional to the number of layers included in it. For this purpose, we use the ideas of the work by Chen, Chu, and Pinsker~\cite{chen2013dynamic} that design balanced trees to obtain a dynamic solution for \textsf{LIS} with update time proportional to the solution size.
In the interest of space, here we omit several details of our algorithm; we prove in Section~\ref{sec:exact} that the combination of these ideas gives us an exact dynamic algorithm for \textsf{LIS} with update time $\tilde O(n^{4/5})$.

\begin{customthm}{\ref{theorem:exact}}[restated informally]
	There exists a randomized algorithm for dynamic \textsf{LIS} that has update time $\tilde O(n^{4/5})$ and maintains the value of \textsf{LIS} correctly with probability at least $1-n^{-10}$ at each step.
\end{customthm}

One technical difficulty that arises in the algorithm of Theorem~\ref{theorem:exact} is modifying the global information stored within heavy baskets. Although we prove that, with a desirable probability, the longest increasing subsequence of the whole array contains an element from each of the sampled layers, this does not hold for \textbf{all} increasing subsequences. Thus, it is likely that, for some element $a_i$ in a heavy basket, our algorithm maintains an incorrect value for the longest increasing subsequence ending at $a_i$. We discuss this in Section~\ref{sec:exact} and explain how to overcome the issue.

In \cref{sec:improved}, we further improve the update time of our dynamic algorithm down to $\tilde O(n^{2/3})$ using advanced methods based on efficient algorithms for handling Monge and unit-Monge matrices. The high-level structure of the algorithm is similar to what is explained before, but the light baskets are processed more efficiently.
\begin{customthm}{\ref{thm:exact2}}[restated informally]
	There exists a randomized algorithm for dynamic \textsf{LIS} that has update time $\tilde O(n^{2/3})$ and maintains the value of \textsf{LIS} correctly with probability at least $1-n^{-10}$ at each step.
\end{customthm}

\subsection{$(1-o(1))$-Approximation Algorithm with $O(n^{o(1)})$ Update Time}\label{sec:results-approx}
Our $(1-o(1))$-approximation algorithm for dynamic \textsf{LIS} is based on the notion of grid packing introduced by Mitzenmacher and Seddighin~\cite{our-stoc-paper}. However, as we discuss later in this section, a constant factor loss in the original grid packing technique is inevitable. We address this issue by introducing the extended variant of grid packing, which is the basis of our $(1-o(1))$-approximation algorithm. We explain this in Section~\ref{sec:results-previous}. 
The generalization is natural and inspired by previous work on longest increasing subsequence~\cite{DBLP:conf/stoc/ImMS17}. 
However, the more novel and technically challenging component of our algorithm is the application of extended grid packing to dynamic \textsf{LIS}. 
Since previous applications of grid packing are based on a bound that cannot be guaranteed for a $(1-o(1))$-approximate solution of extended grid packing, we design a completely different approach for applying extended grid packing to dynamic \textsf{LIS}. We elaborate more on this in Section~\ref{sec:results-approach}. In what follows, we denote the approximation factor of our algorithm by $1-\epsilon$, but we allow subconstant $\epsilon$.

\subsubsection{Background: Grid Packing}\label{sec:results-previous}
Grid packing is related to the notion of window-compatible solutions proposed by Boroujeni, Ehsani, Ghodsi, HajiAghayi, and Seddighin~\cite{boroujeni2018approximating} for approximating edit distance within a constant factor. The problem can be thought of as a game between us and an adversary. In this problem, we have a table of size $m \times m$. Our goal is to introduce a number of segments on the table. 
Each segment covers a consecutive set of cells either in a row or in a column.
A segment $A$ \textit{precedes} a segment $B$ if \textbf{every} cell of $A$ is strictly higher than every cell of $B$ and also \textbf{every} cell of $A$ is strictly to the right of every cell of $B$.
Two segments are \textit{non-conflicting} if one of them precedes the other one. 
Otherwise, we call them \textit{conflicting}.
The segments we introduce can overlap, and there is no restriction on the number of segments or the length of each segment.
However, we would like to minimize the maximum number of segments that cover each cell. 

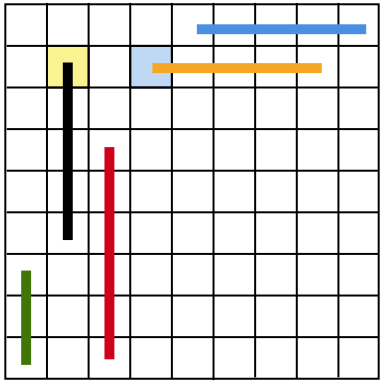
\begin{figure}[ht]

\centering

\tikzset{every picture/.style={line width=0.75pt}} 

\begin{tikzpicture}[x=0.75pt,y=0.75pt,yscale=-0.7,xscale=0.7]

\draw   (181,11) -- (450,11) -- (450,281) -- (181,281) -- cycle ;
\draw    (301,10) -- (301,281) ;

\draw    (331,10) -- (331,282) ;

\draw    (361,10) -- (361,280) ;

\draw    (211,10) -- (211,281) ;

\draw    (241,10) -- (241,281) ;

\draw    (271,10) -- (271,281) ;

\draw    (391,10) -- (391,281) ;

\draw    (421,10) -- (421,280) ;

\draw    (182,41) -- (450,41) ;

\draw    (182,71) -- (450,71) ;

\draw    (182,101) -- (450,101) ;

\draw    (182,131) -- (450,131) ;

\draw    (182,161) -- (450,161) ;

\draw    (182,191) -- (450,191) ;

\draw    (182,221) -- (450,221) ;

\draw    (182,251) -- (450,251) ;

\draw [color={rgb, 255:red, 208; green, 2; blue, 27 }  ,draw opacity=1 ][line width=3.75]    (256,114) -- (256,267) ;

\draw [color={rgb, 255:red, 74; green, 144; blue, 226 }  ,draw opacity=1 ][line width=3.75]    (319,29) -- (441,29) ;

\draw [color={rgb, 255:red, 65; green, 117; blue, 5 }  ,draw opacity=1 ][line width=3.75]    (196,203) -- (196,271) ;

\draw  [color={rgb, 255:red, 0; green, 0; blue, 0 }  ,draw opacity=1 ][fill={rgb, 255:red, 248; green, 231; blue, 28 }  ,fill opacity=0.48 ] (211,41) -- (241,41) -- (241,71) -- (211,71) -- cycle ;
\draw [line width=3.75]    (226,53) -- (226,181) ;

\draw  [color={rgb, 255:red, 0; green, 0; blue, 0 }  ,draw opacity=1 ][fill={rgb, 255:red, 74; green, 144; blue, 226 }  ,fill opacity=0.35 ] (271,41) -- (301,41) -- (301,71) -- (271,71) -- cycle ;
\draw [color={rgb, 255:red, 245; green, 166; blue, 35 }  ,draw opacity=1 ][line width=3.75]    (287,57) -- (409,57) ;

\end{tikzpicture}
\caption{Segments are shown on the grid. The pair (black, orange) is conflicting since the yellow cell (covered by the black segment) is on the same row as the blue cell (covered by the orange segment). The following pairs are non-conflicting: (green, black), (green, orange), (green, blue), (red, orange), (red, blue), (black, blue).} \label{fig:crossing}
\end{figure}

After we choose the segments, an adversary puts a non-negative number on each cell of the table. The score of a subset of cells of the table would be the sum of their values and the overall score of the table is the maximum score of a path of length $2m-1$ from the bottom-left corner to the top-right corner. In such a path, we always either move up or to the right.

The score of a segment is the sum of the numbers on the cells it covers. We obtain the maximum sum of the scores of a non-conflicting set of segments.  The score of the table is an upper bound on the score of any set of non-conflicting segments. We would like to choose segments so that the ratio of the score of the table and our score is bounded by a constant, no matter how the adversary puts the numbers on the table. More precisely, we call a solution $(\alpha,\beta)$-approximate, if at most $\alpha$ segments cover each cell and it guarantees a $1/\beta$ fraction of the score of the table for us for any assignment of numbers to the table cells.

Mitzenmacher and Seddighin~\cite{our-stoc-paper} prove the following theorem: For any $m \times m$ table and any $0 < \kappa < 1$, there exists a grid packing solution with  guarantee $(O_{\kappa}(m^\kappa \log m),O(1/\kappa))$. That is, each cell is covered by at most $O_{\kappa}(m^\kappa \log m)$ segments and the ratio of the table's score over our score is bounded by $O(1/\kappa)$ in the worst case. 

\begin{theorem}[from~\cite{our-stoc-paper}]\label{theorem:grid packing} For any $0 < \kappa < 1$, the grid packing
	problem on an $m \times m$ table admits an $(O_{\kappa}(m^\kappa \log m),O(1/\kappa))$-approximate solution.
\end{theorem}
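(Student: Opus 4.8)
The plan is to build the grid packing solution recursively, mirroring the standard divide-and-conquer structure for \textsf{LIS}-style problems, with the recursion depth controlled by $\kappa$. At the top level, I would partition the $m \times m$ table into $m^{1-\kappa}$ vertical strips and $m^{1-\kappa}$ horizontal strips, each of width/height $m^{\kappa}$ (say, up to rounding). The key structural idea is that any monotone staircase path from the bottom-left to the top-right corner can be decomposed according to which $(\text{strip}_{\mathrm{col}}, \text{strip}_{\mathrm{row}})$ blocks it passes through; because the path is monotone, the sequence of blocks it visits is itself a monotone staircase in the coarsened $m^{1-\kappa} \times m^{1-\kappa}$ grid of blocks. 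So the problem naturally splits into (a) handling the ``coarse'' routing between blocks, and (b) recursively solving grid packing inside each $m^{\kappa} \times m^{\kappa}$ block.

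For the coarse level, I would place, for each block, a full ``spanning'' segment — e.g. a horizontal segment covering an entire row of that block (or we may need both a horizontal and a vertical spanning segment per block to be safe). Two such block-spanning segments in blocks that are comparable in the coarse staircase order are non-conflicting by construction, so along any coarse monotone path the chosen block-spanning segments form a valid non-conflicting family. The point of the spanning segments is that the adversary's path, restricted to a block it ``passes through but does not turn in,'' is dominated in value by the spanning segment of that block up to the within-block contribution; and the blocks where the path actually turns number at most $O(m^{1-\kappa})$ — wait, that's too many. The right accounting is the classical one: charge the path's total value to a telescoping sum over the $O(1/\kappa)$ recursion levels, losing a factor of roughly $1/\kappa$ overall (one unit of loss per level), which is exactly the $\beta = O(1/\kappa)$ in the statement. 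Concretely, at each level the path's score is at most the score it collects from spanning segments at that level plus the score it collects strictly inside the sub-blocks, and iterating this $1/\kappa$ times (after which blocks have constant size and a single segment per cell suffices) yields the bound.

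For the covering bound $\alpha$: a cell lies in exactly one block at each of the $O(1/\kappa)$ levels, and at each level it is covered by the $O(1)$ block-spanning segments of its block (at that level) — actually, to make the non-conflicting argument go through we may need $\Theta(\log m)$ spanning segments per block per level (one horizontal and one vertical is not obviously enough; a cleaner construction uses, within each block, a logarithmic-depth binary hierarchy of sub-blocks so that turns at finer scales are absorbed, contributing an extra $\log m$ factor). Summing $O(1)$ (or $O(\log m)$) segments over $O(1/\kappa)$ levels gives total per-cell coverage $O_\kappa(\log m)$ from the hierarchy itself; the remaining $m^\kappa$ factor in $\alpha = O_\kappa(m^\kappa \log m)$ comes from the fact that spanning segments at the coarsest level each span a full $m^\kappa$-wide strip, so any single cell in a spanning row/column of a block is covered by that one long segment, but cells at strip boundaries accumulate $O(m^\kappa)$ short segments emanating from the finer recursive structure — here I would need to be careful and possibly accept the stated $m^\kappa$ as the price of having segments long enough to span blocks. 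Let me restructure: the $m^\kappa$ term most plausibly arises because at the bottom of the recursion each cell is individually a segment and these accumulate multiplicatively as $(m^\kappa)^{?}$ — no, I'll instead read off that the construction in~\cite{our-stoc-paper} stops the recursion after enough levels that block size is $m^\kappa$ rather than $O(1)$, solving the base case trivially by putting one segment per cell in each $m^\kappa \times m^\kappa$ base block, which contributes the $m^\kappa$ factor directly.

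\textbf{Main obstacle.} The delicate point is the non-conflicting requirement interacting with segments that are \emph{long}: a horizontal spanning segment of one block and a vertical spanning segment of a block that is neither strictly-above-and-right nor strictly-below-and-left of it are conflicting, so I cannot simply throw in all spanning segments — I must choose, per block and per level, a small set of spanning segments (in carefully chosen sub-rows/sub-columns) such that for any coarse monotone path there is a \emph{selection}, one segment per visited block, that is globally non-conflicting, and such that this selection captures a $1/\beta$ fraction of the path's value. Making this selection exist for \emph{every} adversarial value assignment simultaneously, while keeping the per-cell segment count down to $O_\kappa(m^\kappa \log m)$, is the crux; the rest (the telescoping value bound giving $\beta = O(1/\kappa)$, and the level-counting giving the coverage) is bookkeeping once the right hierarchical family of spanning segments is fixed.
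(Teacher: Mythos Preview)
This theorem is not proved in the present paper; it is quoted verbatim from \cite{our-stoc-paper} (Mitzenmacher and Seddighin) as background and stated without proof. There is therefore no ``paper's own proof'' here to compare your proposal against.

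On the merits of your sketch itself: the overall shape---a hierarchy with $O(1/\kappa)$ levels, a constant-factor loss in score per level yielding $\beta = O(1/\kappa)$, and per-cell segment coverage summed across levels giving the $\alpha$ bound---matches the intuition behind the result. However, the proposal never commits to a single construction. You cycle through several mutually incompatible parameter choices (recursing until blocks have size $m^\kappa$ versus until they have constant size; $O(1)$ versus $O(\log m)$ spanning segments per block; whether the $m^\kappa$ in $\alpha$ comes from a trivial base case, from strip widths, or from accumulated short segments), and you explicitly flag but do not resolve what you correctly call the crux: guaranteeing that for \emph{every} monotone path there is a selection of one segment per visited block that is globally non-conflicting and still captures a $1/\beta$ fraction of the path's value. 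Without a fixed segment family and a proof of this selection property, the argument is not a proof but a plausibility outline. If you want to complete it, you need to (i) fix the recursion parameters once, (ii) specify exactly which segments are placed at each level, and (iii) prove the non-conflicting selection lemma for that family---the telescoping value bound and the coverage count then follow by the bookkeeping you describe.
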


\subsubsection{Extension: Grid Packing with Multisegments}
The general framework of grid packing remains the same for our extension:
The problem can be thought of as a game played on an $m \times m$ table against an adversary and the goal is to introduce some multisegments (a generalization of segments explained below) such that after the adversary puts their numbers on the table cells, the score we obtain is comparable to table's score.
However, extended grid packing differs from grid packing in two ways: 
First, we introduce a new notion that we call a \textit{multisegment} and we allow the use of multisegments instead of segments. Second, we do not enforce any bound on the number of multisegments that cover each cell. That is, we only have one objective which is maximizing the ratio of our score and the score of the table. Without the bound, algorithmically utilizing extended grid packing for \textsf{LIS} becomes more challenging as previous solutions require a cap on the maximum number of segments covering each cell.
Nevertheless, we show in Section~\ref{sec:example} how to apply extended grid packing in absence of this bound.

We bring an example in Section~\ref{sec:dynamic} to prove that by just using the segments in the grid packing problem, there is no way to obtain more than a $2/3$ fraction of the table's score, even if there is no bound on the number of segments that cover any cell. 
This example motivates our generalization, which we discuss in the following:
For a horizontal/vertical segment, we define its \textit{first} cell as its leftmost/bottommost cell and its \textit{last} cell as the rightmost/topmost cell of the segment. 
A $\Delta$-multisegment is defined as a combination of $\Delta$ segments $s_1, s_2, \ldots, s_{\Delta}$ where, for each $1 \leq i \leq \Delta-1$, the last cell of segment $s_i$ coincides with the first cell of segment $s_{i+1}$. 
(By definition, $1$-multisegments are the same as segments.) 
We say that a multisegment covers a cell if any of its segments covers that cell. 
Moreover, two multisegments $S_1$ and $S_2$ are non-conflicting if, for each segment $x$ of $S_1$ and each segment $y$ of $S_2$, the segments $x$ and $y$ are non-conflicting.
To avoid confusion, we use uppercase letters for multisegments and lowercase letters for segments. Based on this definition, for any $1 \leq i < \Delta$, an $i$-multisegment is also a $\Delta$-multisegment (we may add $\Delta-i$ single cell segments to an $i$-multisegment to make it compatible with the definition of $\Delta$-multisegment without any change in its shape).
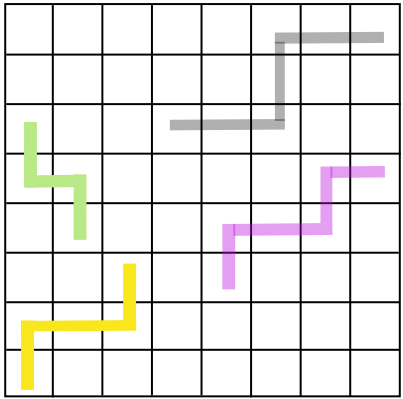
\begin{figure}[ht]

\centering

\tikzset{every picture/.style={line width=0.75pt}} 

\begin{tikzpicture}[x=0.75pt,y=0.75pt,yscale=-0.5,xscale=.5]

\draw   (141.5,15) -- (539.5,15) -- (539.5,411) -- (141.5,411) -- cycle ;
\draw    (189.5,16) -- (189.5,412) ;
\draw    (239.5,16) -- (239.5,410) ;
\draw    (289.5,16) -- (289.5,410) ;
\draw    (339.5,16) -- (339.5,410) ;
\draw    (389.5,16) -- (389.5,412) ;
\draw    (439.5,16) -- (439.5,411) ;
\draw    (489.5,16) -- (489.5,411) ;
\draw    (539.5,66) -- (142.5,66) ;
\draw    (540.5,116) -- (142.5,116) ;
\draw    (540.5,166) -- (142.5,166) ;
\draw    (539.5,216) -- (142.5,216) ;
\draw    (538.5,266) -- (142.5,266) ;
\draw    (539.5,316) -- (142.5,316) ;
\draw    (539.5,364) -- (142.5,364) ;
\draw  [draw opacity=0][fill={rgb, 255:red, 248; green, 231; blue, 28 }  ,fill opacity=1 ] (157.48,404.5) -- (157.52,334.5) -- (170.52,334.5) -- (170.48,404.5) -- cycle ;
\draw  [draw opacity=0][fill={rgb, 255:red, 248; green, 231; blue, 28 }  ,fill opacity=1 ] (157.54,334.81) -- (273.5,334) -- (273.57,344.69) -- (157.62,345.5) -- cycle ;
\draw  [draw opacity=0][fill={rgb, 255:red, 248; green, 231; blue, 28 }  ,fill opacity=1 ] (260.48,343) -- (260.52,276.99) -- (273.5,277) -- (273.45,343.01) -- cycle ;
\draw  [draw opacity=0][fill={rgb, 255:red, 189; green, 16; blue, 224 }  ,fill opacity=0.4 ] (360.48,303) -- (360.52,236.99) -- (373.5,237) -- (373.45,303.01) -- cycle ;
\draw  [draw opacity=0][fill={rgb, 255:red, 189; green, 16; blue, 224 }  ,fill opacity=0.4 ] (371.51,236.94) -- (471.5,236) -- (471.61,248.04) -- (371.62,248.98) -- cycle ;
\draw  [draw opacity=0][fill={rgb, 255:red, 189; green, 16; blue, 224 }  ,fill opacity=0.4 ] (459.48,236) -- (459.52,178.99) -- (471.5,179) -- (471.46,236.01) -- cycle ;
\draw  [draw opacity=0][fill={rgb, 255:red, 189; green, 16; blue, 224 }  ,fill opacity=0.4 ] (469.51,178.94) -- (524.39,178.43) -- (524.5,190) -- (469.62,190.52) -- cycle ;
\draw  [draw opacity=0][fill={rgb, 255:red, 0; green, 0; blue, 0 }  ,fill opacity=0.31 ] (307.54,131.81) -- (423.5,131) -- (423.57,141.69) -- (307.62,142.5) -- cycle ;
\draw  [draw opacity=0][fill={rgb, 255:red, 0; green, 0; blue, 0 }  ,fill opacity=0.31 ] (413.42,43.77) -- (523.49,43) -- (523.57,54.23) -- (413.49,55) -- cycle ;
\draw  [draw opacity=0][fill={rgb, 255:red, 0; green, 0; blue, 0 }  ,fill opacity=0.31 ] (423.5,53) -- (423.51,133.04) -- (413.51,133.04) -- (413.49,53) -- cycle ;
\draw  [draw opacity=0][fill={rgb, 255:red, 184; green, 233; blue, 134 }  ,fill opacity=1 ] (210.48,253) -- (210.52,186.99) -- (223.5,187) -- (223.45,253.01) -- cycle ;
\draw  [draw opacity=0][fill={rgb, 255:red, 184; green, 233; blue, 134 }  ,fill opacity=1 ] (219.1,199.68) -- (161.66,200.43) -- (161.5,188) -- (218.93,187.25) -- cycle ;
\draw  [draw opacity=0][fill={rgb, 255:red, 184; green, 233; blue, 134 }  ,fill opacity=1 ] (160.48,200) -- (160.52,133.99) -- (173.5,134) -- (173.45,200.01) -- cycle ;

\end{tikzpicture}

\caption{All polylines except for the green one are valid multisegments. Yellow and gray multisegments are non-conflicting, while the rest of the multisegment pairs are conflicting.} \label{fig:milti}
\end{figure}

We define an extended version of the grid packing problem as a game between us and an adversary. Similar to grid packing, we first introduce a number of multisegments and then the adversary puts nonnegative numbers on the cells of the table. Then, table's score is formulated as the largest sum the adversary can collect from the values of the cells by moving from the bottom-left corner to the top-right corner of the table. Our score is the largest sum we can collect by non-conflicting multisegments where the value of a multisegment is equal to the total sum of the numbers of the cells it covers.

As we show in Lemma~\ref{lemma:multi}, if we consider all possible $\Delta$-multisegments in our solution, our score is always at least a $\frac{\Delta-1}{\Delta}$ fraction of the table's score. Notice that by introducing all such multisegments, a cell may be covered by $m^{\Theta(\Delta)}$  multisegments.

\begin{customlem}{\ref{lemma:multi}}[restated]
	Let $\Delta$ be a positive integer.
	If we introduce all $\Delta$-multisegments in the extended grid packing problem, our score will be least a $\frac{\Delta-1}{\Delta}$ fraction of the table's score regardless of the values of the table cells.
\end{customlem}

Lemma~\ref{lemma:multi} alone does not suffice to improve the dynamic \textsf{LIS} algorithm of Mitzenmacher and Seddighin~\cite{our-stoc-paper} since the algorithm relies on a bound on the number of segments that cover each cell. We remedy this issue by presenting a more clever algorithm that does not require this bound. 

\subsubsection{Application of Extended Grid Packing}\label{sec:results-approach}
We refer the reader to previous work~\cite{our-stoc-paper,our-soda-paper} for discussions on how to use grid packing for approximating \textsf{LIS}. Roughly speaking, they consider the point-based representation of the problem and construct an $m\times m$ grid whose rows and columns evenly divide the points.
Next, they use the grid packing technique and, for each of the selected segments, they construct a partial solution that maintains an approximation to the \textsf{LIS} of the points covered by that segment.
Thus, every time a change is made, their algorithm has to update the solution for all segments that cover the modified point.
Therefore, previous techniques require a bound on the number of segments that cover each cell of the grid to make sure the update time is sublinear. In order to obtain a score arbitrarily close to the score of the table, we need to include a lot of multisegments in our solution for extended grid packing, many of which cover the same cells of the table.
This renders the previous approach incompatible with the new construction. 
To address this issue, we introduce a new method that allows using all the $\Delta$-multisegments (for a specific value $\Delta$).

At a high-level, the advantage of our new algorithm over the previous technique is that we adaptively decide which multisegments to use in the construction of a global solution. Previous applications are non-adaptive in this sense: 
They consider all segments of the grid packing solution and, for each segment, they maintain a partial solution for the points covered by that segment.
Our solution for extended grid packing uses $m^{O(\Delta)}$ multisegments (for a $(1-\epsilon)$-approximate solution, we require $\Delta = \Omega(1/\epsilon)$)  which is too many to even loop over.
Thus, we need to determine which multisegments have the potential to contribute to our overall solution before combining the partial solutions. We make such decisions adaptively as we query the subproblems in order to verify if a multisegment can be used in our solution. Thus, as we modify the sequence, the multisegments that may contribute to the global solution are subject to change.

In order to apply extended grid packing, we generalize the dynamic \textsf{LIS} problem. 
Instead of asking the size of the longest increasing subsequence after each operation, we define a \textit{query} to our algorithm in the following way:
a rectangle whose sides are parallel to the axis lines is given to us, and our algorithm should output an estimation to the size of the longest increasing subsequence of the points in the rectangle.
This obviously generalizes the problem since if the rectangle contains all of the points, then the answer is the \textsf{LIS} of the entire sequence. 
This generalization has two benefits: (i) Instead of defining each subproblem as the points covered by each segment (as Mitzenmacher and Seddighin~\cite{our-stoc-paper} do in previous work), we can define each subproblem as the points covered by a row or a column.
When the \textsf{LIS} of the points covered by a segment is desired, we can simply query the corresponding part of the subproblem which is covered by the segment. 
(This does not hold for multisegments in general, but we show how to approximate the \textsf{LIS} of a multisegment in Section~\ref{sec:dynamic}.)
This way, every operation changes at most two subproblems (one row and one column), and thus there is no need to have a bound on the number of segments that cover a point. 
(ii) The second and more important benefit of this approach is that we can distinguish between the query time and update time. More precisely, in previous work we make no distinction between query time and update time since we only look for the \textsf{LIS} of the whole sequence. 
Therefore, after each operation, the only question that we ask is for the \textsf{LIS} of the entire sequence.
With our generalization, we may make multiple queries after an operation, and therefore answering a query may require a different runtime. 
One of the key points of our algorithm is that our query time is much smaller than our update time, and this allows us to recursively run multiple queries in each of the subproblems without incurring too much cost in the running time.

Generalized queries, of course, make the problem substantially more complicated. 
Even in the stating setting (when there are no operations to be performed on the sequence), answering this type of queries is not easy.
If we only seek to find the \textsf{LIS} of the entire sequence, patience sorting can solve the problem in nearly linear time. 
However, if we are allowed to preprocess the sequence and then have to handle queries for the \textsf{LIS} of rectangles, the problem becomes more challenging. 
The authors are not aware of any linear-time (or even quadratic-time) preprocessing algorithm supporting exact queries in polylogarithmic time. 
As another application of extended grid packing, we show in Section~\ref{sec:dynamic} how to answer the queries in polylogarithmic time with nearly linear-time preprocessing by losing a $1-\epsilon$ factor in the approximation.

A key ingredient of our algorithm is a discretization technique that significantly improves the running time.
Roughly speaking, the number of $\Delta$-multisegments grows as $m^{\Theta(\Delta)}$ for an $m\times m$ grid as we increase $\Delta$. Obviously, we cannot afford to consider all such multisegments in our solution. Thus, we need to adaptively decide which multisegments to use in a solution for \textsf{LIS}. To this end, we use a discretization technique that narrows down the space of search fromall $m^{\Theta(\Delta)}$ multisegments to an $m^{O(1)}(\log_{1+\epsilon} n)^{O(\Delta)}$-sized subset at the expense of losing a $1-\epsilon$ factor in the approximation. This method is technically involved, but enables us to estimate the solution of a query in polylogarithmic time. We explain the details of our algorithm in Section~\ref{sec:dynamic}.

\begin{customthm}{\ref{theorem:third}}[restated informally]
	There exists an algorithm for dynamic \textsf{LIS} that approximates the solution within a $1-o(1)$ multiplicative factor and updates the sequence in time $O(n^{o(1)})$.
\end{customthm}

	\newpage
	\section{Exact Algorithm for Dynamic \textsf{LIS}}\label{sec:exact}
We present an exact dynamic algorithm for \textsf{LIS} with sublinear update time. Our method is based on a heavy-light decomposition of regions of the array combined with combinatorial analysis of increasing subsequences.


We use the notion of block-based algorithms~\cite{our-stoc-paper} to simplify the explanation. This lets us include preprocessing steps which may violate the worst-case update times. Yet, it follows from~\cite{our-stoc-paper} that a block-based algorithm can be turned into a dynamic algorithm whose worst-case update time is equal to the block-based algorithm's amortized update time. More precisely, a block-based algorithm starts with a sequence $a$ of size $n$. It is allowed to preprocess the sequence in time $f(n)$. After the preprocessing step, the algorithm is required to execute $g(n)$ operations, each in worst-case time $h(n)$. After $g(n)$ operations, the block-based algorithm terminates. It follows from previous work~\cite{our-stoc-paper} that such an algorithm can be transformed into a dynamic algorithm for \textsf{LIS} with worst-case update time $O(f(n)/g(n) + h(n))$.

\input{figs/baskets}

We construct a block-based algorithm in the following way: In the preprocessing step, we spend time $O(n \log n)$ and compute the size of the longest increasing subsequence that ends at any element $a_j$. Let $b_j$ denote this value for element $a_j$. Let $L_i = \{a_j \mid b_j = i\}$ be the set of elements whose corresponding solution has size $i$. We refer to each $L_i$ as a \emph{layer}. In the preprocessing step, we construct \emph{baskets} each of which consists of the elements of several consecutive layers. Keep in mind that the elements of a basket are not necessarily consecutive in terms of their position in the sequence.
However, we maintain as an invariant that every increasing subsequence visits the baskets in the increasing order.

In order to construct the baskets, we define a parameter $w$ that we set later. All baskets, except for the first and the last baskets contain $w$ consecutive layers. We define the \textit{boundary} of a basket as the last layer which is included in that basket. We make the baskets in a way that the total size of the boundaries is bounded by $O(n/w)$. Let us be more precise about this. Since each element of the sequence is included in exactly one of the layers, then we have $\sum |L_i| = n$. Therefore, if we choose an integer value $r$ in range $[0,w-1]$ uniformly at random, then \[\mathbb{E} \left[\sum_{i \bmod w = r} |L_i|\right] = \frac{n}{w}.\] 
This implies that $\sum_{i \bmod w = r} |L_i| \leq 2n/w$ holds for at least $\lceil w/2 \rceil$ choices of $r$. Let $R$ be the set of all such choices for $r$. In our algorithm, we choose a value $r$ from $R$ uniformly at random and set $L_r, L_{w+r},\ldots$ to be the boundaries. If $r \neq 0$, the first basket contains all layers $L_1, L_2, \ldots, L_r$, otherwise it contains the first $w$ layers. The second basket contains the next $w$ layers and the same holds for the rest of the baskets. If the last basket contains fewer than $w$ layers, we create an additional dummy layer that only contains an element with position $n+1$ and value $\infty$ and put this in the last basket as the boundary. Otherwise, we create a new basket that has a single element with position $n+1$ and value infinity. This element is also the boundary of the last basket. We denote by $d_i$ the size of basket $i$ (the number of elements included in it) and by $e_i$ the size of the boundary of basket $i$.

We emphasize that the layers, baskets, and their boundaries are essentially kept intact for the lifetime of the algorithm (which is at most $g(n)$ updates) even though the optimal increasing subsequences may change.
The only exception is that each element inserted to the sequence $a$ is inserted to one of the existing baskets, and each element removed from the sequence $a$ is removed from its basket.
The newly inserted elements are never included in the boundary, and the algorithm declares a failure following 
any attempt to remove a boundary element (which means that the algorithm ignores subsequent updates and keeps reporting $0$ as the $\textsf{LIS}$ length).
Since the updates that the algorithm encounters are independent of the random choice of $r$,
any single update leads to a failure with probability at most $1/|R|<2/w$,
which makes the overall failure probability limited to $2g(n)/w$.

Instead of maintaining the $\textsf{LIS}$, our algorithm actually maximizes the length of an increasing subsequence that includes one element from each boundary (except for the last one consisting of $a_{n+1}=\infty$).
Before we describe how this is achieved, let us argue why this is meaningful.

\begin{lemma}\label{lem:wrong}
	Let $U$ be an arbitrary longest increasing subsequence after at most $g(n)$ updates.
	Then, with probability at least $1-4g(n)/w$, subsequence $U$ contains an element form each boundary layer
	(except for the last one consisting of $a_{n+1}=\infty$).
\end{lemma}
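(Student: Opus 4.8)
The plan is to exhibit a set of at most $2g(n)$ ``dangerous'' layers that depends only on the (oblivious) sequence of updates, to observe that $U$ meets every boundary layer unless some boundary layer happens to be dangerous, and then to bound the latter event by a union bound over the random offset $r$. The structural ingredient I would use is the classical property of the patience-sorting decomposition: if two elements $a_j,a_k$ with $j<k$ lie in the same layer $L_i$ of the original sequence, then $a_j>a_k$, since otherwise appending $a_k$ to an optimal increasing subsequence ending at $a_j$ would yield $b_k\ge b_j+1$, contradicting $a_k\in L_i$. Consequently no increasing subsequence of the original sequence uses two elements of a single layer, and the number of (nonempty) layers is exactly $\opt$, the $\LIS$ of the original sequence $a$.

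Next I would bound the dangerous set. Fix the number $t\le g(n)$ of updates performed so far and the longest increasing subsequence $U$ in question, and write $\opt':=|U|$. Since a single insertion or deletion changes the $\LIS$ by at most one, $\opt'\ge\opt-t\ge\opt-g(n)$. Partition $U$ into the set $U_{\mathrm{orig}}$ of original elements (those among $a_1,\dots,a_n$ that are still present) and the set $U_{\mathrm{new}}$ of elements introduced by the updates; since at most $g(n)$ elements have been inserted, $|U_{\mathrm{new}}|\le g(n)$. Insertions and deletions preserve the values and the left-to-right order of the surviving original elements, so $U_{\mathrm{orig}}$ is an increasing subsequence of the original sequence and, by the structural fact above, meets each layer at most once. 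Writing $S\sub\{1,\dots,\opt\}$ for the set of layers met by $U$, we obtain $|S|=|U_{\mathrm{orig}}|=\opt'-|U_{\mathrm{new}}|\ge\opt-2g(n)$, so $B:=\{1,\dots,\opt\}\sm S$ satisfies $|B|\le 2g(n)$. Moreover, since every boundary layer is one of the original layers $L_1,\dots,L_{\opt}$ and no inserted element belongs to such a layer, $U$ contains an element of boundary layer $L_i$ exactly when $i\in S$, that is, when $i\notin B$.

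It then remains to take the union bound. By the basket construction, the non-dummy boundary layers are precisely the $L_i$ with $1\le i\le\opt$ and $i\equiv r\pmod{w}$, and $r$ is drawn uniformly from $R$ with $|R|\ge\lceil w/2\rceil$; hence for each fixed $i$ we have $\Pr_r[\,i\equiv r\pmod{w}\,]=\Pr_r[\,r=i\bmod w\,]\le 1/|R|\le 2/w$. Since $B$ is a function of the updates alone and independent of $r$,
\[
\Pr_r\bigl[\,\exists\, i\in B:\; i\equiv r\pmod{w}\,\bigr] \;\le\; |B|\cdot\frac{2}{w} \;\le\; \frac{4g(n)}{w},
\]
and in the complementary event no boundary-layer index lies in $B$, i.e.\ $U$ meets every boundary layer, which is the assertion of the lemma.

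The point I expect to require the most care is the step bounding the dangerous set: one must verify that ``$U$ contains an element of boundary layer $L_i$'' really coincides with the event ``$i\in S$'' (which rests on boundary layers being genuine original layers: a boundary element is never inserted, and any attempt to delete one aborts the algorithm), and that the bound $|B|\le 2g(n)$, and with it the set $B$ itself, is a deterministic function of the update sequence, so that the union bound over the random $r$ is legitimate. Granting these points, the probabilistic step is immediate.
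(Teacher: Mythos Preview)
Your proposal is correct and follows essentially the same argument as the paper: bound the number of original elements in $U$ by $\opt-2g(n)$, use that each layer of the patience-sorting decomposition contributes at most one element to any increasing subsequence to conclude $U$ misses at most $2g(n)$ layers, and apply a union bound with per-layer probability at most $2/w$. Your write-up is in fact more careful than the paper's in making explicit that the ``dangerous'' set $B$ depends only on the oblivious update sequence (and the fixed choice of $U$), so the union bound over the random offset $r$ is legitimate.
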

\begin{proof}
	Note that $|U| \ge \opt - g(n)$, where $\opt$ denotes the \textsf{LIS} length 
	during the initialization. 
	Moreover, at most $g(n)$ elements of $U$ may have been added after
	the initialization, which means that at least $|U|-2g(n)$ elements were present in the original sequence.
	Any increasing subsequence contains at most one element from each layer,
	thus $U$ may miss up to $2g(n)$ out of the $\opt$ initial layers constructed during initialization.
	Any layer $L_i$ is a boundary layer if $i \bmod w = r$, which happens with probability $1/|R|\le 2/w$ and thus the probability that $U$ misses a boundary  layer is at most $4g(n)/w$.
\end{proof}
Since the failure probability is $2g(n)/w$, Lemma~\ref{lem:wrong} implies
that each answer reported by the algorithm is correct with probability at least $1-6g(n)/w$.
We set $g(n)=w/12$ to make sure that this is at least $1/2$. 
(To further boost the success probability, our final algorithm will maintain $O(\log n)$ instances
of the algorithm presented here; see the proof of Theorem~\ref{theorem:exact} for details.)

Let us proceed with the details of our algorithm.
It follows from our construction that each basket contains at most $w$ consecutive layers (the equality holds with the exception of the first and the last baskets). Moreover, the total size of the boundaries is bounded by $2n/w+1$ (the additional $+1$ term is due to the dummy layer). Let $w \le s \le n$ be a parameter that we set later. 
We use parameter $s$ to define a bound on the size of the baskets that we call \textit{light}. We categorize the baskets based on the number of their elements and their boundary size:
\begin{itemize}
	\item We call a basket \textit{light}, if its size is bounded by $s$, the size of its boundary (the number of elements in its last layer) is bounded by $s/w$, and the size of the boundary of its previous basket (if any) is bounded by $s/w$.
	\item If a basket is not light, then we call it \textit{heavy}. That is, a heavy basket either has a size more than $s$ or a boundary size more than $s/w$ or its previous basket has a boundary of size more than $s/w$.
\end{itemize} 

Since the total number of elements is bounded by $n$ and the total size of all boundaries is bounded by $2n/w+1$, then the number of heavy baskets is bounded by $5n/s+1$.

Throughout our algorithm, we maintain a local data structure for each light basket $i$ that stores the following information: If $i=1$, then for each element $a_y$ of basket $1$, we store the size of the longest increasing subsequence that ends at $a_y$.
If $i>1$, then for each boundary element $a_x$ of basket $i-1$ and each element $a_y$ of basket $i$, we store the size of the longest increasing subsequence that starts at $a_x$ and ends at $a_y$. 
 Except for the initial element $a_x$, any element that may contribute to such a subsequence is certainly inside basket $i$. 
 
 In the preprocessing step, we initialize all these data structures. We can initialize the local data structure for basket $i>1$ in time $O(e_{i-1} d_i \log n)$ by running patience sorting for each boundary element of basket $i-1$ separately. 
 The size of each light basket is bounded by $s$ and the total boundary size (across all baskets) is at most $2n/w+1$, therefore the total time for initializing all light baskets $i>1$ is $O(ns \log n/w)$.
 Initializing basket $1$ (if it is light) costs $O(d_1 \log n) = O(s \log n)$ time, and this does not change the overall preprocessing time asymptotically.

We also keep some information for each heavy basket, but that information is not local. In other words, it depends on the elements of the previous baskets as well. In contrast, the data structure that we keep for each light basket is completely local. Let $a_i$ be an element in basket $j$. 
We call an increasing subsequence ending at element $a_i$ \textit{basket-compatible} if it includes one element from the boundary of each basket $1, 2, \ldots, j-1$. For every element $a_i$, we denote by $b'_i$ the size of the longest basket-compatible increasing subsequence that ends at $a_i$. Initially, we have $b_i = b'_i$ for all elements but as we make modifications to the sequence, the values $b'_i$ may diverge from $b_i$. 

For each heavy basket, we partition its elements into disjoint sets. Each set contains elements whose $b'_i$'s are equal.
Initially, each of these sets is one of the layers included in the basket. 
As we modify the sequence, the values $b'_i$ change and thus these sets may no longer coincide with the layers. 
Nevertheless, there is no extra preprocessing cost for heavy baskets since these sets are initially equal to the layers.

When an operation arrives, we first locate the basket to which it relates. We only consider element addition and element removal as element substitution can be simulated by the first two operations. Element removal corresponds to the basket that contains the element. Element addition corresponds to the basket with lowest index such that none of its boundary elements is both to the left of the added element and has a smaller value. In other words, when a new element is added, we find the basket with lowest index such that none of its boundary elements can be used to update the solution for the added element. 
(This way, we maintain the invariant that every increasing subsequence visits the baskets in the increasing order.)
If the corresponding basket is light, then we update the local data structure in time $O(s (s/w) \log n) = O(s^2\log n/w)$ (recall that the size of each light basket is bounded by $s$ and the previous basket has its boundary size bounded by $s/w$). If the operation corresponds to a heavy basket, we do not make any local changes.

 After local changes, we compute a global solution for all baskets in the following way: starting from basket 1, for each boundary element $a_i$, we compute $b'_i$. After we do this for basket 1, we move on to basket 2 and proceed to the last basket. The computed value for the dummy element minus 1 is the value that we report to the output. 

Recall that $e_i$ denotes the size of the boundary of basket $i$. For a light basket, we can use the local data structure and update the solution for its boundary elements in time $O(e_{i-1}e_i)$. Since basket $i$ is light, then we have $e_i \leq s/w$ and since the total size of the boundaries is bounded by $2n/w+1$ (i.e., $\sum e_j \leq 2n/w+1$), then the total runtime for light baskets is bounded by $O(ns/w^2)$. For each heavy basket, we use the algorithm of Chen, Chu, and Pinsker~\cite{chen2013dynamic} to update the solution in time $O(w \log n)$ (more details is given in the proof of Theorem~\ref{theorem:exact}). Thus, the total update time for heavy baskets is $O(nw \log n/s)$ since there are at most $5n/s+1$ heavy baskets. By setting $w = n^{0.4}$ and $s = n^{0.6}$ we obtain a block-based algorithm with $f(n) = O(ns \log n/w) = O(n^{1.2} \log n)$, $g(n) = w/12 = n^{0.4}/12$, and $h(n) = O( s^2\log n/w + ns/w^2 + nw \log n/s) =  O(n^{0.8} \log n)$. This leads to a dynamic algorithm for \textsf{LIS} with update time $O(n^{0.8} \log n)$ that after each update reports the solution correctly with probability at least $1/2$. By adding an additional $\log n$ multiplicative factor to the update time, we can improve the accuracy of the algorithm to $1-n^{-10}$. There is one more $O(\log n)$ factor in the update time due to the data structure that we use for accessing the elements of the sequence.

\begin{theorem}\label{theorem:exact}
	There exists a randomized algorithm for the dynamic \textsf{LIS} problem that has update time $O(n^{0.8} \log^3 n)$ and maintains the value of \textsf{LIS} correctly with probability at least $1-n^{-10}$.
\end{theorem}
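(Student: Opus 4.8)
The plan is to realize the block-based algorithm outlined above with the parameters $w = n^{0.4}$ and $s = n^{0.6}$, and then to verify correctness and the claimed running time. Recall from~\cite{our-stoc-paper} that a block-based algorithm with preprocessing time $f(n)$, block length $g(n)$, and per-operation worst-case time $h(n)$ yields a dynamic \textsf{LIS} algorithm with worst-case update time $O(f(n)/g(n) + h(n))$, so it suffices to analyze $f$, $g$, and $h$ for the construction described before the statement.

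\textbf{Preprocessing.} I would first run patience sorting to compute $b_j$ for every $j$ in time $O(n\log n)$, form the layers $L_i$, draw $r\in R$ uniformly at random, assemble the baskets (appending the dummy layer with the element $(n{+}1,\infty)$), and classify each basket as light or heavy by the stated thresholds. For each light basket $i>1$ I build the $e_{i-1}\times d_i$ table of pairwise \textsf{LIS} lengths (one patience-sorting pass per boundary element of basket $i-1$), and for $i=1$ the single-source \textsf{LIS} values; for each heavy basket I only record its partition of elements by $b'$-value, which initially coincides with its layers and so costs nothing extra. Summing $O(e_{i-1}d_i\log n)$ over light baskets and using $d_i\le s$ together with $\sum_i e_i = O(n/w)$ gives $f(n) = O(ns\log n/w) = O(n^{1.2}\log n)$.

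\textbf{Update and correctness.} On each operation I locate the relevant basket (for insertions, the lowest-index basket such that none of its boundary elements is both positioned to the left of the new element and smaller in value, which preserves the invariant that increasing subsequences meet the baskets in order), update its local table in time $O(s\cdot (s/w)\log n)$ if it is light, and then perform a single left-to-right sweep: for each basket in turn I compute the value $b'_x$ for every boundary element $a_x$, namely the maximum length of an increasing subsequence through all earlier boundaries that ends at $a_x$. For a light basket this is a $(\max,+)$ combination of the incoming boundary values with the local table in time $O(e_{i-1}e_i)$; the reported answer is the dummy element's value minus one. Correctness of the value follows by combining two facts: any boundary-element deletion is rejected, which by the choice of $r$ from $R$ happens with probability $<2/w$ per operation, and by \cref{lem:wrong} every true longest increasing subsequence passes through all boundary layers with probability at least $1-4g(n)/w$; hence, conditioned on no failure, the longest basket-compatible increasing subsequence equals the true \textsf{LIS}, and each reported answer is correct with probability at least $1-6g(n)/w\ge 1/2$ once $g(n)=w/12$.

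\textbf{Heavy baskets --- the main obstacle.} The crux is updating the sweep across a heavy basket, where no local table can be afforded. Here I would invoke the balanced-tree data structure of Chen, Chu, and Pinsker~\cite{chen2013dynamic}, maintaining for each heavy basket a partition of its elements into classes of equal $b'$-value; these classes play the role of ``layers'' in their scheme, and an insertion, a deletion, or a change in the incoming boundary values is absorbed by $O(w)$ class splits and merges, each in $O(\log n)$ time, for $O(w\log n)$ per heavy basket and $O(nw\log n/s)$ in total over the $O(n/s)$ heavy baskets. The delicate point flagged in the discussion above is that once updates occur we have $b'_i\ne b_i$ in general --- not every increasing subsequence hits every boundary --- so the classes genuinely drift away from the original layers; the work here is to check that the structural invariants underpinning~\cite{chen2013dynamic} survive for the $b'$-class decomposition and that the sweep indeed outputs the maximum length of an increasing subsequence meeting every boundary. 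Putting everything together, $h(n) = O(s^2\log n/w + ns/w^2 + nw\log n/s) = O(n^{0.8}\log n)$ and $f(n)/g(n) = O(n^{0.8}\log n)$, so the base block-based algorithm has update time $O(n^{0.8}\log n)$ with per-answer success $\ge 1/2$. Finally I would run $\Theta(\log n)$ independent copies in parallel and report the maximum reported value: each copy underestimates \textsf{LIS} and matches it with probability $\ge 1/2$, so the maximum matches \textsf{LIS} with probability $\ge 1-n^{-10}$. This $O(\log n)$ factor, plus the $O(\log n)$ overhead of the balanced-tree structure giving random access to the sequence, yields the final update time $O(n^{0.8}\log^3 n)$.
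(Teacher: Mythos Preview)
Your proposal follows the paper's proof essentially step for step: the same parameters $w=n^{0.4}$, $s=n^{0.6}$, the same preprocessing and sweep, the same use of \cref{lem:wrong} with $g(n)=w/12$, and the same $\Theta(\log n)$ parallel copies with the maximum reported. The running-time accounting is also identical.

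The one place where you defer rather than argue is exactly the point the paper spends its effort on. You write that ``the work here is to check that the structural invariants underpinning~\cite{chen2013dynamic} survive for the $b'$-class decomposition,'' but you do not supply that check. The paper closes this gap with two concrete ingredients. First, to see that the moving elements within each $b'$-class form an interval (the property Chen--Chu--Pinsker rely on), it reduces to ordinary \textsf{LIS}: form $a'$ from $a$ by replacing each boundary element by $n$ consecutive copies with infinitesimally increasing values; then for any element the longest increasing subsequence of $a'$ ending there, after stripping the extra copies, is precisely the longest \emph{basket-compatible} increasing subsequence of $a$ ending there, so the interval property for $a'$ transfers to the $b'$-classes in $a$. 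Second, to justify that each heavy basket costs $O(w\log n)$ per sweep, the paper observes that every operation changes each $b'_i$ by at most one, so after at most $g(n)$ operations the $b'$-values in a single basket span at most $w+2g(n)\le \tfrac{14}{12}w$ consecutive integers, i.e.\ there are $O(w)$ classes. With these two points filled in, your outline becomes the paper's proof.
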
 
\begin{proof}
	As discussed earlier we design a block-based algorithm with preprocessing time $f(n) = O(n^{1.2} \log n)$ which is responsible for updating the solution for up to $g(n) = n^{0.4}/12$ operations and updates the solution in worst-case time $h(n)= O(n^{0.8} \log n)$. To this end, we set $s = n^{0.6}$, $w = n^{0.4}$, and divide the elements into different baskets. By the discussion following Lemma~\ref{lem:wrong}, after each update, we report the correct value of \textsf{LIS} with probability at least $1/2$.
	
	To bring the failure probability down to $n^{-10}$, we repeat the same procedure $20 \log n$ times. That is, we choose $20 \log n$ different values $r$ from $R$ and each time we make the baskets according to different boundaries. Every time we output the maximum solution that we obtain from all the $20 \log n$ algorithms. Also, since we use a balanced tree to access the elements of the sequence, another $O(\log n)$ factor is also involved in the runtime which makes the overall update time $O(n^{0.8} \log^3 n)$.
	
	Another thing to note here is the algorithm we use for the heavy baskets. It has been shown by Chen, Chu, and Pinsker~\cite{chen2013dynamic} that we can solve dynamic \textsf{LIS} with update time $O(\opt \log n)$, where $\opt$ is the size of the solution. In their algorithm, they divide the elements into subsets $L'_1, L'_2, \ldots, L'_{\opt}$ such that for all the elements in subset $L'_i$ the longest increasing subsequence ending at them has size~$i$. We use $L'$ to denote these subsets since we use $L_i$ for the layers of our algorithm. The stark difference between $L'_i$ and $L_i$ is that in our algorithm, $L_i$'s remain intact as operations arrive but in their algorithm each $L'_i$ gets updated after changing the sequence.
	
	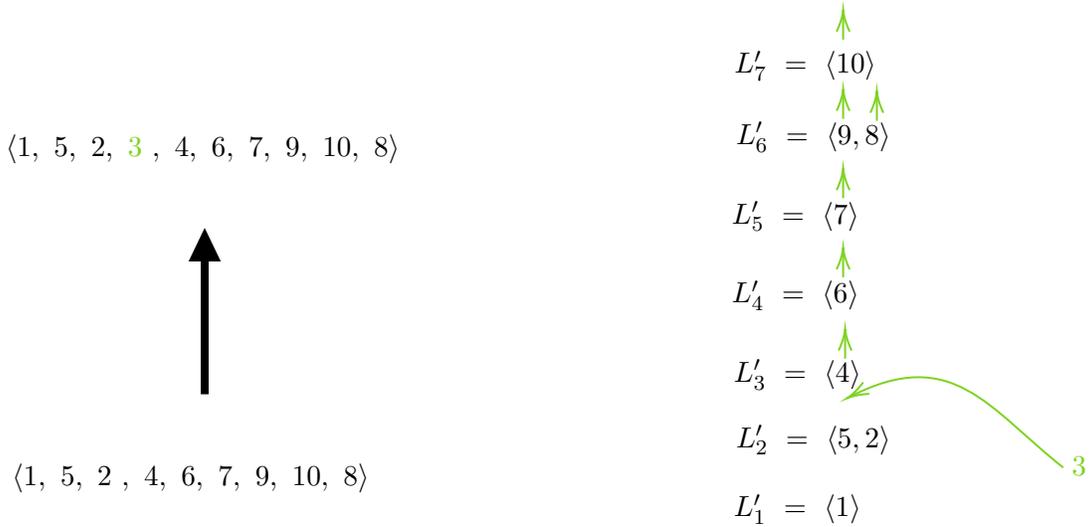
\begin{figure}[ht]

\centering

\tikzset{every picture/.style={line width=0.75pt}} 

\begin{tikzpicture}[x=0.75pt,y=0.75pt,yscale=-1,xscale=1]

\draw [line width=3]    (163,205) -- (163,126) ;
\draw [shift={(163,121)}, rotate = 450] [fill={rgb, 255:red, 0; green, 0; blue, 0 }  ][line width=3]  [draw opacity=0] (16.97,-8.15) -- (0,0) -- (16.97,8.15) -- cycle    ;

\draw [color={rgb, 255:red, 126; green, 211; blue, 33 }  ,draw opacity=1 ]   (596,242) .. controls (558.38,212.3) and (540.36,179.66) .. (488.58,206.18) ;
\draw [shift={(487,207)}, rotate = 332.15] [color={rgb, 255:red, 126; green, 211; blue, 33 }  ,draw opacity=1 ][line width=0.75]    (10.93,-3.29) .. controls (6.95,-1.4) and (3.31,-0.3) .. (0,0) .. controls (3.31,0.3) and (6.95,1.4) .. (10.93,3.29)   ;

\draw [color={rgb, 255:red, 126; green, 211; blue, 33 }  ,draw opacity=1 ][fill={rgb, 255:red, 126; green, 211; blue, 33 }  ,fill opacity=1 ]   (486,187) -- (486,174) ;
\draw [shift={(486,172)}, rotate = 450] [color={rgb, 255:red, 126; green, 211; blue, 33 }  ,draw opacity=1 ][line width=0.75]    (10.93,-3.29) .. controls (6.95,-1.4) and (3.31,-0.3) .. (0,0) .. controls (3.31,0.3) and (6.95,1.4) .. (10.93,3.29)   ;

\draw [color={rgb, 255:red, 126; green, 211; blue, 33 }  ,draw opacity=1 ][fill={rgb, 255:red, 126; green, 211; blue, 33 }  ,fill opacity=1 ]   (485,146) -- (485,133) ;
\draw [shift={(485,131)}, rotate = 450] [color={rgb, 255:red, 126; green, 211; blue, 33 }  ,draw opacity=1 ][line width=0.75]    (10.93,-3.29) .. controls (6.95,-1.4) and (3.31,-0.3) .. (0,0) .. controls (3.31,0.3) and (6.95,1.4) .. (10.93,3.29)   ;

\draw [color={rgb, 255:red, 126; green, 211; blue, 33 }  ,draw opacity=1 ][fill={rgb, 255:red, 126; green, 211; blue, 33 }  ,fill opacity=1 ]   (485,106) -- (485,93) ;
\draw [shift={(485,91)}, rotate = 450] [color={rgb, 255:red, 126; green, 211; blue, 33 }  ,draw opacity=1 ][line width=0.75]    (10.93,-3.29) .. controls (6.95,-1.4) and (3.31,-0.3) .. (0,0) .. controls (3.31,0.3) and (6.95,1.4) .. (10.93,3.29)   ;

\draw [color={rgb, 255:red, 126; green, 211; blue, 33 }  ,draw opacity=1 ][fill={rgb, 255:red, 126; green, 211; blue, 33 }  ,fill opacity=1 ]   (485,66) -- (485,53) ;
\draw [shift={(485,51)}, rotate = 450] [color={rgb, 255:red, 126; green, 211; blue, 33 }  ,draw opacity=1 ][line width=0.75]    (10.93,-3.29) .. controls (6.95,-1.4) and (3.31,-0.3) .. (0,0) .. controls (3.31,0.3) and (6.95,1.4) .. (10.93,3.29)   ;

\draw [color={rgb, 255:red, 126; green, 211; blue, 33 }  ,draw opacity=1 ][fill={rgb, 255:red, 126; green, 211; blue, 33 }  ,fill opacity=1 ]   (502,67) -- (502,54) ;
\draw [shift={(502,52)}, rotate = 450] [color={rgb, 255:red, 126; green, 211; blue, 33 }  ,draw opacity=1 ][line width=0.75]    (10.93,-3.29) .. controls (6.95,-1.4) and (3.31,-0.3) .. (0,0) .. controls (3.31,0.3) and (6.95,1.4) .. (10.93,3.29)   ;

\draw [color={rgb, 255:red, 126; green, 211; blue, 33 }  ,draw opacity=1 ][fill={rgb, 255:red, 126; green, 211; blue, 33 }  ,fill opacity=1 ]   (485,26) -- (485,13) ;
\draw [shift={(485,11)}, rotate = 450] [color={rgb, 255:red, 126; green, 211; blue, 33 }  ,draw opacity=1 ][line width=0.75]    (10.93,-3.29) .. controls (6.95,-1.4) and (3.31,-0.3) .. (0,0) .. controls (3.31,0.3) and (6.95,1.4) .. (10.93,3.29)   ;

\draw (462,263) node   {$L'_{1} \ =\ \langle 1\rangle$};
\draw (470,228) node   {$L'_{2} \ =\ \langle 5,2\rangle$};
\draw (462,195) node   {$L'_{3} \ =\ \langle 4\rangle$};
\draw (162,81) node   {$\langle 1,\ 5,\ 2,\ \color{rgb, 255:red, 126; green, 211; blue, 33 }3\color{black}\ ,\ 4,\ 6,\ 7,\ 9,\ 10,\ 8\rangle$};
\draw (156,247) node   {$\langle 1,\ 5,\ 2\ ,\ 4,\ 6,\ 7,\ 9,\ 10,\ 8\rangle$};
\draw (461,155) node   {$L'_{4} \ =\ \langle 6\rangle$};
\draw (461,115) node   {$L'_{5} \ =\ \langle 7\rangle$};
\draw (470,75) node   {$L'_{6} \ =\ \langle 9,8\rangle$};
\draw (466,39) node   {$L'_{7} \ =\ \langle 10\rangle$};
\draw (604,241) node [color={rgb, 255:red, 126; green, 211; blue, 33 }  ,opacity=1 ]  {$3$};

\end{tikzpicture}

\caption{This example illustrates how an element addition is handled in the algorithm of Chen, Chu, and Pinsker~\cite{chen2013dynamic}. Inserting element $3$ to the array changes the levels of the elements. Upward arrows show that the level of the corresponding element increases after we add $3$ to the array.}\label{fig:chan}
\end{figure}
	
	For completeness, we explain the algorithm of Chen, Chu, and Pinsker~\cite{chen2013dynamic} in Appendix~\ref{sec:chen-ap}. Their algorithm is based on the following observation: when an operation is performed to the sequence, each element may only move between consecutive subsets and, moreover, the elements that move form an interval. In other words, if three elements $a_i > a_j > a_k$ belong to a subset, it is impossible for elements $a_i$ and $a_k$ to move to another subset while $a_j$ remains in the same subset after the update. They show that based on these two properties, after each operation, we can update the solution in time  $O(\opt \log(n/\opt))$. 
	
	We use the same idea for heavy baskets. More precisely, for each heavy basket, we divide the elements into subsets $L'_{\alpha}, L'_{\alpha+1},\ldots$ where for an $a_j$ in element in a subset $L'_i$ the size of the longest basket-compatible increasing subsequence ending at $a_j$ is equal to $i$. We emphasize that these subsets are different from $L_1, L_2, \ldots$ since they do not change as we perform operations to the sequence. It is obvious that after each operation, elements can only move between consecutive subsets of $L'$ in a heavy basket. Moreover, if in a subset $L'_i$ of a heavy basket some elements move, the moving elements form an interval. To see this, we show a reduction from the longest basket-compatible increasing subsequence to the longest increasing subsequence. We know that this property holds for the longest increasing subsequence. Now, from our sequence, we make another sequence $a'$, where $a'$ is the same as $a$ except that we copy each boundary element $n$ times and put the copies next to each other. To make the values distinct, we add $i \epsilon$ to the $i$'th copy of each boundary element. It follows that for each element $a'_i$ of the new sequence, if we find the longest increasing subsequence ending at $a'_i$ and remove the copied elements from the solution, we obtain the longest basket-compatible increasing subsequence that ends at its corresponding element in sequence $a$. This means that since performing a single operation in $a'$ preserves the interval property of moving elements, the same also holds for $a$ when we are concerned with basket-compatible increasing subsequences.
	
	Initially, the values $b'_i$ across elements in any single basket span $w$ consecutive integers.
	Each operation changes the value $b'_i$ by at most one, therefore at any time the values $b'_i$ 
	across elements in any single basket span at most $w + 2g(n)\le \frac{14}{12}w$ consecutive integers.
	Hence, the cost of updating any heavy basket is $O(w \log n)$, which yields $O(nw \log n / s)$ across all the heavy baskets.

	Thus, as discussed earlier, we obtain a block-based algorithm with $f(n) = O(ns \log n/w) = O(n^{1.2} \log n)$, $g(n) = w/12 = n^{0.4}/12$, $h(n) = O( s^2\log n/w + ns/w^2 + nw \log n/s) =  O(n^{0.8} \log n)$. This leads to a dynamic algorithm for \textsf{LIS} with update time $O(n^{0.8} \log n)$ that after each update reports the solution correctly with probability at least $1/2$. Two multiplicative $O(\log n)$ factors are added to the update time due to the data structure we use to access the elements of the sequence and $20 \log n$ different algorithms that we run in parallel to improve the failure probability down to $n^{-10}$.
\end{proof} 

\subsection{An Improved Dynamic Algorithm with Update Time $\tilde O(n^{2/3})$}\label{sec:improved}
We further improve the update time of our dynamic algorithm by exploiting structural insights expressed in terms of \emph{Monge} matrices, i.e., matrices $M\in \R^{\ell\times m}$ such that $M[x,y]+M[x+1,y+1]\le M[x+1,y]+M[x,y+1]$ holds for $x\in [1\dd \ell)$ and $y\in [1\dd m)$.
Monge matrices arise in our algorithm due to the following observation:
For every basket $i>1$, the maximum lengths of increasing subsequences starting
at the boundary of basket $i-1$ and ending at the boundary of basket $i$
can be embedded in an \emph{anti-Monge} matrix
(obtained by negating the entries of a Monge matrix).
Recall that the algorithm of \cref{theorem:exact} maintains these lengths for all light baskets $i>1$.
We develop the following two components to handle such baskets more efficiently:
\begin{itemize}
	\item In $\Ohtilde(e_{i-1} + d_i)$ time, we can construct an oracle that, given elements $a_x$ and $a_y$ at the boundaries of baskets $i-1$ and $i$, respectively, computes in $\Ohtilde(1)$ time the size of the longest increasing subsequence starting at $a_x$ and ending at $a_y$. This is proven in \cref{cor:1},	which also states that these values can be embedded in an anti-Monge matrix $M_i$; the latter requires carefully setting the values corresponding to \emph{invalid queries} for which $x> y$ or $a_x > a_y$.
	\item Given the sizes $b'_x$ of the longest basket-compatible increasing subsequences ending at the boundary elements $a_x$ of basket $i-1$ and assuming random access to the entries~$M_i$, we can in $\tilde O(e_{i-1}+e_i)$ time compute the sizes $b'_y$ of the longest basket-compatible increasing subsequences ending at the boundary elements $a_y$ of basket $i$. This is proven in \cref{lem:smawk}, where we rely on an efficient algorithm for computing the $(\min,+)$-product of a Monge matrix with a vector~\cite{AggarwalKMSW87} (equivalent to the $(\max,+)$-product for an anti-Monge matrix). 
\end{itemize}

For every basket $i$, let $(a_{f_{i,j}})_{j=1}^{e_i}$ be the subsequence of $(a_i)_{i=1}^n$
consisting of the boundary elements of basket $i$. 
Recall that each boundary was initialized as a single layer and that boundary elements are never modified.
Consequently, $(a_{f_{i,j}})_{j=1}^{e_i}$ forms a decreasing subsequence.

\begin{lemma}\label{cor:1}
	For each basket $i>1$, there exists an anti-Monge matrix $M_i\in \R^{e_{i-1}\times e_i}$ such that,
	for every $j\in [1\dd e_{i-1}]$ and $k\in [1\dd e_i]$, we have:
	\begin{align*}
		M_i[j,k] &= \text{length of the \textsf{LIS} from $a_{f_{i-1,j}}$ to $a_{f_{i,k}}$} & \text{if }f_{i-1,j}<f_{i,k}\text{ and }a_{f_{i-1,j}} < a_{f_{i,k}},\\
		M_i[j,k] & \le -n & \text{otherwise.}
	\end{align*}
	Moreover, any entry of $M_i$ can be computed in $\Oh(\log n/\log \log n)$ time
	after $O((e_{i-1}+d_i)\log^2 n)$-time preprocessing of basket $i$ and the boundary elements of basket $i-1$.
\end{lemma}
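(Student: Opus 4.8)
\emph{Proof plan.} Fix a basket $i>1$ and abbreviate $x_j=f_{i-1,j}$, $X_j=a_{f_{i-1,j}}$ for the positions and values of the boundary elements of basket $i-1$, and $y_k=f_{i,k}$, $Y_k=a_{f_{i,k}}$ for those of basket $i$; recall that $x_1<\cdots<x_{e_{i-1}}$ while $X_1>\cdots>X_{e_{i-1}}$, and $y_1<\cdots<y_{e_i}$ while $Y_1>\cdots>Y_{e_i}$. Call a pair $(j,k)$ \emph{valid} if $x_j<y_k$ and $X_j<Y_k$. The first step is a box reformulation: for a valid $(j,k)$, the layer argument used earlier in this section shows that every element lying strictly between $a_{f_{i-1,j}}$ and $a_{f_{i,k}}$ in an increasing subsequence must belong to basket $i$ --- it cannot lie in basket $i-1$ or earlier (its $b$-value would exceed the $b$-value of the boundary of basket $i-1$), nor in basket $i+1$ or later (its $b$-value would be smaller than that of the boundary of basket $i$). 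Hence the \LIS\ from $a_{f_{i-1,j}}$ to $a_{f_{i,k}}$ equals $2$ plus the length of the longest increasing subsequence of the set $S_i$ of basket-$i$ elements lying inside the open rectangle $(x_j,y_k)\times(X_j,Y_k)$. Since $X_j$ decreases in $j$ and $Y_k$ decreases in $k$, these are box queries over the static $d_i$-point set $S_i$ whose lower-left corner traces the decreasing sequence $(x_j,X_j)_j$ and whose upper-right corner traces the decreasing sequence $(y_k,Y_k)_k$.

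\textbf{Anti-Monge property.} Take a $2\times 2$ window all four of whose cells are valid, let $P$ be a longest increasing subsequence from $a_{f_{i-1,j+1}}$ to $a_{f_{i,k}}$ and $Q$ one from $a_{f_{i-1,j}}$ to $a_{f_{i,k+1}}$, and view each as the monotone staircase curve obtained by interpolating linearly between consecutive points. Because $x_j<x_{j+1}$ but $X_j>X_{j+1}$, the curve of $Q$ lies strictly above the curve of $P$ at the abscissa $x_{j+1}$; because $y_k<y_{k+1}$ but $Y_k>Y_{k+1}$, it lies strictly below it at the abscissa $y_k$. By monotonicity the two curves cross somewhere in between; splitting $P$ and $Q$ at the vertices flanking a crossing and exchanging the prefix of one with the suffix of the other yields an increasing subsequence from $a_{f_{i-1,j}}$ to $a_{f_{i,k}}$ and one from $a_{f_{i-1,j+1}}$ to $a_{f_{i,k+1}}$ (the new junctions are increasing precisely because one curve passed above the other at the crossing), and the two new subsequences have total length $|P|+|Q|$. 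Comparing with the entries of $M_i$ gives $M_i[j,k]+M_i[j{+}1,k{+}1]\ge M_i[j{+}1,k]+M_i[j,k{+}1]$. For windows that meet the region of invalid cells, we fix those entries through the boundary-to-boundary longest-path distances in the planar alignment DAG of the permutation associated with $S_i$, augmented by $\pm\infty$ sentinel rows and columns that absorb the boundary points of baskets $i-1$ and $i$: the matrix of such distances is anti-Monge by the standard Monge property of planar-DAG distances, valid cells retain the value from the box reformulation, and each invalid cell --- whose rectangle has non-positive width or non-positive height --- receives a value at most $-n$, whereas every valid entry lies in $[2,n]$. Checking that this common description is anti-Monge across the valid/invalid interface, and that the sentinels drive the invalid entries below $-n$ without perturbing the valid ones, is the delicate part of the argument.

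\textbf{The oracle.} In the preprocessing step we build, in $\Oh((e_{i-1}+d_i)\log^2 n)$ time, the semi-local \LIS\ data structure (the seaweed braid of basket $i$, combined with standard range-searching structures over $S_i$ so as to support two-parameter box queries rather than only prefix/suffix ones); this represents the longest increasing subsequence of $S_i$ inside any query box as a constant number of two-dimensional dominance (rank) queries on point sets of total size $\Oh(d_i\log d_i)$. We also sort the $e_{i-1}$ boundary elements of basket $i-1$ and the $e_i$ boundary elements of basket $i$ and record, in $\Oh((e_{i-1}+d_i)\log n)$ time, their ranks among the positions and values of $S_i$, so that the corners $(x_j,X_j)$ and $(y_k,Y_k)$ of the query box for $M_i[j,k]$ are turned into the corresponding rank pairs in $\Oh(1)$ time. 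Finally we equip the dominance structure with the word-RAM data structure for two-dimensional dominance counting that answers a query in $\Oh(\log n/\log\log n)$ time after $\Oh(d_i\log^2 n)$-time preprocessing; then every entry $M_i[j,k]$ is obtained from $\Oh(1)$ dominance queries together with the separable correction term, i.e.\ in $\Oh(\log n/\log\log n)$ time, within the stated preprocessing budget. The main obstacle is the anti-Monge step --- giving one description of $M_i$ over all cells that is provably anti-Monge while simultaneously forcing every invalid entry below $-n$ --- and, to a lesser extent, the reduction of our two-parameter box queries to dominance queries, which is where the seaweed machinery must be combined with additional range-searching structure.
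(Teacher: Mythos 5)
Your plan correctly identifies the right ingredients and is, at the top level, in the same spirit as the paper: reduce to a box-constrained \textsf{LIS} over the points of basket $i$ (indeed the LIS from $a_{f_{i-1,j}}$ to $a_{f_{i,k}}$ is $2$ plus the LIS inside the open rectangle), get a Monge-type matrix out of it, and support entries via Tiskin's seaweed/semi-local-\textsf{LIS} machinery together with Chan--P\u{a}tra\c{s}cu range counting. The paper derives \cref{cor:1} as an immediate corollary of \cref{thm:oracle}, whose proof is exactly this machinery, carried out through slice alignment graphs.

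Where you diverge, though, the two steps you yourself flag as ``the delicate part'' and ``the main obstacle'' are genuinely not resolved. For the anti-Monge property, you propose a curve-crossing exchange argument; this is a legitimate alternative to the paper's route (which gets the Monge property of $D_G$ for free from Fakcharoenphol--Rao's observation that boundary-to-boundary distances in a planar graph are Monge, and then expresses $M_i$ as an explicit linear combination $L=\tfrac12(N\cdot A - D_G -(N-1)D_{G'})$ of Monge and anti-Monge matrices to force invalid entries below $-n$ without breaking the inequality across the valid/invalid interface). But your crossing argument, as stated, skips the edge cases (crossings at shared points, crossings at vertices of one curve but interiors of segments of the other), and, crucially, you never give a single formula for $M_i$ that is simultaneously anti-Monge over \emph{all} of $[1\dd e_{i-1}]\times[1\dd e_i]$ and equal to the claimed values on valid cells; the anti-Monge quadrangle inequality must also hold for windows straddling the boundary, and ``augmented by $\pm\infty$ sentinel rows and columns'' does not establish that.

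The second gap is the oracle. You claim that seaweed plus ``standard range-searching structures over $S_i$'' answers arbitrary two-sided box \textsf{LIS} queries as a constant number of 2D dominance queries. This is not true in general: the seaweed permutation of $S_i$ encodes semi-local queries (all prefixes/suffixes/substrings), not \textsf{LIS} restricted to an arbitrary index-and-value rectangle, and bolting on orthogonal range searching does not fix that. The paper's actual mechanism is the one thing your plan does not develop: because the query corners $(x_j,X_j)$ and $(y_k,Y_k)$ trace two \emph{antichains}, they can be extended (\cref{fct:anti}) to two cut-paths $\pi\preceq\pi'$, and the box \textsf{LIS} values you need are exactly encoded in the distance matrix $D_G$ of the slice alignment graph $\SAG^{n,m}(\pi,\pi',S)$ between them (\cref{lem:dist,lem:hered,lem:L,cor:M}). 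This slice matrix is unit-Monge, its seaweed permutation is built by the divide-and-conquer of \cref{prp:pg} in $\Oh((e_{i-1}+d_i)\log^2 n)$ time, and a single Chan--P\u{a}tra\c{s}cu structure over that permutation then answers every entry of $M_i$ in $\Oh(\log n/\log\log n)$ time. Without some version of the slice construction, the reduction to dominance counting is unjustified, so the oracle part of your plan is a sketch of what has to be done rather than a proof that it can be.
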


We derive \cref{cor:1} as an immediate corollary of the following theorem,
stated in terms of two arbitrary non-increasing subsequences
rather than just the subsequent boundary layers arising in our algorithm.
\cref{thm:oracle} is a generalization of a result of Tiskin~\cite{Tiskin13}, who considered the problem of computing so-called \emph{semi-local} \textsf{LIS} values.
The key algorithmic ingredient of Tiskin's procedure is an efficient algorithm for computing the $(\min,+)$-product of two \emph{simple unit-Monge matrices}~\cite{Tiskin15} (see \cref{sec:monge} for further discussion and the proof of~\cref{thm:oracle}).
\begin{restatable}{theorem}{thmoracle}\label{thm:oracle}
	Let $(a_i)_{i=0}^{n-1}$ be a real-valued sequence of length $n$, let $(a_{p_i})_{i=0}^{k-1}$ and $(a_{q_j})_{j=0}^{\ell-1}$ be non-increasing subsequences of $a$,
	and let $N$ be a positive integer.
	There exists an anti-Monge matrix $M\in \R^{k\times \ell}$ such that,
	for every $i\in [0\dd k)$ and $j\in [0\dd \ell)$, we have 
	\begin{align*}
		M[i,j] &= \text{length of the longest increasing subsequence from $a_{p_i}$ to $a_{q_j}$} & \text{if }p_i<q_j\text{ and }a_{p_i} < a_{q_j},\\
		M[i,j] & \le -N & \text{otherwise.}
	\end{align*}
	Moreover, any entry of $M$ can be computed in $\Oh(\log n/ \log \log n)$ time
	after $\Oh(n\log^2n)$-time preprocessing.
\end{restatable}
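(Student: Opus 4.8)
The plan is to reduce the desired quantities to semi-local \textsf{LIS} values of a single auxiliary sequence, which can then be read off using Tiskin's machinery. First I would normalize the input: replacing each $a_i$ by its rank among $a_0,\dots,a_{n-1}$, breaking ties so that an element at a larger position receives a smaller rank, changes no \textsf{LIS} value, makes the values distinct integers in $\{0,\dots,n-1\}$, and turns each of $(a_{p_i})$ and $(a_{q_j})$ into a strictly decreasing subsequence. The key elementary observation is that, whenever $p_i<q_j$ and $a_{p_i}<a_{q_j}$, the length of the longest increasing subsequence from $a_{p_i}$ to $a_{q_j}$ equals the \textsf{LIS} of the points of $a$ in the closed box $[p_i,q_j]\times[a_{p_i},a_{q_j}]$ — equivalently $2$ plus the \textsf{LIS} of the points strictly inside the open box — since $a_{p_i}$ and $a_{q_j}$ are the unique points of that box that are extreme in both coordinates and hence lie on every maximal chain.

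To expose all these box values at once, I would build the auxiliary sequence $\tilde a$ by inserting, immediately before each $a_{p_i}$, a \emph{source sentinel} of value $a_{p_i}-i\delta$, and immediately after each $a_{q_j}$, a \emph{sink sentinel} of value $a_{q_j}+j\delta$, for a small enough $\delta>0$ (re-ranking afterwards keeps $\tilde a$ a permutation, now of length $n+k+\ell=\Oh(n)$). By monotonicity of $p$ and $q$, the source sentinels read left to right form a strictly decreasing run, and so do the sink sentinels; hence every increasing subsequence of $\tilde a$ uses at most one source and at most one sink sentinel, and a short case analysis shows that the longest increasing subsequence of $\tilde a$ that starts at the $i$-th source sentinel and ends at the $j$-th sink sentinel has length exactly $2$ plus the open-box \textsf{LIS} above when $p_i<q_j$ and $a_{p_i}<a_{q_j}$, and has no such subsequence otherwise (the only subtlety is $p_i=q_j$, which contributes an $\Oh(1)$ slack dealt with below). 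Because the $i$-th source sentinel occupies a position/value cut that moves monotonically with $i$, and likewise for the sink sentinels, these start-to-end lengths form a submatrix — indexed by an increasing set of rows and an increasing set of columns — of a single highest-score matrix $H$ of $\tilde a$.

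Now I would quote Tiskin's semi-local \textsf{LIS} algorithm~\cite{Tiskin13}, whose algorithmic core is the near-linear-time $(\min,+)$-product of two simple unit-Monge matrices~\cite{Tiskin15}: in $\Oh(n\log^2 n)$ time it outputs the \emph{seaweed permutation} $P$ of $\tilde a$, a permutation matrix on $\Oh(n)$ points, such that each entry of $H$ equals an explicit affine function of its two indices minus the number of $1$'s of $P$ in a quadrant whose corner is given by those indices. Two consequences are immediate. First, the mixed second differences of $H$ are (up to sign) the entries of $P$, hence nonnegative, so $H$ is anti-Monge; every row-and-column submatrix of an anti-Monge matrix is anti-Monge, so the matrix $M_0$ of box values shifted by $-2$ is anti-Monge, and by the standard off-diagonal convention for $H$ one has $M_0[i,j]\le 1$ for every pair with $p_i\ge q_j$ or $a_{p_i}\ge a_{q_j}$. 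Second, a standard two-dimensional dominance-counting structure over the $\Oh(n)$ points of $P$ in rank space — $\Oh(n\log n)$ preprocessing, $\Oh(\log n/\log\log n)$ per query — lets us evaluate any entry of $H$, hence of $M_0$, in $\Oh(\log n/\log\log n)$ time.

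Finally I would force the invalid entries below $-N$ without breaking anti-Monge; this bookkeeping across the valid/invalid frontier is the step I expect to be the main obstacle. Let $\phi(i)=\min\{j:q_j>p_i\}$ and $\psi(i)=\max\{j:a_{q_j}>a_{p_i}\}$; since $p$ and $q$ are monotone, both are non-decreasing and $(i,j)$ is valid iff $\phi(i)\le j\le\psi(i)$. Put $G_1[i,j]=-(N+3)\max(0,\phi(i)-j)$ and $G_2[i,j]=-(N+3)\max(0,j-\psi(i))$. A direct computation using only the monotonicity of $\phi$ and $\psi$ shows that each of $G_1,G_2$ is anti-Monge, each vanishes on the valid region, and on every invalid pair at least one of them is $\le -(N+3)$ while the other is $\le 0$. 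Then $M:=M_0+G_1+G_2$ is anti-Monge as a sum of anti-Monge matrices, equals the required \textsf{LIS} length on valid pairs (where $G_1=G_2=0$), and is $\le 1-(N+3)\le -N$ on invalid pairs. Since $\phi,\psi$ are obtained by one $\Oh(n)$-time merge of the position and value orders of the two subsequences and queried in $\Oh(1)$ per entry, the preprocessing remains $\Oh(n\log^2 n)$ and the per-entry cost remains $\Oh(\log n/\log\log n)$. The points that need the most care are the precise identification of the length of the longest increasing subsequence of $\tilde a$ from source sentinel $i$ to sink sentinel $j$ with a genuine entry of one semi-local \textsf{LIS} object of $\tilde a$ (so that~\cite{Tiskin13,Tiskin15} apply as a black box), and the compatibility of the $G_1+G_2$ correction with the off-diagonal values already sitting in $H$.
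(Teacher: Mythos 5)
The central reduction you propose does not go through. Your plan reads the values ``LIS of $\tilde a$ from source sentinel $\sigma_i$ to sink sentinel $\tau_j$'' off a submatrix of a single semi-local score matrix $H$ of $\tilde a$, so that Tiskin's machinery~\cite{Tiskin13,Tiskin15} applies as a black box. But the semi-local structure of $\tilde a$ encodes only boundary-to-boundary alignment distances — equivalently, window \textsf{LIS} (the \textsf{LIS} of a contiguous sub-array), prefix/suffix variants, and their unit-Monge bookkeeping — and a window \textsf{LIS} imposes no constraint on the endpoints. The quantity you need is a genuine two-dimensional box count: the \textsf{LIS} of the set of points with both position \emph{and} value strictly between those of $\sigma_i$ and $\tau_j$. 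There is no entry of $H$ that records this. Concretely, for $a=(5,1,2,3,4,6)$, $p_0=0$, $q_0=5$, the \textsf{LIS} from $a_{p_0}$ to $a_{q_0}$ is $2$, while the \textsf{LIS} of the positional window $[\sigma_0,\tau_0]$ is $5$ (via $1,2,3,4,6$); the sentinels sit in the window but nothing forces an optimal chain through them. Your sentinels trace out an antichain in the interior of the alignment grid, not positions on its boundary, which is precisely what breaks the black-box use of the semi-local matrix. You do flag this step as the main risk, and you are right to: it is where the argument fails. The paper's own proof resolves exactly this obstruction by developing \emph{slice alignment graphs} with cut-paths through arbitrary antichains, re-proving the Monge/unit-Monge structure and the decomposition lemmas for these slices from scratch, and invoking Tiskin only for the $(\min,+)$-product of simple unit-Monge permutation matrices. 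Your sentinel bookkeeping (no sentinel lands in another pair's open box, source sentinels form a decreasing run, the $\pm 2$ offset) and your anti-Monge correction terms $G_1,G_2$ for invalid pairs are fine on their own, as is the $\Oh(\log n/\log\log n)$ dominance-counting oracle; what is missing is a structure that yields the box \textsf{LIS} values in the first place, and the unmodified semi-local score matrix is not that structure.
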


Recall that, for each $i\in [1\dd n]$, we denote by $b'_i$ the length of the longest basket-compatible increasing subsequence ending at element $a_i$.

\begin{lemma}\label{lem:smawk}
	Consider a basket $i>1$ augmented with oracle access to the matrix $M_i$ of \cref{cor:1}.
	Given the values $(b'_{f_{i-1,j}})_{j=1}^{e_{i-1}}$, the values $(b'_{f_{i,k}})_{k=1}^{e_{i}}$
	can be computed in $\Oh((e_{i-1}+e_i)\log n / \log \log n)$ time.
\end{lemma}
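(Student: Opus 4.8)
The plan is to express the vector $(b'_{f_{i,k}})_{k=1}^{e_i}$ as a $(\max,+)$ vector--matrix product with $M_i$ (up to a global shift by $1$), and then to evaluate that product using the efficient Monge vector--matrix multiplication of~\cite{AggarwalKMSW87} on top of the constant-query-time oracle for $M_i$ supplied by \cref{cor:1}.

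The first step is to establish the recurrence
\[
	b'_{f_{i,k}} = \max_{j\in[1\dd e_{i-1}]}\bigl(b'_{f_{i-1,j}} + M_i[j,k]\bigr) - 1
	\qquad\text{for every }k\in[1\dd e_i].
\]
For the inequality ``$\le$'' I would take a longest basket-compatible increasing subsequence $S$ ending at $a_{f_{i,k}}$; since $a_{f_{i,k}}$ lies in basket $i>1$, $S$ contains exactly one boundary element $a_{f_{i-1,j}}$ of basket $i-1$. Cutting $S$ at $a_{f_{i-1,j}}$ yields a basket-compatible increasing subsequence ending at $a_{f_{i-1,j}}$ (of length at most $b'_{f_{i-1,j}}$) and an increasing subsequence from $a_{f_{i-1,j}}$ to $a_{f_{i,k}}$ (of length at most $M_i[j,k]$; here I use that $a_{f_{i-1,j}}$ lies in the \emph{last} layer of basket $i-1$ and that, by the basket-ordering invariant, every element of $S$ strictly between the two endpoints lies in basket $i$, so this middle part is one of the subsequences measured by $M_i[j,k]$). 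The two parts overlap only in $a_{f_{i-1,j}}$, which accounts for the $-1$. For ``$\ge$'' I would glue an optimal witness for $b'_{f_{i-1,j}}$ to an optimal witness for $M_i[j,k]$ at their common element $a_{f_{i-1,j}}$; the result is an increasing subsequence ending at $a_{f_{i,k}}$ that contains a boundary element of each of baskets $1,\dots,i-1$, hence is basket-compatible, and has length $b'_{f_{i-1,j}}+M_i[j,k]-1$. Invalid pairs never affect the maximum: their contribution is at most $n+(-n)-1<0$, whereas any valid pair contributes at least $1$, and at least one valid pair exists for every $k$ because all boundary elements appearing in the initialization's witness for $a_{f_{i,k}}$ are immune to deletion.

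For the second step, note that adding the row-dependent term $b'_{f_{i-1,j}}$ to the anti-Monge matrix $M_i$ of \cref{cor:1} keeps it anti-Monge, so $N[j,k]:=b'_{f_{i-1,j}}+M_i[j,k]$ is anti-Monge of shape $e_{i-1}\times e_i$, and the sought vector of column maxima of $N$ is precisely the $(\max,+)$-product of the vector $(b'_{f_{i-1,j}})_{j=1}^{e_{i-1}}$ with $M_i$. By the equivalence with the $(\min,+)$-product of a Monge matrix with a vector, the algorithm of~\cite{AggarwalKMSW87} computes this vector using $\Oh(e_{i-1}+e_i)$ entry queries to $M_i$ and $\Oh(e_{i-1}+e_i)$ additional arithmetic. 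Since \cref{cor:1}'s oracle answers each entry query in $\Oh(\log n/\log\log n)$ time, the total running time is $\Oh((e_{i-1}+e_i)\log n/\log\log n)$; a final pass subtracting $1$ from each coordinate returns $(b'_{f_{i,k}})_{k=1}^{e_i}$.

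I expect the routine part to be the second step, which is a black-box combination of \cref{cor:1} with the known Monge vector--matrix multiplication. The delicate part is the correctness of the recurrence in the first step: the crux is justifying that the middle portion of $S$ stays inside basket $i$ (this is exactly where the basket-ordering invariant and the placement of $a_{f_{i-1,j}}$ in the last layer of basket $i-1$ enter), and confirming that the sentinel value $-n$ in \cref{cor:1} is small enough that invalid entries are never selected yet not so small as to suppress a legitimate maximum.
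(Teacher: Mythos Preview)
Your proposal is correct and follows essentially the same approach as the paper: establish the recurrence $b'_{f_{i,k}} = \max_{j}\bigl(b'_{f_{i-1,j}} + M_i[j,k]\bigr) - 1$ by cutting/gluing at the boundary element of basket $i-1$, then evaluate it via SMAWK on the anti-Monge matrix with $\Oh(\log n/\log\log n)$-time oracle access. The only cosmetic difference is that the paper dispatches invalid $j$ by noting $b'_{f_{i,k}}\ge i>0$ rather than by exhibiting a valid pair, but your argument (boundary elements are never deleted, so the initialization witness supplies one) is equally valid.
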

\begin{proof}
	We shall argue that $b'_{f_{i,k}} = \max_{j=1}^{e_{i-1}}(b'_{f_{i-1,j}}+M_i[j,k]-1)$ holds for each $k\in [1\dd e_i]$.
	For a proof of the `$\le$' inequality, consider a longest basket-compatible increasing subsequence ending at~$a_{f_{i,k}}$.
	Since $a_{f_{i,k}}$ belongs to basket $i$, this subsequence must include an element
	from the boundary of basket $i-1$, i.e., $a_{f_{i-1,j}}$ for some $j\in [1\dd e_{i-1}]$.
	Observe that every prefix of a basket-compatible subsequence is basket-compatible.
	Hence, the prefix ending at $a_{f_{i-1,j}}$ is of length at most $b'_{f_{i-1,j}}$.
	Furthermore, since $f_{i-1,j}\ne f_{i,k}$, the suffix from $a_{f_{i-1,j}}$ to $a_{f_{i,k}}$ is of length at most $M_i[j,k]$.
	The total length of the subsequence is therefore at most
	$b'_{f_{i-1,j}}+M_i[j,k]-1$ (note that the prefix and the suffix share $a_{f_{i-1,j}}$).
	Consequently, $b'_{f_{i,k}} \le \max_{j=1}^{e_{i-1}}(b'_{f_{i-1,j}}+M_i[j,k]-1)$.

	For a proof of the converse inequality, let us fix $j\in [1\dd e_{i-1}]$.
	Note that if $M_{i}[j,k]\le -n$, then \[b'_{f_{i-1,j}}+M_i[j,k]-1 \le f_{i-1,j}-n-1 < 0 < i \le b'_{f_{i,k}}.\]
	Hence, we may assume that $f_{i-1,j}<f_{i,k}$ and $a_{f_{i-1,j}} < a_{f_{i,k}}$.
	In particular, there exists an increasing subsequence from $a_{f_{i-1,j}}$ to $a_{f_{i,k}}$
	of length $M_i[j,k] \ge 2$.
	Combined with a basket-compatible increasing subsequence of length $b'_{f_{i-1,j}}$ ending at $a_{f_{i-1,j}}$,
	this yields a basket-compatible increasing subsequence of length $b'_{f_{i-1,j}}+M_i[j,k]-1$ ending at $a_{f_{i,k}}$.
	Thus, $ b'_{f_{i,k}} \ge b'_{f_{i-1,j}}+M_i[j,k]-1$.

	We conclude that the sought values $(b'_{f_{i,k}})_{k=1}^{e_{i}}$ 
	can be expressed as the result of the $(\max,+)$-product of matrix $M_i$ with vector $(b'_{f_{i-1,j}}-1)_{j=1}^{e_{i-1}}$.
	Since $M_i$ is an anti-Monge matrix, the SMAWK algorithm~\cite{AggarwalKMSW87} can be applied to compute such a product in $\Oh(e_{i-1}+e_{i})$ time using $\Oh(e_{i-1}+e_{i})$ queries accessing the elements $M_i$;
	see~\cite[Theorem 3.10]{Tiskin13}. The oracle of \cref{cor:1} provides $\Oh(\log n / \log \log n)$-time access to $M_i$,
	so the total running time is $\Oh((e_{i-1}+e_i)\log n / \log \log n)$.
\end{proof}

Our final algorithm for dynamic \textsf{LIS} applies \cref{cor:1,lem:smawk} for all light baskets.

\begin{theorem}\label{thm:exact2}
		There exists a randomized algorithm for the dynamic \textsf{LIS} problem that has update time $O(n^{2/3}\log^4 n)$ and maintains the value of \textsf{LIS} correctly with probability at least $1-n^{-10}$.
\end{theorem}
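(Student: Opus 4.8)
The plan is to reuse verbatim the block-based framework, the heavy--light basket decomposition, and the correctness analysis of \cref{theorem:exact}, changing only how light baskets are maintained and re-optimizing the parameters $w$ and $s$. First I would keep everything from the proof of \cref{theorem:exact}: the layers $L_i$, the random choice of $r\in R$ fixing the boundary layers $L_r,L_{w+r},\dots$, the grouping into baskets of $w$ consecutive layers each, the dummy layer of value $\infty$, the classification of basket $i$ as \emph{light} iff $d_i\le s$, $e_i\le s/w$, and $e_{i-1}\le s/w$, and the declaration of failure if a boundary element is ever removed. Exactly as before, the boundary elements form decreasing subsequences and are never modified, the number of heavy baskets is $O(n/s)$, and \cref{lem:wrong} together with the failure-on-boundary-removal event shows that a single instance reports the correct \textsf{LIS} value (it never overestimates, since it maximizes only over basket-compatible subsequences) with probability at least $1-6g(n)/w\ge 1/2$ once $g(n)=w/12$.

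The only structural change is on light baskets $i>1$: instead of tabulating the $e_{i-1}\times d_i$ matrix of \textsf{LIS} values between the boundaries of baskets $i-1$ and $i$ (which costs $O(e_{i-1}d_i\log n)$ to build and $O(e_{i-1}e_i)$ to exploit during propagation), I would maintain the oracle of \cref{cor:1} for the anti-Monge matrix $M_i$. Its construction costs $O((e_{i-1}+d_i)\log^2 n)$, so the total preprocessing becomes $f(n)=O(n\log^2 n)$ because $\sum_i(e_{i-1}+d_i)=O(n)$. When an update falls inside light basket $i$, I rebuild only that one oracle in $O((e_{i-1}+d_i)\log^2 n)=O(s\log^2 n)$ time (here $e_{i-1}\le s/w$ and $d_i\le s+g(n)=O(s)$, using that the classification is fixed throughout), and for basket $1$, if light, I simply recompute its boundary values by patience sorting in $O(d_1\log n)=O(s\log n)$ time; if the update falls in a heavy basket, no local work is performed. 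After the local step I recompute the global solution exactly as in \cref{theorem:exact}, scanning baskets left to right: for a light basket $i$ I feed the just-computed values $b'$ on the boundary of basket $i-1$ and the oracle for $M_i$ into \cref{lem:smawk}, obtaining the values $b'$ on the boundary of basket $i$ in $O((e_{i-1}+e_i)\log n/\log\log n)$ time; for a heavy basket I run the Chen--Chu--Pinsker update, which costs $O(w\log n)$ because the values $b'$ within any single basket span at most $w+2g(n)=O(w)$ consecutive integers and the set of elements that change value forms an interval (by the reduction to ordinary \textsf{LIS} via boundary-copying established in the proof of \cref{theorem:exact}). Summing, the propagation costs $O((n/w)\log n)$ over all light baskets (since $\sum_i e_i=O(n/w)$) and $O((n/s)\cdot w\log n)$ over all heavy baskets.

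Putting the pieces together, the block-based algorithm has amortized update time $O(f(n)/g(n)+h(n))=O\big(\frac{n\log^2 n}{w}+s\log^2 n+\frac{n\log n}{w}+\frac{nw\log n}{s}\big)$ with $g(n)=w/12$. Choosing $sw=n$ balances the $s$ and $n/w$ terms, after which $w^3=n$ balances the surviving $n/w$ and $nw/s=w^2$ terms; this gives $w=n^{1/3}$ and $s=n^{2/3}$, hence amortized update time $O(n^{2/3}\log^2 n)$. Exactly as in \cref{theorem:exact}, one $O(\log n)$ factor is incurred for the balanced tree that provides $O(\log n)$-time random access to the sequence, and running $20\log n$ independent instances with fresh choices of $r$ and reporting the maximum brings the failure probability below $n^{-10}$ at the cost of another $O(\log n)$ factor, yielding the claimed $O(n^{2/3}\log^4 n)$ update time.

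The main obstacle is not the parameter balancing but confirming that \cref{cor:1} and \cref{lem:smawk} plug in without side conditions. Concretely: rebuilding $M_i$ after an insertion or deletion inside a light basket is legitimate because the boundaries of baskets $i-1$ and $i$ are untouched and hence remain decreasing subsequences, so \cref{thm:oracle} applies to the modified basket; the sentinel entries $M_i[j,k]\le -n$ correctly kill invalid continuations inside the $(\max,+)$-product used by \cref{lem:smawk}, which is precisely the content of the converse inequality in that lemma's proof; and the heavy-basket maintenance must absorb, on essentially every update, changes to its ``incoming levels'' (the values $b'$ on the previous boundary) while still staying within an $O(w)$-wide window of integers, so that the Chen--Chu--Pinsker update remains $O(w\log n)$ per heavy basket as argued for \cref{theorem:exact}. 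Once these points are verified, the theorem follows.
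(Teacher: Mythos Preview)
Your proposal is correct and follows essentially the same route as the paper: reuse the basket framework and heavy--light decomposition of \cref{theorem:exact}, replace the explicit $e_{i-1}\times d_i$ tables for light baskets by the anti-Monge oracle of \cref{cor:1} (rebuilt from scratch on each local update) together with \cref{lem:smawk} for propagation, and balance with $w=n^{1/3}$, $s=n^{2/3}$. Your accounting of the four log factors (oracle construction $\log^2 n$, plus one $\log n$ for the $20\log n$ parallel instances, plus one $\log n$ for the balanced-tree sequence access) matches the paper's, and the technical side conditions you flag (boundaries remain decreasing so \cref{thm:oracle} applies after a local edit, sentinel entries kill invalid transitions in the $(\max,+)$-product, heavy-basket levels stay within an $O(w)$ window) are exactly the points the paper relies on.
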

\begin{proof}
	Our algorithm is based on the same framework as that of \cref{theorem:exact}, and thus its correctness follows from \cref{theorem:exact}. 
	Here, we discuss the runtime.
	One thing to keep in mind is that the matrices $M_i$ that we construct for light baskets are accessed via oracle calls rather than being explicitly stored in memory.
	For each light basket $i>0$, we build the oracle of \cref{cor:1} when the basket is created and then every time it is modified.
	
	Similar to Theorem~\ref{theorem:exact}, we design a block-based algorithm that starts with an array of length~$n$, performs the operations for $g(n)$ steps, and spends (worst-case) time $h(n)$ for each operation. Assuming we have random access to the elements of the sequence, we manage to make our algorithm work for $f(n) =  O(n \log^3 n)$, $g(n) = n^{1/3}/12$, and $h(n) = O(n^{2/3} \log^3 n)$, which leads to a dynamic algorithm with update time $O(n^{2/3} \log^3 n)$.
	An additional $O(\log n)$ multiplicative factor appears in the runtime due to the data structure that we use for accessing the sequence elements, and thus the overall update time is bounded by $O(n^{2/3} \log^4 n)$.
	
	 Using Corollary~\ref{cor:1}, we achieve the preprocessing time of $O(n\log^3 n)$ since, for each basket $i$ (regardless of whether it is light or heavy), the preprocessing step can be implemented in time $O((e_{i-1}+d_i) \log^2 n)$ and therefore the total preprocessing time is $O(n \log^3 n)$ (recall that we run $O(\log n)$ parallel instances of our algorithm with different choices of boundary layers to keep the error rate small). 
	 Compared to the algorithm of \cref{theorem:exact}, the update time is improved in two ways: whenever an update modifies a light basket, we recompute the data structure of~\cref{cor:1} from scratch;
	 this costs  $ O((e_{i-1}+d_i) \log^2 n) =  O(s \log^2 n)$ time per local update.
	 Similar to before, after local updates, we iteratively update the size of \textsf{LIS} in an iterative manner. More precisely, starting from basket $1$, we determine the size of the longest (basket-compatible) increasing subsequence that ends at each boundary element of the basket.  
	 \cref{lem:smawk} implies that the cost of processing a light basket is bounded by $O((e_{i-1} + e_i)\log n)$. Since the total size of the boundary layers is bounded by $O(n/w)$, this amounts to a total cost of $O(n/w \log n)$ per instance.
	 Similar to before, the cost of processing a heavy basket is $O(w \log n)$ and, since we have at most $O(n/s)$ such baskets in every instance, this amounts to a total cost of $O(nw/s \log n)$ per instance. 
	 All the update times are multiplied by a factor $O(\log n)$ since we solve the problem for $O(\log n)$ simultaneous instances. Thus, the overall update time of our algorithm is bounded by $O(s \log^3 n+ n/w \log^2 n + nw/s \log^2 n)$. By setting $s = n^{2/3}$ and $w = n^{1/3}$, we obtain a block-based algorithm with preprocessing time $f(n) = O(n\log^3 n)$, $g(n) = w/12 = \Omega(n^{1/3})$, and $h(n) =  O(n^{2/3} \log^3 n)$, which in turn gives a dynamic algorithm with worst-case update time $O(n^{2/3} \log^3 n)$.
\end{proof}
	\subsection{\textsf{LIS} Oracle via Unit-Monge Matrices (Proof of \cref{thm:oracle})}\label{sec:monge}
Observe that \cref{thm:oracle} yields a data structure that, after $\Oh(n\log^2 n)$-time preprocessing of a sequence $(a_i)_{i=0}^{n-1}$ and its two non-increasing subsequences $(a_{p_i})_{i=0}^{k-1}$ and $(a_{q_j})_{j=0}^{\ell-1}$, supports the following queries in $\Oh(\log n / \log \log n)$ time:
given $i\in [0\dd k)$ and $j\in [0\dd \ell)$, compute the length of the longest increasing subsequence of $a$ starting at $a_{p_i}$ and ending at $a_{q_j}$.
Tiskin~\cite[Section 8.1]{Tiskin13} considered closely related semi-local \textsf{LIS} values
and, in particular, proved the following result:
\begin{theorem}[Tiskin~\cite{Tiskin13}]\label{thm:localLIS}
	A sequence $(a_i)_{i=0}^{n-1}$ can be preprocessed in $\Oh(n\log^2 n)$ to support the following queries in $\Oh(\log n / \log \log n)$ time:
	given indices $i,j\in [0\dd n)$, compute the length of the longest increasing subsequence of 
	the sequence $a_{i},a_{i+1},\ldots,a_j$.
\end{theorem}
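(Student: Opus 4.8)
The plan is to reduce range-\textsf{LIS} queries to the semi-local \textsf{LIS} framework of Tiskin~\cite{Tiskin13}, which is built on top of the machinery for $(\min,+)$-multiplying simple unit-Monge matrices~\cite{Tiskin15}. The key conceptual point is that the quantity ``$\textsf{LIS}$ of the contiguous block $a_i,a_{i+1},\ldots,a_j$'' is a \emph{prefix-suffix} semi-local value of the sequence $a$: it asks for the longest increasing subsequence that starts no earlier than position $i$ and ends no later than position $j$. Tiskin's semi-local \textsf{LIS} construction (the analogue of the semi-local \textsf{LCS} seaweed/braid construction) processes a length-$n$ sequence and produces, implicitly, a simple unit-Monge matrix $P$ of size $\Oh(n)\times\Oh(n)$ whose associated distance matrix $P^{\Sigma}$ (the two-dimensional dominance-count matrix) encodes all prefix-suffix \textsf{LIS} values simultaneously; formally, the length of $\textsf{LIS}(a_i,\ldots,a_j)$ equals $(j-i+1) - P^{\Sigma}[\,\cdot\,,\,\cdot\,]$ evaluated at the entry indexed by the boundary pair $(i,j)$ after the standard shift.

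First I would recall the precise form of the seaweed matrix for \textsf{LIS}: running the patience-sorting / seaweed braiding procedure on $a$ in $\Oh(n\log n)$ comparisons yields a permutation matrix $P$ (the ``\textsf{LIS} seaweed permutation'') such that for every $0\le i\le j\le n$, the value $\textsf{LIS}(a_i,\ldots,a_{j-1})$ equals $j-i$ minus the number of nonzeros of $P$ strictly dominated in the appropriate quadrant by $(i,j)$; this is exactly the prefix-suffix specialization of Tiskin's Theorem. Constructing $P$ itself is a divide-and-conquer over the sequence, where the combine step is a $(\min,+)$-product of two simple unit-Monge matrices, which by~\cite{Tiskin15} runs in $\Oh(m\log m)$ time on matrices of order $m$; summing over the recursion tree gives $\Oh(n\log^2 n)$ preprocessing. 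Second, to answer a query $(i,j)$ in $\Oh(\log n/\log\log n)$ time, I would store $P$ (a permutation on $\Oh(n)$ points) in a static data structure supporting planar dominance-counting queries: a range tree with fractional cascading gives $\Oh(\log n)$, but to hit the claimed $\Oh(\log n/\log\log n)$ bound one uses the standard wavelet-tree / van~Emde~Boas-layered rank structure for a permutation, which supports $2$-sided and $4$-sided dominance counts in $\Oh(\log n/\log\log n)$ time with $\Oh(n\log n)$ preprocessing (this is precisely the data structure Tiskin invokes for semi-local queries). Then each range-\textsf{LIS} query is one dominance count plus an arithmetic subtraction.

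The main obstacle — and the part that deserves the most care — is correctly setting up the \emph{reduction from \textsf{LIS} to the seaweed/unit-Monge formalism}, since that formalism is natively stated for \textsf{LCS}. The cleanest route is via the textbook reduction $\textsf{LIS}(a) = \textsf{LCS}(a, \mathrm{sort}(a))$ when the entries are distinct (which we may assume, as stated in the preliminaries): sorting $a$ gives a ``reference'' string, and the longest common subsequence of $a$ with its sorted copy is exactly the longest increasing subsequence. Crucially, semi-local \textsf{LCS} with one of the two strings held fixed (here the sorted copy, which never changes) still yields the full prefix-suffix structure on the other string — this is exactly the ``string-substring'' / ``prefix-suffix'' slice of the semi-local \textsf{LCS} matrix — so Tiskin's $\Oh(n\log^2 n)$ construction and $\Oh(\log n/\log\log n)$ query bound transfer verbatim once we verify that the pattern string (the sorted sequence) has length $\Oh(n)$ and the seaweed matrix has order $\Oh(n)$. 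I would also double-check the off-by-one bookkeeping between half-open index ranges $[i\dd j)$ and the inclusive range $a_i,\ldots,a_j$ demanded by the statement, and confirm that the dominance-counting data structure's preprocessing $\Oh(n\log n)$ is subsumed by the $\Oh(n\log^2 n)$ bound. Everything else — the $\Oh(n\log^2 n)$ total preprocessing from summing $\Oh(n/2^t)$ subproblems each costing $\Oh((n/2^t)\log n)$ at level $t$, and the $\Oh(\log n/\log\log n)$ per-query time — is then a direct citation of~\cite{Tiskin13,Tiskin15}.
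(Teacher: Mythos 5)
Your proposal is correct in substance, and it reconstructs what is essentially Tiskin's original argument --- which is also how the paper treats this theorem: the statement is presented as a citation of \cite{Tiskin13} (Section~8.1), not proved from scratch. The paper's only added content is an alternative derivation from its more general Theorem~\ref{thm:oracle}: pad the sequence with sentinels ($b_{3i}=-\infty$, $b_{3i+1}=a_i$, $b_{3i+2}=+\infty$) so that the ``\textsf{LIS} of a range, endpoints optional'' query becomes a ``\textsf{LIS} from a prescribed element to a prescribed element'' query on the two antichains $(b_{3i})_{i}$ and $(b_{3j+2})_{j}$; Theorem~\ref{thm:oracle} is then proved via undirected alignment graphs, cut-paths, and slice seaweed matrices rather than via the $\textsf{LIS}(a)=\textsf{LCS}(a,\mathrm{sort}(a))$ reduction you use. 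Your route is the more direct one for this particular statement; the paper's detour exists because it needs the strictly more general oracle (distances between two arbitrary non-increasing subsequences, used for the boundary layers of consecutive baskets) elsewhere.

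Two points in your sketch deserve tightening. First, the level-$t$ accounting as written ($\Oh(n/2^t)$ subproblems each costing $\Oh((n/2^t)\log n)$) does not sum to $\Oh(n\log^2 n)$; the correct count is $2^t$ subproblems whose seaweed matrices have order $\Oh(n/2^t)$, so each level costs $\Oh(n\log n)$ and there are $\Oh(\log n)$ levels. Second --- and this is the substantive point hiding behind that correction --- the matrices at level $t$ have order $\Oh(n/2^t)$ only because rows and columns of the alignment grid containing no match are contracted before recursing; without this contraction every recursive seaweed product acts on matrices of order $\Theta(n)$ and the total cost becomes $\Theta(n^2\log n)$. This contraction, together with the bookkeeping needed to undo it after the product (seaweeds originating from empty rows or columns pass straight through), is Tiskin's Algorithm~8.2 and corresponds to Lemmas~\ref{lem:ccontract} and~\ref{lem:acontract} of the paper. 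Your blanket citation of \cite{Tiskin13,Tiskin15} technically covers it, but the sketch should name it, since it is precisely the step that makes the sparse (\textsf{LIS}) case cheaper than general \textsf{LCS}.
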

Note that the increasing subsequences in \cref{thm:localLIS} do not have to contain $a_i$ or $a_j$, whereas the increasing subsequences in \cref{thm:oracle} must start at $a_{p_i}$ and end at $a_{q_j}$.
To see that the latter setting is more general, consider a sequence $(b_i)_{i=0}^{3n-1}$
whose entries are defined as follows for $i\in [0\dd 3n)$:
\[b_i = \begin{cases}
	-\infty & \text{if }i\bmod 3 = 0,\\
	a_{(i-1)/3} & \text{if }i\bmod 3 = 1,\\
	+\infty & \text{if }i\bmod 3 = 2.
\end{cases}\]
Observe that any increasing subsequence of $a_{i},a_{i+1},\ldots,a_j$
yields an increasing subsequence of $b_{3i+1},b_{3i+2},\ldots, b_{3j+1}$,
which can be extended with $b_{3i}=-\infty$ and $b_{3j+2}=+\infty$.
Conversely, we obtain an increasing subsequence of $a_{i},a_{i+1},\ldots, a_{j}$ by trimming the endpoints of any increasing subsequence of $b$ starting at $b_{3i}=-\infty$ and ending at $b_{3j+2}=+\infty$.
Thus, \cref{thm:localLIS} can be derived from \cref{thm:oracle} applied to the sequence $(b_i)_{i=0}^{3n-1}$ and its two subsequences $(b_{3i})_{i=0}^{n-1}$ and $(b_{3j+2})_{j=0}^{n-1}$.

In order to prove \cref{thm:oracle}, we carefully adapt the techniques of~\cite{Tiskin13}.
Unfortunately, this requires generalizing the definitions of most combinatorial objects and, consequently, repeating most of the proofs. The only component that we managed to use in a black-box fashion is an efficient procedure for computing the min-plus product of two simple unit-Monge matrices~\cite{Tiskin13,Tiskin15}.

\subsubsection{Preliminaries}
We start by introducing the terminology behind the definition of unit-Monge matrices and their space-efficient representation. We mostly follow~\cite[Chapters 2 and 3]{Tiskin13}, except that we always use integers to index matrix rows and columns.

The $(\min,+)$ product of matrices $A\in \R^{n\times p}$ and $B\in \R^{p\times m}$
is a matrix $A\odot B\in \R^{n\times m}$ with entries defined as follows for $i\in [0\dd n)$ and $j\in [0\dd m)$:
\[(A\odot B)[i,j] = \min_{k\in [0\dd p)}\left(A[i,k]+B[k,j]\right).\]

For a matrix $A\in \R^{n\times m}$, the \emph{distribution matrix} $A^\Sigma \in \R^{(n+1)\times (m+1)}$ has its entries defined as follows for $i\in [0\dd n]$ and $j\in [0\dd m]$:
\[A^\Sigma [i,j] = \sum_{i'\in [i\dd n),\; j'\in [0\dd j)} A[i',j'].\]
For a matrix $A\in \R^{(n+1)\times (m+1)}$, the \emph{density matrix} $A^\square \in \R^{n\times m}$ has its entries defined as follows for $i\in [0\dd n)$ and $j\in [0\dd m)$:
\[A^\square [i,j] = A[i+1,j]+ A[i,j+1] - A[i,j]- A[i+1,j+1].\]
The \emph{seaweed product} of matrices $A\in \R^{n\times p}$ and $B\in \R^{p\times m}$
is a matrix $A \boxdot B = (A^\Sigma \odot B^\Sigma)^\square \in \R^{n\times m}$.

Note that every matrix $A\in \R^{n\times m}$ satisfies $A = (A^\Sigma)^\square$.
A matrix $A\in \R^{(n+1)\times (m+1)}$ is called \emph{simple} if $A = (A^\square)^\Sigma$ or, equivalently, 
if $A[n,j]=0$ for $j\in [0\dd m]$ and $A[i,0]=0$ for $i\in [0\dd n]$.

A matrix $P\in \{0,1\}^{n\times n}$ is a \emph{permutation matrix} if each row and each column contains exactly one entry equal to $1$.
Note that an $n\times n$ permutation matrix can be represented with a permutation $\sigma:[0\dd n)\to [0\dd n)$ 
such that $P[i,j]=1$ if and only if $j = \sigma(i)$.
\begin{theorem}[{Tiskin~\cite{Tiskin13,Tiskin15}}]\label{thm:prod}
	For any two $n\times n$ permutation matrices $P_A,P_B$, the seaweed product $P_C := P_A \boxdot P_B$
	is an $n\times n$ permutation matrix.
	Moreover, given the permutations representing $P_A$ and $P_B$, the permutation representing $P_C$ can be constructed
	in $\Oh(n\log n)$ time.
\end{theorem}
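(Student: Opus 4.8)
The plan is to establish the two assertions separately: that $P_C := P_A \boxdot P_B = (P_A^\Sigma \odot P_B^\Sigma)^\square$ is a permutation matrix, and that it is computable from the permutations representing $P_A$ and $P_B$ in $\Oh(n\log n)$ time.

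For the structural claim I would work with the distribution matrices $D_A := P_A^\Sigma$ and $D_B := P_B^\Sigma$, using the explicit descriptions $D_A[i,k]=|\{i'\in[i\dd n):\sigma_A(i')<k\}|$ and $D_B[k,j]=|\{k'\in[k\dd n):\sigma_B(k')<j\}|$, and set $D := D_A \odot D_B$. I would then check four elementary facts about $D$. \emph{Monge-ness}: $D_A^\square = P_A \ge 0$ and $D_B^\square = P_B \ge 0$, so $D_A$ and $D_B$ are Monge, and the $(\min,+)$-product of Monge matrices is Monge, hence $D^\square \ge 0$. \emph{Simplicity}: since $D_A[n,\cdot]=0$ and $D_B\ge 0$ with $D_B[n,\cdot]=0$, we get $D[n,j]=\min_k(D_A[n,k]+D_B[k,j])=0$, and symmetrically $D[i,0]=0$; thus $D$ is simple and $D=(D^\square)^\Sigma=P_C^\Sigma$. \emph{Unit entries}: from $D_A[i,k]=D_A[i+1,k]+[\sigma_A(i)<k]$ and $D_B[k,j+1]=D_B[k,j]+[\sigma_B^{-1}(j)\ge k]$ one deduces $D[i+1,j]-D[i,j]\in\{-1,0\}$ and $D[i,j+1]-D[i+1,j+1]\in\{0,1\}$, which forces $P_C[i,j]=D^\square[i,j]\in\{-1,0,1\}$; combined with Monge-ness this gives $P_C[i,j]\in\{0,1\}$. \emph{One per row and column}: the row sum $\sum_j P_C[i,j]$ telescopes to $D[i,n]-D[i+1,n]$, and a one-line monotonicity argument (as $k$ decreases, $D_A[i,k]$ drops by at most $1$ while $n-k=D_B[k,n]$ rises by exactly $1$) shows $D[i,n]=n-i$, so the row sum is $1$; symmetrically, $D[0,j]=j$ gives column sum $1$. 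Hence $P_C$ is a permutation matrix.

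For the $\Oh(n\log n)$ algorithm I would use the recursive ``comb the seaweed braid'' strategy. View the product combinatorially: $3n$ points in a top, a middle, and a bottom row, with $\sigma_A$ matching top to middle and $\sigma_B$ matching middle to bottom, producing $n$ top-to-bottom curves (seaweeds) that pairwise cross at most once; the entries of $D = D_A\odot D_B$ count, for each cell $(i,j)$, the seaweeds passing ``above-left'' of it, so recovering the permutation representing $P_C$ amounts to combing the stacked braid. Splitting the middle row into its lower and upper halves partitions the seaweeds into two groups of size $n/2$; after order-preserving relabeling of the relevant top and bottom endpoints, each group restricts to a seaweed product of two $(n/2)\times(n/2)$ permutation matrices, which I would solve recursively to obtain sub-permutations $\sigma_1$ and $\sigma_2$. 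The merge then reconstructs $\sigma_C$ by interleaving the two groups and inserting the crossings between a group-$1$ and a group-$2$ seaweed; the structural point---which follows from the unit-Monge property established above, applied to the two sub-products---is that this cross-group crossing pattern is a monotone staircase, so the interleaving is realized by a single coordinated linear scan over $\sigma_1$ and $\sigma_2$, in $\Oh(n)$ time. The recurrence $T(n)=2T(n/2)+\Oh(n)=\Oh(n\log n)$ finishes the bound, with correctness verified by matching the combed braid's crossing counts against the entries of $D$.

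The main obstacle is precisely this linear-time merge together with its correctness: one must prove that after resolving all double crossings the cross-group pattern is a monotone staircase (hence $\Oh(n)$-computable) and that ``combing the braid'' genuinely implements the $(\min,+)$-product of the two distribution matrices rather than some weaker operation. This is Tiskin's ``steady-state'' analysis and is the technical heart of the construction; everything else (the permutation-matrix claim, the recursive decomposition, the complexity bookkeeping) is routine once it is in hand. Accordingly, I would isolate the merge as a standalone lemma and invoke the combinatorial proof of~\cite{Tiskin13,Tiskin15}, which is also what the rest of the paper relies on when it uses the seaweed product as a subroutine.
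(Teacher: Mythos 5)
The paper does not prove this statement at all: Theorem~\ref{thm:prod} is imported verbatim from Tiskin's work, and the authors explicitly say that the efficient $(\min,+)$-product of simple unit-Monge matrices is the one component they use as a black box. So there is no in-paper proof to compare against, and your proposal should be judged as a reconstruction of Tiskin's argument.

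On that footing, your sketch is sound. The structural half (that $P_C$ is a permutation matrix) is complete and correct: the explicit formulas $D_A[i,k]=|\{i'\ge i:\sigma_A(i')<k\}|$, the Monge-ness of the $(\min,+)$-product, the simplicity check $D[n,j]=D[i,0]=0$, the unit bounds $D[i+1,j]-D[i,j]\in\{-1,0\}$ and $D[i,j+1]-D[i+1,j+1]\in\{0,1\}$ forcing $D^\square\in\{0,1\}$, and the telescoping row/column sums via $D[i,n]=n-i$ and $D[0,j]=j$ all check out. This part could stand as a self-contained proof. The algorithmic half correctly identifies the divide-and-conquer on the middle row of the seaweed braid and the recurrence $T(n)=2T(n/2)+\Oh(n)$, and you are right that the entire difficulty is concentrated in the $\Oh(n)$ merge and its correctness (that the cross-group crossing pattern, after resolving double crossings, is a monotone staircase, and that combing realizes the $(\min,+)$-product). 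Since you defer exactly that lemma to \cite{Tiskin13,Tiskin15}, your proposal does not constitute an independent proof of the $\Oh(n\log n)$ bound --- but that is precisely the stance the paper itself takes, so nothing is missing relative to what the paper requires.
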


A matrix $M\in \R^{n\times m}$ is a \emph{Monge} matrix if $M^\square$ has non-negative entries,
an \emph{anti-Monge} matrix if $M^\square$ has non-positive entries,
and a \emph{unit-Monge} matrix if $M^\square$ is a permutation matrix.
Note that an $n\times n$ simple unit-Monge matrix $M$ satisfies $M=P^\Sigma$ for the permutation matrix
$P=M^\square$ and thus admits an $\Oh(n)$-space representation based on the permutation representing $P$.

\begin{fact}[\!\!{\cite[Theorem 2.15]{Tiskin13}}]\label{fct:oracle}
	Let $P$ be an $n\times n$ permutation matrix.
	There is a data structure of size $\Oh(n)$ that, given indices $i,j\in [0\dd n]$, computes $P^\Sigma [i,j]$ in  $\Oh(\log n / \log \log n)$-time. Moreover, the data structure can be constructed in $\Oh(n\sqrt{\log n})$
	time given the permutation representing~$P$.
\end{fact}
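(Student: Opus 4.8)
The plan is to recognize $P^\Sigma[i,j]$ as a two-dimensional dominance count on the point set determined by the permutation, and to answer such counts with a high-fan-out wavelet tree. Writing $\sigma\colon [0\dd n)\to[0\dd n)$ for the permutation with $P[i,j]=1\iff j=\sigma(i)$, we have
\[P^\Sigma[i,j]=\bigl|\{\,i'\in[i\dd n):\sigma(i')<j\,\}\bigr|.\]
Since $\sigma$ is a bijection, exactly $j$ of the indices $i'\in[0\dd n)$ satisfy $\sigma(i')<j$, so the query is equivalent, via $P^\Sigma[i,j]=j-\rho(i,j)$, to the prefix count $\rho(i,j):=\bigl|\{\,i'\in[0\dd i):\sigma(i')<j\,\}\bigr|$. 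Hence it suffices to preprocess the point set $\{(i',\sigma(i')):i'\in[0\dd n)\}$ so as to count the points inside a query quadrant $[0\dd i)\times[0\dd j)$ in $\Oh(\log n/\log\log n)$ time and $\Oh(n)$ space.

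First I would build a wavelet tree over the string $\sigma(0),\sigma(1),\dots,\sigma(n-1)$: the root owns the value interval $[0\dd n)$, every node splits its interval into $B$ nearly equal subintervals and forwards each of its elements (in left-to-right order) to the child whose subinterval contains the element's value, and the recursion halts once an interval has length one. Taking the fan-out $B=\lceil\log^{\delta} n\rceil$ for a small constant $\delta>0$ makes the depth $\Oh(\log_B n)=\Oh(\log n/\log\log n)$; since the nodes of any one level partition the original string, the tree stores $\Oh(n)$ symbols over an alphabet of size $B$, i.e.\ $\Oh(n\log\log n)$ bits, which is $\Oh(n)$ machine words. To evaluate $\rho(i,j)$ I descend from root to leaf along the branch selected by $j$, and at each visited node $v$ I issue a single \emph{generalized-rank} query --- ``among the first $i_v$ elements of this node's symbol string, how many are sent to a child with index $<t$'' --- which simultaneously advances the prefix length to the relevant child and accumulates the count contributed by all children strictly to the left of the branch. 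Thus one query amounts to $\Oh(\mathrm{depth})=\Oh(\log n/\log\log n)$ generalized-rank queries, one per level.

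The technical heart is to serve each generalized-rank query in $\Oh(1)$ word-RAM time while keeping the whole structure in $\Oh(n)$ words; this is the standard handling of rank on a string over a polylogarithmic alphabet. For each node I would store cumulative per-symbol counts at sparsely spaced superblock boundaries, pack $\Theta(\log n/\log\log n)$ symbols into a machine word, and precompute a universal lookup table of $n^{o(1)}$ size (keyed by $\tfrac12\log n$-bit chunks together with a threshold) that returns the in-chunk count; a generalized-rank query then combines one stored superblock count with $\Oh(1)$ in-word operations and table lookups. For the construction, distributing the string across one level is a bucket sort in $\Oh(n+B)=\Oh(n)$ time, so building the whole tree together with its auxiliary rank structures can be carried out within the claimed $\Oh(n\sqrt{\log n})$ bound using known bit-parallel wavelet-tree construction techniques. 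The main obstacle I anticipate is precisely making all budgets hold at once --- $\Oh(1)$ per level, $\Oh(n)$ words of space, and $\Oh(n\sqrt{\log n})$ construction (including building the precomputed tables) --- which forces a careful joint tuning of the fan-out, the superblock length, and the packing width, and in particular the $\Oh(n\sqrt{\log n})$ construction bound relies on bit-parallelism that I would import from the literature. As these are by-now-standard succinct-data-structure ingredients, one may either cite them (as the paper does via Tiskin) or assemble the structure directly from the orthogonal range-counting and rank/select literature.
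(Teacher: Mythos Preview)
Your approach is correct and essentially identical to the paper's: both reduce $P^\Sigma[i,j]$ to a 2D dominance (orthogonal range counting) query on the $n$ points $\{(i',\sigma(i'))\}$, and then invoke a known data structure. The paper simply cites Chan and P\u{a}tra\c{s}cu for the $\Oh(n)$-space, $\Oh(\log n/\log\log n)$-query, $\Oh(n\sqrt{\log n})$-construction bound, whereas you sketch the high-fan-out wavelet tree that underlies such results; one minor slip is your space count (each of the $\Oh(\log n/\log\log n)$ levels stores $n$ symbols of $\Oh(\log\log n)$ bits, so the total is $\Oh(n\log n)$ bits, not $\Oh(n\log\log n)$ bits), but the conclusion of $\Oh(n)$ words is correct.
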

\newcommand{\Pts}{\mathcal{P}}
\begin{proof}
	Consider a set $\Pts:=\{(i,j)\in [0\dd n)^2 : P[i,j]=1\}$ consisting of exactly $n$ points on the plane.
	Observe that, for every $i,j\in [0\dd n]$, we have $P^\Sigma[i,j] = |\Pts\cap ([i\dd n)\times [0\dd j))|$.
	Thus, a query asking for $P^\Sigma[i,j]$ can be interpreted as an orthogonal range counting query on $\Pts$.
	
	As proved by Chan and Pătraşcu~\cite{DBLP:conf/soda/ChanP10}, these queries can be answered in $\Oh(\log n / \log \log n)$ time using a data structure of size $\Oh(n)$ that can be built in $\Oh(n\sqrt{\log n})$ time given $\Pts$. 
	This set can be expressed as $\{(i,\sigma(i)):i\in [0\dd n)\}$ in terms of the permutation $\sigma$ representing $P$.
\end{proof}

\subsubsection{Grids and Alignment Graphs}
In~\cite[Section 4.3]{Tiskin13}, Tiskin defines an alignment dag of two strings $X$ and $Y$, capturing the structure of the dynamic-programming algorithm for computing their longest common subsequence.
In order to derive the semi-local score matrix, encoding, in particular, the \textsf{LCS} values between $X$ and the substrings of $Y$, as well as between $X$ and the substrings of $Y$, he then extends $X$ with wildcard characters and carefully handles pairs of unreachable vertices (intuitively corresponding to substrings of negative length). 
These complications would be very troublesome in our more general setting, but they can be avoided using undirected edges of cost 0 and 1 instead of directed arcs of score 1 and 0, respectively. 
The price that we pay for this more streamlined approach is a more complicated proof that our alignment graph still encodes all local \textsf{LCS} values.
Another (minor) difference is that our construction is parameterized by an arbitrary set $S\sub [0\dd n)\times [0\dd m)$ rather than a set of the form $\{(x,y) : X[x]=Y[y]\}$ defined in terms of two strings $X$ and $Y$.

For $n,m\in \Zz$, we define the \emph{grid} $\Gr^{n,m} = [0\dd n]\times [0\dd m]$
consisting of $(n+1)(m+1)$ \emph{points}.

\begin{definition}
	Given $n,m\in \Zz$ and $S\sub \Gr^{n-1,m-1}$,
	we define an undirected \emph{alignment graph} $\AG^{n,m}(S)$ with vertices $\Gr^{n,m}$
	and weighted edges:
	\begin{itemize}
		\item $(x,y)\stackrel{1}{\longleftrightarrow} (x+1,y)$ for every $(x,y)\in \Gr^{n-1,m}$,
		\item $(x,y)\stackrel{1}{\longleftrightarrow} (x,y+1)$ for every $(x,y)\in \Gr^{n,m-1}$,
		\item $(x,y)\stackrel{0}{\longleftrightarrow} (x+1,y+1)$ for every $(x,y)\in S$.
	\end{itemize}
\end{definition}
Note that an \emph{alignment graph} $G=\AG^{n,m}(S)$ induces a metric $\dist_G : \Gr^{n,m}\times \Gr^{n,m} \to \Zz$. 
In order to characterize this metric, we define a strict partial order $\prec$ on $\mathbb{Z}^2$
with $(x,y)\prec (x',y')$ if and only if $x< x'$ and $y< y'$.
The underlying partial order $\preceq$ satisfies $(x,y)\preceq (x',y')$ if and only if $(x,y)=(x',y')$
or $(x,y)\prec (x',y')$. Note that this is \emph{not} equivalent to $x\le x'$ and $y\le y'$.

A set $C\sub \mathbb{Z}^2$ is called a \emph{chain} if every two distinct elements of $C$ are comparable with~$\prec$.
Observe that a finite set $C$ is a chain if and only if its elements can be arranged in a sequence $(x_i,y_i)_{i=0}^{|C|-1}$ such that both $(x_i)_{i=0}^{|C|-1}$ and $(y_i)_{i=0}^{|C|-1}$ are (strictly) increasing sequences.
Hence, we also refer to such a sequence as a chain.
Similarly, a set $A\sub \Gr^{n,m}$ is called an \emph{antichain} if no two elements of $A$ are comparable with~$\prec$.
Observe that a finite set $A$ is an antichain if and only if its elements can be arranged in a sequence $(x_i,y_i)_{i=0}^{|A|-1}$ such that $(x_i)_{i=0}^{|A|-1}$ is non-decreasing and $(y_i)_{i=0}^{|A|-1}$ is non-increasing.
Hence, we also refer to such a sequence as an antichain.

For $S\sub \mathbb{Z}^2$, we denote by $\LIS(S)$ the maximum size of a chain contained in $S$.
Observe that $A\sub \mathbb{Z}^2$ is a non-empty antichain if and only if $\LIS(A)=1$.

\begin{lemma}\label{lem:dist}
	Let $p=(x,y)$ and $p'=(x',y')$ be vertices of an alignment graph $G=\AG^{n,m}(S)$.
	\begin{itemize}
		\item\label{it:dist:a} If $x\le x'$ and $y\le y'$, then $\dist_G(p,p')=|x'-x| + |y'-y|-2\LIS(S\cap([x\dd x')\times [y\dd y')))$.
		\item\label{it:dist:b} If $x\le x'$ and $y\ge y'$, then $\dist_G(p,p')=|x'-x| + |y'-y|$.
		\item\label{it:dist:c} If $x\ge x'$ and $y\le y'$, then $\dist_G(p,p')=|x'-x| + |y'-y|$.
		\item\label{it:dist:d} If $x\ge x'$ and $y\ge y'$, then $\dist_G(p,p')=|x'-x| + |y'-y|-2\LIS(S\cap([x'\dd x)\times [y'\dd y)))$.
	\end{itemize}
\end{lemma}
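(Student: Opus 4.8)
The plan is to prove matching upper and lower bounds on $\dist_G$. Since $G$ is undirected, $\dist_G(p,p')=\dist_G(p',p)$, so the fourth item is exactly the first item applied to the pair $(p',p)$ (note $x'\le x$, $y'\le y$ there), and it suffices to handle the first three items. Two cheap ingredients settle everything except the $\LIS$-term. First, an \emph{$L_1$ upper bound}: the path that first corrects the $x$-coordinate and then the $y$-coordinate using only weight-$1$ edges stays inside $\Gr^{n,m}$, so $\dist_G(p,p')\le |x'-x|+|y'-y|$. Second, the \emph{anti-diagonal potential} $\pi(a,b):=a-b$, which changes by exactly $\pm1$ across every weight-$1$ edge of $G$ and by $0$ across every weight-$0$ (diagonal) edge; hence $\dist_G(p,p')\ge|\pi(p')-\pi(p)|=|(x'-x)-(y'-y)|$. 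In the second and third items $x'-x$ and $y'-y$ have opposite signs (or one of them vanishes), so $|(x'-x)-(y'-y)|=|x'-x|+|y'-y|$, and the two ingredients already force equality. This disposes of items (b), (c), and (d).

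It remains to prove the first item: for $p=(x,y)\preceq p'=(x',y')$, writing $\ell:=\LIS\bigl(S\cap([x\dd x')\times[y\dd y'))\bigr)$, we must show $\dist_G(p,p')=(x'-x)+(y'-y)-2\ell$. For the upper bound I would fix a maximum chain $(a_1,b_1)\prec\cdots\prec(a_\ell,b_\ell)$ in $S\cap([x\dd x')\times[y\dd y'))$ and build the \emph{staircase path} from $p$ to $p'$ that interleaves monotone $L_1$ segments with the weight-$0$ edges at the points $(a_i,b_i)\in S$: go monotonically from $p$ to $(a_1,b_1)$, take the diagonal edge to $(a_1{+}1,b_1{+}1)$, go monotonically to $(a_2,b_2)$, and so on, finally going monotonically from $(a_\ell{+}1,b_\ell{+}1)$ to $p'$. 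The chain inequalities $x\le a_1$, $a_i+1\le a_{i+1}$, $a_\ell<x'$ (and their $y$-analogues) make every monotone segment have non-negative length; the path uses $\ell$ diagonal steps and $(x'-x-\ell)+(y'-y-\ell)$ unit steps, so its weight is exactly $(x'-x)+(y'-y)-2\ell$.

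For the matching lower bound I would use a second potential, this time tailored to the source $p$: for $(a,b)\in\Gr^{n,m}$, set
\[
  \phi(a,b) := (a-x)+(b-y)-2\,r(a,b),\qquad
  r(a,b) := \LIS\bigl(S\cap([x\dd a)\times[y\dd b))\bigr),
\]
with the convention that $[x\dd a)$ (resp.\ $[y\dd b)$) is empty when $a\le x$ (resp.\ $b\le y$). The goal is to verify $|\phi(u)-\phi(v)|\le(\text{weight of }uv)$ for every edge $uv$ of $G$; granting this, $\dist_G(p,p')\ge\phi(p')-\phi(p)=\phi(p')=(x'-x)+(y'-y)-2\ell$ since $\phi(p)=0$ and $r(x',y')=\ell$, and together with the staircase bound this finishes the first item. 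For a weight-$1$ edge the check is routine: enlarging the box $[x\dd a)\times[y\dd b)$ by one column (or one row) raises $r$ by $0$ or $1$, because a chain meets each column (and each row) at most once, so $\phi$ changes by $\pm1$. For a weight-$0$ edge at a point $(c,d)\in S$ I need $r(c{+}1,d{+}1)=r(c,d)$ when $c<x$ or $d<y$ (both sides vanish, the relevant coordinate range being empty) and $r(c{+}1,d{+}1)=r(c,d)+1$ when $c\ge x$ and $d\ge y$.

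The last identity is the only substantive point, and the main obstacle of the whole proof. I would prove it by a chain-surgery argument: when $c\ge x$ and $d\ge y$, some maximum chain of $S\cap([x\dd c{+}1)\times[y\dd d{+}1))$ ends at $(c,d)$. Indeed, if a maximum chain avoids $(c,d)$, look at its last element $(a,b)$ (so $a\le c$, $b\le d$); if $a<c$ and $b<d$ we may append $(c,d)$, while if $a=c$ (forcing $b<d$) or $b=d$ (forcing $a<c$) we may replace $(a,b)$ by $(c,d)$ — in each case obtaining a chain of at least the same length inside $S\cap([x\dd c{+}1)\times[y\dd d{+}1))$ that ends at $(c,d)$. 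Deleting $(c,d)$ from such a chain leaves exactly a chain in $S\cap([x\dd c)\times[y\dd d))$, and conversely any chain there extends by $(c,d)$; hence $r(c{+}1,d{+}1)=r(c,d)+1$. The delicate part is thus this surgery step together with getting the empty-range conventions right so that $\phi$ is a genuine potential across \emph{every} diagonal edge, including those outside the box $[x\dd x')\times[y\dd y')$; all remaining verifications are mechanical.
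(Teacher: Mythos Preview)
Your treatment of items (b), (c), (d) via symmetry and the anti-diagonal potential $\pi(a,b)=a-b$ is correct and cleaner than the paper's case analysis. The staircase upper bound for item (a) is also fine. The gap is in the lower bound for item (a): your potential $\phi(a,b)=(a-x)+(b-y)-2r(a,b)$ is \emph{not} $1$-Lipschitz across every weight-$0$ edge. You need $|\phi(c{+}1,d{+}1)-\phi(c,d)|\le 0$, i.e.\ $r(c{+}1,d{+}1)=r(c,d)+1$, at \emph{every} $(c,d)\in S$. You correctly compute that $r(c{+}1,d{+}1)=r(c,d)=0$ when $c<x$ or $d<y$, but that is precisely the bad case: then $\phi$ jumps by $+2$ across a cost-$0$ edge, so a walk could in principle harvest $\phi$ without paying. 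Your sentence ``getting the empty-range conventions right so that $\phi$ is a genuine potential across every diagonal edge'' asserts exactly what fails.

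Two clean repairs. The paper avoids potentials for the lower bound and instead inducts on the number of edges in a walk, doing a case split on the first edge; the key case is a diagonal step at $(c,d)$ with $c<x$ (or $d<y$), where one uses that $\{c\}\times[y,y')\cup[x,x')\times\{d\}$ is an antichain to bound the change in the $\LIS$ term by $1$. Alternatively, your potential idea can be rescued by symmetrizing it: set
\[
\phi(a,b)=|a-x|+|b-y|-2r_1(a,b)-2r_2(a,b),\qquad r_1=\LIS\bigl(S\cap[x,a)\times[y,b)\bigr),\ \ r_2=\LIS\bigl(S\cap[a,x)\times[b,y)\bigr).
\]
Now weight-$1$ edges change $\phi$ by $\pm1$ (only one of $r_1,r_2$ is nonzero, and moves by at most $1$), and at every $(c,d)\in S$ the $|\cdot|$-part changes by $+2,0,0,-2$ in the four quadrants while exactly one of $r_1,r_2$ moves by $+1,0,0,-1$ respectively (the $-1$ for $r_2$ coming from the same chain-surgery argument applied at the minimum of a chain), so $\phi$ is constant across every diagonal edge. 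Since $\phi(p)=0$ and $\phi(p')=(x'{-}x)+(y'{-}y)-2\ell$, this gives the desired lower bound.
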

\begin{proof}
	Note that the cases in the lemma statement are not disjoint.
	However, if multiple cases are applicable, then $x=x'$ or $y=y'$, and the provided formulae are consistent due to $\LIS(\emptyset)=0$.
	
	Let us first bound $\dist_G(p,p')$ from above by induction on $|x-x'| + |y-y'|$.
	In the base case of $|x-x'| + |y-y'|=0$, we have $p=p'$ and thus $\dist_G(p,p')=0=|x-x'| + |y-y'|$ holds as claimed.
	
	Next, suppose that $|x-x'| + |y-y'|>0$. By symmetry between the coordinates, we may assume $|x-x'|>0$.
	Moreover, since $G$ is undirected and the claimed formulae for $\dist_G$ are symmetric, we may further assume $x' > x$.
	Let us consider a point $p'' = (x+1,y)$ and note that $G$ contains an edge $p \stackrel{1}{\longleftrightarrow} p''$.
	Furthermore, the inductive assumption, yields $\dist_G(p'',p')\le |x'-x-1|+|y'-y| = |x'-x|+|y'-y|-1$.
	Hence, $\dist_G(p,p')\le 1 + \dist_G(p'',p') \le 1 + |x'-x|+|y'-y|-1 = |x'-x|+|y'-y|$.
	This completes the inductive step provided that $y\ge y'$ or $S\cap([x\dd x')\times [y\dd y'))=\emptyset$.
	
	It remains to consider the complementary case when $y'>y$ and $S\cap([x\dd x')\times [y\dd y'))\ne \emptyset$.
	Let $C$ be a maximum chain contained in $S\cap([x\dd x')\times [y\dd y'))$,
	let $\bp=(\bx,\by)$ be the largest element in $C$ (with respect to $\prec$), and let $\bp' = (\bx+1,\by+1)$.
	The inductive assumption yields $\dist_G(p,\bp) \le |x-\bx|+|y-\by|$
	and, since $C \sm \{\bp\}$ is a chain contained in $S\cap ([\bx+1\dd x')\times [\by+1\dd y'))$,
	that $\dist_G(\bp',p') \le |\bx+1-x'| + |\by+1-y'| - 2(|C|-1)$.
	By definition of $G$, there is an edge $\bp \stackrel{0}{\longleftrightarrow} \bp'$.
	Hence, $\dist_G(p,p') \le \dist_G(p,\bp) + 0 + \dist_G(\bp',p')
	\le |x-\bx|+|y-\by| + |\bx+1-x'| + |\by+1-y'| - 2(|C|-1) = |x-x'| + |y-y'| - 2|C|$.
	This completes the inductive proof of the upper bound on $\dist_G$.
	
	In order to bound $\dist_G$ from below, we need to prove a lower bound on the cost of every walk $W$ from $p$ to $p'$. 
	We proceed by induction on the number of edges in $W$.
	If $p=p'$, then the cost of $W$ is at least $0$ because all edge weights in $G$ are non-negative.
	Thus, we may assume $p\ne p'$, which means that the walk $W$ has at least one edge.
	By symmetry between the coordinates and since $G$ is undirected, we may further assume $x' > x$.
	Let the first edge of the walk $W$ be  $p \stackrel{c}{\longleftrightarrow} p''$
	and let $W'$ be the complementary walk from $p'':=(x'',y'')$ to $p'$.
	
	Note that $x'' \le x+1 \le x'$. If $y'' \ge y'$ and $y\ge y'$, then the inductive assumption guarantees that
	the cost of $W'$ is at least $|x'-x''| + |y'-y''|=x'-x''+y''-y'$,
	and it suffices to prove that the cost of $W$ is at least $|x'-x| + |y'-y| = x'-x + y-y'$,
	i.e., that $c \ge (x'-x + y-y')-(x'-x''+y''-y') = x''-x + y-y''$. We consider two possibilities:
	\begin{itemize}
		\item If $c=1$, then $|x''-x| + |y''-y| = 1$,
		so $c = 1= |x''-x| + |y''-y| \ge x''-x + y-y''$.
		\item If $c=0$, then $x-y = x''-y''$,
		so $c = 0 = (x''-y'') - (x-y) =  x''-x + y-y''$.
	\end{itemize}
	
	Thus, it remains to consider the complementary case when $y'' < y'$ or $y < y'$.
	Due to $|y''-y|\le 1$, this yields $y''\le y'$ \emph{and} $y \le y'$.
	Consequently, the inductive assumption guarantees that the cost of $W'$ is at least 
	$x'-x'' + y'-y'' - 2\LIS(S\cap([x''\dd x')\times [y''\dd y')))$,
	and it suffices to prove that the cost of $W$ is at least 
	$x'-x + y'-y-2\LIS(S\cap([x\dd x')\times [y\dd y')))$. We consider three possibilities:
	\begin{itemize}
		\item If $p''=(x+1,y+1)$, then $c = 0$ and $(x,y)\in S$. Every chain in $S\cap([x+1\dd x')\times [y+1\dd y'))$
		can be extended with $(x,y)$, so $\LIS(S\cap([x''\dd x')\times [y''\dd y'))) \le \LIS(S\cap([x\dd x')\times [y\dd y')))-1$. Hence, the cost of $W$ is at least
		$0 + x'-x-1 + y'-y-1 -2(\LIS(S\cap([x\dd x')\times [y\dd y')))-1)=x'-x + y'-y-2\LIS(S\cap([x\dd x')\times [y\dd y')))$, as claimed.
		\item If $p''\in \{(x+1,y),(x,y+1)\}$, then $c=1$ and $\LIS(S\cap([x''\dd x')\times [y''\dd y'))) \le \LIS(S\cap([x\dd x')\times [y\dd y')))$ by monotonicity of $\LIS$.
		Hence, the cost of $W$ is at least $1 + x'-x + y'-y - 1 -2\LIS(S\cap([x\dd x')\times [y\dd y'))) = x'-x + y'-y-2\LIS(S\cap([x\dd x')\times [y\dd y')))$, as claimed.
		\item If $p'' \in \{(x-1,y-1),(x-1,y),(x,y-1)\}$, then $c = 2-(x-x'')- (y-y'')$ and $\LIS(S\cap([x''\dd x')\times [y''\dd y'))) \le \LIS(S\cap([x\dd x')\times [y\dd y')))+1$ because $(\{x-1\}\times [y\dd y'))\cup ([x\dd x')\times \{y-1\})$ is an antichain.
		Hence, the cost of $W$ is at least $c + x'-x''+y'-y'' - 2(\LIS(S\cap([x\dd x')\times [y\dd y')))+1)
		= x'-x + y'-y-2\LIS(S\cap([x\dd x')\times [y\dd y')))$, as claimed.
	\end{itemize}
	This completes the inductive proof of the lower bound on $\dist_G$.
\end{proof}

\subsubsection{Cut-Paths, Grid Slices, and Slice Alignment Graphs}
The semi-local \textsf{LCS} and \textsf{LIS} values, studied by Tiskin in~\cite[Chapters 4 and 8, respectively]{Tiskin13}, are encoded by the distances between the boundary vertices of the underlying alignment graph $\AG^{n,m}(S)$.
The challenge in proving \cref{thm:oracle} is that we need to encode the distances between vertices on two arbitrary antichains in $\Gr^{n,m}$. For this, we maximally extend these antichains and extract a \emph{slice} of the alignment graph lying ``between'' the two antichains. As shown in \cref{lem:hered}, such a restriction does not affect the distances between vertices within the slice.

We partition $\Gr^{n,m}$ into $n+m+1$ \emph{diagonals} $\Diag^{n,m}_d$ defined as follows for $d\in[0\dd n+m]$:
\[\Diag^{n,m}_d := \{(x,y)\in \Gr^{n,m} : x-y = d - m\}.\]

\begin{definition}
	A \emph{cut-path} in $\Gr^{n,m}$ is an antichain $\pi = (\pi_d)_{d=0}^{n+m}$
	such that $\pi_d \in \Diag^{n,m}_d$.
\end{definition}

The following fact characterizes cut-paths in an alignment graph.
\begin{fact}\label{fct:cdist}
	Let $\pi$ be a cut-path in $\Gr^{n,m}$ and let $G=\AG^{n,m}(S)$ for $S\sub \Gr^{n-1,m-1}$.
	Then, $\pi$ is a path in $G$ (traversing weight-$1$ edges only) and,
	for every $a,b\in [0\dd n+m]$, we have $\dist_G(\pi_a,\pi_b) = |a-b|$.
	Moreover, $\pi$ separates $\{(x,y)\in \Gr^{n,m} : (x,y) \prec \pi_{x-y+m}\}$ 
	from $\{(x,y)\in \Gr^{n,m} : (x,y) \succ \pi_{x-y+m}\}$.
\end{fact}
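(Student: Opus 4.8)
The plan is to attach to every point of $\Gr^{n,m}$ two integer ``coordinates'': its diagonal index and its signed horizontal offset from the cut‑path, and to read off all three assertions from how these two quantities behave along edges of $\AG^{n,m}(S)$. Write $\pi_d = (X_d,Y_d)$, so that $X_d-Y_d = d-m$. First I would establish the \emph{local structure} of a cut‑path: for every $d\in[0\dd n+m)$, since $\pi_d$ and $\pi_{d+1}$ are incomparable under $\prec$ while $(X_{d+1}-X_d)-(Y_{d+1}-Y_d)=1$, a one‑line case analysis on $Y_{d+1}-Y_d$ (the values $\ge 1$ force $\pi_d\prec\pi_{d+1}$ and the values $\le -2$ force $\pi_{d+1}\prec\pi_d$) shows $\pi_{d+1}\in\{(X_d+1,Y_d),\,(X_d,Y_d-1)\}$; equivalently $X_{d+1}-X_d\in\{0,1\}$. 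In either case $\{\pi_d,\pi_{d+1}\}$ is a weight‑$1$ edge of $G$ (a unit horizontal or unit vertical edge), so $\pi_0\pi_1\cdots\pi_{n+m}$ is a path in $G$ using only weight‑$1$ edges; its subpath from $\pi_a$ to $\pi_b$ has cost $|a-b|$, which already gives $\dist_G(\pi_a,\pi_b)\le |a-b|$.

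For the matching lower bound I would use the diagonal index $d(x,y):=x-y+m$. Checking the three edge types of $\AG^{n,m}(S)$ shows that a horizontal edge changes $d$ by $1$ (along the $+x$ direction), a vertical edge changes $d$ by $-1$ (along the $+y$ direction), and a (weight‑$0$) diagonal edge leaves $d$ unchanged. Hence along any walk $W$ the net change of $d$ is at most the number of weight‑$1$ edges of $W$, which equals the cost of $W$ since all edge weights lie in $\{0,1\}$. As $d(\pi_a)=a$ and $d(\pi_b)=b$, every walk from $\pi_a$ to $\pi_b$ has cost at least $|a-b|$, so $\dist_G(\pi_a,\pi_b)=|a-b|$. (One could instead invoke \cref{lem:dist}: the antichain property forces $X_a\le X_b$ and $Y_a\ge Y_b$ when $a\le b$, placing the pair in the case ``$x\le x'$ and $y\ge y'$'' of that lemma and yielding $\dist_G(\pi_a,\pi_b)=(X_b-X_a)+(Y_a-Y_b)=b-a$; but the diagonal‑index argument is shorter and self‑contained.)

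For the separation claim I would introduce the \emph{offset potential} $\phi(x,y):=x-X_{x-y+m}$, which is well defined because $x-y+m\in[0\dd n+m]$ for $(x,y)\in\Gr^{n,m}$. On a single diagonal the order $\prec$ restricts to the linear order by $x$‑coordinate, so for $p\in\Gr^{n,m}$ we have $p\prec\pi_{x-y+m}\iff\phi(p)<0$, $\;p=\pi_{x-y+m}\iff\phi(p)=0$, and $p\succ\pi_{x-y+m}\iff\phi(p)>0$; thus the two sets in the statement are $L:=\{\phi<0\}$ and $R:=\{\phi>0\}$, with $\{\phi=0\}=\{\pi_0,\dots,\pi_{n+m}\}$, and these three sets partition $\Gr^{n,m}$. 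The key computation, again using $X_{d+1}-X_d\in\{0,1\}$, is that $|\phi(u)-\phi(v)|\le 1$ for every edge $\{u,v\}$ of $G$: for a horizontal edge the difference is $1-(X_{d+1}-X_d)\in\{0,1\}$, for a vertical edge it is $X_{d}-X_{d-1}\in\{0,1\}$, and for a diagonal edge it is exactly $1$. Since $\phi$ is integer‑valued, an edge with one endpoint in $L$ (where $\phi\le -1$) and the other in $R$ (where $\phi\ge 1$) would satisfy $|\phi(u)-\phi(v)|\ge 2$; no such edge exists, so no edge of $G$ joins $L$ and $R$, and therefore every walk from $L$ to $R$ must pass through a vertex of $\pi$, i.e., $\pi$ separates $L$ from $R$.

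The only part that needs real care is the local‑structure step together with bookkeeping of which endpoint of each edge sits on which diagonal — horizontal edges increase $d$ and move $\phi$ ``toward'' the cut‑path or not depending on whether $\pi$ steps right or down, vertical edges decrease $d$, and diagonal edges shift $\phi$ by exactly one — since the signs in the $d$‑ and $\phi$‑estimates all hinge on getting this right. Once the local structure and these edge‑wise monotonicities are in place, the path $\pi_0\cdots\pi_{n+m}$ supplies the distance upper bound, the diagonal index supplies the lower bound, and the offset potential supplies the separation, each in a couple of lines; I do not expect to need \cref{lem:dist} here at all.
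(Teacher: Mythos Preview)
Your proof is correct. The local-structure step and the resulting upper bound $\dist_G(\pi_a,\pi_b)\le|a-b|$ coincide with the paper's argument. For the lower bound and the separation claim, however, you take a different route. The paper obtains $\dist_G(\pi_a,\pi_b)\ge|a-b|$ by invoking \cref{lem:dist}: the antichain property gives $X_a\le X_b$ and $Y_a\ge Y_b$ for $a\le b$, and the ``$x\le x',\,y\ge y'$'' case of that lemma yields $\dist_G(\pi_a,\pi_b)=(X_b-X_a)+(Y_a-Y_b)=b-a$ directly. Your diagonal-index potential is precisely the self-contained alternative you sketch in parentheses; it buys independence from \cref{lem:dist} at essentially no extra cost. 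For the separation, the paper argues by a bare contradiction without introducing $\phi$: if an edge joined $(x,y)\in\Diag^{n,m}_d$ with $(x,y)\prec\pi_d$ to $(x',y')\in\Diag^{n,m}_{d'}$ with $(x',y')\succ\pi_{d'}$, then monotonicity of $(X_d)_d$ and $(Y_d)_d$ forces either $x'\ge x+2$ or $y'\ge y+2$, which no single edge of $\AG^{n,m}(S)$ allows. Your offset potential $\phi$ packages the same two ingredients (monotonicity of $X_d$ and the unit-step nature of edges) into the clean statement $|\phi(u)-\phi(v)|\le 1$, which together with the partition $\{\phi<0\}\cup\{\phi=0\}\cup\{\phi>0\}$ makes the conclusion immediate. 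Both arguments are short; yours is more uniform and fully self-contained, the paper's is slightly more direct.
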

\begin{proof}
	Let $\pi = (x_d,y_d)_{d=0}^{n+m}$. Consider subsequent points $\pi_d$ and $\pi_{d+1}$
	for $d\in [0\dd n+m)$.
	Note that $x_{d+1}\ge x_d$ and $y_{d+1}\le y_d$ since $\pi$ is an antichain.
	Moreover, $x_{d+1}-y_{d+1} = 1+x_d-y_d$ because $\pi_d \in \Diag^{n,m}_d$ and $\pi_{d+1}\in \Diag^{n,m}_{d+1}$.
	Consequently, $x_{d+1} \le x_d+1$ and $y_{d+1} \ge y_d-1$.
	Thus, $(x_{d+1},y_{d+1})\in \{(x_d+1,y_d),(x_d,y_d-1)\}$. In either case, $G$ contains an edge $\pi_{d} \stackrel{1}{\longleftrightarrow} \pi_{d+1}$. 
	Hence, $\pi$ is indeed a path traversing weight-$1$ edges only.
	
	If $a,b\in [0\dd n+m]$ with $a\ge b$, then $x_a \ge x_b$ and $y_a \le y_b$ since $\pi$ is an antichain.
	Consequently, \cref{lem:dist} yields $\dist(\pi_a,\pi_b) = |x_b-x_a| + |y_b-y_a| = x_a - x_b + y_b - y_a = (x_a-y_a + m) - (x_b-y_b+m) = a -b$ because $\pi_a \in \Diag^{n,m}_a$ and $\pi_b \in \Diag^{n,m}_b$.
	
	It remains to prove the final claim that $\pi$ separates $\{(x,y)\in \Gr^{n,m} : (x,y) \prec \pi_{x-y+m}\}$ 
	from $\{(x,y)\in \Gr^{n,m} : (x,y) \succ \pi_{x-y+m}\}$.
	For a proof by contradiction, suppose that there is an edge $(x,y)\longleftrightarrow (x',y')$ in $G$
	such that $(x,y)\in \Diag^{n,m}_d$ satisfies $(x,y)\prec \pi_d$ and $(x',y')\in \Diag^{n,m}_{d'}$ satisfies
	$(x',y')\succ \pi_{d'}$.
	If $d \le d'$, then $x < x_d \le x_{d'} < x'$, so $x'\ge x+2$, which is clearly impossible.
	Similarly, if $d \ge d'$, then $y < y_d \le y_{d'} < y'$, so $y' \ge y+2$, which is also impossible.
\end{proof}

Not only each cut-path is an antichain, but also each maximal antichain in $\Gr^{n,m}$ is a cut-path.
We will need the following constructive version of the latter statement:
\begin{fact}\label{fct:anti}
	There is an $\Oh(n+m)$-time algorithm that, given an antichain $(x_i,y_i)_{i=0}^{k-1}$ in $\Gr^{n,m}$,
	constructs supersequence that forms a cut-path in $\Gr^{n,m}$.
\end{fact}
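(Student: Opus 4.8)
The plan is to identify a cut-path in $\Gr^{n,m}$ with a monotone lattice walk from $(0,m)$ to $(n,0)$ that takes only unit rightward and downward steps, and to produce the required supersequence by joining consecutive points of the given antichain---together with these two forced endpoints---by such monotone subwalks. For a point $p=(x,y)$ write $d(p):=x-y+m$ for the index of the diagonal $\Diag^{n,m}_{d(p)}$ containing it.

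I would first record two elementary facts. First, $(0,m)$ is the only point of $\Gr^{n,m}$ on $\Diag^{n,m}_0$ and $(n,0)$ is the only one on $\Diag^{n,m}_{n+m}$, so every cut-path runs from $(0,m)$ to $(n,0)$; moreover, since the input is presented with non-decreasing $x$-coordinates and non-increasing $y$-coordinates (by the characterization of antichains recalled above), distinct input points lie on distinct diagonals, $x_0 \ge 0$ and $y_0 \le m$, and $x_{k-1}\le n$ and $y_{k-1}\ge 0$. Second, if $p=(x,y)$ and $p'=(x',y')$ are points of $\Gr^{n,m}$ with $x\le x'$ and $y\ge y'$, then $d(p')-d(p) = (x'-x)+(y-y')$, and the walk that goes from $p$ to $p'$ by $x'-x$ rightward steps and then $y-y'$ downward steps stays inside $\Gr^{n,m}$, has non-decreasing $x$- and non-increasing $y$-coordinates, and visits each diagonal of $[d(p)\dd d(p')]$ exactly once.

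The algorithm sets $q_0:=(0,m)$, $q_{i+1}:=(x_i,y_i)$ for $i\in[0\dd k)$, and $q_{k+1}:=(n,0)$; by the first fact, $0 = d(q_0)\le d(q_1)\le\cdots\le d(q_{k+1}) = n+m$, and each consecutive pair $q_j,q_{j+1}$ satisfies the hypotheses of the second fact. For $j=0,\ldots,k$ the algorithm outputs the monotone walk from $q_j$ to $q_{j+1}$, omitting its last point $q_{j+1}$ except in the final segment $j=k$. By the second fact, segment $j$ contributes points exactly on the diagonals $[d(q_j)\dd d(q_{j+1}))$ (and the last segment on $[d(q_k)\dd n+m]$), so these ranges tile $[0\dd n+m]$ and the concatenation visits each diagonal exactly once; one checks that the coordinates remain non-decreasing (resp.\ non-increasing) across segment boundaries as well, so the output is an antichain and hence a cut-path. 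It is a supersequence of the input because the point output on diagonal $d(q_{i+1})$ is $q_{i+1}=(x_i,y_i)$. Finally, segment $j$ is produced in $\Oh(1 + d(q_{j+1})-d(q_j))$ time, and since $\sum_{j}(d(q_{j+1})-d(q_j)) = n+m$ and $k\le n+m+1$, the total running time is $\Oh(n+m)$ (after an $\Oh(n+m)$ bucket sort by diagonal, should the input not already be in the monotone order).

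I do not expect a genuine obstacle here: the content is entirely in the second fact and in the counting that the segments tile $[0\dd n+m]$; the only care needed is in the boundary handling---recognizing that $(0,m)$ and $(n,0)$ are forced and must be prepended and appended as waypoints---and in the convention of dropping the shared endpoint of consecutive segments, so that coinciding waypoints contribute empty ranges and no diagonal is hit twice.
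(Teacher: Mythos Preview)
Your proposal is correct and follows essentially the same approach as the paper: both prepend $(0,m)$ and append $(n,0)$ to the antichain and then connect each consecutive pair of waypoints by an L-shaped monotone walk, with the only immaterial difference that the paper goes down-then-right whereas you go right-then-down.
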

\begin{proof}
	Without loss of generality, we assume that $(x_0,y_0) = (0,m)$ and $(x_{k-1},y_{k-1})=(n,0)$.
	If this is not the case, we extend the input antichain accordingly.
	The cut-path $\pi$ is defined as follows:
	\[\pi_d = \begin{cases}
	(x_{i-1},x_{i-1}+m-d)& \text{for }d\in [x_{i-1}-y_{i-1}+m\dd x_{i-1}-y_{i}+m]\text{ and }i\in [1\dd k),\\
	(y_{i}+d-m, y_{i}) & \text{for }d\in [x_{i-1} - y_{i}+m\dd x_{i}-y_{i}+m]\text{ and }i\in [1\dd k).
	\end{cases}\]
	Note that the intervals $[x_{i-1}-y_{i-1}+m\dd x_{i-1}-y_{i}+m]$ and $[x_{i-1} - y_{i}+m\dd x_{i}-y_{i}+m]$
	for $i\in [1\dd k)$ are non-empty because the sequence $(x_i)_{i=0}^{k-1}$ is non-decreasing 
	and the sequence $(y_i)_{i=0}^{k-1}$ is non-increasing.
	Moreover, due to $(x_0,y_0)=(0,m)$ and $(x_{k-1},y_{k-1})=(n,0)$, these intervals cover $[0\dd n+m]$.
	Two subsequent intervals intersect only at their boundaries, where the values are set consistently:
	$\pi_d = (x_i,y_i)$ for $d=x_i-y_i+m$ and $i\in [0\dd k)$,
	as well as $\pi_d = (x_{i-1},y_{i})$ for $d=x_{i-1} - y_{i}+m$ for $i\in [1\dd k)$.
	Hence, $\pi$ is a well-defined supersequence of $(x_i,y_i)_{i=0}^{k-1}$.
	Moreover, it is easy to check that $\pi_d \in \Diag^{n,m}_d$ and $\pi$ is an antichain (the first coordinates are non-decreasing and the second coordinates are non-increasing).
\end{proof}

We extend the the partial order on points to a partial order on cut-paths, with $\pi \preceq \pi'$
if and only if $\pi_d \preceq \pi'_d$ holds for each $d\in [0\dd n+m]$.
Given two cut-paths $\pi \preceq \pi'$, we define \emph{grid slices}
\begin{align*}
	\Gr^{n,m}[\pi\dd \pi']&=\{(x,y)\in \Gr^{n,m} : \pi_{x-y+m} \preceq (x,y) \preceq \pi'_{x-y+m}\},\\
	\Gr^{n,m}[\pi\dd \pi')&=\{(x,y)\in \Gr^{n,m} : \pi_{x-y+m} \preceq (x,y) \prec \pi'_{x-y+m}\}.\end{align*}

\begin{definition}\label{def:sag}
	Let $\pi\preceq \pi'$ be cut-paths in $\Gr^{n,m}$ and let $S\sub \Gr^{n-1,m-1}$.
	We define the \emph{slice alignment graph} $\SAG^{n,m}(\pi,\pi',S)$ as the subgraph of $\AG^{n,m}(S)$
	induced by $\Gr^{n,m}[\pi\dd \pi']$.
\end{definition}

\begin{lemma}\label{lem:hered}
	Let $\pi\preceq \pi'$ be cut-paths in $\Gr^{n,m}$ and let $S\sub \Gr^{n-1,m-1}$.
	Moreover, let $G=\AG^{n,m}(S)$ and $G' = \SAG^{n,m}(\pi,\pi',S)$.
	Then, for every $p,q\in \Gr^{n,m}[\pi\dd \pi']$, we have $\dist_G(p,q)=\dist_{G'}(p,q)$.
\end{lemma}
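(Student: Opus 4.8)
The statement asserts that restricting $\AG^{n,m}(S)$ to the grid slice $\Gr^{n,m}[\pi\dd\pi']$ preserves all pairwise distances between points of the slice. The plan is to prove the two inequalities separately. Since $G'$ is, by \cref{def:sag}, the subgraph of $G$ induced by $\Gr^{n,m}[\pi\dd\pi']$, every walk in $G'$ is also a walk in $G$ of the same cost, so $\dist_G(p,q)\le \dist_{G'}(p,q)$ is immediate. For the reverse inequality I would prove something slightly stronger: every walk $W$ from $p$ to $q$ in $G$ can be transformed into a walk $W'$ from $p$ to $q$ in $G'$ with $\mathrm{cost}(W')\le\mathrm{cost}(W)$. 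Applying this to a shortest $p$-$q$ walk in $G$ yields $\dist_{G'}(p,q)\le\dist_G(p,q)$; as a byproduct it also shows that the slice is connected in $G'$, so that $\dist_{G'}$ is finite and well defined on $\Gr^{n,m}[\pi\dd\pi']$.

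The transformation is a diagonal-wise retraction. Given $v\in\Diag^{n,m}_d$, note that $v$, $\pi_d$, and $\pi'_d$ all lie on $\Diag^{n,m}_d$, which is totally ordered by $\prec$ with $\pi_d\preceq\pi'_d$; define $\rho(v)=\pi_d$ if $v\prec\pi_d$, $\rho(v)=\pi'_d$ if $v\succ\pi'_d$, and $\rho(v)=v$ otherwise. Then $\rho$ maps $\Gr^{n,m}$ into the slice and restricts to the identity on it. The crux is the claim that $\rho$ does not lengthen any edge: for every edge $u\stackrel{c}{\longleftrightarrow}v$ of $G$ (where $c\in\{0,1\}$), we have $\dist_{G'}(\rho(u),\rho(v))\le c$. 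Given this, $W'$ is obtained from $W=(v_0,\dots,v_\ell)$ by replacing each $v_i$ with $\rho(v_i)$ and splicing in a shortest $G'$-walk between consecutive images; by the claim these splices exist and $\mathrm{cost}(W')\le\mathrm{cost}(W)$, while $\rho(v_0)=p$ and $\rho(v_\ell)=q$ since $p,q$ lie in the slice.

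The claim follows from \cref{fct:cdist}, which supplies exactly the two facts needed: each of $\pi,\pi'$ is a path of $G$ traversing only weight-$1$ edges, and since both lie in the slice these are paths of $G'$ as well; and $\pi$ (resp.\ $\pi'$) admits no edge of $G$ joining a point strictly $\prec$-below it to a point strictly $\prec$-above it. For a weight-$0$ edge $u\stackrel{0}{\longleftrightarrow}v$, with $v=u+(1,1)$ on the same diagonal as $u$, I would argue $\dist_{G'}(\rho(u),\rho(v))=0$: if both $u$ and $v$ lie in the slice then the edge $u\stackrel{0}{\longleftrightarrow}v$ (which belongs to $G$, since $u\in S$) survives in $G'$; otherwise a quick check using $\pi_d\preceq\pi'_d$ and the absence of integer points strictly between $u$ and $v$ on the diagonal shows $\rho(u)=\rho(v)$. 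For a weight-$1$ edge, consider the horizontal case $u\in\Diag^{n,m}_d$, $v=u+(1,0)\in\Diag^{n,m}_{d+1}$ (the vertical case is symmetric after swapping coordinates), and use $\pi_{d+1}\in\{\pi_d+(1,0),\pi_d+(0,-1)\}$ and likewise for $\pi'$. If $u$ is strictly below $\pi$, separation forces $v\preceq\pi_{d+1}$, hence $\rho(u)=\pi_d$ and $\rho(v)=\pi_{d+1}$, which are consecutive vertices of $\pi$ and thus at $G'$-distance $1$; symmetrically if $u$ is strictly above $\pi'$. If $u$ lies in the slice, then either $v$ also lies in the slice (and $\dist_{G'}(\rho(u),\rho(v))\le 1$ via the edge itself), or $u$ sits on $\pi$ or $\pi'$ and $v$ steps one position outside, in which case $\rho(u)$ and $\rho(v)$ are consecutive vertices of that cut-path and again at $G'$-distance $1$.

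The only genuinely delicate part is the weight-$1$ case: one must combine the separation property of \cref{fct:cdist} with the two possible shapes of a cut-path step to rule out configurations in which an edge ``jumps across'' $\pi$ or $\pi'$ with $\rho(u)$ and $\rho(v)$ landing far apart, and then verify that in each surviving configuration $\rho(u)$ and $\rho(v)$ are within a single weight-$1$ step of $G'$. Everything else is routine arithmetic on diagonals. One could alternatively argue from the distance formula of \cref{lem:dist}, showing that a $\prec$-maximal chain realizing $\dist_G(p,q)$, together with a monotone lattice path through it, can be chosen inside the slice; I would nevertheless prefer the retraction argument, since it treats all four position cases of \cref{lem:dist} uniformly and avoids reasoning about chains.
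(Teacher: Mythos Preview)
Your retraction argument is correct and the case analysis goes through as you outline; in particular the weight-$0$ case and the three subcases for weight-$1$ edges all check out once one uses that consecutive integer points on a diagonal leave no room for $\pi_d$ or $\pi'_d$ strictly between them, and that $\pi_{d+1}\in\{\pi_d+(1,0),\pi_d+(0,-1)\}$ pins down exactly when a horizontal step can leave the slice.

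That said, the paper's proof takes a rather different and shorter route. Instead of a global retraction, it inducts on the number of edges in a shortest $G$-path $\Pi$ from $p$ to $q$: if $\Pi$ has an internal vertex in the slice, split there and recurse; otherwise every internal vertex of $\Pi$ lies entirely in $\Lft=\{(x,y)\prec\pi_{x-y+m}\}$ or entirely in $\Rgt=\{(x,y)\succ\pi'_{x-y+m}\}$, and the separation clause of \cref{fct:cdist} forces both endpoints $p,q$ onto $\pi$ (respectively $\pi'$), where the distance clause of \cref{fct:cdist} gives $\dist_G(p,q)=|a-b|=\dist_{G'}(p,q)$ directly via the cut-path itself. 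This avoids your edge-by-edge case analysis entirely: once the path detours outside, you never need to track where it goes, only that it must re-enter through the same cut-path. Your approach, by contrast, buys you an explicit walk-to-walk transformation valid for arbitrary (not just shortest) walks, and it makes connectivity of the slice in $G'$ an immediate byproduct; the paper's argument gets this only implicitly.
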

\begin{proof}
	Since $G'$ is a subgraph of $G$, we trivially have $\dist_G(p,q)\le \dist_{G'}(p,q)$.
	For the converse inequality, we proceed by induction on the minimum number of edges on a shortest path
	$\Pi$ from $p$ to $q$ in $G$.
	If $\Pi$ traverses at most one edge, then $\Pi$ is also a path in $G'$ and thus
	$\dist_{G'}(p,q) = \dist_G(p,q)$.
	If $\Pi$ contains an internal vertex $r\in\Gr^{n,m}[\pi\dd \pi']$ then,
	by the inductive assumption applied to the prefix of $\Pi$ from $p$ to $r$ and the suffix of $\Pi$ from $r$ to $q$, we have $\dist_{G'}(p,q) \le \dist_{G'}(p,r)+\dist_{G'}(r,q) = \dist_G(p,r)+\dist_G(r,q)=\dist_G(p,q)$.
	Thus, we may assume that all internal vertices of $\Pi$ are outside $\Gr^{n,m}[\pi\dd \pi']$
	and that $\Pi$ has at least one internal vertex $r$.
	
	Let $\Lft = \{(x,y)\in \Gr^{n,m} : (x,y)\prec \pi_{x-y+m}\}$ and $\Rgt = \{(x,y)\in \Gr^{n,m} (x,y)\succ \pi'_{x-y+m}\}$. Observe that $\Gr^{n,m}$ forms a disjoint union of $\Lft$, $\Gr^{n,m}[\pi\dd \pi']$, and $\Rgt$.
	Moreover, by \cref{fct:cdist}, each edge leaving $\Lft$ has its other endpoint in $\pi$,
	whereas each edge leaving $\Rgt$ has its other endpoint in $\pi'$.
	Consequently, if $r\in \Lft$, then $p=\pi_a$ and $q=\pi_b$ for some $a,b\in [0\dd n+m]$.
	However, \cref{fct:cdist} then implies $\dist_{G'}(p,q) = |b-a| = \dist_G(p,q)=|b-a|$ because the path following $\pi$ is contained in~$G'$.
	Symmetrically, if $r\in \Rgt$,  then $p=\pi'_a$ and $q=\pi'_b$ for some $a,b\in [0\dd n+m]$
	and, by \cref{fct:cdist}, $\dist_{G'}(p,q)=|b-a| = \dist_G(p,q)$  because the path following $\pi'$ is contained in $G'$.
\end{proof}

\subsubsection{Distance and Seaweed Matrices}
In~\cite[Definitions 4.8 and 4.11]{Tiskin13}, Tiskin uses his alignment dag to define the semi-local score matrix and the seaweed matrix. Maximum-score paths in his alignment dag correspond to shortest paths in our (undirected) alignment graph, so we introduce a distance matrix instead of the semi-local score matrix. 
Both approaches lead to the same seaweed matrix (we stick to the original name even though we do not interpret this matrix in terms of seaweed braids).
Consequently,~\cite[Theorem 4.10]{Tiskin13} can be seen as a special case of \cref{lem:monge} below,
restricted to the two cut-paths $\pi$, $\pi'$ following the boundary of $\AG^{n,m}(S)$ so that $\SAG^{n,m}(\pi,\pi',S)=\AG^{n,m}(S)$.

\begin{definition}
Given a slice alignment graph $G=\SAG^{n,m}(\pi,\pi',S)$, we introduce two matrices $D_G,M_G\in \R^{(n+m+1)\times (n+m+1)}$ with entries defined as follows for $a,b\in [0\dd n+m]$:
\[D_G[a,b] = \dist_{G}(\pi_a,\pi'_b)\quad\text{and}\quad M_G[a,b]=\tfrac12(D_G[a,b]-a+b).\]
Additionally, we set $P_G := M_G^\square$ (note that also $P_G = \frac12 D_G^\square$).
The matrices $D_G$ and $P_G$ are called the \emph{distance matrix} and the \emph{seaweed matrix} of $G$, respectively.
\end{definition}

\begin{lemma}\label{lem:monge}
	For every slice alignment graph $G=\SAG^{n,m}(\pi,\pi',S)$, 
	the matrix $D_G$ is a Monge matrix and the matrix $M_G$ is a simple unit-Monge matrix.
\end{lemma}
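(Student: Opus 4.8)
The plan is to argue everything inside the slice graph $G=\SAG^{n,m}(\pi,\pi',S)$, using \cref{lem:hered} to move freely between $G$ and $\AG^{n,m}(S)$: every vertex $\pi_a$ and $\pi'_b$ lies in $\Gr^{n,m}[\pi\dd\pi']$, so $\dist_G(\pi_a,\pi'_b)$ coincides with the corresponding distance in $\AG^{n,m}(S)$ and \cref{fct:cdist} applies. First I would record two facts used throughout: (i) the diagonals $\Diag^{n,m}_0=\{(0,m)\}$ and $\Diag^{n,m}_{n+m}=\{(n,0)\}$ are singletons, hence $\pi_0=\pi'_0$ and $\pi_{n+m}=\pi'_{n+m}$; and (ii) combining (i) with \cref{fct:cdist} (applied to the cut-path $\pi$ and to the cut-path $\pi'$) pins down the border entries of $D_G$: for $a,b\in[0\dd n+m]$,
\[
D_G[a,0]=a,\qquad D_G[a,n+m]=n+m-a,\qquad D_G[0,b]=b,\qquad D_G[n+m,b]=n+m-b.
\]

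To show that $D_G$ is Monge I would use the standard planar non-crossing argument. The slice graph embeds in the plane with no two edges crossing (lattice points as vertices, axis-aligned weight-$1$ edges, weight-$0$ edges drawn along cell diagonals), and $\Gr^{n,m}[\pi\dd\pi']$ fills the topological disk whose boundary follows $\pi$ from $(0,m)$ to $(n,0)$ and then $\pi'$ back, so along this boundary cycle the vertices appear in the order $\pi_0,\pi_1,\dots,\pi_{n+m},\pi'_{n+m-1},\dots,\pi'_1$. Fixing $a<a'$ and $b<b'$, the four vertices $\pi_a,\pi_{a'},\pi'_{b'},\pi'_b$ occur on the boundary in this cyclic order, so by the Jordan curve theorem any shortest path $P$ in $G$ from $\pi_a$ to $\pi'_{b'}$ and any shortest path $Q$ in $G$ from $\pi_{a'}$ to $\pi'_b$ share a vertex $v$. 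Since $v$ lies on both $P$ and $Q$,
\[
D_G[a,b]+D_G[a',b']\le \dist_G(\pi_a,v)+\dist_G(v,\pi'_b)+\dist_G(\pi_{a'},v)+\dist_G(v,\pi'_{b'})=D_G[a,b']+D_G[a',b],
\]
where the inequality is the triangle inequality for the triples $(\pi_a,v,\pi'_b)$ and $(\pi_{a'},v,\pi'_{b'})$ and the equality uses optimality of $P$ and $Q$. Specializing to $a'=a+1$, $b'=b+1$ gives $D_G^\square[a,b]\ge0$, i.e., $D_G$ is Monge; equivalently $P_G=\tfrac12 D_G^\square$ has nonnegative entries.

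Next I would check that $M_G$ is simple and unit-Monge. Simplicity is immediate from the border entries: $M_G[a,0]=\tfrac12(D_G[a,0]-a)=0$ and $M_G[n+m,b]=\tfrac12(D_G[n+m,b]-(n+m)+b)=0$. For the unit property, the plan is to show that $P_G$ is a nonnegative integer matrix whose every row sum and column sum equals $1$, which forces each row to be a standard basis vector and then, from the column sums, makes $P_G$ a permutation matrix. Nonnegativity is the Monge property just obtained. Integrality follows from a parity check: every edge of $G$ changes $x+y$ by an amount congruent to its weight $\pmod 2$, so $\dist_G((x,y),(x',y'))\equiv(x+y)+(x'+y')\pmod 2$, and since $x_a+y_a\equiv a-m$ and $x'_b+y'_b\equiv b-m\pmod 2$ we obtain $D_G[a,b]\equiv a+b\pmod 2$, hence $D_G^\square[a,b]$ is even. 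Finally the row and column sums of $D_G^\square$ telescope to the border entries: $\sum_b D_G^\square[a,b]=(D_G[a,n+m]-D_G[a+1,n+m])+(D_G[a+1,0]-D_G[a,0])=2$ and, symmetrically, $\sum_a D_G^\square[a,b]=(D_G[0,b+1]-D_G[0,b])+(D_G[n+m,b]-D_G[n+m,b+1])=2$, so every row and column of $P_G=\tfrac12 D_G^\square$ sums to $1$.

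The main obstacle is the Monge step, specifically making the ``two shortest paths must cross'' claim fully rigorous. This requires verifying that the induced slice graph genuinely embeds with $\Gr^{n,m}[\pi\dd\pi']$ as a topological disk bounded by $\pi$ followed by the reverse of $\pi'$, and handling the degenerate situations where two of the four endpoints coincide or where $\pi$ and $\pi'$ meet at an interior vertex (which pinches the disk); in the latter case one uses that $a<a'$ and $b<b'$ prevent such a pinch vertex from separating $P$ from $Q$, so either both paths pass through it or both remain in the same sub-disk. Once \cref{fct:cdist} has pinned down the border entries of $D_G$, everything else (simplicity, integrality via parity, and the telescoping of the row and column sums) is routine.
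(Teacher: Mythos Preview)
Your proposal is correct and follows essentially the same line as the paper: the paper likewise identifies the border entries via \cref{fct:cdist} (using $\pi_0=\pi'_0$ and $\pi_{n+m}=\pi'_{n+m}$), deduces simplicity of $M_G$, obtains integrality of $P_G$ from the parity $D_G[a,b]\equiv a-b\pmod 2$, and telescopes the row and column sums (through $M_G$ rather than $D_G^\square$, which is the same computation) to conclude $P_G$ is a permutation matrix. The only substantive difference is that for the Monge property the paper simply asserts the cyclic order of $\pi_0,\dots,\pi_{n+m},\pi'_{n+m-1},\dots,\pi'_1$ on the outer face and cites Fakcharoenphol--Rao, whereas you spell out the crossing-shortest-paths argument and flag the pinch-point issue explicitly; your treatment is more self-contained but otherwise identical in spirit.
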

\begin{proof}
	Note that $G$ is a planar graph and that the (cyclic) sequence of vertices
	on its outer face is $\pi'_0 = \pi_0, \pi_1,\ldots, \pi_{n+m-1}, \pi_{n+m}=\pi'_{n+m},\pi'_{n+m-1},\ldots,\pi'_1$.
	As first explicitly observed by Fakcharoenphol and Rao~\cite[Section~2.3]{FR}, this yields
	that $D_G$ is a Monge matrix.
	Due to $M_G^\square = \frac12 D_G^\square$, the matrix $M_G$ is thus also a Monge matrix.
	
	Next, let us characterize entries $D_G[a,b]$ with $a\in \{0,n+m\}$ or $b\in \{0,n+m\}$.
	In each case, we use \cref{fct:cdist}.
	\begin{itemize}
		\item If $a=0$, then $D_G[a,b] = \dist_G(\pi_0,\pi'_b)=\dist_G(\pi'_0,\pi'_b) = b$.
		\item If $a=n+m$, then $D_G[a,b] = \dist_G(\pi_{n+m},\pi'_b)=\dist_G(\pi'_{n+m},\pi'_b) = n+m-b$.
		\item If $b=0$, then $D_G[a,b] = \dist_G(\pi_a,\pi'_0)=\dist_G(\pi_a,\pi_0) = a$.
		\item If $b=n+m$, then $D_G[a,b] = \dist_G(\pi_a,\pi'_{n+m})=\dist_G(\pi_a,\pi_{n+m}) = n+m-a$.
	\end{itemize}
	This yields the following characterization of the corresponding entries $M_G[a,b]=\frac12(D_G[a,b]-a+b)$:
	\begin{itemize}
		\item If $a=0$, then $M_G[a,b] = \frac12(b-a+b)=b$.
		\item If $a=n+m$, then $M_G[a,b] = \frac12(n+m-b-a+b)=0$.
		\item If $b=0$, then $M_G[a,b] = \frac12(a-a+b)=0$.
		\item If $b=n+m$, then $M_G[a,b] = \frac12(n-m-a-a+b)=n-m-a$.
	\end{itemize}
	In particular, $M_G[a,b]=0$ if $a=n+m$ or $b=0$, and therefore $M_G$ is a simple matrix,
	i.e., $M_G = P_G^\Sigma$, where $P_G = M_G^\square$.
	Consequently, it remains to prove that $P_G$ is a permutation matrix.
	
	Since $M_G$ is a Monge matrix, we note that the entries of $P_G$ are non-negative.
	Moreover, \cref{lem:dist,lem:hered} yield $D_G[a,b]\equiv (a-b) \pmod{2}$, so $M_G$ and $P_G$ are integer matrices.
	Let us compute the sum of entries in each row and column of $P_G$. 
	For a row $a\in [0\dd n+m)$, we have 
	\begin{multline*}\sum_{b\in [0\dd n+m)} P_G[a,b] = \sum_{b\in [0\dd n+m)} (M_G[a+1,b]+ M_G[a,b+1] - M_G[a,b]- M_G[a+1,b+1]) = \\ M_G[a+1,0] + M_G[a,n+m]-M_G[a,0]-M_G[a+1,n+m] = 0 + (n-m-a) - 0 - (n-m-(a+1)) = 1.
	\end{multline*}
	Similarly, for a column $b\in [0\dd n+m)$, we have
	\begin{multline*}\sum_{a\in [0\dd n+m)} P_G[a,b] = \sum_{a\in [0\dd n+m)} (M_G[a+1,b]+ M_G[a,b+1] - M_G[a,b]- M_G[a+1,b+1]) = \\ M_G[n+m,b] + M_G[0,b+1]-M_G[0,b]-M_G[n+m,b+1] = 0+(b+1)-b -0 = 1.
	\end{multline*}
	Thus, the entries in each row and each column of $P_G$ sum up to $1$.
	Since $P_G$ is a non-negative integer matrix, this means that each row and each column contains exactly one entry $P_G[a,b]=1$, and the remaining entries satisfy $P_G[a,b]=0$. In other words, $P_G$ is a permutation matrix
	and $M_G$ is a unit-Monge matrix.
\end{proof}

\subsubsection{Composition of Slice Alignment Graphs}\label{sec:decomp}
In~\cite[Section 4.5]{Tiskin13}, Tiskin uses \cref{thm:prod} to efficiently retrieve the seaweed matrix for a pair of strings $(XX',Y)$ in terms of two seaweed matrices for $(X,Y)$ and $(X',Y)$.
A direct generalization of his approach lets us combine the seaweed matrices of two slice alignment graphs
sharing a cut-path.

\begin{lemma}\label{lem:decomp}
	Let $G=\SAG^{n,m}(\pi,\pi',S)$ be a slice alignment graph, let $\pi''$ be a cut-path satisfying
	$\pi \preceq \pi'' \preceq \pi'$, and let $G_L = \SAG^{n,m}(\pi,\pi'', S\cap \Gr^{n,m}[\pi\dd \pi''))$
	and $G_R = \SAG^{n,m}(\pi'',\pi',S\cap \Gr^{n,m}[\pi''\dd \pi'))$.
	Then, $D_G = D_{G_L} \odot D_{G_R}$ and $P_G = P_{G_L} \boxdot P_{G_R}$.
	In particular, given the permutations representing $P_{G_L}$ and $P_{G_R}$,
	the permutation representing $P_G$ can be constructed in $\Oh((n+m)\log (n+m))$ time.
\end{lemma}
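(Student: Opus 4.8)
The plan is to collapse everything onto a single alignment graph, then exploit the separation property of cut-paths, and finally translate the resulting distance identity into the seaweed-product identity. Write $H:=\AG^{n,m}(S)$. First I would observe that $G_L=\SAG^{n,m}(\pi,\pi'',S)$ and $G_R=\SAG^{n,m}(\pi'',\pi',S)$; that is, intersecting $S$ with the grid slices in the definitions of $G_L$ and $G_R$ is redundant. To see this for $G_L$, note that a weight-$0$ edge $(x,y)\stackrel{0}{\longleftrightarrow}(x+1,y+1)$ of $\AG^{n,m}(S)$ survives in the subgraph induced on $\Gr^{n,m}[\pi\dd\pi'']$ exactly when both of its endpoints lie in $\Gr^{n,m}[\pi\dd\pi'']$; since both endpoints lie on the same diagonal $d:=x-y+m$, this is equivalent to $\pi_d\preceq(x,y)\prec\pi''_d$, i.e.\ to $(x,y)\in S\cap\Gr^{n,m}[\pi\dd\pi'')$, and the weight-$1$ edges do not depend on $S$ at all; the remaining weight-$0$ edges are therefore exactly those of $\AG^{n,m}(S\cap\Gr^{n,m}[\pi\dd\pi''))$ that fall inside the slice. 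The analogous computation handles $G_R$. Now I would apply \cref{lem:hered} to the cut-path pairs $(\pi,\pi'')$, $(\pi'',\pi')$, and $(\pi,\pi')$ to conclude that every entry of $D_{G_L}$, $D_{G_R}$, and $D_G$ equals the corresponding distance in $H$; since $\pi\preceq\pi''\preceq\pi'$, the vertices $\pi_a$, $\pi''_c$, and $\pi'_b$ lie in the appropriate slices, so \cref{lem:hered} is applicable.

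The heart of the argument is the distance decomposition
\[\dist_H(\pi_a,\pi'_b)=\min_{c\in[0\dd n+m]}\bigl(\dist_H(\pi_a,\pi''_c)+\dist_H(\pi''_c,\pi'_b)\bigr),\]
which at once yields $D_G=D_{G_L}\odot D_{G_R}$. The inequality ``$\le$'' is the triangle inequality. For ``$\ge$'', I would fix a shortest path from $\pi_a$ to $\pi'_b$ in $H$ and show that it must pass through $\pi''$: by \cref{fct:cdist}, $\pi''$ separates $\{(x,y):(x,y)\prec\pi''_{x-y+m}\}$ from $\{(x,y):(x,y)\succ\pi''_{x-y+m}\}$, while $\pi_a\preceq\pi''_a$ and $\pi'_b\succeq\pi''_b$ because $\pi\preceq\pi''\preceq\pi'$. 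If $\pi_a=\pi''_a$ or $\pi'_b=\pi''_b$ the path already meets $\pi''$; otherwise $\pi_a$ lies strictly on one side of $\pi''$ and $\pi'_b$ strictly on the other, so the path contains some vertex $\pi''_c$, and splitting it there gives $\dist_H(\pi_a,\pi'_b)\ge\dist_H(\pi_a,\pi''_c)+\dist_H(\pi''_c,\pi'_b)$.

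To pass from distance matrices to seaweed matrices, recall that $M_G[a,b]=\tfrac12(D_G[a,b]-a+b)$ and similarly for $G_L,G_R$. From $D_G=D_{G_L}\odot D_{G_R}$ and linearity I would obtain
\[M_G[a,b]=\tfrac12(b-a)+\min_{c}\tfrac12\bigl(D_{G_L}[a,c]+D_{G_R}[c,b]\bigr)=\min_{c}\bigl(M_{G_L}[a,c]+M_{G_R}[c,b]\bigr),\]
that is, $M_G=M_{G_L}\odot M_{G_R}$. By \cref{lem:monge}, $M_{G_L}$ and $M_{G_R}$ are simple unit-Monge matrices, hence $M_{G_L}=P_{G_L}^\Sigma$ and $M_{G_R}=P_{G_R}^\Sigma$, so $P_G=M_G^\square=(P_{G_L}^\Sigma\odot P_{G_R}^\Sigma)^\square=P_{G_L}\boxdot P_{G_R}$ by the definition of the seaweed product. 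Finally, \cref{lem:monge} also tells us that $P_{G_L}$ and $P_{G_R}$ are $(n+m)\times(n+m)$ permutation matrices, so \cref{thm:prod} constructs the permutation representing $P_G=P_{G_L}\boxdot P_{G_R}$ in $\Oh((n+m)\log(n+m))$ time from the permutations representing $P_{G_L}$ and $P_{G_R}$.

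I expect the main obstacle to be the ``$\ge$'' direction of the distance decomposition — routing the separation argument cleanly through \cref{fct:cdist} and dealing with the degenerate cases where $\pi_a$ or $\pi'_b$ already lies on $\pi''$ — together with the routine but error-prone bookkeeping of the first step, which must confirm that the slice-restricted edge sets of $G_L$ and $G_R$ coincide with the corresponding induced subgraphs of $\AG^{n,m}(S)$ so that \cref{lem:hered} applies.
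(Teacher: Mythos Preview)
Your proof is correct and follows the same approach as the paper: establish $D_G = D_{G_L} \odot D_{G_R}$ via the separator property of $\pi''$ (\cref{fct:cdist}) together with \cref{lem:hered}, then derive $M_G = M_{G_L} \odot M_{G_R}$ and $P_G = P_{G_L} \boxdot P_{G_R}$, and conclude with \cref{thm:prod}. The only cosmetic difference is that you route all distances through $H = \AG^{n,m}(S)$ (after your clean preliminary step showing the slice intersections in the definitions of $G_L, G_R$ are redundant, which is exactly what makes \cref{lem:hered} directly applicable), whereas the paper works inside $G$ and notes that $G_L, G_R$ are induced subgraphs of $G$.
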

\begin{proof}
	First, observe that $G_L$ and $G_R$ are subgraphs 
	of $G$ induced by $\Gr^{n,m}[\pi\dd \pi'']$ and $\Gr^{n,m}[\pi''\dd \pi']$, respectively.
	Consequently, for every $a,b,c\in [0\dd n+m]$,
	we have 
	\begin{multline*}
		D_G[a,b]=\dist_G(\pi_a,\pi'_b)\le \dist_G(\pi_a,\pi''_c) + \dist_G(\pi''_c,\pi'_b)
		\le  \dist_{G_L}(\pi_a,\pi''_c) + \dist_{G_R}(\pi''_c,\pi'_b) \\ = D_{G_L}[a,c]+D_{G_R}[c,b].
	\end{multline*}
	Therefore, $D_G[a,b] \le (D_{G_L} \odot D_{G_R})[a,b]$.
	
	For a proof of the converse inequality, consider a shortest path from $\pi_a$ to $\pi'_b$.
	By \cref{fct:cdist}, this path passes through a vertex $\pi''_c$ for some $c\in [0\dd n+m]$.
	Moreover, by \cref{lem:hered}, we have $\dist_{G_L}(p,\pi''_c) = \dist_G(p,\pi''_c)$
	and $\dist_{G_R}(\pi''_c,q) = \dist_{G_L}(\pi''_c,q)$. 
	Consequently,
	\begin{multline*}
		D_G[a,b]= \dist_G(\pi_a,\pi'_b)=\dist_{G}(\pi_a,\pi''_c) + \dist_{G}(\pi''_c,\pi'_b) = \dist_{G_L}(\pi_a,\pi''_c) + \dist_{G_R}(\pi''_c,\pi'_b) \\= D_{G_L}[a,c]+D_{G_R}[c,b] \ge  (D_{G_L} \odot D_{G_R})[a,b].
	\end{multline*}
	This completes the proof that $D_G = D_{G_L} \odot D_{G_R}$.
	
	Next, we note that $M_G = M_{G_L} \odot M_{G_R}$ because
	the following holds for every $a,b\in [0\dd n+m]$:
	\begin{align*}
		(M_{G_L}\odot M_{G_R})[a,b] &= \min_{c\in [0\dd n+m]}(M_{G_L}[a,c] + M_{G_R}[c,b])\\
		&= \min_{c\in [0\dd n+m]}(\tfrac12(D_{G_L}[a,c]-a+c) + \tfrac12(D_{G_R}[c,b]-c+b))\\
		&= \tfrac12\left(\min_{c\in [0\dd n+m]}(D_{G_L}[a,c]+D_{G_R}[c,b])-a+b\right)\\
		&=  \tfrac12\left((D_{G_L}\odot D_{G_R})[a,b]-a+b\right)\\
		&=  \tfrac12(D_G[a,b]-a+b) \\
		&= M_G[a,b].
	\end{align*}
	Consequently, $P_G = M_G^\square = (M_{G_L}\odot M_{G_R})^\square = (P_{G_L}^\Sigma \odot P_{G_R}^\Sigma)^\square
	= P_{G_L} \boxdot P_{G_R}$ holds as claimed.
	The algorithmic claim thus follows from \cref{thm:prod}.
\end{proof}

\subsubsection{Removing Empty Rows and Columns}\label{sec:contract}
The main feature of the alignment graphs $\AG^{n,m}(S)$ originating from \textsf{LIS} instances (as opposed arbitrary \textsf{LCS} instances) is that $|S|=\Oh(n+m)$. 
Moreover, the divide-and-conquer approach suggested by \cref{lem:decomp} allows further reducing $|S|$ in each recursive call. Nevertheless, the grid dimensions remain the same.
In this section, we show that removing empty rows and columns (not containing any element of $S$) changes the seaweed matrix in a very predictable way, as described in \cref{lem:ccontract}. The procedure of \cref{lem:acontract}
uses this characterization to efficiently reverse the impact of the removal on the seaweed matrix.
This lets our divide-and-conquer algorithm reduce the grid dimensions on par with decreasing $|S|$.
Tiskin uses a similar reduction in~\cite[Algorithm 8.2]{Tiskin13}, but the analogue of \cref{lem:ccontract} is much simpler for alignment graphs (compared to slice alignment graphs). 
We also note that \cref{lem:ccontract} follows from the interpretation of seaweed matrices in terms of seaweed braids: it is a simple observation that the seaweeds originating from empty rows or columns are never combed away. 
Nevertheless, we opted for a more tedious proof avoiding the seaweed monoid
(which we would need to formally link to the seaweed matrices of slice alignment~graphs).

For $X\sub \mathbb{R}$ and $x\in \mathbb{R}$, define $\rank_{X}(x)=|\{x'\in X : x'<x\}|$.
Given $X\sub [0\dd n)$ and $Y\sub [0\dd m)$, define a mapping $\dl : \Gr^{n,m} \to \Gr^{|X|,|Y|}$ with
\[\dl(x,y) = (\rank_X(x),\rank_Y(y)).\]
We extend $\dl$ to map cut-paths in $\Gr^{n,m}$ to cut-paths in $\Gr^{|X|,|Y|}$.
\begin{definition}
	Let $\pi$ be a cut-path in $\Gr^{n,m}$, let $X\sub [0\dd n)$ and $Y\sub [0\dd m)$. The sequence $\dl(\pi)$ is obtained from $\dl(\pi_0),\ldots,\dl(\pi_{n+m})$ by removing duplicate adjacent elements.
\end{definition}

\begin{observation}\label{obs:cp}
	Let $\pi$ be a cut-path in $\Gr^{n,m}$. For every $x\in [0\dd n)$, there exists a unique index $d\in [0\dd n+m)$, denoted $\row(\pi,\bx)$, such that $x_{d} = x < x_{d+1}$. Moreover, for every $y \in [0\dd m)$, there exists a unique index $d \in [0\dd n+m)$, denoted $\col(\pi,y)$, such that $y_{d+1}=y < y_d$.
\end{observation}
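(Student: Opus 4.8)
The plan is to unwind the definition of a cut-path into an elementary statement about monotone unit-step sequences and then read off the two claims. Write $\pi_d=(x_d,y_d)$ for $d\in[0\dd n+m]$. Since $\pi$ is a cut-path, it is an antichain, so $(x_d)_{d=0}^{n+m}$ is non-decreasing and $(y_d)_{d=0}^{n+m}$ is non-increasing; and since $\pi_d\in\Diag^{n,m}_d$ we have $x_{d+1}-y_{d+1}=(x_d-y_d)+1$ for every $d\in[0\dd n+m)$. Combining these facts (exactly as in the proof of \cref{fct:cdist}) gives $x_{d+1}-x_d\in\{0,1\}$, $y_d-y_{d+1}\in\{0,1\}$, and $(x_{d+1}-x_d)+(y_d-y_{d+1})=1$. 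Moreover, the endpoints are pinned down: $\pi_0\in\Diag^{n,m}_0$ together with $x_0\ge 0$ and $y_0\le m$ forces $(x_0,y_0)=(0,m)$, and likewise $(x_{n+m},y_{n+m})=(n,0)$.

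For the row claim, I would fix $x\in[0\dd n)$ and consider the index set $I_x=\{d\in[0\dd n+m]:x_d=x\}$. Because $(x_d)$ starts at $0$, ends at $n$, and moves by unit steps, it attains every integer in $[0\dd n]$, so $I_x$ is non-empty, and since $(x_d)$ is non-decreasing, $I_x$ is a set of consecutive integers. Let $d:=\max I_x$; since $x<n=x_{n+m}$ we have $d<n+m$, hence $d\in[0\dd n+m)$ and $d+1$ is a valid index. By maximality $x_{d+1}\ne x$, so monotonicity gives $x_{d+1}>x_d$, and combined with $x_{d+1}\le x_d+1$ this yields $x_{d+1}=x+1>x_d=x$. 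Uniqueness is immediate: any $d'$ with $x_{d'}=x<x_{d'+1}$ satisfies $d'\in I_x$ and $d'+1\notin I_x$, so $d'=\max I_x=d$. We set $\row(\pi,x):=d$.

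The column claim is entirely symmetric: for $y\in[0\dd m)$, let $J_y=\{d\in[0\dd n+m]:y_d=y\}$, which is a non-empty interval by the same reasoning applied to $(y_d)$. Since $y<m=y_0$ we have $0\notin J_y$, so $e:=\min J_y\ge 1$; put $d:=e-1\in[0\dd n+m)$. Then $y_{d+1}=y$, and minimality together with monotonicity gives $y_d>y$, i.e.\ $y_{d+1}=y<y_d$. For uniqueness, any $d'$ with $y_{d'+1}=y<y_{d'}$ has $d'+1\in J_y$ and $d'\notin J_y$, hence $d'+1=\min J_y=e$ and $d'=d$; we set $\col(\pi,y):=d$.

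There is no real obstacle here: the whole argument is routine bookkeeping about a non-decreasing sequence with unit increments. The only points that require a little care are (i) recording that $I_x$ and $J_y$ are non-empty intervals, which is where the endpoint values $(x_0,y_0)=(0,m)$ and $(x_{n+m},y_{n+m})=(n,0)$ are used, and (ii) checking that the witnessing index lands in the half-open range $[0\dd n+m)$ rather than at the right endpoint, which is exactly why the claims are stated for $x\in[0\dd n)$ and $y\in[0\dd m)$ rather than the closed ranges.
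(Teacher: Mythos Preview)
Your argument is correct and is exactly the natural way to justify this observation. The paper does not give a proof at all (it is stated as an ``Observation'' and left implicit), but the ingredients you use---the step shape $(x_{d+1},y_{d+1})\in\{(x_d+1,y_d),(x_d,y_d-1)\}$ and the forced endpoints $(0,m)$ and $(n,0)$---are precisely what the paper already establishes in the proof of \cref{fct:cdist}, so your write-up simply spells out what the authors take for granted.
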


\begin{fact}\label{fct:dlpi}
	Let $\pi$ be a cut-path in $\Gr^{n,m}$, $X\sub [0\dd n)$, and $Y\sub [0\dd m)$.
	Then, $\tpi:=\dl(\pi)$ is a cut-path in $\Gr^{|X|,|Y|}$ and, for $d\in [0\dd n+m]$, we have
	$\dl(\pi_d) = \tpi_{\rank_R(d)}$, where $R = \{\row(\pi,x) : x\in X\}\cup\allowbreak \{\col(\pi,y): y\in Y\}$.
\end{fact}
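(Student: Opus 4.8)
The plan is to follow the sequence $\dl(\pi_0),\dl(\pi_1),\ldots,\dl(\pi_{n+m})$ step by step and identify its maximal constant runs with the fibers of $\rank_R$. First I would recall the staircase structure of $\pi$: since $\pi$ is an antichain with $\pi_d=(x_d,y_d)\in\Diag^{n,m}_d$, the sequence $(x_d)$ is non-decreasing, $(y_d)$ is non-increasing, and for every $d\in[0\dd n+m)$ exactly one of $(x_{d+1},y_{d+1})=(x_d+1,y_d)$ (call $d$ an \emph{$x$-step}) or $(x_{d+1},y_{d+1})=(x_d,y_d-1)$ (call $d$ a \emph{$y$-step}) holds. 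By \cref{obs:cp}, $\row(\pi,\cdot)$ is a bijection from $[0\dd n)$ onto the set of $x$-steps and $\col(\pi,\cdot)$ is a bijection from $[0\dd m)$ onto the set of $y$-steps, and these two sets partition $[0\dd n+m)$. Hence $\{\row(\pi,x):x\in X\}$ and $\{\col(\pi,y):y\in Y\}$ are disjoint, so $|R|=|X|+|Y|$.

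Next I would compute how $\dl$ changes along a single step. Using $\rank_X(x+1)-\rank_X(x)=\mathbf 1[x\in X]$ and $\rank_Y(y)-\rank_Y(y-1)=\mathbf 1[y-1\in Y]$: across an $x$-step $d$ one gets $\dl(\pi_{d+1})-\dl(\pi_d)=(1,0)$ if $x_d\in X$ and $(0,0)$ otherwise, and across a $y$-step $d$ one gets $\dl(\pi_{d+1})-\dl(\pi_d)=(0,-1)$ if $y_{d+1}\in Y$ and $(0,0)$ otherwise. Since $d=\row(\pi,x_d)$ on an $x$-step and $d=\col(\pi,y_{d+1})$ on a $y$-step, the increment is nonzero precisely when $d\in R$, and when nonzero it lies in $\{(1,0),(0,-1)\}$. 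Therefore the first coordinates of $(\dl(\pi_d))_d$ are non-decreasing, the second coordinates non-increasing, and $\dl(\pi_d)=\dl(\pi_{d+1})$ iff $d\notin R$; in particular a coordinate that changes never changes back. It follows that $\tpi=\dl(\pi)$ has exactly $|R|+1$ entries $\tpi_0,\ldots,\tpi_{|R|}$, all distinct, and that $\dl(\pi_d)$ is constant on each maximal run of indices avoiding $R$. Such a run is exactly $\{d':\rank_R(d')=k\}$ for some $k$, since $\rank_R(d')=|\{r\in R:r<d'\}|$ increases by $1$ precisely at the elements of $R$; the common value on this run is the $k$-th entry of $\tpi$, which proves $\dl(\pi_d)=\tpi_{\rank_R(d)}$.

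Finally I would check that $\tpi$ is a cut-path in $\Gr^{|X|,|Y|}$. It has $|R|+1=|X|+|Y|+1$ entries, lies in $\Gr^{|X|,|Y|}$ because $\rank_X(x_d)\in[0\dd|X|]$ and $\rank_Y(y_d)\in[0\dd|Y|]$, and — having non-decreasing first coordinates, non-increasing second coordinates, and pairwise distinct entries — is an antichain. The one remaining point is that $\tpi_k\in\Diag^{|X|,|Y|}_k$, i.e.\ that $\rank_X(x_d)-\rank_Y(y_d)=k-|Y|$ whenever $\rank_R(d)=k$. For this I would establish the two set identities $\{x\in X:x<x_d\}=\{x\in X:\row(\pi,x)<d\}$ and $\{y\in Y:y\ge y_d\}=\{y\in Y:\col(\pi,y)<d\}$, each following from the monotonicity of $(x_d),(y_d)$ and the defining properties of $\row$ and $\col$. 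The first gives $\rank_X(x_d)=|\{x\in X:\row(\pi,x)<d\}|$, the number of $x$-steps of $R$ strictly below $d$; the second gives $|Y|-\rank_Y(y_d)=|\{y\in Y:\col(\pi,y)<d\}|$, the number of $y$-steps of $R$ strictly below $d$. Since every element of $R$ is an $x$-step or a $y$-step, summing these yields $\rank_X(x_d)+|Y|-\rank_Y(y_d)=\rank_R(d)=k$, which is the desired identity, and completes the verification.

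I expect the only real difficulty to be bookkeeping: setting up cleanly the three-way correspondence between the steps of $\pi$, the positions in $X$ and $Y$, and the elements of $R$, which is exactly what the two set identities above encapsulate. A minor subtlety to watch is that $\row(\pi,\cdot)$ and $\col(\pi,\cdot)$ are only defined on $[0\dd n)$ and $[0\dd m)$, which is why the identities are phrased with $x<x_d$ and $y\ge y_d$ rather than $x\le x_d$ and $y<y_d$; once they are in place, the rest is routine rank counting.
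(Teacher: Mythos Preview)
Your argument is correct and matches the paper's proof almost exactly for the indexing formula $\dl(\pi_d)=\tpi_{\rank_R(d)}$: both proofs identify the steps $d$ with $\dl(\pi_{d+1})\ne\dl(\pi_d)$ as precisely the elements of $R$, and both deduce the antichain property from the monotonicity of $\rank_X\circ x_d$ and $\rank_Y\circ y_d$. The only genuine difference is in the last step, verifying that $\tpi$ is a cut-path: you explicitly compute the diagonal index via the two set identities, whereas the paper observes that an antichain of size $|X|+|Y|+1$ in $\Gr^{|X|,|Y|}$ can meet each of the $|X|+|Y|+1$ diagonals (which are chains) at most once, hence exactly once by pigeonhole. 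The paper's argument is shorter and avoids the bookkeeping you flagged as the main difficulty; your explicit computation has the minor advantage of yielding the identity $\rank_X(x_d)-\rank_Y(y_d)+|Y|=\rank_R(d)$ directly, but this is not needed elsewhere.
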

\begin{proof}
	Let $\pi = (x_d,y_d)_{d=0}^{n+m}$.
	Observe that $x_{d+1}\ne x_d$ if and only if $d = \row(\pi,x_d)$,
	so $\rank_X(x_{d+1})\ne \rank_X(x_d)$ if and only if $d\in \{\row(\pi,x) : x\in X\}$.
	Similarly, $y_{d+1} \ne y_d$ if and only if $d = \col(\pi,y_{d+1})$,
	so $\rank_Y(y_{d+1})\ne \rank_Y(y_d)$ if and only if $d\in \{\col(\pi,y) : y\in Y\}$.
	Hence, $\dl(\pi_{d+1})\ne \dl(\pi_d)$ if and only if $d\in R$,
	and therefore $\dl(\pi_d) = \tpi_{\rank_R(d)}$.
	In particular, $\tpi$ has $|X|+|Y|+1$ elements.
	
	Denote $\tpi = (\tx_d,\ty_d)_{d=0}^{|X|+|Y|}$.
	Since $(x_d)_{d=0}^{n+m}$ is non-increasing, so is $(\rank_X(x_d))_{d=0}^{n+m}$
	and its subsequence $(\tx_d)_{d=0}^{|X|+|Y|}$.
	Symmetrically, since $(y_d)_{d=0}^{n+m}$ is non-decreasing, so is 
	$(\rank_Y(y_d))_{d=0}^{n+m}$
	and its subsequence $(\ty_d)_{d=0}^{|X|+|Y|}$.
	Consequently, $\tpi$ is an antichain.
	Moreover, since each diagonal in $\Gr^{|X|,|Y|}$ is a chain,
	$\tpi$ contains exactly one entry from each diagonal, and thus it must be a cut-path.
\end{proof}

\begin{lemma}\label{lem:ccontract}
	Let $G=\SAG^{n,m}(\pi,\pi',S)$ be a slice alignment graph and let 
	$X \sub [0\dd n)$ and $Y\sub [0\dd m)$ be such that $S \sub X \times Y$.
	Then, $\tG := \SAG^{|X|,|Y|}(\dl(\pi),\dl(\pi'),\{\dl(p) : p\in S\})$
	is a well-defined slice alignment graph.
	Moreover,
	\[ P_G[a,b] = \begin{cases}
	P_{\tG}[\rank_R(a),\rank_{R'}(b)]&\text{if }a\in R\text{ and }b\in R',\\
	1 & \text{if }a=\row(\pi,x)\text{ and }b=\row(\pi',x)\text{ for some }x\in \bar{X},\\
	1 & \text{if }a=\col(\pi,y)\text{ and }b=\col(\pi',y)\text{ for some }y\in \bar{Y},\\
	0 & \text{otherwise,}
	\end{cases}\]
	where $R =  \{\row(\pi,x) : x\in X\}\cup\allowbreak \{\col(\pi,y): y\in Y\}$,
	$R' =  \{\row(\pi',x) : x\in X\}\cup\allowbreak \{\col(\pi',y): y\in Y\}$,
	$\bar{X} = [0\dd n)\sm X$, and $\bar{Y} = [0\dd m)\sm Y$.
\end{lemma}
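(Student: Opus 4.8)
The plan is to prove the displayed formula by passing to distance matrices, taking a discrete second difference, and checking that each removed row or column contributes exactly one ``straight'' seaweed. Throughout, recall that for a slice alignment graph $H$ we have $D_H[a,b]=\dist_H(\text{boundary cut-path points})$ and $P_H=\tfrac12 D_H^\square$.

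\emph{Well-definedness of $\tG$.} Since $S\sub X\times Y$ and $S\sub \Gr^{n-1,m-1}$, each $(x,y)\in S$ has $\rank_X(x)\le |X|-1$ and $\rank_Y(y)\le |Y|-1$, so $\{\dl(p):p\in S\}\sub \Gr^{|X|-1,|Y|-1}$. By \cref{fct:dlpi}, $\dl(\pi)$ and $\dl(\pi')$ are cut-paths in $\Gr^{|X|,|Y|}$; applying \cref{fct:dlpi} to $\pi'$ as well gives the set $R'=\{\row(\pi',x):x\in X\}\cup\{\col(\pi',y):y\in Y\}$ with $\dl(\pi'_d)=\dl(\pi')_{\rank_{R'}(d)}$ for all $d$. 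To see $\dl(\pi)\preceq\dl(\pi')$, fix an output diagonal $e$ and pick $d,d'$ with $\rank_R(d)=e=\rank_{R'}(d')$, so that $\dl(\pi)_e=\dl(\pi_d)$ and $\dl(\pi')_e=\dl(\pi'_{d'})$ by \cref{fct:dlpi}; since two points on one diagonal are $\preceq$-comparable, if $\dl(\pi)_e\not\preceq\dl(\pi')_e$ then $\dl(\pi'_{d'})\prec\dl(\pi_d)$, and monotonicity of $\rank_X,\rank_Y$ forces $\pi'_{d'}\prec\pi_d$; combining with $\pi_{d'}\preceq\pi'_{d'}$ gives $\pi_{d'}\prec\pi_d$, which contradicts that $\pi$ is an antichain when $d\ne d'$ and contradicts $\pi_d\preceq\pi'_d$ when $d=d'$.

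\emph{A distance identity.} I claim that for all $p=(x,y),q=(x',y')\in \Gr^{n,m}$,
\[\dist_{\AG^{n,m}(S)}(p,q)=\dist_{\AG^{|X|,|Y|}(\{\dl(s):s\in S\})}(\dl(p),\dl(q))+|\bar X\cap[\min(x,x')\dd\max(x,x'))|+|\bar Y\cap[\min(y,y')\dd\max(y,y'))|.\]
This follows case by case from \cref{lem:dist}: since $\rank_X,\rank_Y$ are non-decreasing, $\dl(p),\dl(q)$ fall into the same one of the four cases as $p,q$; the linear terms obey $|x-x'|-|\rank_X(x)-\rank_X(x')|=|\bar X\cap[\min(x,x')\dd\max(x,x'))|$ and similarly for the second coordinate; and the $\LIS$ terms coincide because $S\sub X\times Y$ and $\rank_X|_X,\rank_Y|_Y$ are order-isomorphisms, so $\dl$ restricts to an order-preserving bijection from $S$ onto $\{\dl(s):s\in S\}$ carrying each window $[x\dd x')\times[y\dd y')$ onto $[\rank_X(x)\dd\rank_X(x'))\times[\rank_Y(y)\dd\rank_Y(y'))$. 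Using \cref{lem:hered} to compute distances in the ambient alignment graphs, specializing to $p=\pi_a$, $q=\pi'_b$, and invoking \cref{fct:dlpi} (so $\dl(\pi_a)=\dl(\pi)_{\rank_R(a)}$ and $\dl(\pi'_b)=\dl(\pi')_{\rank_{R'}(b)}$), we get
\[D_G[a,b]=D_{\tG}[\rank_R(a),\rank_{R'}(b)]+c(a,b),\qquad c(a,b):=|\bar X\cap I_1|+|\bar Y\cap I_2|,\]
where $I_1$ (resp.\ $I_2$) is the half-open interval spanned by the first (resp.\ second) coordinates of $\pi_a$ and $\pi'_b$.

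\emph{Second difference.} Since $P_G=\tfrac12 D_G^\square$, the entry $P_G[a,b]$ equals $\tfrac12$ of the discrete second difference of $D_{\tG}[\rank_R(a),\rank_{R'}(b)]$ plus $\tfrac12 c^\square[a,b]$, where $c^\square[a,b]:=c(a{+}1,b)+c(a,b{+}1)-c(a,b)-c(a{+}1,b{+}1)$. Because $\rank_R(a{+}1)=\rank_R(a)+1$ if $a\in R$ and $\rank_R(a{+}1)=\rank_R(a)$ otherwise (likewise for $b$ and $R'$), the first summand vanishes unless $a\in R$ \emph{and} $b\in R'$, in which case it equals $\tfrac12 D_{\tG}^\square[\rank_R(a),\rank_{R'}(b)]=P_{\tG}[\rank_R(a),\rank_{R'}(b)]$. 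For the second summand, write $c=c_1+c_2$ with $c_1$ depending only on first coordinates and $c_2$ only on second coordinates; passing from $a$ to $a{+}1$ changes the first coordinate of $\pi_a$ by $+1$ exactly when $a$ is a first-direction step of $\pi$ (i.e.\ $a=\row(\pi,\cdot)$), and analogously for $a{+}1$ on $\pi$'s second coordinate and for $b,b{+}1$ on $\pi'$. Setting $g(t)=|\bar X\cap[0\dd t)|$ and using that $|d{+}1|+|d{-}1|-2|d|$ is $2$ for $d=0$ and $0$ otherwise, together with the fact that $g$ is strictly increasing on $\bar X$, a direct computation gives $c_1^\square[a,b]=2$ precisely when $a=\row(\pi,x)$ and $b=\row(\pi',x)$ for some $x\in\bar X$, and $c_1^\square[a,b]=0$ otherwise; symmetrically $c_2^\square[a,b]=2$ precisely when $a=\col(\pi,y)$ and $b=\col(\pi',y)$ for some $y\in\bar Y$, and $0$ otherwise. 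A single step of $\pi$ is not both a first- and a second-direction step, so at most one of $c_1^\square,c_2^\square$ is nonzero; moreover whenever one equals $2$ we have $a\notin R$ and $b\notin R'$ (since $\row(\pi,x),\col(\pi,y)\notin R$ for $x\in\bar X$, $y\in\bar Y$), which excludes the first summand. Collecting the now-disjoint regimes ``$a\in R$ and $b\in R'$'', ``$a=\row(\pi,x),\ b=\row(\pi',x)$ for some $x\in\bar X$'', ``$a=\col(\pi,y),\ b=\col(\pi',y)$ for some $y\in\bar Y$'', and ``all other pairs'' reproduces the claimed formula.

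The main obstacle is this last step: one must track, separately for $\pi$ and for $\pi'$, whether the relevant step is a row step or a column step and whether the underlying line lies in $X$ or in $\bar X$, and confirm that the four regimes in the statement genuinely partition the index pairs $(a,b)$. The distance identity is mechanical once \cref{lem:dist} is available, and well-definedness is short; one minor subtlety is that $c(a,b)$ must be defined with absolute values, since $\pi_a$ and $\pi'_b$ need not be $\preceq$-comparable, which is why it is convenient to first move the distances to the ambient alignment graphs via \cref{lem:hered}.
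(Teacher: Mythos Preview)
Your proof is correct and follows essentially the same approach as the paper: establish a distance identity relating $D_G$ to $D_{\tG}$ plus a correction term coming from the removed rows and columns, then take the discrete second difference $P_G=\tfrac12 D_G^\square$ and compute each piece separately. The paper's \cref{clm:contract} is your distance identity (phrased with $\rank_{\bar X},\rank_{\bar Y}$ instead of $|\bar X\cap I|$, which is the same thing), and the paper's \cref{clm:T} is your second-difference computation, carried out with a more explicit case analysis on $T_{\tG}^\square,T_X^\square,T_Y^\square$ but arriving at the same conclusion via the same mechanism.
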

\begin{proof}
	Denote $\pi = (x_d,y_d)_{d=0}^{n+m}$ and $\pi' = (x'_d,y'_d)_{i=0}^{n+m}$,
	as well as $\dx(\pi)=\tpi = (\tx_d,\ty_d)_{d=0}^{|X|+|Y|}$ and 
	$\dx(\pi') = \tpi' = (\tx'_d,\ty'_d)_{d=0}^{|X|+|Y|}$.
	
	By \cref{fct:dlpi}, both $\tpi$ and $\tpi'$ are cut-paths in $\Gr^{|X|,|Y|}$.
	In order to prove that $\tG$ is well-defined, we need to show that $\tpi_d \preceq \tpi'_d$ holds for every $d\in [0\dd |X|+|Y|)$.
	Let us choose $a,b\in [0\dd n+m]$ so that $\tpi_d = \dl(x_a,y_a)$ and $\tpi'_{d}=\dl(x'_b,y'_b)$.
	If $a \le b$, then we have $\tx_d = \rank_X(x_a) \le \rank_X(x_b) \le \rank_X(x'_b) = \tx'_d$.
	Similarly, if $a \ge b$, then we have $\ty_d = \rank_Y(y_a) \le \rank_Y(y_b) \le \rank_Y(y'_b) = \ty'_d$.
	In either case, due to $\tpi_d,\tpi'_d \in \Diag^{|X|,|Y|}_{d}$, this implies $\tpi_d \preceq \tpi'_d$.
	Consequently, $\tpi \preceq \tpi'$ and $\tG$ is well-defined.
	
	\begin{claim}\label{clm:contract}
		Let $p=(x,y)$ and $q=(x',y')$ be points in $\Gr^{n,m}[\pi\dd \pi']$ such that $\dl(p),\dl(q)\in \Gr^{|X|,|Y|}[\tpi\dd \tpi']$.
		Then, $\dist_G(p,q) = \dist_{\tG}(\dl(p),\dl(q))+ |\rank_{\bar{X}}(x)-\rank_{\bar{X}}(x')| +
		|\rank_{\bar{Y}}(y)-\rank_{\bar{Y}}(y')|$.
	\end{claim}
	\begin{proof}
		Denote $(\tx,\ty)=\dl(p)$, $(\tx',\ty')=\dl(q)$, and $\tS = \{\dx(r) : r\in S\}$.
		The assumption $S \sub X\times Y$ implies that $\dl$ restricted to $S$ 
		is a monotonically increasing bijection mapping $S$ to $\tS$,
		i.e., $\dl$ preserves chains. 
		Consequently, if $x\le x'$ and $y \le y'$,
		then $\LIS(S\cap ([x\dd x')\times [y\dd y'))) = \LIS(\tS\cap ([\tx\dd \tx')\times [y\dd y')))$,
		so \cref{lem:dist,lem:hered} yield \begin{multline*}
			\dist_G(p,q) - \dist_{\tG}(\dx(p),\dx(q)) = (|x'-x|+|y'-y|-2\LIS(S\cap ([x\dd x')\times [y\dd y'))))-\\(|\tx'-\tx|+|y'-y|-2\LIS(\tS\cap ([\tx\dd \tx')\times [y\dd y')))) = |x'-x|-|\tx'-\tx| + |y'-y|-|\ty'-\ty|.\end{multline*}
		Similarly, if $x\le x'$ and $y \ge y'$, then \cref{lem:dist,lem:hered} yield
		\begin{multline*}\dist_G(p,q) - \dist_{\tG}(\dx(p),\dx(q)) = (|x'-x|+|y'-y|)-(|\tx'-\tx|+|y'-y|) =\\ 
			|x'-x|-|\tx'-\tx|+ |y'-y|-|\ty'-\ty|.
		\end{multline*}
		The cases involving $x\ge x'$ are symmetric.
		Furthermore,
		\[|x'-x|-|\tx'-\tx| = |x'-x| - |\rank_X(x')-\rank_X(x)| = |x'-\rank_X(x')-x+\rank_X(x)| = |\rank_{\bar{X}}(x')-\rank_{\bar{X}}(x)|.\]
		Symmetrically,
		\[|y'-y|-|\ty'-\ty| = |y'-y| - |\rank_Y(y')-\rank_Y(y)| = |y'-\rank_Y(y')-y+\rank_Y(y)| = |\rank_{\bar{Y}}(y')-\rank_{\bar{Y}}(y)|,\]
		so $\dist_G(p,q) - \dist_{\tG}(\dx(p),\dx(q)) =  |x'-x|-|\tx'-\tx|+ |y'-y|-|\ty'-\ty| =   |\rank_{\bar{X}}(x')-\rank_{\bar{X}}(x)|+ |\rank_{\bar{Y}}(y')-\rank_{\bar{Y}}(y)|$ holds as claimed.
	\end{proof}
	
	Next, we use \cref{fct:dlpi,clm:contract} to characterize the entries of $D_G$ in terms of $D_{\tG}$.
	For every $a,b\in [0\dd n+m]$, we have 
	\begin{multline*}D_G[a,b] = \dist_{G}(\pi_a,\pi'_b) = \dist_{\tG}(\dl(\pi_a),\dl(\pi'_b)) + |\rank_{\bar{X}}(x'_b)-\rank_{\bar{X}}(x_a)| + |\rank_{\bar{Y}}(y'_b)-\rank_{\bar{Y}}(y_a)| \\ 
		= D_{\tG}[\rank_R(a),\rank_{R'}(b)] +  |\rank_{\bar{X}}(x'_b)-\rank_{\bar{X}}(x_a)| + |\rank_{\bar{Y}}(y'_b)-\rank_{\bar{Y}}(y_a)|.\end{multline*}
	Let us decompose $D_G = T_{\tG} + T_X + T_Y$ into three matrices corresponding to the three terms above.
	\begin{claim}\label{clm:T}
		For every $a,b\in [0\dd n+m)$, we have
		\begin{align*}T_{\tG}^\square[a,b] &= \begin{cases}
				D_G^{\square}[\rank_R(a),\rank_{R'}(b)] &\text{if }a\in R\text{ and }b\in R',\\
				0 & \text{otherwise.}
			\end{cases}\\
			T_X^\square[a,b] &= \begin{cases}
				2\phantom{_G^{\square}[\rank_R(a),\rank_{R'}(b)]\;} & \text{if }a=\row(\pi,x)\text{ and }b=\row(\pi',x)\text{ for some }x\in \bar{X},\\
				0 & \text{otherwise.}
			\end{cases}\\
			T_Y^\square[a,b] &= \begin{cases}
				2\phantom{_G^{\square}[\rank_R(a),\rank_{R'}(b)]\;} & \text{if }a=\col(\pi,y)\text{ and }b=\col(\pi',y)\text{ for some }y\in \bar{Y},\\
				0 & \text{otherwise.}
			\end{cases}
		\end{align*}
	\end{claim}
	\begin{proof}
		If $a\in R$ and $b\in R'$, then $\rank_R(a+1) = \rank_R(a)+1$ and $\rank_{R'}(b+1)=\rank_{R'}(b)+1$,
		so
		$T_{\tG}^\square[a,b] = D_{\tG}^{\square}[\rank_R(a)+1,\rank_{R'}(b)] + D_{\tG}^{\square}[\rank_R(a),\rank_{R'}(b)+1] - D_{\tG}^{\square}[\rank_R(a),\rank_{R'}(b)] - D_{\tG}^{\square}[\rank_R(a)+1,\rank_{R'}(b)+1]=D_{\tG}^{\square}[\rank_R(a),\rank_{R'}(b)]$.
		If $a\notin R$, then $\rank_R(a+1)=\rank_R(a)$,
		so $T_{\tG}^\square[a,b] = D_{\tG}^{\square}[\rank_R(a),\rank_{R'}(b)] + D_{\tG}^{\square}[\rank_R(a),\rank_{R'}(b+1)] - D_{\tG}^{\square}[\rank_R(a),\rank_{R'}(b)] - D_{\tG}^{\square}[\rank_R(a),\rank_{R'}(b+1)]=0$.
		Finally, if $b\notin R'$, then $\rank_{R'}(b+1) =\rank_{R'}(b)$, which symmetrically yields $T_{\tG}^\square[a,b] = 0$.
		
		Next, observe that if $\rank_{\bar{X}}(x_{a}) \ge \rank_{\bar{X}}(x'_{b+1})$,
		then $T_X^\square[a,b] = (\rank_{\bar{X}}(x_{a+1})-\rank_{\bar{X}}(x'_{b})) + (\rank_{\bar{X}}(x_{a})-\rank_{\bar{X}}(x'_{b+1}))- (\rank_{\bar{X}}(x_{a})-\rank_{\bar{X}}(x'_{b})) - (\rank_{\bar{X}}(x_{a+1})-\rank_{\bar{X}}(x'_{b+1})) = 0$.
		Symmetrically, if $\rank_{\bar{X}}(x'_{b}) \ge \rank_{\bar{X}}(x_{a+1})$,
		then $T_X^\square[a,b] = (\rank_{\bar{X}}(x'_{b})-\rank_{\bar{X}}(x_{a+1})) + (\rank_{\bar{X}}(x'_{b+1})-\rank_{\bar{X}}(x_{a}))- (\rank_{\bar{X}}(x'_{b})-\rank_{\bar{X}}(x_{a})) - (\rank_{\bar{X}}(x'_{b+1})-\rank_{\bar{X}}(x_{a+1})) = 0$.
		In the remaining case, we have $\rank_{\bar{X}}(x_{a}) < \rank_{\bar{X}}(x'_{b+1})\le \rank_{\bar{X}}(x'_{b})+1 < \rank_{\bar{X}}(x_{a+1})+1 \le \rank_{\bar{X}}(x_{a})+2$.
		Since all the ranks are integers, this yields $\rank_{\bar{X}}(x_{a+1})=\rank_{\bar{X}}(x_{a})+1=\rank_{\bar{X}}(x'_{b})+1=\rank_{\bar{X}}(x'_{b+1})$. Consequently, $a = \row(\pi,x_a)$, $b = \row(\pi',x'_b)$, and $x_a = x'_b \in \bar{X}$. In this case, we have $T_X^\square[a,b] = 1+1-0-0 = 2$, as claimed.
		
		Finally, observe that if $\rank_{\bar{Y}}(y'_{b}) \le \rank_{\bar{Y}}(y_{a+1})$,
		then $T_Y^\square[a,b] = (\rank_{\bar{Y}}(y_{a+1})-\rank_{\bar{Y}}(y'_{b})) + (\rank_{\bar{Y}}(y_{a})-\rank_{\bar{Y}}(y'_{b+1}))- (\rank_{\bar{Y}}(y_{a})-\rank_{\bar{Y}}(y'_{b})) - (\rank_{\bar{Y}}(y_{a+1})-\rank_{\bar{Y}}(y'_{b+1})) = 0$.
		Symmetrically, if $\rank_{\bar{Y}}(y_{a}) \le \rank_{\bar{Y}}(y'_{b+1})$,
		then $T_Y^\square[a,b] = (\rank_{\bar{Y}}(y'_{b})-\rank_{\bar{Y}}(y_{a+1})) + (\rank_{\bar{Y}}(y'_{b+1})-\rank_{\bar{Y}}(y_{a}))- (\rank_{\bar{Y}}(y'_{b})-\rank_{\bar{Y}}(y_{a})) - (\rank_{\bar{Y}}(y'_{b+1})-\rank_{\bar{Y}}(y_{a+1})) = 0$.
		In the remaining case, we have $\rank_{\bar{Y}}(y_{a}) > \rank_{\bar{Y}}(y'_{b+1})\ge \rank_{\bar{Y}}(y'_{b})-1 > \rank_{\bar{Y}}(y_{a+1})-1 \ge \rank_{\bar{Y}}(y_{a})+2$.
		Since all the ranks are integers, this yields $\rank_{\bar{Y}}(y_{a+1})=\rank_{\bar{Y}}(y_{a})-1=\rank_{\bar{Y}}(y'_{b})-1=\rank_{\bar{Y}}(y'_{b+1})$. Consequently, $a = \col(\pi,y_{a+1})$, $b = \col(\pi',y'_{b+1})$, and $y_{a+1} = y'_{b+1} \in \bar{Y}$. In this case, we have $T_Y^\square[a,b] = 1+1-0-0 = 2$, as claimed.
	\end{proof}
	
	\cref{clm:T} completes the proof due to $P_G = \frac12D_G^\square = \frac12(T_{\tG}^\square + T_X^\square + T_Y^\square)$ and $P_{\tG} = \frac12 D_{\tG}^\square$.
\end{proof}

Next, we develop an algorithmic counterpart of \cref{lem:ccontract}.

\begin{lemma}\label{lem:acontract}
	Let $G=\SAG^{n,m}(\pi,\pi',S)$ be a slice alignment graph, let 
	$X \sub [0\dd n)$ and $Y\sub [0\dd m)$ be such that $S \sub X \times Y$,
	and let $\tG = \SAG^{|X|,|Y|}(\dl(\pi),\dl(\pi'),\{\dl(p) : p\in S\})$.
	Given the cut-paths $\pi,\pi'$, the sets $X,Y$, and the permutation representing $P_{\tG}$,
	the permutation representing $P_G$ can be constructed in $\Oh(n+m)$ time.
\end{lemma}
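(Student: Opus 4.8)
The plan is to convert the combinatorial identity of \cref{lem:ccontract} into an explicit linear-time procedure. First I would preprocess the two cut-paths. Writing $\pi=(x_d,y_d)_{d=0}^{n+m}$, the antichain and one-point-per-diagonal properties of a cut-path (cf.\ \cref{obs:cp} and the proof of \cref{fct:cdist}) guarantee that for each $d\in[0\dd n+m)$ exactly one of the following holds: $x_{d+1}=x_d+1$ and $y_{d+1}=y_d$ (an ``$x$-step'', so $d=\row(\pi,x_d)$), or $x_{d+1}=x_d$ and $y_{d+1}=y_d-1$ (a ``$y$-step'', so $d=\col(\pi,y_{d+1})$). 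A single left-to-right scan of $\pi$ therefore produces, in $\Oh(n+m)$ time: the tables $x\mapsto\row(\pi,x)$ and $y\mapsto\col(\pi,y)$; the set $R=\{\row(\pi,x):x\in X\}\cup\{\col(\pi,y):y\in Y\}$, obtained by keeping only the steps whose associated coordinate lies in $X$ resp.\ $Y$ (membership testable in $\Oh(1)$ after an $\Oh(n+m)$-time preprocessing recording $X$ and $Y$ as boolean arrays); the restriction of $\rank_R$ to $R$, maintained as a running counter; and the inverse array $A_R$ with $A_R[\rank_R(d)]=d$. An identical scan of $\pi'$ yields $\row(\pi',\cdot)$, $\col(\pi',\cdot)$, the set $R'$, and the array $A_{R'}$.

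Next I would assemble the permutation $\sigma_G$ representing $P_G$ straight from \cref{lem:ccontract}, handling its three nonzero cases in turn. Using the given permutation $\sigma_{\tG}$ representing $P_{\tG}$, for every $\tilde a\in[0\dd|X|+|Y|)$ I set $\sigma_G(A_R[\tilde a]):=A_{R'}[\sigma_{\tG}(\tilde a)]$; as $\tilde a$ ranges over all ranks this realizes the entries $P_G[a,b]=P_{\tG}[\rank_R(a),\rank_{R'}(b)]$ for $a\in R$, $b\in R'$. Then, for every $x\in\bar{X}$ I set $\sigma_G(\row(\pi,x)):=\row(\pi',x)$, and for every $y\in\bar{Y}$ I set $\sigma_G(\col(\pi,y)):=\col(\pi',y)$. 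Since the $x$-steps and $y$-steps of $\pi$ partition $[0\dd n+m)$, we have $[0\dd n+m)\setminus R=\{\row(\pi,x):x\in\bar{X}\}\cup\{\col(\pi,y):y\in\bar{Y}\}$, so across the three cases $\sigma_G(a)$ is assigned for each $a\in[0\dd n+m)$ exactly once, and the $0/1$ table thus obtained is literally the matrix described by \cref{lem:ccontract}. For correctness I would invoke \cref{lem:monge}: since $G$ is a slice alignment graph, $P_G=M_G^\square$ is a permutation matrix, so this table is a valid permutation representation of $P_G$; in particular the assigned values automatically form a permutation of $[0\dd n+m)$, so no collision check is needed. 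The total time is $\Oh(n+m)$, since the two scans are linear and the three assignment loops run $|X|+|Y|\le n+m$, $|\bar{X}|\le n$, and $|\bar{Y}|\le m$ times, respectively.

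I expect the only delicate point to be the bookkeeping of the first paragraph: verifying that a single left-to-right pass over a cut-path simultaneously yields $\row(\pi,\cdot)$, $\col(\pi,\cdot)$, the (already sorted) set $R$, the running rank function on $R$, and its inverse $A_R$, and — crucially — that the partition of $[0\dd n+m)$ into $x$-steps and $y$-steps forces $[0\dd n+m)\setminus R$ to equal $\{\row(\pi,x):x\in\bar{X}\}\cup\{\col(\pi,y):y\in\bar{Y}\}$, which is exactly what makes the three cases of \cref{lem:ccontract} cover every row index. Everything after that is a mechanical transcription of \cref{lem:ccontract}.
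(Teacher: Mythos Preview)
Your proposal is correct and follows essentially the same approach as the paper: scan each cut-path once to build $\row(\pi,\cdot)$, $\col(\pi,\cdot)$, the sorted sets $R,R'$ and their inverse arrays, then fill $\sigma_G$ from the three nonzero cases of \cref{lem:ccontract}. Your write-up is in fact a bit more explicit than the paper's, spelling out the $x$-step/$y$-step partition that makes $[0\dd n+m)\setminus R=\{\row(\pi,x):x\in\bar{X}\}\cup\{\col(\pi,y):y\in\bar{Y}\}$ and invoking \cref{lem:monge} to certify that the assembled map is indeed a permutation.
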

\begin{proof}
	Let $\sigma:[0\dd n+m)\to [0\dd n+m)$ be the permutation underlying $P_G$
	and let $\tsigma: [0\dd |X|+|Y|)\to [0\dd |X|+|Y|)$ be the permutation underlying $P_{\tG}$.
	By \cref{lem:ccontract}, we have $\sigma(\row(\pi,\bx))=\row(\pi',\bx)$ for $\bx\in \bar{X}$
	and $\sigma(\col(\pi,\by))=\col(\pi',\by)$ for $\by \in \bar{Y}$.
	In order to fill these values of $\sigma$, we just need to construct the functions 
	$\row(\pi,\cdot),\row(\pi',\cdot):[0\dd n)\to [0\dd n+m)$ and  $\col(\pi,\cdot),\col(\pi',\cdot):[0\dd m)\to [0\dd n+m)$.
	By the characterization of \cref{obs:cp}, $\row(\pi,\cdot)$ and $\col(\pi,\cdot)$ can be constructed by scanning $\pi$.  An analogous scan of $\pi'$ yields $\col(\pi',\cdot)$ and $\col(\pi',\cdot)$.
	
	The first phase of the algorithm thus results in the values of $\sigma$
	for arguments in $\{\row(\pi,\bx) : \bx \in \bar{X}\}\cup\{\col(\pi,\by) : \by\in \bar{Y}\} = [0\dd n+m]\sm R$.
	In the second phase, will retrieve from $\tsigma$ the values of $\sigma$ for arguments in $R$.
	By \cref{lem:ccontract}, for every $a,b\in [0\dd |X|+|Y|]$,
	we have $P_{\tG}[a,b]=P_G[r_a,r'_b]$, where $r_0,\ldots,r_{|X|+|Y|}$ are the elements
	of $R$ in the increasing order and $r'_0,\ldots,r'_{|X|+|Y|}$ are the elements of $R'$ in the increasing order.
	In particular, $\sigma(r_a)=r'_{\tsigma(a)}$ for every $a\in [0\dd |X|+|Y|)$.
	The sets $R$ and $R'$ can be constructed using the already available functions $\row(\pi,\cdot),\row(\pi',\cdot),\col(\pi,\cdot),\col(\pi',\cdot)$ and sorted by scanning $[0\dd n+m]$ from left to right.
	
	Overall, the second phase of the algorithm results in the (remaining) values of $\sigma$ for arguments arguments in $R$.
	It is easy to see that the running time of the entire algorithm is $\Oh(n+m)$.    
\end{proof}

\subsubsection{Efficient Distance Oracle}
In this section, we combine the insight from \cref{sec:decomp,sec:contract}
to develop a divide-and-conquer algorithm constructing the seaweed matrix $P_G$
of a given slice alignment graph $G=\SAG^{n,m}(\pi,\pi',S)$.
This algorithm is optimized for the setting when $S$ is a sparse subset of $\Gr^{n-1,m-1}$, and it generalizes \cite[Algorithm 8.2]{Tiskin13}.
As a corollary, we derive an efficient construction procedure for an oracle providing random access to the distance matrix $D_G$. 

\begin{proposition}\label{prp:pg}
	Given a slice alignment graph $G=\SAG^{n,m}(\pi,\pi',S)$ (represented by $\pi$, $\pi'$, and~$S$), the permutation representing $P_G$ can be constructed in $\Oh(1+n+m+|S|\log^2|S|)$ time.
\end{proposition}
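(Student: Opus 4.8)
The plan is to design a divide-and-conquer algorithm on $|S|$, generalizing Tiskin's sparse seaweed procedure~\cite[Algorithm~8.2]{Tiskin13}, using \cref{lem:decomp} to merge subproblems and \cref{lem:acontract} to shrink grid dimensions. First I would spend $\Oh(|S|)$ time to discard every $(x,y)\in S$ such that $(x,y)$ or $(x+1,y+1)$ lies outside $\Gr^{n,m}[\pi\dd\pi']$; since $\SAG^{n,m}(\pi,\pi',S)$ is an \emph{induced} subgraph of $\AG^{n,m}(S)$, this leaves $G$ unchanged, and afterwards $S\sub\Gr^{n,m}[\pi\dd\pi']$ and every diagonal edge of $G$ joins two vertices of the slice. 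In the base case, when $|S|$ is below a fixed constant, I would contract the $\Oh(1)$ rows and columns meeting $S$ via \cref{lem:acontract}, obtaining a slice alignment graph on a grid of dimensions $\Oh(1)$; its distance matrix $D_G$ is read off using \cref{lem:dist,lem:hered} (or a brute-force shortest-path computation), whence $P_G=\tfrac12 D_G^\square$, and \cref{lem:acontract} expands the result back in $\Oh(n+m)$ time, for a total of $\Oh(1+n+m)$.

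For larger $|S|$ the recursive step has three parts. \emph{(i) Contract.} Let $X,Y$ be the sets of $x$- and $y$-coordinates appearing in $S$, so $|X|,|Y|\le|S|$; construct $\dl(\pi),\dl(\pi')$ by scanning $\pi,\pi'$ and form $\tG=\SAG^{|X|,|Y|}(\dl(\pi),\dl(\pi'),\{\dl(p):p\in S\})$, whose grid has dimensions $\Oh(|S|)$, in $\Oh(n+m+|S|\log|S|)$ time. \emph{(ii) Split and recurse.} I would construct a cut-path $\pi''$ with $\dl(\pi)\preceq\pi''\preceq\dl(\pi')$ partitioning $S$ into $S_L=S\cap\Gr^{|X|,|Y|}[\dl(\pi)\dd\pi'')$ and $S_R=S\cap\Gr^{|X|,|Y|}[\pi''\dd\dl(\pi'))$ with $|S_L|,|S_R|\le\tfrac34|S|$ (see below), recursively compute $P_{G_L},P_{G_R}$ for $G_L=\SAG(\dl(\pi),\pi'',S_L)$ and $G_R=\SAG(\pi'',\dl(\pi'),S_R)$ — each recursive call itself beginning by contracting its own instance and ending by re-expanding the result — and then set $P_{\tG}=P_{G_L}\boxdot P_{G_R}$ using \cref{lem:decomp} in $\Oh(|S|\log|S|)$ time. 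Because a cut-path partitions $\Gr^{|X|,|Y|}[\dl(\pi)\dd\dl(\pi')]$ into the two sub-slices, the invariant ($S_L$ inside $[\dl(\pi)\dd\pi'']$, $S_R$ inside $[\pi''\dd\dl(\pi')]$) is preserved. \emph{(iii) Expand.} Recover $P_G$ from $P_{\tG}$ via \cref{lem:acontract} in $\Oh(n+m)$ time.

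The main obstacle is step~(ii): building a \emph{balanced} separating cut-path that still lies between $\dl(\pi)$ and $\dl(\pi')$. I would follow Tiskin's dichotomy. Count the $S$-points in each column and in each row of the contracted grid. If some vertical grid-line splits $S$ into two parts of size at most $\tfrac23|S|$, take the straight cut-path along that line; otherwise some single column carries more than $\tfrac13|S|$ of the points (or, symmetrically, some row), and I would split that heavy column at the median $y$-coordinate of its points — monotonicity of cut-paths forces the separator to run immediately left of the heavy column above the median height and immediately right of it below, and since a heavy column and a heavy row meet in at most one point, a short count shows each side then gets at most $\tfrac23|S|+\Oh(1)\le\tfrac34|S|$ points for $|S|$ above the base-case constant. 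In either case the candidate separator is \emph{clamped}, diagonal by diagonal, into the chain $[\dl(\pi)_d,\dl(\pi')_d]$ (each diagonal of the grid is a $\prec$-chain, so the clamp is well defined). Clamping preserves monotonicity of both coordinate sequences, so $\pi''$ remains a cut-path with $\dl(\pi)\preceq\pi''\preceq\dl(\pi')$; and since every point of $S$ already lies in the slice, clamping never moves an $S$-point from one side of $\pi''$ to the other, so the balance is not disturbed. This search, together with building and clamping $\pi''$, takes $\Oh(|S|\log|S|)$ time.

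Finally, for the running time, let $T(h)$ bound the cost of processing an already-contracted instance with $h=|S|$ points, excluding the one-time top-level contraction and expansion. Then $T(h)\le T(h_1)+T(h_2)+\Oh(h\log h)$ with $h_1+h_2=h$ and $h_1,h_2\le\tfrac34 h$, so summing $\Oh(h\log h)$ over the $\Oh(\log h)$ recursion levels (the point counts at a level sum to at most $|S|$) gives $T(h)=\Oh(h\log^2 h)$. The contraction and re-expansion performed in a recursive call for a child cost $\Oh(1)$ per row and column of the \emph{parent's} contracted grid, i.e.\ $\Oh(h_{\mathrm{parent}})$, and these contributions sum to $\Oh\big(\sum_v h_v\big)=\Oh(|S|\log|S|)$. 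Adding the top-level $\Oh(1+n+m+|S|\log|S|)$ for the initial pruning, contraction, and final expansion yields the claimed bound $\Oh(1+n+m+|S|\log^2|S|)$.
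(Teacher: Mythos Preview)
Your proposal is correct and mirrors the paper's high-level structure (prune $S$ to the slice, contract via \cref{lem:acontract}, split by a clamped cut-path, recurse, combine via \cref{lem:decomp}, expand). The one substantive difference is how the splitting cut-path is chosen. The paper bypasses your vertical-line/heavy-column dichotomy entirely: after contraction it simply sorts $S$ lexicographically, takes the median point $(\bx,\by)$, and writes down the single staircase cut-path $\pi^-$ that runs along the top to column~$\bx$, down to height~$\by$, one step right, then down and along the bottom. Because the lexicographic order is exactly the order in which this staircase sweeps past the points of $S$, the split is an exact $\lceil|S|/2\rceil$--$\lfloor|S|/2\rfloor$ partition with no case analysis and no appeal to a ``heavy'' column or row. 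Clamping $\pi^-$ between $\pi$ and $\pi'$ is then handled by the same median-on-each-diagonal rule you describe. Your route works too and gives the same $\Oh(|S|\log^2|S|)$ via $\Oh(\log|S|)$ recursion depth, but the lexicographic-median construction is shorter to state and to verify (and it makes the aside about heavy rows meeting heavy columns unnecessary).
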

\begin{proof}
	We develop a recursive divide-and-conquer algorithm.
	The points in $S\sm \Gr^{n,m}[\pi\dd \pi')$ do not contribute any edge in $G$, so they are removed from $S$ in a preprocessing step of the algorithm. 
	Then, the algorithm computes $X=\{x : (x,y)\in S\}$ 
	and $Y=\{y : (x,y)\in S\}$. 
	This can be implemented in $\Oh(n+m+|S|)$ time by iterating over the points in $S$, with $X\sub [0\dd n)$ and $Y\sub [0\dd m)$ maintained as characteristic vectors.
	
	If $X \ne [0\dd n)$ or $Y \ne [0\dd m)$, the algorithm reduces the grid dimensions based on the combinatorial insight of \cref{sec:contract}.
	First, we construct the functions $\rank_X:[0\dd n]\to [0\dd |X|]$ and $\rank_Y:[0\dd m]\to [0\dd m]$
	so that $\dl(x,y)$ can be retrieved in $\Oh(1)$ time for every $(x,y)\in \Gr^{n,m}$.
	Next, we build $\tpi := \dl(\pi)$, $\tpi' := \dl(\pi')$, and $\tS := \{\dl(p): p\in S\}$.
	By \cref{lem:ccontract}, this yields a slice alignment graph $\tG = \SAG^{|X|,|Y|}(\tpi,\tpi',\tS)$,
	which is processed recursively,
	with the algorithm of \cref{lem:acontract} applied to transform the permutation representing $P_{\tG}$
	to the permutation representing $P_G$.
	
	It remains to consider the case when $X=[0\dd n)$ and $Y=[0\dd m)$.
	In particular, if $|S|=0$, then $n=m=0$, $D_G=\begin{bmatrix}0\end{bmatrix}$, and $P_G$ is the empty ($0\times 0$) matrix (represented by the empty permutation).
	Similarly, if $|S|=1$, then we must have $\pi = ((0,1),(0,0),(1,0))$,
	$\pi'=(0,1),(1,1),(1,1)$, and $S=\{(0,0)\}$. 
	Consequently,
	\[D_G = \begin{bmatrix}
	0 & 1 & 2\\
	1 & 0 & 1\\
	2 & 1 & 0
	\end{bmatrix}\qquad\text{and}\qquad
	P_G = \tfrac12 D_G^\square = \begin{bmatrix}
	1 & 0\\
	0 & 1
	\end{bmatrix},\]
	which is represented by the identity permutation on $\{0,1\}$.
	
	Thus, we may henceforth assume $|S| \ge \max(n,m,2)$.
	In this case, the algorithm decomposes $G$ into two smaller slice alignment graphs based on the combinatorial insight of \cref{sec:decomp}.
	First, we partition $S$ into $S_L$
	and $S_R$ so that $|S_L|=\lceil\frac12|S|\rceil$,
	$|S_R| = \lfloor\frac12 |S|\rceil$, and points in $S_L$
	are lexicographically smaller than points in $S_R$
	(note that both sets are non-empty).
	
	Let $(\bx,\by)$ be the lexicographically smallest element of $S_R$.
	We define the following cut-path $\pi^-:
	(0,m),\ldots, (\bx,m),\ldots, (\bx,\by),(\bx+1,\by),\ldots,
	(\bx+1,0),\ldots, (n,0)$.
	Formally,
	\[\pi^-_d = \begin{cases}
	(d,m) & \text{if }d\in [0\dd \bx],\\
	(\bx,\bx+m-d) & \text{if }d\in [\bx\dd \bx-\by+m],\\
	(\bx+1,\bx+m-d) & \text{if }d\in (\bx-\by+m\dd \bx+m],\\
	(d-m,0) & \text{if }d\in [\bx+m\dd n+m].
	\end{cases}\]
	This way, every point $p\in S\cap \Diag^{n,m}_d$
	satisfies $p \prec \pi^-_d$ if $p\in S_L$
	and $\pi^-_d \preceq p$ if $p\in S_R$.
	
	Next, we define another cut-path $\pi''$ with
	\[\pi''_d = \begin{cases}
	\pi_d & \text{if }\pi^-_d \preceq \pi_d \preceq \pi'_d,\\
	\pi^-_d & \text{if }\pi_d \preceq \pi^-_d \preceq \pi'_d,\\
	\pi'_d & \text{if }\pi_d \preceq \pi'_d \preceq \pi^-_d.\\
	\end{cases}\]
	This guarantees $\pi \preceq \pi'' \preceq \pi'$.
	Moreover, since $S \sub \Gr^{n,m}[\pi\dd \pi')$,
	we have $S_L = S \cap \Gr^{n,m}[\pi\dd \pi'')$ and $S_R = S \cap \Gr^{n,m}[\pi''\dd \pi')$.
	This yields slice alignment graphs $G_L := \SAG^{n,m}(\pi,\pi',S_L)$ and $G_R:=\SAG^{n,m}(\pi'',\pi',S_R)$.
	These graphs are processed recursively and then \cref{lem:decomp}
	is used to derive the permutation representing $P_G$ from the permutations representing $P_{G_L}$ and $P_{G_R}$.
	
	It remains to analyze the running time. For this, we interpret the grid size reduction
	as a preprocessing step rather than a standalone recursive call.
	If $|S|\le 1$, then the algorithm takes $\Oh(n+m+1)$ time.
	Otherwise, it takes $\Oh(n+m+|S|\log |S|)$ time and makes two recursive calls.
	The grid dimensions in these calls do not exceed $|S|$ and the sets 
	$S_L$ and $S_R$ in the calls are of size at most $\lceil \frac12 |S|\rceil$.
	This yields an overall bound of $\Oh(1+n +m + |S|\log^2 |S|)$
	on the running time.
\end{proof}

\begin{corollary}\label{cor:oracle}
	For every slice alignment graph $G=\SAG^{n,m}(\pi,\pi',S)$,
	there is a data structure of size $\Oh(n + m)$ that, given any $a,b\in [0\dd n+m]$,
	computes $D_G[a,b]$ in $\Oh(\log (n+m) / \log \log (n+m))$ time.
	Moreover, the data structure can be constructed in $\Oh((n+m)\sqrt{\log(n+m)} + |S|\log^2 |S|)$
	time given $\pi$, $\pi'$, and~$S$.
\end{corollary}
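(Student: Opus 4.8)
The plan is to obtain the oracle as a direct combination of \cref{prp:pg} with the space-efficient range-counting structure of \cref{fct:oracle}. First I would invoke \cref{prp:pg} to compute, in $\Oh(1+n+m+|S|\log^2|S|)$ time, the permutation $\sigma$ representing the seaweed matrix $P_G\in\{0,1\}^{(n+m)\times(n+m)}$; by \cref{lem:monge} this is indeed a permutation matrix, which is where the hypothesis that $G$ is a genuine slice alignment graph (so that $\pi\preceq\pi'$) enters. The second ingredient is the identity linking $D_G$ to $P_G$. Since $M_G$ is a simple matrix we have $M_G = P_G^\Sigma$, and the definition $M_G[a,b]=\tfrac12(D_G[a,b]-a+b)$ rearranges to
\[
D_G[a,b] = 2\,P_G^\Sigma[a,b] + a - b \qquad\text{for all } a,b\in[0\dd n+m].
\]
Thus answering a query for $D_G[a,b]$ reduces to one evaluation of the distribution matrix $P_G^\Sigma$ together with $\Oh(1)$ arithmetic.

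Next I would feed $\sigma$ into \cref{fct:oracle}, applied with its parameter ``$n$'' set to $n+m$: this builds, in $\Oh((n+m)\sqrt{\log(n+m)})$ time, a data structure of size $\Oh(n+m)$ that returns $P_G^\Sigma[a,b]$ for any $a,b\in[0\dd n+m]$ in $\Oh(\log(n+m)/\log\log(n+m))$ time. Composing with the displayed formula gives the claimed $\Oh(\log(n+m)/\log\log(n+m))$ query time for $D_G$, and the total construction time is $\Oh(1+n+m+|S|\log^2|S|)+\Oh((n+m)\sqrt{\log(n+m)}) = \Oh((n+m)\sqrt{\log(n+m)}+|S|\log^2|S|)$, the additive $\Oh(1+n+m)$ being absorbed by the $\Oh((n+m)\sqrt{\log(n+m)})$ term. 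The space is $\Oh(n+m)$: we keep only $\sigma$ and the \cref{fct:oracle} structure, discarding $\pi,\pi',S$ after construction.

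The only points requiring care are bookkeeping rather than substance. One should note that $P_G$ has side length $n+m$ whereas $P_G^\Sigma$ and $D_G$ have side length $n+m+1$, so the index ranges in the formula above must be matched against the interface of \cref{fct:oracle} (which indeed accepts arguments in $[0\dd n+m]$); and the degenerate case $n=m=0$ (equivalently $|S|=0$, forcing $D_G=[\,0\,]$) should be dispatched separately so that the $\sqrt{\log(n+m)}$ and $\log^2|S|$ factors are never evaluated at $0$. I do not expect any real obstacle here — this is a routine corollary of \cref{prp:pg,fct:oracle} — so the ``hard part'' is merely verifying that the $\tfrac12(D_G-a+b)$ normalization propagates correctly through the simple unit-Monge representation, which the displayed identity settles.
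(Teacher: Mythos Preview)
Your proposal is correct and follows essentially the same route as the paper: build the permutation for $P_G$ via \cref{prp:pg}, then plug it into the range-counting structure of \cref{fct:oracle} and recover $D_G[a,b]=2P_G^{\Sigma}[a,b]+a-b$. The extra remarks you make about the degenerate $n=m=0$ case and the index-range bookkeeping are fine but not needed for the argument as the paper presents it.
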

\begin{proof}
	Note that $D_G[a,b]=2M_G[a,b]+a-b = 2P^{\Sigma}_G[a,b]+a-b$ holds for $a,b\in [0\dd n+m]$.
	Hence, it suffices to store the data structure of \cref{fct:oracle} providing random access to $P_G^{\Sigma}$,
	which takes $\Oh(n+m)$ space and answers queries in $\Oh(\log (n+m) / \log \log (n+m))$ time.
	The construction time is $\Oh((n+m)\sqrt{\log(n+m)})$ from the permutation representing $P_G$,
	which can be built in $\Oh(n+m+|S|\log^2 |S|)$ time using \cref{prp:pg}.
\end{proof}

\subsubsection{Applications to LIS}
In this section, we provide a 3-step proof of \cref{thm:oracle}.
In \cref{lem:L}, we use \cref{lem:dist,lem:hered} to interpret the outcome of \cref{cor:oracle}
in terms of the values $\LIS(S')$ for appropriate subsets $S'\sub S$. 
Here, the main technical challenge is to make sure that these values form an anti-Monge matrix
even though some entries in this matrix correspond to degenerate queries. 
\cref{cor:M} generalizes \cref{lem:L} so that two arbitrary antichains are supported 
instead of two cut-paths.
We use \cref{fct:anti} to extend antichains to cut-paths, but then extra care is needed to obtain cut-paths satisfying $\pi \preceq \pi'$.
Finally, we derive \cref{thm:oracle} by interpreting a sequence $(a_i)_{i=0}^{n-1}$ as a set $S\sub \Gr^{n-1,n-1}$ and its non-increasing subsequences as antichains in $\Gr^{n,n}$.

\begin{lemma}\label{lem:L}
	Let $G=\SAG^{n,m}(\pi,\pi',S)$ be a slice alignment graph with $\pi = (x_d,y_d)_{d=0}^{n+m}$
	and $\pi' = (x'_d,y'_d)_{d=0}^{n+m}$, and let $N>0$ be an integer.
	There exists an anti-Monge matrix $L\in \R^{(n+m+1)\times (n+m+1)}$ such that,
	for every $i,j\in [0\dd n+m+1]$, we have
	\begin{align*}
		L[i,j] &= \LIS(S\cap [x_i\dd x'_j)\times [y_i\dd y'_j)) &\text{if }x_i \le x'_j\text{ and }y_i \le y'_j,\\
		L[i,j] & \le -N & \text{otherwise.}
	\end{align*}
	Moreover, after $\Oh((n+m)\sqrt{\log(n+m)} + |S|\log^2 |S|)$-time preprocessing,
	any entry of $L$ can be computed in $\Oh(\log(n+m)/ \log \log (n+m))$ time.
\end{lemma}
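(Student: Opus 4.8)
The plan is to build the matrix $L$ directly from the distance matrix $D_G$ of \cref{cor:oracle}, using the metric characterization of \cref{lem:dist,lem:hered} to translate distances into $\LIS$ values, and then to argue separately about the two regimes ($x_i\le x'_j$ and $y_i\le y'_j$ versus the rest) so that the final matrix is anti-Monge. First I would set up the correspondence between the indices $i,j\in[0\dd n+m]$ of the cut-paths and the diagonals: $\pi_i\in\Diag^{n,m}_i$ and $\pi'_j\in\Diag^{n,m}_j$, so that $x_i-y_i = i-m$ and $x'_j-y'_j=j-m$. The key identity I would derive is: when $x_i\le x'_j$ and $y_i\le y'_j$, combining \cref{lem:hered} (distances in $G$ agree with distances in $\AG^{n,m}(S)$) with the first case of \cref{lem:dist} gives
\[
D_G[i,j] = \dist_G(\pi_i,\pi'_j) = (x'_j-x_i)+(y'_j-y_i) - 2\LIS\bigl(S\cap([x_i\dd x'_j)\times[y_i\dd y'_j))\bigr),
\]
so that $\LIS(\cdots) = \tfrac12\bigl((x'_j-x_i)+(y'_j-y_i)-D_G[i,j]\bigr)$. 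Since $D_G$ is a Monge matrix (\cref{lem:monge}) and the ``correction'' term $\tfrac12((x'_j-x_i)+(y'_j-y_i))$ is a sum of a function of $i$ and a function of $j$ (hence both Monge and anti-Monge), the matrix $\tfrac12((x'_j-x_i)+(y'_j-y_i)) - \tfrac12 D_G[i,j]$ is anti-Monge on the whole index range. This is the natural candidate for $L$ except for the degenerate entries.

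Next I would handle the degenerate region, where $x_i>x'_j$ or $y_i>y'_j$. The subtlety the lemma statement flags is that we cannot simply overwrite these entries with an arbitrary large negative value, because that might destroy the anti-Monge property; instead I would argue that the formula $L[i,j] := \tfrac12((x'_j-x_i)+(y'_j-y_i)) - \tfrac12 D_G[i,j]$ \emph{automatically} produces a value $\le -N$ there, provided $N$ is small relative to the grid, or that we can enforce this by a suitable additive shift that respects the anti-Monge structure. Concretely, in the degenerate region, cases (b), (c), and (d) of \cref{lem:dist} (together with \cref{lem:hered}) give a lower bound on $D_G[i,j]$ in terms of $|x'_j-x_i|+|y'_j-y_i|$; since at least one coordinate difference has the ``wrong'' sign, $|x'_j-x_i|+|y'_j-y_i|$ strictly exceeds $(x'_j-x_i)+(y'_j-y_i)$, by at least twice the ``overshoot.'' The cleanest route is probably to observe that $L[i,j]$ as defined equals $-\,(\text{overshoot in }x) - (\text{overshoot in }y) - \LIS(\text{a possibly-empty clipped box})$ in the degenerate cases, which is always $\le 0$; to push it below $-N$ I would pad: replace the original sequence/grid by one in which each original point is separated by $N$ extra empty rows and columns (exactly the $3n$-style padding used in \cref{sec:monge} for \cref{thm:localLIS}), so that a single sign inversion in a coordinate forces an overshoot of at least $N$. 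Then after undoing the rank-compression via $\rank_{\bar X},\rank_{\bar Y}$ one reads off the correct $\LIS$ values on the non-degenerate entries and $L[i,j]\le -N$ everywhere else, with the anti-Monge property preserved because padding with empty rows/columns only adds an index-separable (hence anti-Monge-neutral) term to $D_G$.

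For the algorithmic part I would invoke \cref{cor:oracle} on the padded slice alignment graph: it gives $\Oh((n+m)\sqrt{\log(n+m)}+|S|\log^2|S|)$-time preprocessing and $\Oh(\log(n+m)/\log\log(n+m))$-time random access to $D_G[i,j]$; each entry $L[i,j]$ is then computed from $D_G[i,j]$ together with the $\Oh(1)$-retrievable coordinates $x_i,y_i,x'_j,y'_j$ of the cut-path vertices (which are stored during preprocessing), so the query time and preprocessing time match the claimed bounds, and the padding changes all sizes only by a constant factor when $N=\Oh(n+m)$ (the regime we need; for larger $N$ one truncates, since $L[i,j]\le -N$ is a one-sided requirement and can always be achieved by decreasing degenerate entries, which for an anti-Monge matrix with index-separable structure on the degenerate block is safe). \textbf{The main obstacle} I anticipate is precisely this anti-Monge bookkeeping across the boundary between the non-degenerate and degenerate regions: one must verify that the single closed-form expression $\tfrac12((x'_j-x_i)+(y'_j-y_i)) - \tfrac12 D_G[i,j]$ is simultaneously (i) equal to the stated $\LIS$ on the good region, (ii) at most $-N$ on the bad region, and (iii) globally anti-Monge — and (ii) is the delicate one, since it forces the padding construction and a careful case analysis using all four cases of \cref{lem:dist}.
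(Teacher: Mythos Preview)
Your formula $L[i,j]=\tfrac12\bigl((x'_j-x_i)+(y'_j-y_i)-D_G[i,j]\bigr)$ is exactly right on the non-degenerate region, and your anti-Monge argument for it (Monge $D_G$ minus an index-separable term) is the same as the paper's. The gap is entirely in your handling of the degenerate region. Padding each row and column with $N$ empty ones blows the grid up to size $\Theta(N(n+m))$, so the permutation $P_G$ and the data structure of \cref{fct:oracle} have that size too; your claim that ``sizes change only by a constant factor when $N=\Oh(n+m)$'' is just false (a factor of $N$ is not a constant factor), and downstream (\cref{cor:1}, \cref{thm:oracle}) one needs $N=\Theta(n)$, which would make preprocessing $\Theta(n^2)$. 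Your fallback of ``truncating'' degenerate entries is also unsafe: lowering $L[i,j]$ for a single degenerate $(i,j)$ can violate the anti-Monge inequality at a $2\times 2$ block straddling the degenerate/non-degenerate boundary, and the degenerate region is not a rectangular block with index-separable structure (it is the union of $\{x_i>x'_j\}$ and $\{y_i>y'_j\}$).

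The paper's fix avoids padding altogether by introducing a \emph{second} Monge matrix: let $G'=\SAG^{n,m}(\pi,\pi',\emptyset)$ be the slice alignment graph with empty $S$, so $D_{G'}[i,j]=|x'_j-x_i|+|y'_j-y_i|$ is Monge by \cref{lem:monge} and computable in $O(1)$ time. Then set
\[
L \;=\; \tfrac12\bigl(N\cdot A \;-\; D_G \;-\; (N-1)\cdot D_{G'}\bigr),
\]
where $A[i,j]=(x'_j-x_i)+(y'_j-y_i)$. This is anti-Monge as a sum of anti-Monge matrices. On the non-degenerate region $D_{G'}=A$, so the $(N-1)$-term vanishes and you recover your formula; on the degenerate region one checks directly from the remaining cases of \cref{lem:dist} that $L[i,j]\le -N$. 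In effect, your formula is the $N=1$ special case, and the missing idea is that $A-D_{G'}$ is an anti-Monge ``penalty'' matrix that is zero on the good region and strictly negative on the bad region, so one can add $\tfrac{N-1}{2}(A-D_{G'})$ without disturbing the good entries while forcing the bad ones below $-N$.
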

\begin{proof}
	Let us define another slice alignment graph $G' = \SAG^{n,m}(\pi,\pi',\emptyset)$.
	By \cref{lem:monge}, both $D_G$ and $D_{G'}$ are Monge matrices.
	Moreover, define a matrix $A\in \R^{(n+m+1)\times (n+m+1)}$
	so that $A[i,j] = x'_j-x_i+y'_j-y_i$ and note that $A^\square$ is zero matrix,
	i.e., $A$ is both a Monge and an anti-Monge matrix.
	
	We define the matrix $L$ as the following linear combination of $D_G$, $D_{G'}$, and $A$:
	\[L = \tfrac12(N\cdot A - D_G - (N-1)\cdot D_{G'}).\]
	It is an anti-Monge matrix because  $D_G$ and $D_{G'}$ are Monge matrices
	whereas $A$ is an anti-Monge matrix (since $A^\square$ is a zero matrix).
	
	It remains to check whether $L$ satisfies the required conditions.
	\begin{itemize}
		\item If $x_i \le x'_j$ and $y_i \le y'_j$, then \cref{lem:dist,lem:hered} yield
		$D_G[i,j] = x'_j - x_i + y'_j - y_i - 2\LIS(S\cap [x_i\dd x'_j)\times [y_i\dd y'_j))$
		and $D_{G'}[i,j] = x'_j - x_i + y'_j - y_i$. 
		Hence, $2L[i,j] = N(x'_j-x_i+y'_j-y_i) - (x'_j - x_i + y'_j - y_i - 2\LIS(S\cap [x_i\dd x'_j)\times [y_i\dd y'_j)))-(N-1)(x'_j - x_i + y'_j - y_i) = 2\LIS(S\cap [x_i\dd x'_j)\times [y_i\dd y'_j))$
		holds as claimed.
		\item If $x_i \le x'_j$ and $y_i > y'_j$, then \cref{lem:dist,lem:hered} yield
		$D_G[i,j] = D_{G'}[i,j] = x'_j - x_i + y_i - y'_j$.
		Hence, $2L[i,j] = N(x'_j-x_i+y'_j-y_i) - N(x'_j - x_i + y_i - y'_j) = 2N(y'_j - y_i) \le -2N$ holds as claimed.
		\item If $x_i  > x'_j$ and $y_i \le y'_j$, then \cref{lem:dist,lem:hered} yield
		$D_G[i,j] = D_{G'}[i,j] = x_i - x'_j + y'_j - y_j$.
		Hence, $2L[i,j] = N(x'_j-x_i+y'_j-y_i) - N(x_i - x'_j + y'_j - y_j) = 2N(x'_j - x_i) \le -2N$ holds as claimed.
		\item If $x_i > x'_j$ and $y_i  > y'_j$, then $D_G[i,j]\ge 0$ and \cref{lem:dist,lem:hered} yields $D_{G'}[i,j] = x_i - x'_j + y_i - y'_j$.
		Hence, $2L[i,j] \le N(x'_j-x_i+y'_j-y_i)-(N-1)(x_i - x'_i + y_i - y'_j) = (2N-1)(x'_j-x_i+y'_j-y_i) \le -(4N-2) \le -2N$ holds as claimed.\qedhere
	\end{itemize}
\end{proof}

Our next goal is to generalize \cref{lem:L} from cut-paths $\pi\preceq \pi'$ to arbitrary antichains.
\begin{corollary}\label{cor:M}
	Let $(x_i,y_i)_{i=0}^{k-1}$ and $(x'_j,y'_j)_{j=0}^{\ell-1}$ be antichains in $\Gr^{n,m}$,
	let $S\sub \Gr^{n-1,m-1}$, and let $N > 0$ be a positive integer.
	There exists an anti-Monge matrix $M\in \R^{(n+m+1)\times (n+m+1)}$ such that,
	for every $i\in [0\dd k)$ and $j\in [0\dd \ell)$, we have
	\begin{align*}
		M[i,j] &= \LIS(S\cap [x_i\dd x'_j)\times [y_i\dd y'_j)) &\text{if }x_i \le x'_j\text{ and }y_i \le y'_j,\\
		M[i,j] & \le -N & \text{otherwise.}
	\end{align*}
	Moreover, after $\Oh((n+m)\sqrt{\log(n+m)} + |S|\log^2 |S|)$-time preprocessing,
	any entry of $M$ can be computed in $\Oh(\log(n+m)/ \log \log (n+m))$ time.
\end{corollary}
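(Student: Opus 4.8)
The plan is to reduce the statement about arbitrary antichains to Lemma~\ref{lem:L}, which handles cut-paths $\pi\preceq\pi'$. First I would invoke Fact~\ref{fct:anti} twice: once to extend the antichain $(x_i,y_i)_{i=0}^{k-1}$ to a cut-path $\pi$ in $\Gr^{n,m}$, and once to extend $(x'_j,y'_j)_{j=0}^{\ell-1}$ to a cut-path $\pi'$. By construction, $(x_i,y_i)_{i=0}^{k-1}$ appears as a subsequence of $\pi$, and similarly for $\pi'$; fix index maps $\iota:[0\dd k)\to[0\dd n+m]$ and $\iota':[0\dd \ell)\to[0\dd n+m]$ recording these positions, so that the $i$-th input point of the first antichain equals $\pi_{\iota(i)}$ and the $j$-th input point of the second equals $\pi'_{\iota'(j)}$.

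The obstacle is that Fact~\ref{fct:anti} gives \emph{some} cut-path through each antichain, with no guarantee that $\pi\preceq\pi'$ — indeed the two antichains may be interleaved in a complicated way, so no single pair of extensions works simultaneously. To fix this I would replace $\pi$ by its coordinatewise "meet" with $\pi'$ and $\pi'$ by its "join" with $\pi$: define $\hat\pi_d$ to be the $\prec$-minimum of $\pi_d$ and $\pi'_d$ along the diagonal $\Diag^{n,m}_d$ (i.e.\ take the smaller $x$-coordinate and the smaller $y$-coordinate; since both points lie on $\Diag^{n,m}_d$ this is again a point of that diagonal), and $\hat\pi'_d$ to be the $\prec$-maximum. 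Then $\hat\pi\preceq\hat\pi'$ by construction, each is still a cut-path (the coordinates remain monotone because both $\pi$ and $\pi'$ have monotone coordinates and the min/max of two monotone sequences is monotone), and crucially every point of the first input antichain still lies on $\hat\pi$ while every point of the second still lies on $\hat\pi'$ — because $\pi_{\iota(i)}$ is an element of the antichain $\pi'$, hence an antichain element of the same diagonal, and an antichain has exactly one point per diagonal that is $\preceq$-below the cut-path $\pi'$... more carefully: $\pi_{\iota(i)}\in\Diag^{n,m}_d$ and $\pi'$ is an antichain meeting $\Diag^{n,m}_d$ at $\pi'_d$, so since the original point of the first antichain at that diagonal satisfies neither $\prec$ with $\pi'_d$ (they need not be comparable), I instead argue that $\hat\pi_d=\pi_d$ at every $d$ of the form $\iota(i)$: this holds because the original antichain point $\pi_{\iota(i)}$ is also a point of $\Gr^{n,m}$ lying $\preceq$-below the maximal antichain $\pi'$, so $\pi_{\iota(i)}\prec\pi'_d$ or they are equal, whence the $\prec$-minimum of $\pi_d$ and $\pi'_d$ is $\pi_d$. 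Symmetrically $\hat\pi'_d=\pi'_d$ at every $d=\iota'(j)$. (If some diagonal is hit by both antichains and the two points are incomparable, one tweaks Fact~\ref{fct:anti}'s extension so that the shared diagonals are handled first; this is a minor bookkeeping point.)

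Having produced cut-paths $\hat\pi\preceq\hat\pi'$ through the two antichains, I would apply Lemma~\ref{lem:L} to the slice alignment graph $\SAG^{n,m}(\hat\pi,\hat\pi',S)$ with the same parameter $N$, obtaining an anti-Monge matrix $L\in\R^{(n+m+1)\times(n+m+1)}$ with the stated $\LIS$-values along valid index pairs and entries $\le -N$ elsewhere, queryable in $\Oh(\log(n+m)/\log\log(n+m))$ time after $\Oh((n+m)\sqrt{\log(n+m)}+|S|\log^2|S|)$ preprocessing. Finally I would obtain $M$ by selecting the rows and columns of $L$ indexed by $\iota$ and $\iota'$: set $M[i,j]=L[\iota(i),\iota'(j)]$ for $i\in[0\dd k)$, $j\in[0\dd\ell)$ (padding arbitrarily, e.g.\ with a large Monge-consistent pattern, to fill out an $(n+m+1)\times(n+m+1)$ matrix if one insists on that exact shape — or simply note that a submatrix of an anti-Monge matrix obtained by restricting to an increasing set of rows and an increasing set of columns is again anti-Monge). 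Since $\hat\pi_{\iota(i)}=(x_i,y_i)$ and $\hat\pi'_{\iota'(j)}=(x'_j,y'_j)$, the defining equations of Lemma~\ref{lem:L} specialize exactly to the claimed equations for $M$, the anti-Monge property is inherited by the row/column restriction, and the preprocessing and query times are unchanged (each query to $M$ is a single query to $L$ after an $\Oh(1)$ index lookup). The only genuinely delicate point, as noted, is the simultaneous extension of both antichains to an ordered pair of cut-paths; everything after that is a direct invocation of Lemma~\ref{lem:L} plus the trivial observation that anti-Mongeness survives taking an increasing submatrix.
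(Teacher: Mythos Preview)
Your reduction strategy matches the paper's: extend each antichain to a cut-path via \cref{fct:anti}, then ``sort'' the two cut-paths diagonal-wise (your $\hat\pi,\hat\pi'$ are exactly the paper's $\tpi,\tpi'$) to force $\hat\pi\preceq\hat\pi'$, and finally take a submatrix of the $L$ from \cref{lem:L}. The gap is the claim that the original antichain points survive the sorting, i.e.\ that $\hat\pi_{\iota(i)}=(x_i,y_i)$ and $\hat\pi'_{\iota'(j)}=(x'_j,y'_j)$. This is false. Two points on the same diagonal of $\Gr^{n,m}$ are \emph{always} $\preceq$-comparable (since $x-y$ is fixed, $x<x'$ forces $y<y'$), so your parenthetical about incomparable same-diagonal points does not arise; the real problem is that a point $(x_i,y_i)$ of the first antichain can satisfy $(x_i,y_i)\succ\pi'_{d}$ at its diagonal $d=x_i-y_i+m$, in which case the minimum replaces it by $\pi'_d$. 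Concretely, with $n=m=3$, first antichain $\{(2,1)\}$, second $\{(1,2)\}$: the extension of $\{(1,2)\}$ from \cref{fct:anti} has $\pi'_4=(1,0)\prec(2,1)=\pi_4$, so $\hat\pi_4=(1,0)\ne(2,1)$.

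The paper does not try to avoid this. It accepts that the swap may displace some antichain points and then repairs the resulting submatrix $\tilde M$ by subtracting $N+|S|$ from every row $i$ with $\tpi_{x_i-y_i+m}\ne(x_i,y_i)$ and every column $j$ with $\tpi'_{x'_j-y'_j+m}\ne(x'_j,y'_j)$. Row/column shifts leave $M^\square$ unchanged, so $M$ stays anti-Monge; the point is that whenever such a swap occurred at row $i$, one has $(x_i,y_i)=\pi_d\succ\pi'_d$, and then for any $j$ with $x_i\le x'_j$ and $y_i\le y'_j$ we would get $\pi'_{d'}=(x'_j,y'_j)\succ\pi'_d$, contradicting that $\pi'$ is an antichain. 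Thus every entry of a shifted row (or column) falls in the ``otherwise'' case, and the subtraction drives it below $-N$ as required. This row/column correction is the missing idea in your argument.
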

\begin{proof}
	Let us extend $(x_i,y_i)_{i=0}^{k-1}$ to a cut-path $\pi$ and $(x'_j,y'_j)_{j=0}^{\ell-1}$
	to a cut-path $\pi'$ using \cref{fct:anti}. 
	Next, define a cut-paths $\tpi$ and $\tpi'$ so that, for every $d\in [0\dd n+m]$:
	\begin{align*}
		\tpi_d = \pi_d\text{ and }\tpi'_d = \pi'_d &\text{ if }\pi_d \preceq \pi'_d,\\
		\tpi_d = \pi'_d\text{ and }\tpi'_d = \pi_d &\text{ if }\pi'_d \preceq \pi_d.
	\end{align*}
	Note that this guarantees $\tpi\preceq \tpi'$.
	Moreover, define a graph $G=\SAG^{n,m}(\tpi,\tpi',S)$ and consider the matrix $L$ of \cref{lem:L}.
	Let $\tilde{M}$ be a submatrix of $L$ defined so that $\tilde{M}[i,j] = L[x_i-y_i+m,x'_j-y'_j+m]$
	for $i\in [0\dd k)$ and $j \in [0\dd \ell)$.
	Furthermore, let $M$ be obtained from $\tilde{M}$ by subtracting $N+|S|$ from any row $i$
	such that $\tpi_{x_i-y_i+m}\ne (x_i,y_i)$ and subtracting $N+|S|$ from any column $j$
	such that $\tpi'_{x'_j-y'_j+m}\ne (x'_j,y'_j)$.
	
	The sequences $(x_i-y_i+m)_{i=0}^{k-1}$ and $(x'_j-y'_j+m)_{j=0}^{\ell-1}$ are strictly increasing, 
	so $\tilde{M}$ is an anti-Monge matrix. 
	Moreover, $M^\square = \tilde{M}^\square$, so $M$ is also an anti-Monge matrix.
	Furthermore, the entries of $L$ can be computed in $\Oh(\log(n+m)/\log\log(n+m))$ time after $\Oh((n+m)\sqrt{\log(n+m)}+|S|\log^2 |S|)$-time preprocessing, so the same is true about the entries of $M$.
	
	It remains to prove that each value $M[i,j]$ satisfies the desired properties.
	Let $d = x_i-y_i+m$ and $d' = x'_j-y'_j+m$. 
	First, suppose that $\tpi_d = (x_i,y_i)$ and  $\tpi'_d = (x'_j,y'_j)$,
	in which case $M[i,j]=\tilde{M}[i,j] = L[d,d']$.
	By \cref{lem:L}, the we have $L[d,d']=\LIS(S\cap [x_i\dd x'_j)\times [y_i\dd y'_j))$ if $x_i \le x'_j$ and $y_i \le y'_j$, and $L[d,d']=-N$ otherwise. Hence, the $M[i,j]$ satisfies the claim in this case.
	
	Next, suppose that $\tpi_d \ne (x_i,y_i)$.
	In this case, we have $(x_i,y_i) = \pi_d \succ \pi'_d$. 
	Consequently, if $x_i \le x'_j$ and $y_i \le y'_j$, then $\pi'_{d'}=(x'_j,y'_j) \succ \pi'_d$,
	which contradicts $\pi'$ being an antichain.
	Thus, it suffices to prove that $M[i,j] \le -N$.
	However, we have $M[i,j] \le \tilde{M}[i,j]-N-|S| =  L[d,d']-N-|S| \le |S|-N-|S|\le -N$, as claimed.
	
	Symmetrically, if $\tpi'_{d'}\ne (x'_j,y'_j)$,
	then $(x'_j,y'_j) = \pi'_{d'} \prec \pi_{d'}$.
	If $x_i \le x'_j$ and $y_i \le y'_j$, then $\pi_d  = (x_i,y_i) \prec \pi_{d'}$, which contradicts $\pi$ being an antichain. At the same time,  $M[i,j] \le \tilde{M}[i,j]-N-|S|\le L[d,d']-N-|S|$ holds as claimed.
\end{proof}

Finally, we derive \cref{thm:oracle}, whose statement is repeated below for reader's convenience.
\thmoracle*
\begin{proof}
	Without loss of generality, we may assume that the values $a_i$ belong to $[0\dd n)$.
	If this is not the case, we can construct a set $A = \{a_i : i\in [0\dd n)\}$ and replace
	each value $a_i$ with its rank $\rank_A(a_i)$. This transformation preserves the relative order
	between any two values $a_i$ and $a_j$, so it preserves monotonicity of subsequences.
	
	We construct a set $S = \{(i,a_i) : i\in [0\dd n)\} \sub \Gr^{n-1,n-1}$, observing
	that increasing subsequences of $(a_i)_{i=0}^{n-1}$ correspond to chains in $S$.
	We also define two sequences of points in $\Gr^{n,n}$: 
	$(x_i,y_i)_{i=0}^{k-1} = (p_i+1,a_{p_i}+1)_{i=0}^{k-1}$ and
	$(x'_j,y'_j)_{j=0}^{\ell-1} = (q_j,a_{q_j})_{j=0}^{\ell-1}$.
	Both are antichains because $(a_{p_i})_{i=0}^{k-1}$ and $(a_{q_j})_{j=0}^{\ell-1}$ are non-increasing.
	
	Now, let $M'$ be the matrix of \cref{cor:M} constructed for $N' = N+2$,
	and let $M$ be obtained from $M'$ by setting $M[i,j]=M'[i,j]+2$ for each $i\in [0\dd k)$ and $j\in [0\dd \ell)$.
	Since $M'$ is an anti-Monge matrix, so is $M$.
	Furthermore, due to $|S|=n$, any entry of $M$ can be computed in $\Oh(\log n / \log \log n)$
	time after $\Oh(n\log^2 n)$-time preprocessing.
	
	Thus, it remains to prove that each value of $M[i,j]$ satisfies the desired properties.
	If $p_i < q_j$ and $a_{p_i}<a_{q_j}$, then $x_i = p_i+1 \le q_j = x'_j$
	and $y_i = a_{p_i}+1 \le a_{q_j} = y'_j$,
	so $M[i,j]=2+M'[i,j] = 2+\LIS(S\cap [p_i+1\dd q_j)\times (a_{p_i}+1\dd a_{q_j}))$.
	The latter value is equal to the length of the longest chain starting from $(p_i,a_{p_i})$
	to $(q_j,a_{q_j})$, i.e.., the longest increasing subsequence from $a_{p_i}$ to $a_{q_j}$.
	On the other hand, if $p_i \ge q_j$, then $x_i = p_i+1 > q_j = x'_j$,
	and if $a_{p_i}\ge a_{q_j}$, then $y_i = a_{p_i} + 1 > a_{q_j} = y'_j$.
	In either case, $M[i,j]=2+M'[i,j] \le 2 - N' = -N$ holds as claimed.
\end{proof}
	\newpage
	\section{Improved Approximation Algorithms for \textsf{LIS}}\label{sec:dynamic}
In this section, we present applications of extended grid packing. We begin by stating our key lemma for extended grid packing in Section~\ref{sec:newex}. We then bring a use case of the extended grid packing technique for a non-dynamic problem in Section~\ref{sec:example}. This makes it clear how the new technique can be used to approximate \textsf{LIS}. We then bring a more detailed discussion as to why extended grid packing leads to a dynamic algorithm for \textsf{LIS}. While at a high-level, both our dynamic and non-dynamic algorithms make use of extended grid packing in a similar way, the dynamic algorithm requires additional considerations to ensure the update time remains bounded in the worst case.

\subsection{Extended Grid Packing}\label{sec:newex}
As explained earlier, grid packing is a tool for approximating \textsf{LIS}. For completeness, we first state the definitions. In this problem, we have a table of size $m \times m$. Our goal is to introduce a number of segments on the table. Each segment either covers a consecutive set of cells in a row or in a column. A segment $A$ \textit{precedes} a segment $B$ if \textbf{every} cell of $A$ is strictly higher than every cell of $B$ and also \textbf{every} cell of $A$ is strictly to the right of every cell of $B$. Two segments are \textit{non-conflicting}, if one of them precedes the other one. Otherwise, we call them \textit{conflicting}.  The segments we introduce can overlap and there is no restriction on the number of segments or the length of each segment. However, we would like to minimize the maximum number of segments that cover each cell. 

\begin{figure}[ht]

\centering

\tikzset{every picture/.style={line width=0.75pt}} 

\begin{tikzpicture}[x=0.75pt,y=0.75pt,yscale=-0.7,xscale=0.7]

\draw   (181,11) -- (450,11) -- (450,281) -- (181,281) -- cycle ;
\draw    (301,10) -- (301,281) ;

\draw    (331,10) -- (331,282) ;

\draw    (361,10) -- (361,280) ;

\draw    (211,10) -- (211,281) ;

\draw    (241,10) -- (241,281) ;

\draw    (271,10) -- (271,281) ;

\draw    (391,10) -- (391,281) ;

\draw    (421,10) -- (421,280) ;

\draw    (182,41) -- (450,41) ;

\draw    (182,71) -- (450,71) ;

\draw    (182,101) -- (450,101) ;

\draw    (182,131) -- (450,131) ;

\draw    (182,161) -- (450,161) ;

\draw    (182,191) -- (450,191) ;

\draw    (182,221) -- (450,221) ;

\draw    (182,251) -- (450,251) ;

\draw [color={rgb, 255:red, 208; green, 2; blue, 27 }  ,draw opacity=1 ][line width=3.75]    (256,114) -- (256,267) ;

\draw [color={rgb, 255:red, 74; green, 144; blue, 226 }  ,draw opacity=1 ][line width=3.75]    (319,29) -- (441,29) ;

\draw [color={rgb, 255:red, 65; green, 117; blue, 5 }  ,draw opacity=1 ][line width=3.75]    (196,203) -- (196,271) ;

\draw  [color={rgb, 255:red, 0; green, 0; blue, 0 }  ,draw opacity=1 ][fill={rgb, 255:red, 248; green, 231; blue, 28 }  ,fill opacity=0.48 ] (211,41) -- (241,41) -- (241,71) -- (211,71) -- cycle ;
\draw [line width=3.75]    (226,53) -- (226,181) ;

\draw  [color={rgb, 255:red, 0; green, 0; blue, 0 }  ,draw opacity=1 ][fill={rgb, 255:red, 74; green, 144; blue, 226 }  ,fill opacity=0.35 ] (271,41) -- (301,41) -- (301,71) -- (271,71) -- cycle ;
\draw [color={rgb, 255:red, 245; green, 166; blue, 35 }  ,draw opacity=1 ][line width=3.75]    (287,57) -- (409,57) ;

\end{tikzpicture}
\caption{Segments are shown on the grid. The pair (black, orange) is conflicting since the yellow cell (covered by the black segment) is on the same row as the blue cell (covered by the orange segment). The following pairs are non-conflicting: (green, black), (green, orange), (green, blue), (red, orange), (red, blue), (black, blue).} \label{fig:crossing}
\end{figure}
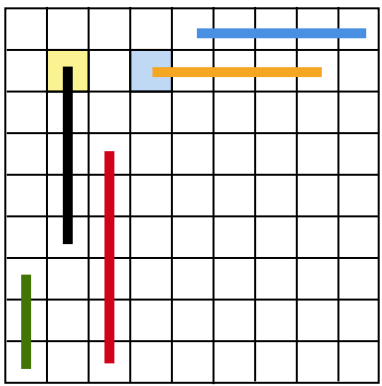

After we choose the segments, an adversary puts a non-negative number on each cell of the table. The score of a subset of cells of the table would be the sum of their values and the overall score of the table is the maximum score of a path of length $2m-1$ from the bottom-left corner to the top-right corner. In such a path, we always either move up or to the right.

The score of a segment is the sum of the numbers on the cells it covers. We obtain the maximum sum of the scores of a non-conflicting set of segments.  The score of the table is an upper bound on the score of any set of non-conflicting segments. We would like to choose segments so that the ratio of the score of the table and our score is bounded by a constant, no matter how the adversary puts the numbers on the table. More precisely, we call a solution $(\alpha,\beta)$-approximate, if at most $\alpha$ segments cover each cell and it guarantees a $1/\beta$ fraction of the score of the table for us for any assignment of numbers to the table cells.

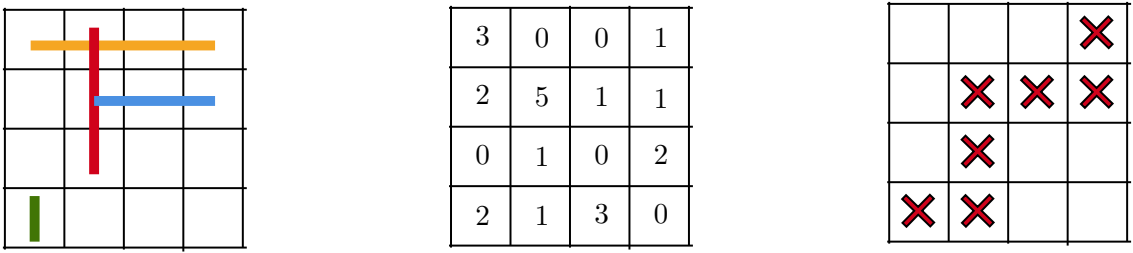
\begin{figure*}

\centering

\tikzset{every picture/.style={line width=0.75pt}} 

\begin{tikzpicture}[x=0.75pt,y=0.75pt,yscale=-1,xscale=1]

\draw    (270,30) -- (270,152) ;

\draw    (300,31) -- (300,151) ;

\draw    (330,30) -- (330,152) ;

\draw    (360,30) -- (360,153) ;

\draw    (390,30) -- (390,151) ;

\draw    (269,31) -- (392,31) ;

\draw    (269,61) -- (392,61) ;

\draw    (269,91) -- (392,91) ;

\draw    (269,121) -- (392,121) ;

\draw    (269,151) -- (392,151) ;

\draw    (45,31) -- (45,153) ;

\draw    (75,32) -- (75,152) ;

\draw    (105,31) -- (105,153) ;

\draw    (135,31) -- (135,154) ;

\draw    (165,31) -- (165,152) ;

\draw    (44,32) -- (167,32) ;

\draw    (44,62) -- (167,62) ;

\draw    (44,92) -- (167,92) ;

\draw    (44,122) -- (167,122) ;

\draw    (44,152) -- (167,152) ;

\draw    (491,28) -- (491,150) ;

\draw    (521,29) -- (521,149) ;

\draw    (551,28) -- (551,150) ;

\draw    (581,28) -- (581,151) ;

\draw    (611,28) -- (611,149) ;

\draw    (490,29) -- (613,29) ;

\draw    (490,59) -- (613,59) ;

\draw    (490,89) -- (613,89) ;

\draw    (490,119) -- (613,119) ;

\draw    (490,149) -- (613,149) ;

\draw [color={rgb, 255:red, 245; green, 166; blue, 35 }  ,draw opacity=1 ][line width=3.75]    (58,50) -- (151,50) ;

\draw [color={rgb, 255:red, 208; green, 2; blue, 27 }  ,draw opacity=1 ][line width=3.75]    (90,41) -- (90,115) ;

\draw [color={rgb, 255:red, 74; green, 144; blue, 226 }  ,draw opacity=1 ][line width=3.75]    (90,78) -- (151,78) ;

\draw [color={rgb, 255:red, 65; green, 117; blue, 5 }  ,draw opacity=1 ][line width=3.75]    (60,126) -- (60,149) ;

\draw  [color={rgb, 255:red, 0; green, 0; blue, 0 }  ,draw opacity=1 ][fill={rgb, 255:red, 208; green, 2; blue, 27 }  ,fill opacity=1 ] (497.54,127.53) -- (499.49,125.52) -- (505.58,131.45) -- (511.51,125.36) -- (513.51,127.31) -- (507.58,133.4) -- (513.67,139.33) -- (511.72,141.33) -- (505.63,135.41) -- (499.7,141.49) -- (497.7,139.54) -- (503.63,133.45) -- cycle ;
\draw  [color={rgb, 255:red, 0; green, 0; blue, 0 }  ,draw opacity=1 ][fill={rgb, 255:red, 208; green, 2; blue, 27 }  ,fill opacity=1 ] (527.54,127.53) -- (529.49,125.52) -- (535.58,131.45) -- (541.51,125.36) -- (543.51,127.31) -- (537.58,133.4) -- (543.67,139.33) -- (541.72,141.33) -- (535.63,135.41) -- (529.7,141.49) -- (527.7,139.54) -- (533.63,133.45) -- cycle ;
\draw  [color={rgb, 255:red, 0; green, 0; blue, 0 }  ,draw opacity=1 ][fill={rgb, 255:red, 208; green, 2; blue, 27 }  ,fill opacity=1 ] (527.54,97.53) -- (529.49,95.52) -- (535.58,101.45) -- (541.51,95.36) -- (543.51,97.31) -- (537.58,103.4) -- (543.67,109.33) -- (541.72,111.33) -- (535.63,105.41) -- (529.7,111.49) -- (527.7,109.54) -- (533.63,103.45) -- cycle ;
\draw  [color={rgb, 255:red, 0; green, 0; blue, 0 }  ,draw opacity=1 ][fill={rgb, 255:red, 208; green, 2; blue, 27 }  ,fill opacity=1 ] (527.54,67.53) -- (529.49,65.52) -- (535.58,71.45) -- (541.51,65.36) -- (543.51,67.31) -- (537.58,73.4) -- (543.67,79.33) -- (541.72,81.33) -- (535.63,75.41) -- (529.7,81.49) -- (527.7,79.54) -- (533.63,73.45) -- cycle ;
\draw  [color={rgb, 255:red, 0; green, 0; blue, 0 }  ,draw opacity=1 ][fill={rgb, 255:red, 208; green, 2; blue, 27 }  ,fill opacity=1 ] (557.54,67.53) -- (559.49,65.52) -- (565.58,71.45) -- (571.51,65.36) -- (573.51,67.31) -- (567.58,73.4) -- (573.67,79.33) -- (571.72,81.33) -- (565.63,75.41) -- (559.7,81.49) -- (557.7,79.54) -- (563.63,73.45) -- cycle ;
\draw  [color={rgb, 255:red, 0; green, 0; blue, 0 }  ,draw opacity=1 ][fill={rgb, 255:red, 208; green, 2; blue, 27 }  ,fill opacity=1 ] (587.54,67.53) -- (589.49,65.52) -- (595.58,71.45) -- (601.51,65.36) -- (603.51,67.31) -- (597.58,73.4) -- (603.67,79.33) -- (601.72,81.33) -- (595.63,75.41) -- (589.7,81.49) -- (587.7,79.54) -- (593.63,73.45) -- cycle ;
\draw  [color={rgb, 255:red, 0; green, 0; blue, 0 }  ,draw opacity=1 ][fill={rgb, 255:red, 208; green, 2; blue, 27 }  ,fill opacity=1 ] (587.54,37.53) -- (589.49,35.52) -- (595.58,41.45) -- (601.51,35.36) -- (603.51,37.31) -- (597.58,43.4) -- (603.67,49.33) -- (601.72,51.33) -- (595.63,45.41) -- (589.7,51.49) -- (587.7,49.54) -- (593.63,43.45) -- cycle ;

\draw (286,136) node   {$2$};
\draw (286,105) node   {$0$};
\draw (286,75) node   {$2$};
\draw (286,45) node   {$3$};
\draw (316,136) node   {$1$};
\draw (316,106) node   {$1$};
\draw (316,76) node   {$5$};
\draw (316,46) node   {$0$};
\draw (346,76) node   {$1$};
\draw (346,105) node   {$0$};
\draw (346,135) node   {$3$};
\draw (376,135) node   {$0$};
\draw (376,105) node   {$2$};
\draw (376,77) node   {$1$};
\draw (346,46) node   {$0$};
\draw (376,46) node   {$1$};

\end{tikzpicture}

\caption{After we introduce the segments (left figure), the adversary puts the numbers on the table (middle figure). In this case, the score of the table is equal to $12$ (via the path depicted on the right figure), and our score is equal to $9$ obtained from two non-conflicting segments green and blue.} \label{fig:crossing}
\end{figure*}

Mitzenmacher and Seddighin~\cite{our-stoc-paper} prove the following theorem: For any $m \times m$ table and any $0 < \kappa < 1$, there exists a grid packing solution with  guarantee $(O_{\kappa}(m^\kappa \log m),O(1/\kappa))$. That is, each cell is covered by at most $O_{\kappa}(m^\kappa \log m)$ segments and the ratio of the table's score over our score is bounded by $O(1/\kappa)$ in the worst case. 

The general framework of grid packing remains the same for our extension: The problem can be thought of as a game played on an $m \times m$ table against an adversary and the goal is to introduce some multisegments (a generalization of segments explained below) such that after the adversary puts her numbers on the table cells, the score we obtain is comparable to table's score. However, extended grid packing differs with grid packing in two ways: First, we introduce a new notion that we call a \textit{multisegment} and we allow the use of multisegments instead of segments. Second, we do not enforce any bound on the number of multisegments that cover each cell. That is, we only have one objective which is maximizing the ratio of our score over the score of the table. Without the bound, utilizing extended grid packing for \textsf{LIS} becomes harder as previous solutions require a cap on the maximum number of segments covering each cell. However, we present in Section~\ref{sec:example} a more clever application of extended grid packing that does not depend on this bound.

Before we introduce multisegments, let us give an example to illustrate why segments fall short of our purpose which is obtaining a $(1-\epsilon)$ fraction of the table's score. For an $m \times m$ table, there are $m\binom{m}{2}+m^2$ distinct horizontal segments and $m\binom{m}{2}+m^2$ distinct vertical segments that amount to $2m\binom{m}{2} + m^2$ segments in total (there are $m^2$ single cell segments that can be regarded as both vertical and horizontal). Figure~\ref{fig:zeros} gives an example that proves we cannot obtain a score more than $2/3$ of the score of the table. In this example, even if we introduce all possible $2m\binom{m}{2} + m^2$ segments, from every three consecutive cells with value $1$ no more than two can be covered by non-conflicting segments. Thus, we cannot obtain more than $2/3$ of the score of the table even if there is no restriction on the maximum number of segments covering each cell.

\input{figs/zeros}

The example of Figure~\ref{fig:zeros} highlights  the fact that even if all possible segments can be used in a solution, there is no hope to obtain a $1-\epsilon$ fraction of the score of the table. This is the main motivation behind the definition of multisegments. As we show later, multisegments enable us to obtain a score arbitrarily close to the score of the table.

For a horizontal/vertical segment, we define its \textit{first} cell as its leftmost/bottommost cell and its \textit{last} cell as the rightmost/topmost cell of the segment. A $\Delta$-multisegment is defined as a combination of $\Delta$ segments $s_1, s_2, \ldots, s_{\Delta}$ where for each $1 \leq i \leq \Delta-1$, the last cell of segment $s_i$ is the same as the first cell of segment $s_{i+1}$. By definition, the notion of $1$-multisegment collides with that of segment. We say a multisegment covers a cell, if any of its segments covers that cell. Moreover, two multisegments $S_1$ and $S_2$ are non-conflicting, if for each segment $x$ of $S_1$ and each segment $y$ of $S_2$, $x$ and $y$ are non-conflicting. To avoid confusion, we use uppercase letters for multisegments and lowercase letters for segments. Based on this definition, for any $1 \leq i < \Delta$, an $i$-multisegment is also a $\Delta$-multisegment (we may add $\Delta-i$ single cell segments to an $i$-multisegment to make it compatible with the definition of $\Delta$-multisegment without any change in its shape).
\begin{figure}[ht]

\centering

\tikzset{every picture/.style={line width=0.75pt}} 

\begin{tikzpicture}[x=0.75pt,y=0.75pt,yscale=-0.5,xscale=.5]

\draw   (141.5,15) -- (539.5,15) -- (539.5,411) -- (141.5,411) -- cycle ;
\draw    (189.5,16) -- (189.5,412) ;
\draw    (239.5,16) -- (239.5,410) ;
\draw    (289.5,16) -- (289.5,410) ;
\draw    (339.5,16) -- (339.5,410) ;
\draw    (389.5,16) -- (389.5,412) ;
\draw    (439.5,16) -- (439.5,411) ;
\draw    (489.5,16) -- (489.5,411) ;
\draw    (539.5,66) -- (142.5,66) ;
\draw    (540.5,116) -- (142.5,116) ;
\draw    (540.5,166) -- (142.5,166) ;
\draw    (539.5,216) -- (142.5,216) ;
\draw    (538.5,266) -- (142.5,266) ;
\draw    (539.5,316) -- (142.5,316) ;
\draw    (539.5,364) -- (142.5,364) ;
\draw  [draw opacity=0][fill={rgb, 255:red, 248; green, 231; blue, 28 }  ,fill opacity=1 ] (157.48,404.5) -- (157.52,334.5) -- (170.52,334.5) -- (170.48,404.5) -- cycle ;
\draw  [draw opacity=0][fill={rgb, 255:red, 248; green, 231; blue, 28 }  ,fill opacity=1 ] (157.54,334.81) -- (273.5,334) -- (273.57,344.69) -- (157.62,345.5) -- cycle ;
\draw  [draw opacity=0][fill={rgb, 255:red, 248; green, 231; blue, 28 }  ,fill opacity=1 ] (260.48,343) -- (260.52,276.99) -- (273.5,277) -- (273.45,343.01) -- cycle ;
\draw  [draw opacity=0][fill={rgb, 255:red, 189; green, 16; blue, 224 }  ,fill opacity=0.4 ] (360.48,303) -- (360.52,236.99) -- (373.5,237) -- (373.45,303.01) -- cycle ;
\draw  [draw opacity=0][fill={rgb, 255:red, 189; green, 16; blue, 224 }  ,fill opacity=0.4 ] (371.51,236.94) -- (471.5,236) -- (471.61,248.04) -- (371.62,248.98) -- cycle ;
\draw  [draw opacity=0][fill={rgb, 255:red, 189; green, 16; blue, 224 }  ,fill opacity=0.4 ] (459.48,236) -- (459.52,178.99) -- (471.5,179) -- (471.46,236.01) -- cycle ;
\draw  [draw opacity=0][fill={rgb, 255:red, 189; green, 16; blue, 224 }  ,fill opacity=0.4 ] (469.51,178.94) -- (524.39,178.43) -- (524.5,190) -- (469.62,190.52) -- cycle ;
\draw  [draw opacity=0][fill={rgb, 255:red, 0; green, 0; blue, 0 }  ,fill opacity=0.31 ] (307.54,131.81) -- (423.5,131) -- (423.57,141.69) -- (307.62,142.5) -- cycle ;
\draw  [draw opacity=0][fill={rgb, 255:red, 0; green, 0; blue, 0 }  ,fill opacity=0.31 ] (413.42,43.77) -- (523.49,43) -- (523.57,54.23) -- (413.49,55) -- cycle ;
\draw  [draw opacity=0][fill={rgb, 255:red, 0; green, 0; blue, 0 }  ,fill opacity=0.31 ] (423.5,53) -- (423.51,133.04) -- (413.51,133.04) -- (413.49,53) -- cycle ;
\draw  [draw opacity=0][fill={rgb, 255:red, 184; green, 233; blue, 134 }  ,fill opacity=1 ] (210.48,253) -- (210.52,186.99) -- (223.5,187) -- (223.45,253.01) -- cycle ;
\draw  [draw opacity=0][fill={rgb, 255:red, 184; green, 233; blue, 134 }  ,fill opacity=1 ] (219.1,199.68) -- (161.66,200.43) -- (161.5,188) -- (218.93,187.25) -- cycle ;
\draw  [draw opacity=0][fill={rgb, 255:red, 184; green, 233; blue, 134 }  ,fill opacity=1 ] (160.48,200) -- (160.52,133.99) -- (173.5,134) -- (173.45,200.01) -- cycle ;

\end{tikzpicture}

\caption{All polylines except for the green one are valid multisegments. Yellow and gray multisegments are non-conflicting, while the rest of the multisegment pairs are conflicting.} \label{fig:milti}
\end{figure}
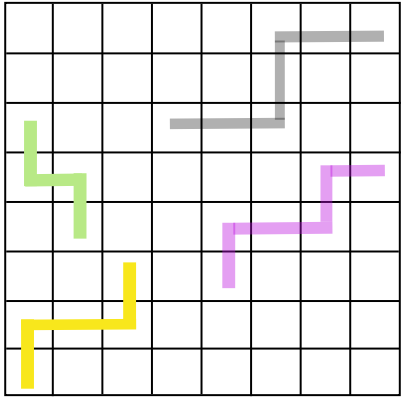

We define an extended version of the grid packing problem as a game between us and an adversary. Similar to grid packing, we first introduce a number of multisegments and then the adversary puts nonnegative numbers on the cells of the table. Then, table's score is formulated as the largest sum the adversary can collect from the values of the cells by moving from the bottom-left corner to the top-right corner of the table. Our score is the largest sum we can collect by non-conflicting multisegments where the value of a multisegment is equal to the total sum of the numbers of the cells it covers.

As we show in Lemma~\ref{lemma:multi}, if for a $\Delta$, we consider all possible $\Delta$-multisegments in our solution, our score is always at least a $\frac{\Delta-1}{\Delta}$ fraction of the table's score. Notice that by introducing all such multisegments, a cell may be covered by many multisegments.

\begin{lemma}\label{lemma:multi}
	For a fixed $1 \leq \Delta$, if we introduce all $\Delta$-multisegments in the extended grid packing problem, our score will be least a $\frac{\Delta-1}{\Delta}$ fraction of the table's score regardless of the values of the table cells.
\end{lemma}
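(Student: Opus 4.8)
The plan is to argue about a fixed but arbitrary assignment of numbers to the table cells and, for each such assignment, to exhibit an explicit family of pairwise non-conflicting $\Delta$-multisegments whose covered cells carry at least a $\frac{\Delta-1}{\Delta}$ fraction of the table's score. Fix the assignment, write $w(\cdot)$ for the cell values, let $W$ be the table's score, and let $P=(c_1,c_2,\dots,c_{2m-1})$ be a monotone bottom-left-to-top-right path attaining it, so $W=\sum_{i=1}^{2m-1}w(c_i)$. Call $c_i$ with $1<i<2m-1$ a \emph{corner} of $P$ if $P$ turns at $c_i$, and let $q_1<\dots<q_{\ell-1}$ be the positions of the corners, so $P$ splits into $\ell$ maximal straight \emph{legs}; any $\Delta$ consecutive legs, glued at their shared corners, form a valid $\Delta$-multisegment. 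If $\ell\le\Delta$, then $P$ itself is a $\Delta$-multisegment covering all of $W$ and we are done, so assume $\ell>\Delta$; then $\ell-1\ge\Delta$, so every residue modulo $\Delta$ occurs among $\{1,\dots,\ell-1\}$.

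For every shift $t\in\{1,\dots,\Delta\}$ I would build a family $\mathcal F_t$ by ``cutting'' $P$ at the corners whose index is $\equiv t\pmod\Delta$: listing these cut corners as $q_{j_1}<q_{j_2}<\dots<q_{j^*}$ (note $j_1=t$), take the multisegments $P[c_1\dd c_{q_{j_1}-1}]$, $P[c_{q_{j_1}+1}\dd c_{q_{j_2}-1}]$, \dots, $P[c_{q_{j^*}+1}\dd c_{2m-1}]$; that is, the sub-paths of $P$ lying strictly between consecutive cut corners, each \emph{omitting the bounding cut corners}. Every such piece is a contiguous sub-path of $P$ touching at most $\Delta$ legs: the first touches legs $1,\dots,t$; every middle piece touches exactly $\Delta$ legs; and the last touches $\ell-j^*\le\Delta$ legs because $j^*+\Delta>\ell-1$. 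Hence each piece is a legitimate $\Delta$-multisegment; the cells of $P$ missed by $\mathcal F_t$ are exactly the cut corners $\{c_{q_j}:j\equiv t\pmod\Delta\}$; and the pieces of $\mathcal F_t$ are pairwise cell-disjoint, so $\mathcal F_t$ collects value exactly $W-\sum_{j\equiv t\pmod\Delta}w(c_{q_j})$.

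The crux is to check that the pieces of a single $\mathcal F_t$ are pairwise non-conflicting. Take two of them, the earlier ending at a cell $c_s$ and the later starting at a cell $c_{s'}$; by construction $s'\ge s+2$, with equality for consecutive pieces of $\mathcal F_t$, in which case $c_{s+1}$ is a cut corner, so of the two steps $c_s\to c_{s+1}$ and $c_{s+1}\to c_{s'}$ one is horizontal and the other vertical. Writing $c_{s+1}=(x_0,y_0)$, this forces, up to exchanging the coordinates, $c_s=(x_0-1,y_0)$ and $c_{s'}=(x_0,y_0+1)$. Since each piece is a monotone staircase, every cell of the earlier piece has $x$-coordinate at most $x_0-1$ and $y$-coordinate at most $y_0$, while every cell of the later piece has $x$-coordinate at least $x_0$ and $y$-coordinate at least $y_0+1$; thus every cell of the later piece is strictly above and strictly to the right of every cell of the earlier one, so every segment of the later piece precedes every segment of the earlier one and the two multisegments are non-conflicting. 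When $s'>s+2$ the path between $c_s$ and $c_{s'}$ still contains a cut corner, hence still turns, which again yields strict separation in both coordinates. The routine points left to verify are the two corner orientations and the degenerate case of a length-two leg, where a ``trimmed'' leg collapses to a single cell (still a valid segment).

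It remains to average. Since each index $j\in\{1,\dots,\ell-1\}$ lies in exactly one residue class modulo $\Delta$, we have $\sum_{t=1}^{\Delta}\sum_{j\equiv t\pmod\Delta}w(c_{q_j})=\sum_{j=1}^{\ell-1}w(c_{q_j})\le W$, so some shift $t$ satisfies $\sum_{j\equiv t\pmod\Delta}w(c_{q_j})\le W/\Delta$; for that $t$, the family $\mathcal F_t$ collects at least $W-W/\Delta=\frac{\Delta-1}{\Delta}W$. As all $\Delta$-multisegments are present in our solution, our score is at least this much for every assignment of numbers, which is the claim. The main obstacle is the non-conflict verification, and the point to get right is the construction itself: trimming exactly one cell on each side of a cut corner (rather than discarding whole legs) is precisely what makes the two strict inequalities hold simultaneously, whereas the naive ``leave a gap of two full legs between consecutive blocks'' construction would only yield the weaker bound $\frac{\Delta}{\Delta+2}$.
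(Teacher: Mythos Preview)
Your proof is correct and follows essentially the same approach as the paper: cut the optimal path at every $\Delta$-th turning cell (what the paper calls \emph{critical cells}, additionally declaring the two endpoints critical), show the resulting pieces are pairwise non-conflicting $\Delta$-multisegments, and average over the $\Delta$ shifts to find one losing at most a $1/\Delta$ fraction of the path's weight. Your write-up carries out the non-conflict verification with explicit coordinates, whereas the paper simply states the analogous separation property as an observation; otherwise the arguments coincide.
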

\begin{proof}
	Let us fix the optimal path of length $2m-1$ from the bottom-left cell of the table to the top-right cell which gives the highest score to the table. For the sake of this proof, we give an ordering to the cells of this path. More precisely, let $c_1, c_2, \ldots, c_{2m-1}$ be the sequence of these cells where $c_1$ is the bottom-left corners and $c_{2m-1}$ is the top-right corner and as we increase $i$, the distance of $c_i$ from $c_1$ increases. We call a cell $c_i$ \textit{critical}, if $1 < i < 2m-1$ and cells $\{c_{i-1},c_i, c_{i+1}\}$ cannot be covered by a single segment (i.e., they neither lie on the same row nor lie on the same column). In addition to this, we consider both $c_1$ and $c_{2m-1}$ to be critical cells. The proof is based on the following observations:
	\begin{figure}[ht]

\centering

\tikzset{every picture/.style={line width=0.75pt}} 

\begin{tikzpicture}[x=0.75pt,y=0.75pt,yscale=-0.6,xscale=0.6]

\draw   (141.5,15) -- (539.5,15) -- (539.5,411) -- (141.5,411) -- cycle ;
\draw    (189.5,16) -- (189.5,412) ;
\draw    (239.5,16) -- (239.5,410) ;
\draw    (289.5,16) -- (289.5,410) ;
\draw    (339.5,16) -- (339.5,410) ;
\draw    (389.5,16) -- (389.5,412) ;
\draw    (439.5,16) -- (439.5,411) ;
\draw    (489.5,16) -- (489.5,411) ;
\draw    (539.5,66) -- (142.5,66) ;
\draw    (540.5,116) -- (142.5,116) ;
\draw    (540.5,166) -- (142.5,166) ;
\draw    (539.5,216) -- (142.5,216) ;
\draw    (538.5,266) -- (142.5,266) ;
\draw    (539.5,316) -- (142.5,316) ;
\draw    (539.5,364) -- (142.5,364) ;
\draw  [color={rgb, 255:red, 0; green, 0; blue, 0 }  ,draw opacity=1 ][fill={rgb, 255:red, 80; green, 227; blue, 194 }  ,fill opacity=0.22 ] (141.5,216) -- (189.5,216) -- (189.5,266) -- (141.5,266) -- cycle ;
\draw  [color={rgb, 255:red, 0; green, 0; blue, 0 }  ,draw opacity=1 ][fill={rgb, 255:red, 80; green, 227; blue, 194 }  ,fill opacity=0.22 ] (339.5,116) -- (389.5,116) -- (389.5,166) -- (339.5,166) -- cycle ;
\draw  [color={rgb, 255:red, 0; green, 0; blue, 0 }  ,draw opacity=1 ][fill={rgb, 255:red, 80; green, 227; blue, 194 }  ,fill opacity=0.22 ] (339.5,216) -- (389.5,216) -- (389.5,266) -- (339.5,266) -- cycle ;
\draw  [color={rgb, 255:red, 0; green, 0; blue, 0 }  ,draw opacity=1 ][fill={rgb, 255:red, 80; green, 227; blue, 194 }  ,fill opacity=0.22 ] (389.5,116) -- (439.5,116) -- (439.5,166) -- (389.5,166) -- cycle ;
\draw  [color={rgb, 255:red, 0; green, 0; blue, 0 }  ,draw opacity=1 ][fill={rgb, 255:red, 80; green, 227; blue, 194 }  ,fill opacity=0.22 ] (389.5,66) -- (439.5,66) -- (439.5,116) -- (389.5,116) -- cycle ;
\draw  [color={rgb, 255:red, 0; green, 0; blue, 0 }  ,draw opacity=1 ][fill={rgb, 255:red, 80; green, 227; blue, 194 }  ,fill opacity=0.22 ] (489.5,66) -- (539.5,66) -- (539.5,116) -- (489.5,116) -- cycle ;
\draw  [color={rgb, 255:red, 0; green, 0; blue, 0 }  ,draw opacity=1 ][fill={rgb, 255:red, 80; green, 227; blue, 194 }  ,fill opacity=0.22 ] (141.5,364) -- (189.5,364) -- (189.5,411) -- (141.5,411) -- cycle ;
\draw  [color={rgb, 255:red, 0; green, 0; blue, 0 }  ,draw opacity=1 ][fill={rgb, 255:red, 80; green, 227; blue, 194 }  ,fill opacity=0.22 ] (489.5,15) -- (539.5,15) -- (539.5,66) -- (489.5,66) -- cycle ;

\draw (155,380.4) node [anchor=north west][inner sep=0.75pt]    {$\star $};
\draw (306,231.4) node [anchor=north west][inner sep=0.75pt]    {$\star $};
\draw (355,231.4) node [anchor=north west][inner sep=0.75pt]    {$\star $};
\draw (354,181.4) node [anchor=north west][inner sep=0.75pt]    {$\star $};
\draw (403,131.4) node [anchor=north west][inner sep=0.75pt]    {$\star $};
\draw (456,80.4) node [anchor=north west][inner sep=0.75pt]    {$\star $};
\draw (506,79.4) node [anchor=north west][inner sep=0.75pt]    {$\star $};
\draw (506,29.4) node [anchor=north west][inner sep=0.75pt]    {$\star $};
\draw (355,130.4) node [anchor=north west][inner sep=0.75pt]    {$\star $};
\draw (402,80.4) node [anchor=north west][inner sep=0.75pt]    {$\star $};
\draw (156,230.4) node [anchor=north west][inner sep=0.75pt]    {$\star $};
\draw (207,230.4) node [anchor=north west][inner sep=0.75pt]    {$\star $};
\draw (256,230.4) node [anchor=north west][inner sep=0.75pt]    {$\star $};
\draw (156,280.4) node [anchor=north west][inner sep=0.75pt]    {$\star $};
\draw (155,330.4) node [anchor=north west][inner sep=0.75pt]    {$\star $};

\end{tikzpicture}

\caption{An example of the optimal path is shown by starred cells. In this example, all the critical cells are colored in blue.} \label{fig:stars}
\end{figure}
	\begin{itemize}
		\item  For some $1 \leq i \leq j \leq 2m-1$ all cells $c_i, c_{i+1},\ldots, c_{j-1},c_j$ can be covered by a single $\Delta$-multisegment if no more than $\Delta-1$ cells in this set are critical. 
		\item If cell $c_i$ is critical, then all cells $c_{i+1},c_{i+2},\ldots,c_{2m-1}$ are higher and to the right of all cells $c_1, c_2, \ldots, c_{i-1}$.
	\end{itemize}
	To complete the proof, we index the critical cells by their ordering. In other words, let $r_1, r_2, \ldots, r_{k}$ be the critical cells of the table in the order they are listed in the sequence $c_1, \ldots, c_{2m-1}$ (here $k$ is the number of critical cells). It follows from the above observations that there are some non-conflicting $\Delta$-multisegements that cover all cells of the path except for $r_1, r_{\Delta+1}, r_{2\Delta+1}, r_{3\Delta+1},\ldots$. More generally, for each $1 \leq j \leq \Delta$, all cells of the optimal path except for $r_{j}, r_{\Delta+j}, r_{2\Delta+j}, r_{3\Delta+j},\ldots$ can be covered by non-conflicting $\Delta$-multisegments. Since for at least one $j$, such cells contribute to no more than a $1/\Delta$ fraction of the score of the table, we can obtain at least a $\frac{\Delta-1}{\Delta}$ fraction of the table's score.
\end{proof}

Lemma~\ref{lemma:multi} alone does not suffice to improve the dynamic algorithm of Mitzenmacher and Seddighin~\cite{our-stoc-paper} for \textsf{LIS}  since their algorithm relies on a bound on the number of segments that cover each cell. We remedy this issue by presenting an alternative algorithm that does not require this bound. We explain the new algorithm in details in Section~\ref{sec:realdynamic}.

\subsection{A Data Structure for \textsf{LIS}}\label{sec:example}
Previous work~\cite{our-stoc-paper,our-soda-paper} illustrate natural connections between \textsf{LIS} and grid packing. Generally, we represent the input sequence by points on the plane in a way that a point with coordinates $(x,y)$ indicates that the $x$'s element of the sequence has value $y$. Thus, instead of a sequence of length $n$, we have a plane with $n$ points. This enables us to construct a grid, where the rows and the columns evenly divide the points. Assuming the number the adversary puts on each cell is the contribution of the points inside that cell to the \textsf{LIS} of the sequence (which may be different from the \textsf{LIS} of the points included in that cell), the score of the table (which we try to be competitive to in extended grid packing) is always equal to the size of the optimal solution. Moreover, the notion of non-conflicting segments/multisegments gives us a way to construct a global solution by combining partial solutions. In other words, if we associate a partial solution to each segment/multisegment that only incorporates the points covered by those segments/multisegments, then for a set of non-conflicting segments/multisegments, a combination of partial solutions forms a valid increasing subsequence.

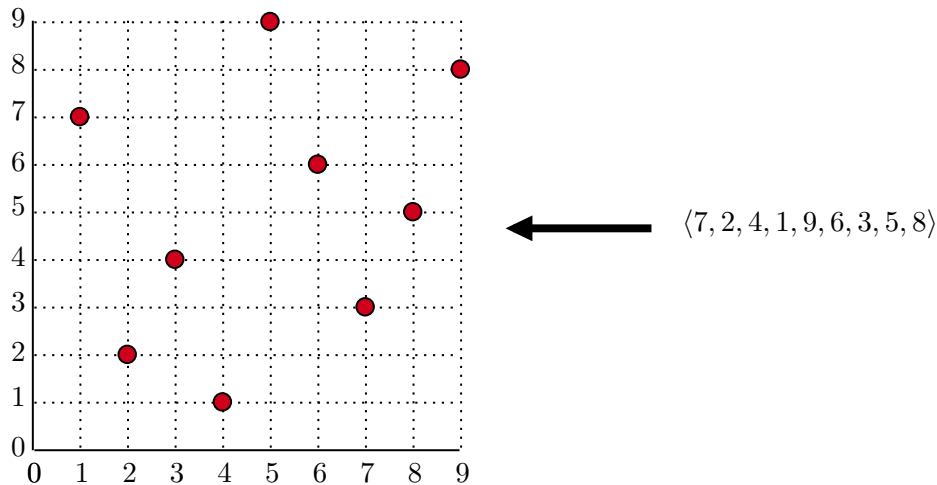
\begin{figure*}[ht]

\centering

\tikzset{every picture/.style={line width=0.75pt}} 

\begin{tikzpicture}[x=0.75pt,y=0.75pt,yscale=-0.8,xscale=0.8]

\draw  [dash pattern={on 0.84pt off 2.51pt}]  (147,49) -- (147,320) ;

\draw  [dash pattern={on 0.84pt off 2.51pt}]  (177,49) -- (177,321) ;

\draw  [dash pattern={on 0.84pt off 2.51pt}]  (207,49) -- (207,319) ;

\draw  [dash pattern={on 0.84pt off 2.51pt}]  (57,49) -- (57,320) ;

\draw  [dash pattern={on 0.84pt off 2.51pt}]  (87,49) -- (87,320) ;

\draw  [dash pattern={on 0.84pt off 2.51pt}]  (117,49) -- (117,320) ;

\draw  [dash pattern={on 0.84pt off 2.51pt}]  (237,49) -- (237,320) ;

\draw  [dash pattern={on 0.84pt off 2.51pt}]  (267,49) -- (267,319) ;

\draw  [dash pattern={on 0.84pt off 2.51pt}]  (28,80) -- (296,80) ;

\draw  [dash pattern={on 0.84pt off 2.51pt}]  (28,110) -- (296,110) ;

\draw  [dash pattern={on 0.84pt off 2.51pt}]  (28,140) -- (296,140) ;

\draw  [dash pattern={on 0.84pt off 2.51pt}]  (28,170) -- (296,170) ;

\draw  [dash pattern={on 0.84pt off 2.51pt}]  (28,200) -- (296,200) ;

\draw  [dash pattern={on 0.84pt off 2.51pt}]  (28,230) -- (296,230) ;

\draw  [dash pattern={on 0.84pt off 2.51pt}]  (28,260) -- (296,260) ;

\draw  [dash pattern={on 0.84pt off 2.51pt}]  (28,290) -- (296,290) ;

\draw    (27,49) -- (27,320) ;

\draw  [dash pattern={on 0.84pt off 2.51pt}]  (297,49) -- (297,319) ;

\draw    (28,320) -- (296,320) ;

\draw  [dash pattern={on 0.84pt off 2.51pt}]  (28,50) -- (296,50) ;

\draw  [color={rgb, 255:red, 0; green, 0; blue, 0 }  ,draw opacity=1 ][fill={rgb, 255:red, 208; green, 2; blue, 27 }  ,fill opacity=1 ] (51,109.5) .. controls (51,106.46) and (53.46,104) .. (56.5,104) .. controls (59.54,104) and (62,106.46) .. (62,109.5) .. controls (62,112.54) and (59.54,115) .. (56.5,115) .. controls (53.46,115) and (51,112.54) .. (51,109.5) -- cycle ;
\draw  [color={rgb, 255:red, 0; green, 0; blue, 0 }  ,draw opacity=1 ][fill={rgb, 255:red, 208; green, 2; blue, 27 }  ,fill opacity=1 ] (81,259.5) .. controls (81,256.46) and (83.46,254) .. (86.5,254) .. controls (89.54,254) and (92,256.46) .. (92,259.5) .. controls (92,262.54) and (89.54,265) .. (86.5,265) .. controls (83.46,265) and (81,262.54) .. (81,259.5) -- cycle ;
\draw  [color={rgb, 255:red, 0; green, 0; blue, 0 }  ,draw opacity=1 ][fill={rgb, 255:red, 208; green, 2; blue, 27 }  ,fill opacity=1 ] (111,199.5) .. controls (111,196.46) and (113.46,194) .. (116.5,194) .. controls (119.54,194) and (122,196.46) .. (122,199.5) .. controls (122,202.54) and (119.54,205) .. (116.5,205) .. controls (113.46,205) and (111,202.54) .. (111,199.5) -- cycle ;
\draw  [color={rgb, 255:red, 0; green, 0; blue, 0 }  ,draw opacity=1 ][fill={rgb, 255:red, 208; green, 2; blue, 27 }  ,fill opacity=1 ] (141,289.5) .. controls (141,286.46) and (143.46,284) .. (146.5,284) .. controls (149.54,284) and (152,286.46) .. (152,289.5) .. controls (152,292.54) and (149.54,295) .. (146.5,295) .. controls (143.46,295) and (141,292.54) .. (141,289.5) -- cycle ;
\draw  [color={rgb, 255:red, 0; green, 0; blue, 0 }  ,draw opacity=1 ][fill={rgb, 255:red, 208; green, 2; blue, 27 }  ,fill opacity=1 ] (171,49.5) .. controls (171,46.46) and (173.46,44) .. (176.5,44) .. controls (179.54,44) and (182,46.46) .. (182,49.5) .. controls (182,52.54) and (179.54,55) .. (176.5,55) .. controls (173.46,55) and (171,52.54) .. (171,49.5) -- cycle ;
\draw  [color={rgb, 255:red, 0; green, 0; blue, 0 }  ,draw opacity=1 ][fill={rgb, 255:red, 208; green, 2; blue, 27 }  ,fill opacity=1 ] (201,139.5) .. controls (201,136.46) and (203.46,134) .. (206.5,134) .. controls (209.54,134) and (212,136.46) .. (212,139.5) .. controls (212,142.54) and (209.54,145) .. (206.5,145) .. controls (203.46,145) and (201,142.54) .. (201,139.5) -- cycle ;
\draw  [color={rgb, 255:red, 0; green, 0; blue, 0 }  ,draw opacity=1 ][fill={rgb, 255:red, 208; green, 2; blue, 27 }  ,fill opacity=1 ] (231,229.5) .. controls (231,226.46) and (233.46,224) .. (236.5,224) .. controls (239.54,224) and (242,226.46) .. (242,229.5) .. controls (242,232.54) and (239.54,235) .. (236.5,235) .. controls (233.46,235) and (231,232.54) .. (231,229.5) -- cycle ;
\draw  [color={rgb, 255:red, 0; green, 0; blue, 0 }  ,draw opacity=1 ][fill={rgb, 255:red, 208; green, 2; blue, 27 }  ,fill opacity=1 ] (261,169.5) .. controls (261,166.46) and (263.46,164) .. (266.5,164) .. controls (269.54,164) and (272,166.46) .. (272,169.5) .. controls (272,172.54) and (269.54,175) .. (266.5,175) .. controls (263.46,175) and (261,172.54) .. (261,169.5) -- cycle ;
\draw  [color={rgb, 255:red, 0; green, 0; blue, 0 }  ,draw opacity=1 ][fill={rgb, 255:red, 208; green, 2; blue, 27 }  ,fill opacity=1 ] (291,79.5) .. controls (291,76.46) and (293.46,74) .. (296.5,74) .. controls (299.54,74) and (302,76.46) .. (302,79.5) .. controls (302,82.54) and (299.54,85) .. (296.5,85) .. controls (293.46,85) and (291,82.54) .. (291,79.5) -- cycle ;
\draw [line width=3]    (417,180) -- (330,180) ;
\draw [shift={(325,180)}, rotate = 360] [fill={rgb, 255:red, 0; green, 0; blue, 0 }  ][line width=3]  [draw opacity=0] (16.97,-8.15) -- (0,0) -- (16.97,8.15) -- cycle    ;

\draw (-34,147) node   {$ \begin{array}{l}
	\end{array}$};
\draw (517,177) node   {$\langle 7,2,4,1,9,6,3,5,8\rangle$};
\draw (28,334) node   {$0$};
\draw (58,334) node   {$1$};
\draw (88,334) node   {$2$};
\draw (118,334) node   {$3$};
\draw (148,334) node   {$4$};
\draw (178,334) node   {$5$};
\draw (208,334) node   {$6$};
\draw (238,334) node   {$7$};
\draw (268,334) node   {$8$};
\draw (298,334) node   {$9$};
\draw (28,334) node   {$0$};
\draw (18,318) node   {$0$};
\draw (18,286) node   {$1$};
\draw (18,258) node   {$2$};
\draw (18,226) node   {$3$};
\draw (18,198) node   {$4$};
\draw (18,166) node   {$5$};
\draw (18,138) node   {$6$};
\draw (18,106) node   {$7$};
\draw (18,76) node   {$8$};
\draw (18,47) node   {$9$};

\end{tikzpicture}

\caption{An array $\langle 7, 2, 4, 1, 9, 6, 3, 5, 8\rangle$ is mapped to the 2D plane.} \label{fig:lis-grid}
\end{figure*}
\input{figs/lis-grid2}

We refer the reader to previous work~\cite{our-stoc-paper,our-soda-paper} for discussions on how to use grid packing for approximating \textsf{LIS}. Roughly speaking, after making a grid, they construct a partial solution for each segment that keeps an approximation to the \textsf{LIS} of the points covered by that segment. Thus, every time a change is made, their algorithm has to update the solution for all segments that cover the modified point. Therefore, previous techniques require a bound on the number of segments that cover each cell of the grid. In order to obtain a score arbitrarily close to the score of the table, we need to include a lot of multisegments in our solution for extended grid packing, many of which cover the same cells of the table. This makes it impossible for previous applications to use the new construction. In this work, we introduce a new application that does not require the bound. To illustrate the new technique, we utilize extended grid packing to approximate the size of \textsf{LIS} in a non-dynamic setting. We later bring a more involved description of the dynamic algorithm.

Assume that we are given $n$ points on the plane with distinct coordinates. We would like to make a preprocess on the points such that after that we can (approximately) answer the following queries in polylogarithmic time: Given a rectangle parallel to the axis lines, what is the \textsf{LIS} of the points included in the rectangle? Recall that the \textsf{LIS} of a set of points is equal to the size of the longest list of the points where the $x$ and $y$ coordinates increase as the index of the elements increase in the list. To avoid confusion, we assume that the $x$ and $y$ coordinates of the borders of the rectangles are different from the coordinates of the points. We refer to this problem as \textit{query-\textsf{LIS}}.

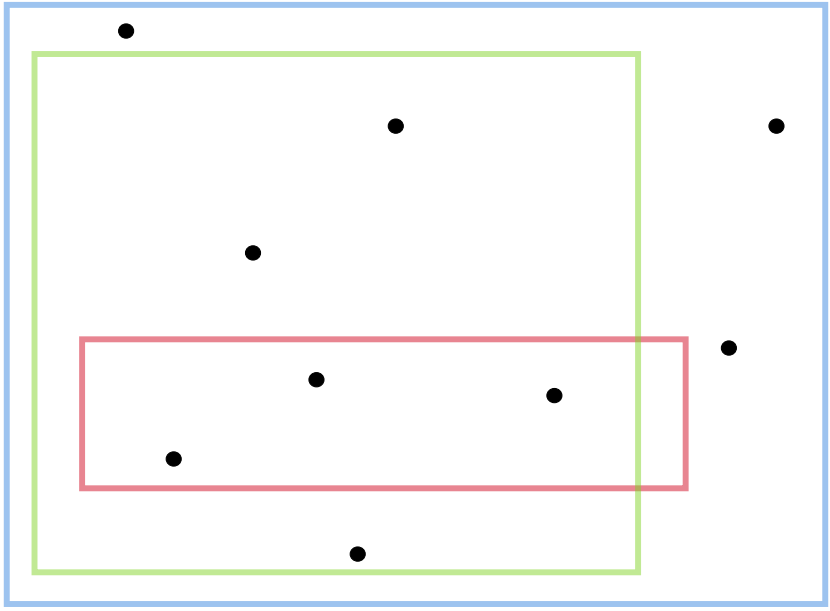
\begin{figure}

\centering

\begin{tikzpicture}[x=0.75pt,y=0.75pt,yscale=-0.8,xscale=0.8]

\draw  [fill={rgb, 255:red, 0; green, 0; blue, 0 }  ,fill opacity=1 ] (178,291.5) .. controls (178,289.01) and (180.13,287) .. (182.75,287) .. controls (185.37,287) and (187.5,289.01) .. (187.5,291.5) .. controls (187.5,293.99) and (185.37,296) .. (182.75,296) .. controls (180.13,296) and (178,293.99) .. (178,291.5) -- cycle ;
\draw  [fill={rgb, 255:red, 0; green, 0; blue, 0 }  ,fill opacity=1 ] (228,161.5) .. controls (228,159.01) and (230.13,157) .. (232.75,157) .. controls (235.37,157) and (237.5,159.01) .. (237.5,161.5) .. controls (237.5,163.99) and (235.37,166) .. (232.75,166) .. controls (230.13,166) and (228,163.99) .. (228,161.5) -- cycle ;
\draw  [fill={rgb, 255:red, 0; green, 0; blue, 0 }  ,fill opacity=1 ] (268,241.5) .. controls (268,239.01) and (270.13,237) .. (272.75,237) .. controls (275.37,237) and (277.5,239.01) .. (277.5,241.5) .. controls (277.5,243.99) and (275.37,246) .. (272.75,246) .. controls (270.13,246) and (268,243.99) .. (268,241.5) -- cycle ;
\draw  [fill={rgb, 255:red, 0; green, 0; blue, 0 }  ,fill opacity=1 ] (318,81.5) .. controls (318,79.01) and (320.13,77) .. (322.75,77) .. controls (325.37,77) and (327.5,79.01) .. (327.5,81.5) .. controls (327.5,83.99) and (325.37,86) .. (322.75,86) .. controls (320.13,86) and (318,83.99) .. (318,81.5) -- cycle ;
\draw  [fill={rgb, 255:red, 0; green, 0; blue, 0 }  ,fill opacity=1 ] (294,351.5) .. controls (294,349.01) and (296.13,347) .. (298.75,347) .. controls (301.37,347) and (303.5,349.01) .. (303.5,351.5) .. controls (303.5,353.99) and (301.37,356) .. (298.75,356) .. controls (296.13,356) and (294,353.99) .. (294,351.5) -- cycle ;
\draw  [fill={rgb, 255:red, 0; green, 0; blue, 0 }  ,fill opacity=1 ] (418,251.5) .. controls (418,249.01) and (420.13,247) .. (422.75,247) .. controls (425.37,247) and (427.5,249.01) .. (427.5,251.5) .. controls (427.5,253.99) and (425.37,256) .. (422.75,256) .. controls (420.13,256) and (418,253.99) .. (418,251.5) -- cycle ;
\draw  [fill={rgb, 255:red, 0; green, 0; blue, 0 }  ,fill opacity=1 ] (148,21.5) .. controls (148,19.01) and (150.13,17) .. (152.75,17) .. controls (155.37,17) and (157.5,19.01) .. (157.5,21.5) .. controls (157.5,23.99) and (155.37,26) .. (152.75,26) .. controls (150.13,26) and (148,23.99) .. (148,21.5) -- cycle ;
\draw  [fill={rgb, 255:red, 0; green, 0; blue, 0 }  ,fill opacity=1 ] (528,221.5) .. controls (528,219.01) and (530.13,217) .. (532.75,217) .. controls (535.37,217) and (537.5,219.01) .. (537.5,221.5) .. controls (537.5,223.99) and (535.37,226) .. (532.75,226) .. controls (530.13,226) and (528,223.99) .. (528,221.5) -- cycle ;
\draw  [fill={rgb, 255:red, 0; green, 0; blue, 0 }  ,fill opacity=1 ] (558,81.5) .. controls (558,79.01) and (560.13,77) .. (562.75,77) .. controls (565.37,77) and (567.5,79.01) .. (567.5,81.5) .. controls (567.5,83.99) and (565.37,86) .. (562.75,86) .. controls (560.13,86) and (558,83.99) .. (558,81.5) -- cycle ;
\draw  [color={rgb, 255:red, 208; green, 2; blue, 27 }  ,draw opacity=0.48 ][line width=2.25]  (125,216) -- (505.5,216) -- (505.5,310) -- (125,310) -- cycle ;
\draw  [color={rgb, 255:red, 126; green, 211; blue, 33 }  ,draw opacity=0.48 ][line width=2.25]  (95,36) -- (475.5,36) -- (475.5,363) -- (95,363) -- cycle ;
\draw  [color={rgb, 255:red, 74; green, 144; blue, 226 }  ,draw opacity=0.54 ][line width=2.25]  (77.5,5) -- (593.5,5) -- (593.5,383) -- (77.5,383) -- cycle ;

\end{tikzpicture}

\caption{The \textsf{LIS} of the points inside the red rectangle is equal to $2$. This value for the points inside the green rectangle is equal to $3$ and for the blue rectangle the \textsf{LIS} is equal to $4$.} \label{fig:rectangles}
\end{figure}

There is a straightforward solution for \textit{query-\textsf{LIS}} with preprocessing time $O(n^5 \log n)$ and query time $O(\log n)$. We first sort the $x$ and $y$ coordinates separately, and for every interval of the $x$ coordinates and $y$ coordinates we compute the \textsf{LIS} of the points within those intervals in time $O(n \log n)$. Since there are $O(n^2)$ intervals for $x$ coordinates and $O(n^2)$ intervals for $y$ coordinates, in total we compute the \textsf{LIS} of $O(n^4)$ subsets of the points and thus the overall runtime is $O(n^5 \log n)$. After this, whenever we are given a rectangle, we find the $x$ and $y$ intervals of the points covered by the rectangle and report the solution in time $O(\log n)$. This algorithm is very inefficient and can be easily improved in terms of preprocessing time, but for the sake of simplicity we do not discuss those improvements here.

Instead, we explain an application of extended grid packing that improves the preprocessing time of \textit{query-\textsf{LIS}} down to $\tilde O(n^{5/2})$ while keeping its query time polylogarithmic. This comes at the expense of losing a factor $1-\epsilon$ in the approximation. To this end, we define $\epsilon' = \epsilon /2$ and $\Delta = \lceil 1/\epsilon' \rceil$. Let $m = n^{5/8}$ be the table size in the extended grid packing problem and we construct the table in a way that the rows and columns evenly divide the points (in case the number of points is not divisible by $m$, we allow a difference of $+1/-1$ in the number of points covered by each row and column). After constructing the table, we define $2m$ subproblems each concerning the points that fall within a row or a column. That is, in each subproblem, we seek to solve \textit{query-\textsf{LIS}} for a subset of points that lie either in a row or in a column of the table. Thus, the size of each subproblem is $O(n^{3/8})$. We use the naive solution with preprocessing time of $O(n^5 \log n)$ for each subproblem. This amounts to a total preprocessing time of $$O(2m (n/m)^5 \log n) = O(n^5 \log n/m^4) = O(n^5 \log n / (n^{5/8})^4) = O(n^{5/2} \log n).$$

\begin{figure}

\centering

\includegraphics[width=11cm]{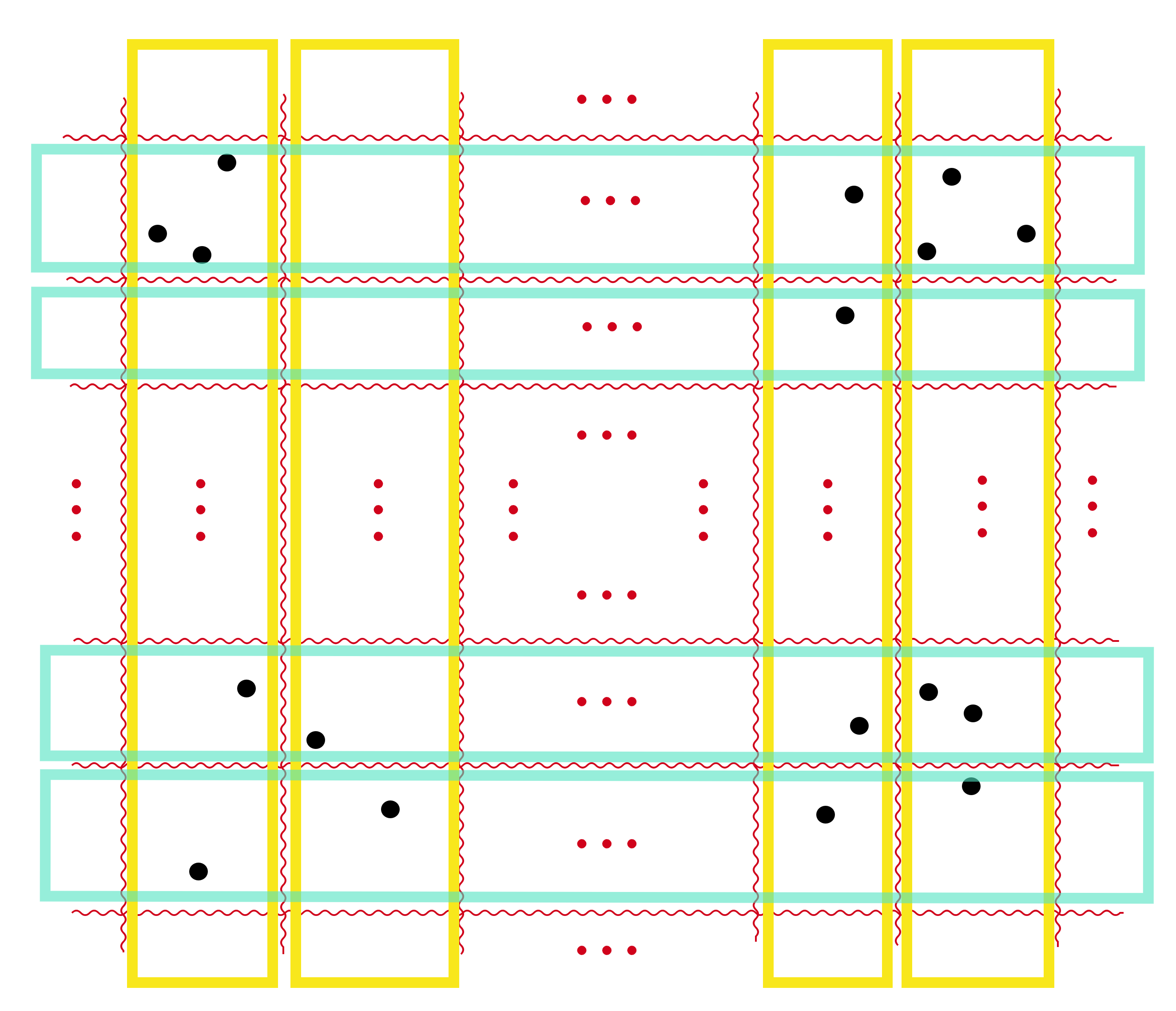}

\caption{The red snake-lines show the grid and yellow and blue boxes present the points included in each of the subproblems.} \label{fig:subproblems}
\end{figure}

Before answering the queries, we run another algorithm to preprocess an approximation to the \textsf{LIS} of some subsets of the points. More precisely, for every interval of rows and every interval of columns of the table, we preprocess the \textsf{LIS} of all the points that fall within the corresponding rectangle and store the computed values. In other words, for $(\binom{m}{2}+m)^2$ rectangles that can be formed by the rows and columns of the table, we approximate their \textsf{LIS}. In our algorithm, every time we fix the bottom-left corner of the rectangle, and for all possible top-right corners, we dynamically approximate the solution. Thus, in what follows, we fix a cell $c$ of the table and explain how we can approximate the value of \textsf{LIS} for all rectangles whose bottom-left corner is $c$  in time $\tilde O_{\epsilon}(m^2)$.

Our algorithm is based on a dynamic program. For each cell $c'$ which is not to the left of $c$ and is not below $c$, we define $f(c')$ as the  \textsf{LIS} of the points inside the rectangle formed by $c$ and $c'$ as opposite corners and define $g(c')$ that keeps an approximation of $f(c')$. We compute $g$ iteratively. Therefore, we start with $c' = c$ and move $c'$ to the right step by step. When we reach the end of the row, we start with the cell on top of $c$ and move it to the right until we reach the end of the row and then we start with two cell above $c$. We continue this procedure until we find a solution for the top-right corner as the last rectangle.

\begin{algorithm}[h!]
	\SetAlgoLined
	\KwResult{}
	construct an $m \times m$ grid whose rows and columns evenly divide the points\;
	Initialize the subproblems\;
	\For{$x_1 \leftarrow 1$ to $m$}{
		\For{$y_1 \leftarrow 1$ to $m$}{
			$c \leftarrow$ cell $(x_1,y_1)$\;
			\For{$x_2 \leftarrow x_1$ to $m$}{
				\For{$y_2 \leftarrow y_1$ to $m$}{
					$c' \leftarrow$ cell $(x_2,y_2)$\;
					\If {$x_1 = y_1$ or $x_2 = y_2$}{
						$g(c') \leftarrow $\textsf{LIS} of the rectangle $(x_1,y_1, x_2,y_2)$\;
					}\Else{
						$g(c') \leftarrow \max\{g(\text{cell}(x_2-1,y_2)),g(\text{cell}(x_2,y_2-1))\}$\;
						\For{$c'' \in X$}{
							\For{$S \in Y(c'')$}{
								$g(c') \leftarrow \max\{g(c'), g(c'')+\textsf{LIS}(S)\}$\;
							}
						}
					}
				}
			}
		}
	}
	\caption{Inefficient preprocessing}\label{alg:inefficient}
\end{algorithm}

In the DP, we use  extended grid packing to estimate the \textsf{LIS} for each rectangle. The base cases are when $c'$ and $c$ are either within the same row or within the same column in which case we can query the solution in time $O(\log n)$ in one of the subproblems. Otherwise we estimate the value of $f(c')$ based on an analysis that is inspired by extended grid packing. Suppose the extended grid packing is defined on our $m \times m$ table and in our solution for extended grid packing, we introduce all possible $\Delta$-multisegments. Moreover, assume that the numbers that the adversary puts on the table cells are the contributions of the corresponding cells to the \textsf{LIS} of the elements in the rectangle between $c$ and $c'$. By Lemma \ref{lemma:multi}, there is a set of non-conflicting $\Delta$-multisegments that gives us a score of at least $\frac{\Delta-1}{\Delta} f(c')$. Moreover, the \textsf{LIS} of the points covered by any multisegment is certainly an upper bound on the value of that multisegment. Let $X$ be the set of all cells in the rectangle between $c$ and $c'$ except for $c'$ and for any cell $c''$, let $Y(c'')$ be the set of all $\Delta$-multisegments that start from the cell to the top and right of $c''$ and end at cell $c'$. Moreover, for a multisegment $S$, we define $\textsf{LIS}(S)$ as the \textsf{LIS} of the points covered by $S$. For now, we introduce the following recursive formula for approximating $g(c')$. 
\begin{equation*}
g(c') := \max_{c'' \in X} \big[g(c'') + \max_{S \in Y(c'')} \textsf{LIS}(S)\big]
\end{equation*}
\begin{figure}[H]

\centering

\tikzset{every picture/.style={line width=0.75pt}} 

\begin{tikzpicture}[x=0.75pt,y=0.75pt,yscale=-0.8,xscale=0.8]

\draw   (241.5,15) -- (639.5,15) -- (639.5,411) -- (241.5,411) -- cycle ;
\draw    (289.5,16) -- (289.5,412) ;
\draw    (339.5,16) -- (339.5,410) ;
\draw    (389.5,16) -- (389.5,410) ;
\draw    (439.5,16) -- (439.5,410) ;
\draw    (489.5,16) -- (489.5,412) ;
\draw    (539.5,16) -- (539.5,411) ;
\draw    (589.5,16) -- (589.5,411) ;
\draw    (639.5,66) -- (242.5,66) ;
\draw    (640.5,116) -- (242.5,116) ;
\draw    (640.5,166) -- (242.5,166) ;
\draw    (639.5,216) -- (242.5,216) ;
\draw    (638.5,266) -- (242.5,266) ;
\draw    (639.5,316) -- (242.5,316) ;
\draw    (639.5,364) -- (242.5,364) ;
\draw  [color={rgb, 255:red, 0; green, 0; blue, 0 }  ,draw opacity=1 ][fill={rgb, 255:red, 245; green, 166; blue, 35 }  ,fill opacity=0.27 ] (289.5,266) -- (339.5,266) -- (339.5,316) -- (289.5,316) -- cycle ;
\draw  [color={rgb, 255:red, 0; green, 0; blue, 0 }  ,draw opacity=1 ][fill={rgb, 255:red, 245; green, 166; blue, 35 }  ,fill opacity=0.27 ] (539.5,15) -- (589.5,15) -- (589.5,66) -- (539.5,66) -- cycle ;
\draw  [draw opacity=0][fill={rgb, 255:red, 80; green, 227; blue, 194 }  ,fill opacity=0.37 ] (495.5,20) -- (584.5,20) -- (584.5,59) -- (495.5,59) -- cycle ;
\draw  [draw opacity=0][fill={rgb, 255:red, 80; green, 227; blue, 194 }  ,fill opacity=0.37 ] (445.5,121) -- (535.5,121) -- (535.5,160) -- (445.5,160) -- cycle ;
\draw  [draw opacity=0][fill={rgb, 255:red, 80; green, 227; blue, 194 }  ,fill opacity=0.37 ] (495.5,57) -- (534.5,57) -- (534.5,123) -- (495.5,123) -- cycle ;
\draw  [color={rgb, 255:red, 0; green, 0; blue, 0 }  ,draw opacity=1 ][fill={rgb, 255:red, 245; green, 166; blue, 35 }  ,fill opacity=0.27 ] (389.5,166) -- (439.5,166) -- (439.5,216) -- (389.5,216) -- cycle ;
\draw  [dash pattern={on 4.5pt off 4.5pt}] (300.5,171) -- (435.5,171) -- (435.5,308) -- (300.5,308) -- cycle ;
\draw  [dash pattern={on 4.5pt off 4.5pt}] (295.5,20) -- (584.5,20) -- (584.5,316) -- (295.5,316) -- cycle ;
\draw    (100.5,151) -- (303.52,173.67) ;
\draw [shift={(306.5,174)}, rotate = 186.37] [fill={rgb, 255:red, 0; green, 0; blue, 0 }  ][line width=0.08]  [draw opacity=0] (8.93,-4.29) -- (0,0) -- (8.93,4.29) -- cycle    ;
\draw    (36.5,125) -- (297.7,25.07) ;
\draw [shift={(300.5,24)}, rotate = 519.06] [fill={rgb, 255:red, 0; green, 0; blue, 0 }  ][line width=0.08]  [draw opacity=0] (8.93,-4.29) -- (0,0) -- (8.93,4.29) -- cycle    ;

\draw (308,281.4) node [anchor=north west][inner sep=0.75pt]  [font=\large]  {$c$};
\draw (556,29.4) node [anchor=north west][inner sep=0.75pt]  [font=\large]  {$c'$};
\draw (406,178.4) node [anchor=north west][inner sep=0.75pt]  [font=\large]  {$c''$};
\draw (506,80.4) node [anchor=north west][inner sep=0.75pt]  [font=\large]  {$S$};
\draw (16,126.4) node [anchor=north west][inner sep=0.75pt]    {$f( c') \ \geq \ f( c'') \ +\textsf{LIS}(S)$};

\end{tikzpicture}

\caption{The update process is explained in this figure} \label{fig:formula}
\end{figure}
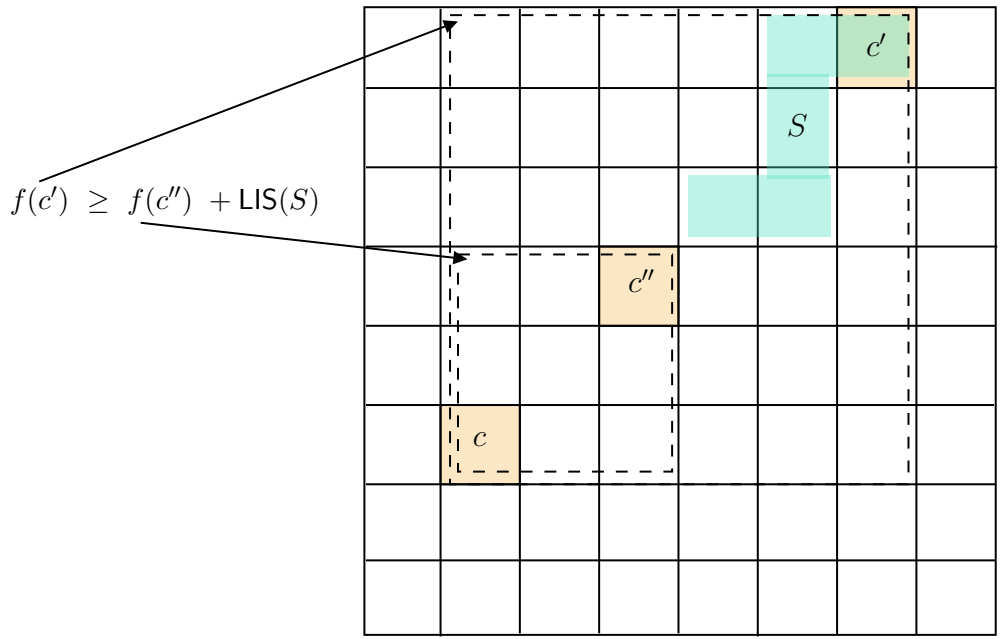
It follows from Lemma~\ref{lemma:multi} that if we formulate $g(c')$ as above, $\frac{\Delta-1}{\Delta} f(c') \leq g(c') \leq f(c')$ always holds. This gives us a formulation to recursively compute the value of $g$ for all cells. However, there are two issues to be resolved: (i) there are many possible cells $c''$ and multisegments $S$ that our algorithm needs to loop over and thus the runtime of the algorithm is not as desired. (ii) For a multisegment $S$, we do not have the value (or even an estimate) of $\textsf{LIS}(S)$. We show in the following that both issues can be resolved. 

We begin by considering the simpler case of $\Delta = 1$. Since in this case we are only concerned with segments, for each segment $s$ we can compute $\textsf{LIS}(s)$ by querying one of the subproblems (Recall that $s$ either completely fits in a row or in a column of the table). Thus, for each segment $s$, $\textsf{LIS}(s)$ is available in time $O(\log n)$. However, we still need to resolve the first issue since there may be up to $O(m)$ different segments that end at $c'$. The idea is to reduce the number of possible segments down to $O(\log n/\epsilon')$ by losing a factor of at most $1-\epsilon'$ in the approximation. The base cases are trivial as discussed previously, so for a cell $c'$ we begin by initializing $g(c')$ as the maximum value for its left and bottom cells. This ensures that our approximations are always monotone as should be. At a high-level, for every value $v \in \mathbb{D} = \{1,\lceil 1-\epsilon'\rceil , \lceil (1-\epsilon')^2 \rceil , \lceil (1-\epsilon')^3\rceil ,\ldots,n\}$ we only consider minimal (rightmost or topmost) segments that end at $c'$ and their \textsf{LIS} is at least $v$. There are at most $O(\log n/\epsilon')$ vertical and at most $O(\log n/\epsilon')$ such horizontal segments and each one can be found via a binary search in time $O(\log^2 n)$ (an $O(\log n)$ overhead for binary search and an $O(\log n)$ overhead for finding the \textsf{LIS} of a potential solution). Thus, if we only consider these segments, the runtime improves to $O(\log^3 n/\epsilon')$ for each pair of cells $(c,c'')$ which in total amounts to a runtime of $O(m^4 \log^3n /\epsilon')$.

\begin{figure*}[h!]

\centering

\tikzset{every picture/.style={line width=0.75pt}} 

\begin{tikzpicture}[x=0.75pt,y=0.75pt,yscale=-0.7,xscale=0.7]

\draw   (131.5,55) -- (529.5,55) -- (529.5,451) -- (131.5,451) -- cycle ;
\draw    (179.5,56) -- (179.5,452) ;
\draw    (229.5,56) -- (229.5,450) ;
\draw    (279.5,56) -- (279.5,450) ;
\draw    (329.5,56) -- (329.5,450) ;
\draw    (379.5,56) -- (379.5,452) ;
\draw    (429.5,56) -- (429.5,451) ;
\draw    (479.5,56) -- (479.5,451) ;
\draw    (529.5,106) -- (132.5,106) ;
\draw    (530.5,156) -- (132.5,156) ;
\draw    (530.5,206) -- (132.5,206) ;
\draw    (529.5,256) -- (132.5,256) ;
\draw    (528.5,306) -- (132.5,306) ;
\draw    (529.5,356) -- (132.5,356) ;
\draw    (529.5,404) -- (132.5,404) ;
\draw  [color={rgb, 255:red, 0; green, 0; blue, 0 }  ,draw opacity=1 ][fill={rgb, 255:red, 245; green, 166; blue, 35 }  ,fill opacity=0.27 ] (131.5,404) -- (179.5,404) -- (179.5,451) -- (131.5,451) -- cycle ;
\draw  [color={rgb, 255:red, 0; green, 0; blue, 0 }  ,draw opacity=1 ][fill={rgb, 255:red, 245; green, 166; blue, 35 }  ,fill opacity=0.27 ] (479.5,55) -- (529.5,55) -- (529.5,106) -- (479.5,106) -- cycle ;
\draw  [dash pattern={on 4.5pt off 4.5pt}]  (529.5,451) -- (529.5,495) ;
\draw  [dash pattern={on 4.5pt off 4.5pt}]  (479.5,451) -- (479.5,495) ;
\draw  [dash pattern={on 4.5pt off 4.5pt}]  (429.5,451) -- (429.5,495) ;
\draw  [dash pattern={on 4.5pt off 4.5pt}]  (379.5,451) -- (379.5,495) ;
\draw  [dash pattern={on 4.5pt off 4.5pt}]  (329.5,451) -- (329.5,495) ;
\draw  [dash pattern={on 4.5pt off 4.5pt}]  (279.5,451) -- (279.5,495) ;
\draw  [dash pattern={on 4.5pt off 4.5pt}]  (229.5,451) -- (229.5,495) ;
\draw  [dash pattern={on 4.5pt off 4.5pt}]  (179.5,451) -- (179.5,495) ;
\draw  [dash pattern={on 4.5pt off 4.5pt}]  (132.5,451) -- (132.5,495) ;
\draw  [dash pattern={on 4.5pt off 4.5pt}]  (529.5,11) -- (529.5,55) ;
\draw  [dash pattern={on 4.5pt off 4.5pt}]  (479.5,11) -- (479.5,55) ;
\draw  [dash pattern={on 4.5pt off 4.5pt}]  (429.5,11) -- (429.5,55) ;
\draw  [dash pattern={on 4.5pt off 4.5pt}]  (379.5,11) -- (379.5,55) ;
\draw  [dash pattern={on 4.5pt off 4.5pt}]  (329.5,11) -- (329.5,55) ;
\draw  [dash pattern={on 4.5pt off 4.5pt}]  (279.5,11) -- (279.5,55) ;
\draw  [dash pattern={on 4.5pt off 4.5pt}]  (229.5,11) -- (229.5,55) ;
\draw  [dash pattern={on 4.5pt off 4.5pt}]  (179.5,11) -- (179.5,55) ;
\draw  [dash pattern={on 4.5pt off 4.5pt}]  (132.5,11) -- (132.5,55) ;
\draw  [dash pattern={on 4.5pt off 4.5pt}]  (575.21,451.48) -- (531.21,451.52) ;
\draw  [dash pattern={on 4.5pt off 4.5pt}]  (575.16,403.48) -- (531.16,403.52) ;
\draw  [dash pattern={on 4.5pt off 4.5pt}]  (574.09,356.48) -- (530.09,356.52) ;
\draw  [dash pattern={on 4.5pt off 4.5pt}]  (575.04,305.48) -- (531.04,305.52) ;
\draw  [dash pattern={on 4.5pt off 4.5pt}]  (574.98,255.48) -- (530.98,255.52) ;
\draw  [dash pattern={on 4.5pt off 4.5pt}]  (574.93,205.48) -- (530.93,205.52) ;
\draw  [dash pattern={on 4.5pt off 4.5pt}]  (574.88,155.48) -- (530.88,155.52) ;
\draw  [dash pattern={on 4.5pt off 4.5pt}]  (574.82,106.48) -- (530.82,106.52) ;
\draw  [dash pattern={on 4.5pt off 4.5pt}]  (574.79,55.48) -- (530.79,55.52) ;
\draw  [dash pattern={on 4.5pt off 4.5pt}]  (131.21,451.48) -- (87.21,451.52) ;
\draw  [dash pattern={on 4.5pt off 4.5pt}]  (131.16,403.48) -- (87.16,403.52) ;
\draw  [dash pattern={on 4.5pt off 4.5pt}]  (130.09,356.48) -- (86.09,356.52) ;
\draw  [dash pattern={on 4.5pt off 4.5pt}]  (131.04,305.48) -- (87.04,305.52) ;
\draw  [dash pattern={on 4.5pt off 4.5pt}]  (130.98,255.48) -- (86.98,255.52) ;
\draw  [dash pattern={on 4.5pt off 4.5pt}]  (130.93,205.48) -- (86.93,205.52) ;
\draw  [dash pattern={on 4.5pt off 4.5pt}]  (130.88,155.48) -- (86.88,155.52) ;
\draw  [dash pattern={on 4.5pt off 4.5pt}]  (130.82,106.48) -- (86.82,106.52) ;
\draw  [dash pattern={on 4.5pt off 4.5pt}]  (130.79,55.48) -- (86.79,55.52) ;
\draw  [color={rgb, 255:red, 208; green, 2; blue, 27 }  ,draw opacity=1 ][fill={rgb, 255:red, 208; green, 2; blue, 27 }  ,fill opacity=0.2 ] (437.5,66) -- (525,66) -- (525,96) -- (437.5,96) -- cycle ;
\draw  [color={rgb, 255:red, 208; green, 2; blue, 27 }  ,draw opacity=1 ][fill={rgb, 255:red, 208; green, 2; blue, 27 }  ,fill opacity=0.2 ] (288.5,61) -- (525,61) -- (525,100) -- (288.5,100) -- cycle ;
\draw  [color={rgb, 255:red, 208; green, 2; blue, 27 }  ,draw opacity=1 ][fill={rgb, 255:red, 208; green, 2; blue, 27 }  ,fill opacity=0.2 ] (179.5,55) -- (530.5,55) -- (530.5,106) -- (179.5,106) -- cycle ;
\draw  [color={rgb, 255:red, 0; green, 0; blue, 0 }  ,draw opacity=1 ][fill={rgb, 255:red, 80; green, 227; blue, 194 }  ,fill opacity=0.2 ] (131.5,106) -- (179.5,106) -- (179.5,156) -- (131.5,156) -- cycle ;
\draw  [color={rgb, 255:red, 0; green, 0; blue, 0 }  ,draw opacity=1 ][fill={rgb, 255:red, 80; green, 227; blue, 194 }  ,fill opacity=0.2 ] (229.5,106) -- (279.5,106) -- (279.5,156) -- (229.5,156) -- cycle ;
\draw  [color={rgb, 255:red, 0; green, 0; blue, 0 }  ,draw opacity=1 ][fill={rgb, 255:red, 80; green, 227; blue, 194 }  ,fill opacity=0.2 ] (379.5,106) -- (429.5,106) -- (429.5,156) -- (379.5,156) -- cycle ;
\draw  [color={rgb, 255:red, 65; green, 117; blue, 5 }  ,draw opacity=1 ][fill={rgb, 255:red, 65; green, 117; blue, 5 }  ,fill opacity=0.2 ] (530.02,55.52) -- (528.99,404.08) -- (478.99,403.93) -- (480.02,55.37) -- cycle ;
\draw  [color={rgb, 255:red, 65; green, 117; blue, 5 }  ,draw opacity=1 ][fill={rgb, 255:red, 65; green, 117; blue, 5 }  ,fill opacity=0.2 ] (524.06,60.05) -- (523.49,252.08) -- (485.48,251.97) -- (486.05,59.94) -- cycle ;
\draw  [color={rgb, 255:red, 0; green, 0; blue, 0 }  ,draw opacity=1 ][fill={rgb, 255:red, 80; green, 227; blue, 194 }  ,fill opacity=0.2 ] (429.5,256) -- (479.5,256) -- (479.5,306) -- (429.5,306) -- cycle ;
\draw  [color={rgb, 255:red, 0; green, 0; blue, 0 }  ,draw opacity=1 ][fill={rgb, 255:red, 80; green, 227; blue, 194 }  ,fill opacity=0.2 ] (429.5,404) -- (479.5,404) -- (479.5,451) -- (429.5,451) -- cycle ;

\draw (148,416.4) node [anchor=north west][inner sep=0.75pt]  [font=\large]  {$c$};
\draw (496,66.4) node [anchor=north west][inner sep=0.75pt]  [font=\large]  {$c'$};

\end{tikzpicture}

\caption{An example is shown for computing the value of $g(c')$. Candidate horizontal segments are colored in red and candidate vertical segments are colored in green. When we use each of the candidate segments in our dynamic program, we update the solution based on the computed value for the corresponding blue cell. } \label{fig:dp-example}
\end{figure*}
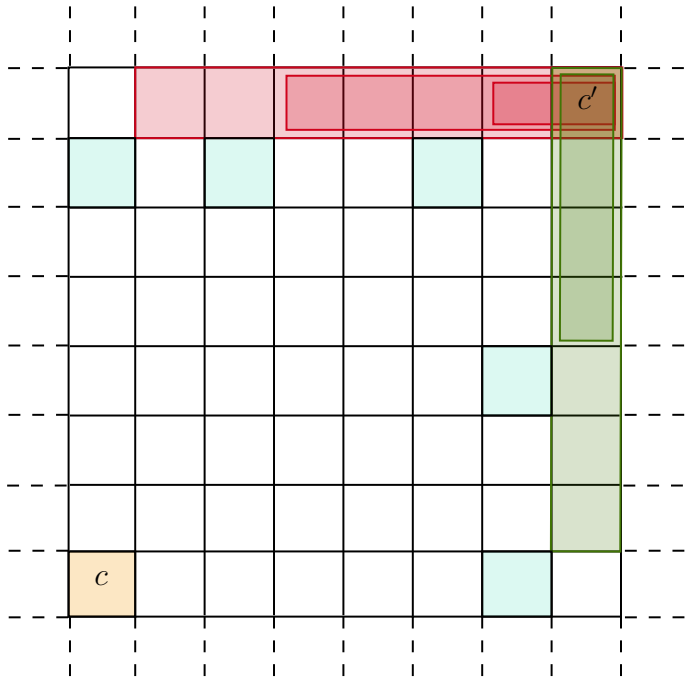

The correctness of this case is easy. Assume that the optimal segment that gives us the highest value for $g(c')$ in Algorithm~\ref{alg:inefficient} is a horizontal segment $s_1$, but we do not consider $s_1$ in the improved algorithm. Let $v$ be the largest value in sequence $\mathbb{D}$ which is not larger than the \textsf{LIS} of $s_1$. Let the corresponding horizontal segment for value $v$ in our algorithm be $s_2$. We define $c''_1, c''_2$ as the cells to the left and bottom of $s_1$ and $s_2$ respectively. It follows from our algorithm that $s_2$ is not larger than $s_1$ and thus the value we store for $g(c''_2)$ is at least as large as $g(c''_1)$. Moreover, we have $\textsf{LIS}(s''_2) \geq (1-\epsilon') \textsf{LIS}(s''_1)$ and in the update process we only lose an $\epsilon'$ fraction of the \textsf{LIS} of the last segment. The same analysis works for vertical segments as well. This implies that our estimation for $g$ loses a factor of at most $(1-\epsilon')$ throughout the DP. 

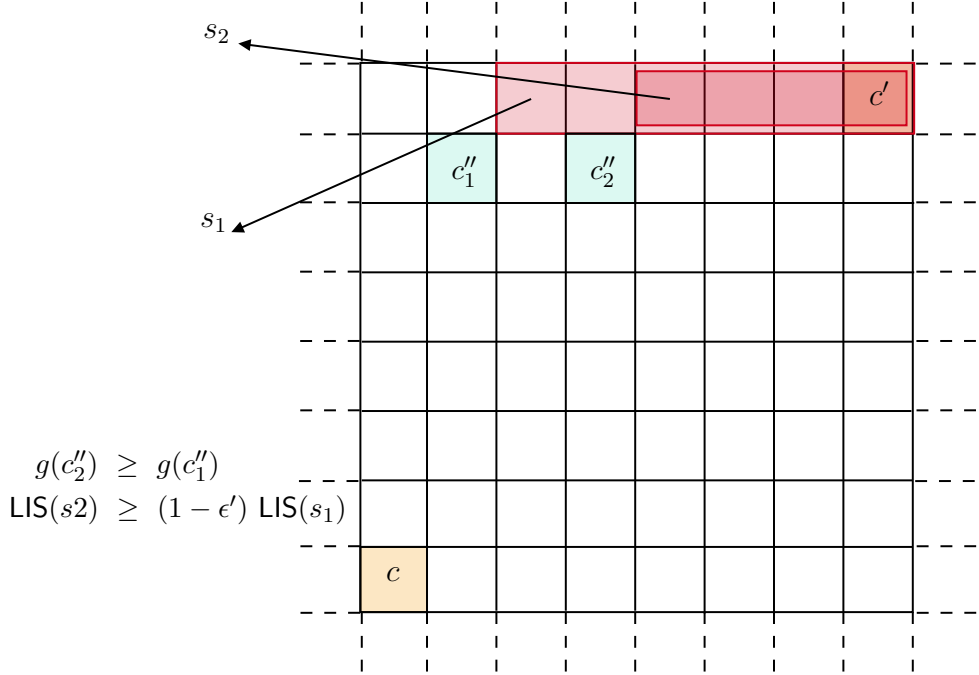
\begin{figure*}

\centering

\tikzset{every picture/.style={line width=0.75pt}} 

\begin{tikzpicture}[x=0.75pt,y=0.75pt,yscale=-0.7,xscale=0.7]

\draw   (211.5,55) -- (609.5,55) -- (609.5,451) -- (211.5,451) -- cycle ;
\draw    (259.5,56) -- (259.5,452) ;
\draw    (309.5,56) -- (309.5,450) ;
\draw    (359.5,56) -- (359.5,450) ;
\draw    (409.5,56) -- (409.5,450) ;
\draw    (459.5,56) -- (459.5,452) ;
\draw    (509.5,56) -- (509.5,451) ;
\draw    (559.5,56) -- (559.5,451) ;
\draw    (609.5,106) -- (212.5,106) ;
\draw    (610.5,156) -- (212.5,156) ;
\draw    (610.5,206) -- (212.5,206) ;
\draw    (609.5,256) -- (212.5,256) ;
\draw    (608.5,306) -- (212.5,306) ;
\draw    (609.5,356) -- (212.5,356) ;
\draw    (609.5,404) -- (212.5,404) ;
\draw  [color={rgb, 255:red, 0; green, 0; blue, 0 }  ,draw opacity=1 ][fill={rgb, 255:red, 245; green, 166; blue, 35 }  ,fill opacity=0.27 ] (211.5,404) -- (259.5,404) -- (259.5,451) -- (211.5,451) -- cycle ;
\draw  [color={rgb, 255:red, 0; green, 0; blue, 0 }  ,draw opacity=1 ][fill={rgb, 255:red, 245; green, 166; blue, 35 }  ,fill opacity=0.27 ] (559.5,55) -- (609.5,55) -- (609.5,106) -- (559.5,106) -- cycle ;
\draw  [dash pattern={on 4.5pt off 4.5pt}]  (609.5,451) -- (609.5,495) ;
\draw  [dash pattern={on 4.5pt off 4.5pt}]  (559.5,451) -- (559.5,495) ;
\draw  [dash pattern={on 4.5pt off 4.5pt}]  (509.5,451) -- (509.5,495) ;
\draw  [dash pattern={on 4.5pt off 4.5pt}]  (459.5,451) -- (459.5,495) ;
\draw  [dash pattern={on 4.5pt off 4.5pt}]  (409.5,451) -- (409.5,495) ;
\draw  [dash pattern={on 4.5pt off 4.5pt}]  (359.5,451) -- (359.5,495) ;
\draw  [dash pattern={on 4.5pt off 4.5pt}]  (309.5,451) -- (309.5,495) ;
\draw  [dash pattern={on 4.5pt off 4.5pt}]  (259.5,451) -- (259.5,495) ;
\draw  [dash pattern={on 4.5pt off 4.5pt}]  (212.5,451) -- (212.5,495) ;
\draw  [dash pattern={on 4.5pt off 4.5pt}]  (609.5,11) -- (609.5,55) ;
\draw  [dash pattern={on 4.5pt off 4.5pt}]  (559.5,11) -- (559.5,55) ;
\draw  [dash pattern={on 4.5pt off 4.5pt}]  (509.5,11) -- (509.5,55) ;
\draw  [dash pattern={on 4.5pt off 4.5pt}]  (459.5,11) -- (459.5,55) ;
\draw  [dash pattern={on 4.5pt off 4.5pt}]  (409.5,11) -- (409.5,55) ;
\draw  [dash pattern={on 4.5pt off 4.5pt}]  (359.5,11) -- (359.5,55) ;
\draw  [dash pattern={on 4.5pt off 4.5pt}]  (309.5,11) -- (309.5,55) ;
\draw  [dash pattern={on 4.5pt off 4.5pt}]  (259.5,11) -- (259.5,55) ;
\draw  [dash pattern={on 4.5pt off 4.5pt}]  (212.5,11) -- (212.5,55) ;
\draw  [dash pattern={on 4.5pt off 4.5pt}]  (655.21,451.48) -- (611.21,451.52) ;
\draw  [dash pattern={on 4.5pt off 4.5pt}]  (655.16,403.48) -- (611.16,403.52) ;
\draw  [dash pattern={on 4.5pt off 4.5pt}]  (654.09,356.48) -- (610.09,356.52) ;
\draw  [dash pattern={on 4.5pt off 4.5pt}]  (655.04,305.48) -- (611.04,305.52) ;
\draw  [dash pattern={on 4.5pt off 4.5pt}]  (654.98,255.48) -- (610.98,255.52) ;
\draw  [dash pattern={on 4.5pt off 4.5pt}]  (654.93,205.48) -- (610.93,205.52) ;
\draw  [dash pattern={on 4.5pt off 4.5pt}]  (654.88,155.48) -- (610.88,155.52) ;
\draw  [dash pattern={on 4.5pt off 4.5pt}]  (654.82,106.48) -- (610.82,106.52) ;
\draw  [dash pattern={on 4.5pt off 4.5pt}]  (654.79,55.48) -- (610.79,55.52) ;
\draw  [dash pattern={on 4.5pt off 4.5pt}]  (211.21,451.48) -- (167.21,451.52) ;
\draw  [dash pattern={on 4.5pt off 4.5pt}]  (211.16,403.48) -- (167.16,403.52) ;
\draw  [dash pattern={on 4.5pt off 4.5pt}]  (210.09,356.48) -- (166.09,356.52) ;
\draw  [dash pattern={on 4.5pt off 4.5pt}]  (211.04,305.48) -- (167.04,305.52) ;
\draw  [dash pattern={on 4.5pt off 4.5pt}]  (210.98,255.48) -- (166.98,255.52) ;
\draw  [dash pattern={on 4.5pt off 4.5pt}]  (210.93,205.48) -- (166.93,205.52) ;
\draw  [dash pattern={on 4.5pt off 4.5pt}]  (210.88,155.48) -- (166.88,155.52) ;
\draw  [dash pattern={on 4.5pt off 4.5pt}]  (210.82,106.48) -- (166.82,106.52) ;
\draw  [dash pattern={on 4.5pt off 4.5pt}]  (210.79,55.48) -- (166.79,55.52) ;
\draw  [color={rgb, 255:red, 208; green, 2; blue, 27 }  ,draw opacity=1 ][fill={rgb, 255:red, 208; green, 2; blue, 27 }  ,fill opacity=0.2 ] (410.5,61) -- (605,61) -- (605,100) -- (410.5,100) -- cycle ;
\draw  [color={rgb, 255:red, 208; green, 2; blue, 27 }  ,draw opacity=1 ][fill={rgb, 255:red, 208; green, 2; blue, 27 }  ,fill opacity=0.2 ] (309.5,55) -- (610.5,55) -- (610.5,106) -- (309.5,106) -- cycle ;
\draw  [color={rgb, 255:red, 0; green, 0; blue, 0 }  ,draw opacity=1 ][fill={rgb, 255:red, 80; green, 227; blue, 194 }  ,fill opacity=0.2 ] (259.5,106) -- (309.5,106) -- (309.5,156) -- (259.5,156) -- cycle ;
\draw  [color={rgb, 255:red, 0; green, 0; blue, 0 }  ,draw opacity=1 ][fill={rgb, 255:red, 80; green, 227; blue, 194 }  ,fill opacity=0.2 ] (359.5,106) -- (409.5,106) -- (409.5,156) -- (359.5,156) -- cycle ;
\draw    (334.5,81) -- (121.24,175.78) ;
\draw [shift={(118.5,177)}, rotate = 336.03999999999996] [fill={rgb, 255:red, 0; green, 0; blue, 0 }  ][line width=0.08]  [draw opacity=0] (8.93,-4.29) -- (0,0) -- (8.93,4.29) -- cycle    ;
\draw    (434.5,81) -- (126.48,41.38) ;
\draw [shift={(123.5,41)}, rotate = 367.33000000000004] [fill={rgb, 255:red, 0; green, 0; blue, 0 }  ][line width=0.08]  [draw opacity=0] (8.93,-4.29) -- (0,0) -- (8.93,4.29) -- cycle    ;

\draw (228,416.4) node [anchor=north west][inner sep=0.75pt]  [font=\large]  {$c$};
\draw (576,66.4) node [anchor=north west][inner sep=0.75pt]  [font=\large]  {$c'$};
\draw (275,120.4) node [anchor=north west][inner sep=0.75pt]    {$c''_{1}$};
\draw (375,120.4) node [anchor=north west][inner sep=0.75pt]    {$c''_{2}$};
\draw (94,161.4) node [anchor=north west][inner sep=0.75pt]  [font=\large]  {$s_{1}$};
\draw (96,25.4) node [anchor=north west][inner sep=0.75pt]  [font=\large]  {$s_{2}$};
\draw (-25,332.4) node [anchor=north west][inner sep=0.75pt]    {$g( c''_{2}) \ \geq \ g( c''_{1})$};
\draw (-44,364.4) node [anchor=north west][inner sep=0.75pt]    {$\textsf{LIS}( s2) \ \geq \ ( 1-\epsilon ') \ \textsf{LIS}( s_{1})$};

\end{tikzpicture}

\caption{If we use segment $s_2$ instead of segment $s_1$, we only lose a factor of $1-\epsilon'$ in the approximation.} \label{fig:dp-proof}
\end{figure*}

For $\Delta = 2$, we first consider all 1-multisegments (segments) as explained above and determine an initial value for $g(c')$. Then we proceed by a generalization of the above idea for $2$-multisegments. Let $\mathbb{D} = \{1,\lceil 1-\epsilon'\rceil , \lceil (1-\epsilon')^2 \rceil , \lceil (1-\epsilon')^3\rceil ,\ldots,n\}$. For any two values $v_1, v_2 \in \mathbb{D}$ we consider the following $2$-multisegment: We define a sweeping line which is initially equal to the right edge of cell $c'$. We move the sweeping line parallel to that edge to the left, until the \textsf{LIS} of the rectangle which is covered by the first sweeping line is at least $v_1$. Let $e$ be the cell that contains the sweeping line. Now, we define another sweeping line which is equal to the portion of the bottom edge of $e$ which is not covered by the first sweeping line. Starting from there, we move the second sweeping line parallel to that edge downward, until the \textsf{LIS} of the corresponding rectangle becomes at least $v_2$ (See Figure~\ref{fig:dp-example2} for a visualization of the two rectangles). We define 2-multisegment $S$ as the set of all the cells that intersect with either rectangles and estimate its \textsf{LIS} by  $v_1 + v_2$. Next, we determine cell $c''$ which is to the left and bottom of $S$ and update the value of $g(c')$ as  $g(c'):= \max\{g(c'), g(c'') + v_1 + v_2\}$. Similarly we repeat the same procedure starting with vertical and then horizontal rectangles. The total number of segments that we investigate via this algorithm is $2|D|^2 = O(\log^2 n/\epsilon'^2)$ and finding each segment takes time $O(\log^2 n)$. Thus, the update time becomes $O(\log^4 n/\epsilon'^2)$ for each pair of cells $(c,c')$ and $O(m^4 \log^4 n/\epsilon'^2)$ in total.

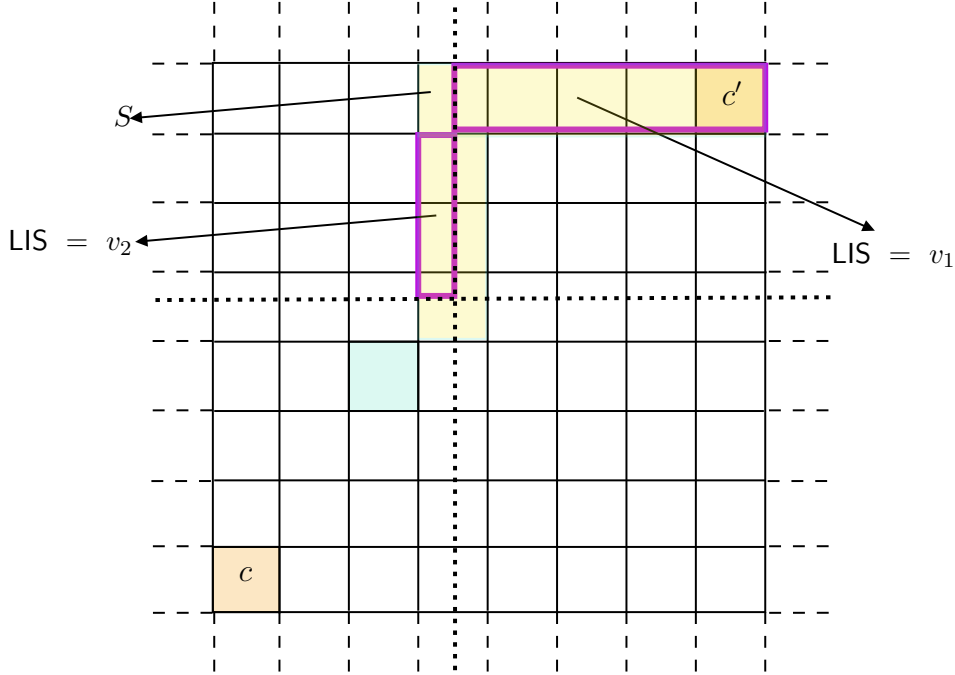
\begin{figure*}

\centering

\tikzset{every picture/.style={line width=0.75pt}} 

\begin{tikzpicture}[x=0.75pt,y=0.75pt,yscale=-0.7,xscale=0.7]

\draw   (131.5,55) -- (529.5,55) -- (529.5,451) -- (131.5,451) -- cycle ;
\draw    (179.5,56) -- (179.5,452) ;
\draw    (229.5,56) -- (229.5,450) ;
\draw    (279.5,56) -- (279.5,450) ;
\draw    (329.5,56) -- (329.5,450) ;
\draw    (379.5,56) -- (379.5,452) ;
\draw    (429.5,56) -- (429.5,451) ;
\draw    (479.5,56) -- (479.5,451) ;
\draw    (529.5,106) -- (132.5,106) ;
\draw    (530.5,156) -- (132.5,156) ;
\draw    (530.5,206) -- (132.5,206) ;
\draw    (529.5,256) -- (132.5,256) ;
\draw    (528.5,306) -- (132.5,306) ;
\draw    (529.5,356) -- (132.5,356) ;
\draw    (529.5,404) -- (132.5,404) ;
\draw  [color={rgb, 255:red, 0; green, 0; blue, 0 }  ,draw opacity=1 ][fill={rgb, 255:red, 245; green, 166; blue, 35 }  ,fill opacity=0.27 ] (131.5,404) -- (179.5,404) -- (179.5,451) -- (131.5,451) -- cycle ;
\draw  [color={rgb, 255:red, 0; green, 0; blue, 0 }  ,draw opacity=1 ][fill={rgb, 255:red, 245; green, 166; blue, 35 }  ,fill opacity=0.27 ] (479.5,55) -- (529.5,55) -- (529.5,106) -- (479.5,106) -- cycle ;
\draw  [dash pattern={on 4.5pt off 4.5pt}]  (529.5,451) -- (529.5,495) ;
\draw  [dash pattern={on 4.5pt off 4.5pt}]  (479.5,451) -- (479.5,495) ;
\draw  [dash pattern={on 4.5pt off 4.5pt}]  (429.5,451) -- (429.5,495) ;
\draw  [dash pattern={on 4.5pt off 4.5pt}]  (379.5,451) -- (379.5,495) ;
\draw  [dash pattern={on 4.5pt off 4.5pt}]  (329.5,451) -- (329.5,495) ;
\draw  [dash pattern={on 4.5pt off 4.5pt}]  (279.5,451) -- (279.5,495) ;
\draw  [dash pattern={on 4.5pt off 4.5pt}]  (229.5,451) -- (229.5,495) ;
\draw  [dash pattern={on 4.5pt off 4.5pt}]  (179.5,451) -- (179.5,495) ;
\draw  [dash pattern={on 4.5pt off 4.5pt}]  (132.5,451) -- (132.5,495) ;
\draw  [dash pattern={on 4.5pt off 4.5pt}]  (529.5,11) -- (529.5,55) ;
\draw  [dash pattern={on 4.5pt off 4.5pt}]  (479.5,11) -- (479.5,55) ;
\draw  [dash pattern={on 4.5pt off 4.5pt}]  (429.5,11) -- (429.5,55) ;
\draw  [dash pattern={on 4.5pt off 4.5pt}]  (379.5,11) -- (379.5,55) ;
\draw  [dash pattern={on 4.5pt off 4.5pt}]  (329.5,11) -- (329.5,55) ;
\draw  [dash pattern={on 4.5pt off 4.5pt}]  (279.5,11) -- (279.5,55) ;
\draw  [dash pattern={on 4.5pt off 4.5pt}]  (229.5,11) -- (229.5,55) ;
\draw  [dash pattern={on 4.5pt off 4.5pt}]  (179.5,11) -- (179.5,55) ;
\draw  [dash pattern={on 4.5pt off 4.5pt}]  (132.5,11) -- (132.5,55) ;
\draw  [dash pattern={on 4.5pt off 4.5pt}]  (575.21,451.48) -- (531.21,451.52) ;
\draw  [dash pattern={on 4.5pt off 4.5pt}]  (575.16,403.48) -- (531.16,403.52) ;
\draw  [dash pattern={on 4.5pt off 4.5pt}]  (574.09,356.48) -- (530.09,356.52) ;
\draw  [dash pattern={on 4.5pt off 4.5pt}]  (575.04,305.48) -- (531.04,305.52) ;
\draw  [dash pattern={on 4.5pt off 4.5pt}]  (574.98,255.48) -- (530.98,255.52) ;
\draw  [dash pattern={on 4.5pt off 4.5pt}]  (574.93,205.48) -- (530.93,205.52) ;
\draw  [dash pattern={on 4.5pt off 4.5pt}]  (574.88,155.48) -- (530.88,155.52) ;
\draw  [dash pattern={on 4.5pt off 4.5pt}]  (574.82,106.48) -- (530.82,106.52) ;
\draw  [dash pattern={on 4.5pt off 4.5pt}]  (574.79,55.48) -- (530.79,55.52) ;
\draw  [dash pattern={on 4.5pt off 4.5pt}]  (131.21,451.48) -- (87.21,451.52) ;
\draw  [dash pattern={on 4.5pt off 4.5pt}]  (131.16,403.48) -- (87.16,403.52) ;
\draw  [dash pattern={on 4.5pt off 4.5pt}]  (130.09,356.48) -- (86.09,356.52) ;
\draw  [dash pattern={on 4.5pt off 4.5pt}]  (131.04,305.48) -- (87.04,305.52) ;
\draw  [dash pattern={on 4.5pt off 4.5pt}]  (130.98,255.48) -- (86.98,255.52) ;
\draw  [dash pattern={on 4.5pt off 4.5pt}]  (130.93,205.48) -- (86.93,205.52) ;
\draw  [dash pattern={on 4.5pt off 4.5pt}]  (130.88,155.48) -- (86.88,155.52) ;
\draw  [dash pattern={on 4.5pt off 4.5pt}]  (130.82,106.48) -- (86.82,106.52) ;
\draw  [dash pattern={on 4.5pt off 4.5pt}]  (130.79,55.48) -- (86.79,55.52) ;
\draw  [color={rgb, 255:red, 189; green, 16; blue, 224 }  ,draw opacity=1 ][line width=2.25]  (305.5,57) -- (529.5,57) -- (529.5,103) -- (305.5,103) -- cycle ;
\draw  [color={rgb, 255:red, 189; green, 16; blue, 224 }  ,draw opacity=1 ][line width=2.25]  (279.5,107) -- (305.5,107) -- (305.5,223) -- (279.5,223) -- cycle ;
\draw    (394,80) -- (605.76,174.77) ;
\draw [shift={(608.5,176)}, rotate = 204.11] [fill={rgb, 255:red, 0; green, 0; blue, 0 }  ][line width=0.08]  [draw opacity=0] (8.93,-4.29) -- (0,0) -- (8.93,4.29) -- cycle    ;
\draw    (292.5,165) -- (78.49,183.74) ;
\draw [shift={(75.5,184)}, rotate = 355] [fill={rgb, 255:red, 0; green, 0; blue, 0 }  ][line width=0.08]  [draw opacity=0] (8.93,-4.29) -- (0,0) -- (8.93,4.29) -- cycle    ;
\draw  [color={rgb, 255:red, 80; green, 227; blue, 194 }  ,draw opacity=0.2 ][fill={rgb, 255:red, 248; green, 231; blue, 28 }  ,fill opacity=0.2 ] (279.5,56) -- (529.5,56) -- (529.5,108) -- (279.5,108) -- cycle ;
\draw  [color={rgb, 255:red, 80; green, 227; blue, 194 }  ,draw opacity=0.2 ][fill={rgb, 255:red, 248; green, 231; blue, 28 }  ,fill opacity=0.2 ] (279.5,106) -- (328.5,106) -- (328.5,254) -- (279.5,254) -- cycle ;
\draw    (288.5,76) -- (74.49,94.74) ;
\draw [shift={(71.5,95)}, rotate = 355] [fill={rgb, 255:red, 0; green, 0; blue, 0 }  ][line width=0.08]  [draw opacity=0] (8.93,-4.29) -- (0,0) -- (8.93,4.29) -- cycle    ;
\draw  [color={rgb, 255:red, 0; green, 0; blue, 0 }  ,draw opacity=1 ][fill={rgb, 255:red, 80; green, 227; blue, 194 }  ,fill opacity=0.2 ] (229.5,256) -- (279.5,256) -- (279.5,306) -- (229.5,306) -- cycle ;
\draw [line width=1.5]  [dash pattern={on 1.69pt off 2.76pt}]  (306,15) -- (306,496) ;
\draw [line width=1.5]  [dash pattern={on 1.69pt off 2.76pt}]  (576.5,224) -- (86.5,226) ;

\draw (148,416.4) node [anchor=north west][inner sep=0.75pt]  [font=\large]  {$c$};
\draw (496,66.4) node [anchor=north west][inner sep=0.75pt]  [font=\large]  {$c'$};
\draw (575,182.4) node [anchor=north west][inner sep=0.75pt]    {$\textsf{LIS}\ =\ v_{1}$};
\draw (-18,173.4) node [anchor=north west][inner sep=0.75pt]    {$\textsf{LIS}\ =\ v_{2}$};
\draw (58,84.4) node [anchor=north west][inner sep=0.75pt]    {$S$};

\end{tikzpicture}

\caption{This figure shows how a segment $S$ is made by two value $v_1$ and $v_2$. The solution is then updated based on the DP value for the blue cell plus $v_1+v_2$. Purple rectangles are made by sweeping lines that move to the left for the top rectangle and move down for the bottom rectangle.} \label{fig:dp-example2}
\end{figure*}
\input{figs/example-proof2}

We prove in the following that by doing so, we only lose a factor of at most $1-\epsilon'$ in the approximation. Similar to previous discussion, let $S'$ be the last multisegment that is used in Algorithm~\ref{alg:inefficient} to update the value of $g(c')$. If $S'$ fits in a row or column (meaning it is a segment), then the proof follows from our previous discussion. Thus, without loss of generality, we assume that $S'$ consists of a horizontal segment (on top) and a vertical segment (on the bottom). Let $v'_1$ be the contribution of the top horizontal part of $S'$ to the \textsf{LIS} of $S'$ and $v'_2$ be the contribution of the remainder of $S'$ to $\textsf{LIS}(S')$. We define $v_1$ and $v_2$ as the largest numbers in set $\mathbb{D}$ that are bounded by $v'_1$ and $v'_2$, respectively. In our algorithm, we consider pair of values $(v_1,v_2)$ for constructing a 2-multisegment in the following way: we first make a rectangle via a sweeping line that moves horizontally from $c'$ to the left. We do the same thing downward for $v_2$. Let the two rectangles be $R_1$ and $R_2$. If rectangle $R_1$ does not touch the top-left cell of $S'$, then $g(c')$ can be updated via a segment with parameter $v_1$ (see Figure~\ref{fig:example-proof2}). This is because without the horizontal part of $S'$, the \textsf{LIS} of the remainder of $S$ is at least $v'_2$. Thus, if we define $d'$ as the topmost part of $S'$ which is not in its horizontal part, then $g(d') \geq g(c') - v'_1$ holds. Thus, if we define $d$ to be the cell to the left and bottom of $R_1$ then $g(d) \geq g(d') \geq g(c') - v'_1$. Thus, we only lose a $1-\epsilon'$ factor if we update the value of $g(c')$ from a segment with parameter $v_1$.

If $R_1$ touches the entire horizontal part of $S$, then $R_2$ also falls within $S'$ and therefore, the 2-multisegment that our algorithm constructs will be entirely inside $S'$. Moreover, the \textsf{LIS} of the 2-multisegment that our algorithm makes is at least $v_1 + v_2 \geq (1-\epsilon') (v'_1+v'_2)$. Thus, we only lose a factor $1-\epsilon'$ in our estimation.

This approach is generalizable to larger $\Delta$. In order to update the value for $g(c')$, for any tuple of $\Delta$ elements $v_1, v_2, \ldots, v_\Delta \in \mathbb{D}$ we construct a $\Delta$-multisegment by making $\Delta$ rectangles and we estimate the value of the segment by the summation of the \textsf{LIS} of the rectangles. Next, we update the value of $g(c')$ from the constructed segment. This takes time $O(\Delta m^4 \log^{\Delta+2} n/\epsilon'^{\Delta})$ since making each $\Delta$-multisegment takes time $O(\Delta \log^2 n)$. The proof for the correctness of the approximation factor is similar to the proof explained above for $2$-multisegments.
\input{figs/dp-example3}

After a preprocessing time of $O(\Delta m^4 \log^{\Delta+2} n/\epsilon'^{\Delta})$, for every pair of table cells, we have a $(1-\epsilon')\frac{\Delta-1}{\Delta} \geq 1-\epsilon$ approximation of the \textsf{LIS} of the points included in the rectangle  between the two corners. We show that using this information, we will be able to answer each query in time $O(\log^{2\Delta+2}/\epsilon'^{2\Delta+2})$. The idea is similar to what we do above. Again, we consider solutions made by non-conflicting multisegments. The only difference is that we only take into account the points that lie inside the given query rectangle. Moreover, instead of $\Delta$-multisegments, we consider $(\Delta+1)$-multisegments here. Since we already have a desirable estimation for any rectangle that starts from a table cell and ends at another table cell, we only need to fix the bottom-left and top-right multisegment. By using the above idea, we can construct $O(\log^{\Delta+1} n/\epsilon'^{\Delta+1})$ different candidate $(\Delta+1)$-multisegments for bottom-left and  $O(\log^{\Delta+1} n/\epsilon'^{\Delta+1})$ different candidate $(\Delta+1)$-multisegments for top-right corner which amounts to $O(\log^{2\Delta+2} n/\epsilon'^{2\Delta+2})$ combinations. Thus, we can obtain an $\frac{\Delta-1}{\Delta}1-\epsilon' \geq 1-\epsilon$ approximation of the solution in time $O(\log^{2\Delta+2} n/\epsilon'^{2\Delta+2})$. Notice that once we fix the bottom-left and top-right multisegments, the solution for the area between them is already available.

The reason we use $(\Delta+1)$-multisegments instead of $\Delta$-multisegments is the following: consider the optimal non-conflicting $\Delta$-multisegments that provide the solution for a query. They may not necessarily cover the bottom-left and top-right corners of the query-rectangle. By using $(\Delta+1)$-multisegments, we can simply modify the optimal non-conflicting $\Delta$-multisegments to cover both corners as well.

\input{figs/dp-example4}

In Theorem \ref{theorem:first} we elaborate more on the above idea to show that we can approximate the query-\textsf{LIS} problem within a factor $1-\epsilon$ with near linear preprocessing time and polylogarithmic query time.

\begin{theorem}\label{theorem:first}
	For any $0 < \epsilon, \kappa$, \textit{query-\textsf{LIS}} can be approximated within a factor $1-\epsilon$ with preprocessing time $O((\log {1/\kappa}\log n/\epsilon)^{O((\log 1/\kappa)^2/\epsilon)}n^{1+\kappa})$ and query time $O((\log {1/\kappa}\log n/\epsilon)^{O((\log 1/\kappa)^2/\epsilon)})$.
\end{theorem}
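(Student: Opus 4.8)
\textbf{Proof plan for Theorem~\ref{theorem:first}.}
The idea is to take the informal $\tilde O(n^{5/2})$-preprocessing construction sketched above for \textit{query-\textsf{LIS}} and recursively apply the same grid-packing-with-multisegments scheme inside each row/column subproblem, so that the polynomial blow-up $n^{5/2}$ is replaced by $n^{1+\kappa}$ at the cost of an extra $\log$-factor per level of recursion (hence the $(\log 1/\kappa)^2$-many levels and the $(\log n)^{O((\log 1/\kappa)^2/\epsilon)}$ overhead). First I would set $\epsilon' = \epsilon/2$, $\Delta = \lceil 1/\epsilon' \rceil$, and fix a recursion depth $L = \Theta((\log 1/\kappa)^2)$; at the top level I build an $m\times m$ grid with $m = n^{\kappa/L}$ whose rows and columns evenly divide the points, creating $2m$ subproblems each of size $O(n/m)$. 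Rather than solving each subproblem with the naive $O(N^5\log N)$ procedure, I solve it recursively by the same method, stopping the recursion once the subproblem size drops below a constant (or below $n^{o(1)}$), where a trivial algorithm is used. The key recurrence is: if $P(N)$ denotes the preprocessing time on $N$ points, then $P(N) = 2m\cdot P(N/m) + (\text{combination cost})$, and choosing $m = N^{\kappa/L}$ at every level makes $\prod m = N^{\kappa}$ after $L$ levels, giving $P(n) = O(n^{1+\kappa}\cdot(\text{per-level overhead})^{L})$.

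The second ingredient is the dynamic-programming combination step, which is exactly the one described in the excerpt: having recursively obtained $(1-\epsilon')$-approximate query-\textsf{LIS} oracles for all $2m$ row/column subproblems, I run, for each choice of bottom-left grid cell $c$, the DP over all top-right cells $c'$ using the discretized set $\mathbb{D} = \{1,\lceil 1-\epsilon'\rceil,\lceil(1-\epsilon')^2\rceil,\ldots,n\}$ of size $O(\log n/\epsilon')$ to enumerate $\Delta$-tuples $(v_1,\ldots,v_\Delta)$, constructing a $\Delta$-multisegment per tuple by successive sweeping lines and estimating its \textsf{LIS} by $\sum v_i$. By Lemma~\ref{lemma:multi} this yields a $(1-\epsilon')\frac{\Delta-1}{\Delta} \ge 1-\epsilon$ approximation of $f(c')$, and the correctness of the discretization (that restricting to $\mathbb{D}$-valued sweeps loses only a further $1-\epsilon'$ factor) follows by the monotonicity argument given for the $\Delta=2$ case, generalized to arbitrary $\Delta$ by induction on the number of constituent segments of the optimal last multisegment $S'$: either the first rectangle already fails to cover the first segment of $S'$ (in which case a shorter multisegment suffices and we recurse on the analysis), or it does, in which case the whole constructed multisegment sits inside $S'$ and its estimate is within $1-\epsilon'$. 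The per-pair cost of this DP is $O(\Delta\, (\log n/\epsilon')^{\Delta}\log^2 n)$, since each $\Delta$-multisegment is built with $\Delta$ binary searches, each requiring an $O(\log n)$-time query into a subproblem oracle; summing over the $O(m^4)$ pairs $(c,c')$ gives the combination cost at a single node, and the recursion multiplies an extra $(\log n/\epsilon')^{O(\Delta)}$ factor per level.

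For the query algorithm: given a rectangle, I use $(\Delta+1)$-multisegments (one extra segment at each end so that the optimal non-conflicting $\Delta$-multisegment solution can be massaged to cover the two corners of the rectangle), enumerate $O((\log n/\epsilon')^{\Delta+1})$ candidate bottom-left $(\Delta+1)$-multisegments and $O((\log n/\epsilon')^{\Delta+1})$ candidate top-right ones, and for each of the $O((\log n/\epsilon')^{2\Delta+2})$ combinations look up the already-stored $(1-\epsilon')$-approximate value for the central rectangle spanned by two grid cells; recursing into subproblems to evaluate the multisegments contributes another factor per level, yielding query time $(\log n/\epsilon')^{O(\Delta L)} = (\log(1/\kappa)\log n/\epsilon)^{O((\log 1/\kappa)^2/\epsilon)}$ as claimed. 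Finally I would verify that the approximation factor does not degrade with depth: each level multiplies the guarantee by $(1-\epsilon')$ from discretization, but the $\frac{\Delta-1}{\Delta}$ multisegment loss is incurred only once overall (the recursive oracles already return $(1-\epsilon')$-approximations of \emph{true} sub-rectangle \textsf{LIS}, and the $\Delta$-multisegment packing at the top level is what converts table-score into our score), so with $\Delta = \lceil 2/\epsilon\rceil$ and a slightly smaller $\epsilon'$ (say $\epsilon/(4L)$ absorbed into constants) the product of all loss factors stays above $1-\epsilon$. The main obstacle I anticipate is precisely this accounting across the recursion: making sure that "the central-rectangle value is already available" remains true at every level (i.e.\ that the grid cells of an inner instance refine consistently, and that a multisegment in an outer instance decomposes into a bounded number of sub-rectangle queries plus row/column-aligned pieces in each child), and bounding the error so that $L$-fold composition of $(1-\Theta(\epsilon/L))$ factors is still $1-\epsilon$ while simultaneously keeping $\Delta$ (hence the exponent of $\log n$) a function of $\epsilon$ and $\log 1/\kappa$ only. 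Getting the exact exponent $O((\log 1/\kappa)^2/\epsilon)$ — rather than, say, $O((\log 1/\kappa)/\epsilon)$ — will come from the fact that balancing $m^4$ against $(N/m)^{1+\kappa}$ forces $L = \Theta(\log(1/\kappa))$ halvings of the exponent-gap together with a further $\Theta(\log 1/\kappa)$ factor from the depth of the size recursion, and I would carry out that calculation carefully at the end.
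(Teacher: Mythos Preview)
Your high-level strategy—recursively applying the grid-plus-multisegment construction, solving each row/column subproblem with the algorithm produced at the previous level—is exactly the paper's approach. However, two concrete pieces of your parameterization would not yield the stated bound.

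First, the choice $m = N^{\kappa/L}$ is wrong. The paper does not fix a single exponent for $m$; instead, letting $q_i$ be the preprocessing-time exponent of the level-$i$ algorithm $\mathcal A_i$ (with $q_0=5$), it takes $m = n^{q_i/(q_i+3)}$ when building $\mathcal A_{i+1}$, which is precisely the value that balances the DP cost $m^4$ against the recursive cost $m\cdot(n/m)^{q_i}$. This yields the recurrence $q_{i+1}=4q_i/(q_i+3)$, and since $(q_{i+1}-1)/(q_i-1)=3/(q_i+3)\le 3/4$, one reaches $q_k\le 1+\kappa$ after only $k=\Theta(\log 1/\kappa)$ levels—not $\Theta((\log 1/\kappa)^2)$. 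With your $m=N^{\kappa/L}$ the recursion makes almost no progress: after $L$ levels the leaf size is $N^{(1-\kappa/L)^L}\approx N^{1-\kappa}$, so the naive $O(N_{\mathrm{leaf}}^5)$ base case costs $\Theta(N^{5-4\kappa})$, nowhere near $N^{1+\kappa}$.

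Second, your assertion that the $(\Delta-1)/\Delta$ loss from Lemma~\ref{lemma:multi} is incurred ``only once overall'' is incorrect: at every level the DP combining step invokes that lemma to relate $g(c')$ to $f(c')$, so the loss compounds multiplicatively across levels. The paper handles this by setting $\epsilon'=\epsilon/(2k)$ and $\Delta=\lceil 1/\epsilon'\rceil$, giving total approximation $(1-\epsilon')^{2k}\ge 1-\epsilon$. This is where the exponent $O((\log 1/\kappa)^2/\epsilon)$ actually comes from: $k=O(\log 1/\kappa)$ levels, each contributing a query/preprocessing overhead of $(\log n/\epsilon')^{O(\Delta)}$ with $\Delta=O(k/\epsilon)$, for a total exponent of $O(k\cdot\Delta)=O(k^2/\epsilon)$. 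It is not that the recursion depth itself is $(\log 1/\kappa)^2$.
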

\begin{proof}
	We discussed how to use extended grid packing to improve the naive algorithm with $\tilde O(n^5)$ preprocessing time to an algorithm with $\tilde O(n^{5/2})$ preprocessing time. The drawback is adding a $1-\epsilon$ multiplicative factor to the approximation guarantee and a polylogarithmic multiplicative factor to the runtime of answering each query. We use the same idea to further improve the preprocessing time down to $\ogood(n^{1+\kappa})$. To this end, we define $k = \lceil 3\log 1/\kappa \rceil + 5$ and $\epsilon' = \epsilon /(2k)$.
	
	\begin{table}[!htbp]
	\centering
	\begin{tabular}{|l|c|c|c|c|}
		\hline
		algorithm & preprocessing time & $m$ &$q_i$ & $\frac{q_{i+1}-1}{q_i-1}$\\
		\hline
		$\mathcal{A}_0$ & $\tilde O(n^5)$  & -  & 5 & $\simeq$ 0.375\\
		\hline
		$\mathcal{A}_1$  & $\tilde O(n^{5/2})$ & $n^{5/8}$ & 2.5 & $\simeq$ 0.5454\\
		\hline
		$\mathcal{A}_2$ & $\tilde O(n^{20/11})$ & $n^{5/11}$ & $\simeq$ 1.818 & $\simeq$0.6226\\
		\hline
		$\mathcal{A}_3$ & $\tilde O(n^{80/53})$ & $n^{20/53}$ & $\simeq$ 1.509 & $\simeq$0.6652\\
		\hline
		$\mathcal{A}_4$ & $\tilde O(n^{320/239})$ & $n^{80/239}$ & $\simeq$ 1.338 & $\simeq$ 0.6914 \\
		\hline
		$\mathcal{A}_5$ & $\tilde O(n^{1280/1037})$ & $n^{320/1037}$ & $\simeq$ 1.234 & $\simeq$ 0.7084 \\
		\hline
		$\mathcal{A}_6$ & $\tilde O(n^{5120/4391})$ & $n^{1280/4391}$ & $\simeq$ 1.166 & $\simeq$ 0.7201\\
		\hline
		$\mathcal{A}_7$ & $\tilde O(n^{20480/18293})$ & $n^{5120/18293}$ & $\simeq$ 1.119 & $\simeq$ 0.7282 \\
		\hline
		$\mathcal{A}_8$ & $\tilde O(n^{81920/75359})$ & $n^{20480/75359}$ & $\simeq$ 1.087 & $\simeq$  0.7340 \\
		\hline
		$\mathcal{A}_9$ & $\tilde O(n^{327680/307997})$ & $n^{81920/307997}$ & $\simeq$ 1.063 & $\simeq$ 0.7382 \\
		\hline
		$\mathcal{A}_{10}$ & $\tilde O(n^{1310720/1251671})$ & $n^{327680/1251671}$ & $\simeq$ 1.047 & $\simeq$ 0.7412\\
		\hline
	\end{tabular}
	\caption{$q_i$ is the exponent of $n$ in the preprocessing time of Algorithm $\mathcal{A}_i$.}\label{table:static}
\end{table}
	
	We denote the naive algorithm (with preprocessing time $O(n^5 \log n)$) by $\mathcal{A}_0$. Each time, we use a similar technique as explained above to obtain an improved algorithm $\mathcal{A}_i$ from $\mathcal{A}_{i-1}$. The construction of $\mathcal{A}_1$ from $\mathcal{A}_0$ is already discussed. The only parameter of the construction that changes for new algorithms is the value of $m$. To be precise, let $q_i$ be the exponent of $n$ in the preprocessing time of Algorithm $\mathcal{A}_i$ and $r_i$ be $\log m/\log n$ when we use $\mathcal{A}_{i-1}$ to construct $\mathcal{A}_{i}$. As explained, we have $q_0 = 5$. For each $0 \leq i$, we set $r_{i+1} = q_i / (q_i +3)$ and $q_{i+1} = r_{i+1} + q_i (1-r_{i+1})$. There are two steps in the preprocessing phase of Algorithm $\mathcal{A}_{i+1}$. In the first step, we construct a grid and make $2m$ subproblems and for each subproblem we use $\mathcal{A}_i$. Thus, the runtime of the first step is $$\ogood(m (n/m)^{q_i}) =  \ogood(n^{r_{i+1}} n^{q_i} / (n^{r_{i+1}})^{q_i}) = \ogood(n^{r_{i+1} + q_i (1-r_{i+1})}).$$ The second step of preprocessing is constructing an approximate solution for $O(m^4)$ subproblems each in polylogarithmic time (the exponent of the log factor may depend on $1/\epsilon$ or $1/\kappa$). This takes time $\ogood(n^{4r_{i+1}})$. Since $r_{i+1} = q_i / (q_i+3)$ then we have 
	\begin{equation*}
	\begin{split}
		r_{i+1} + q_i (1-r_{i+1}) &= q_i / (q_i+3) + q_i (1-(q_i / (q_i+3))) \\ 
										   &= q_i (1/(q_i + 3) + (1-(q_i / (q_i+3))))\\
										   &= q_i (1/(q_i + 3) + (3 / (q_i+3)))\\
										   &= q_i (4/(q_i + 3))\\
										   &= 4q_i /(q_i + 3)\\
										   &= 4r_{i+1}
	\end{split}
	\end{equation*}
	 and therefore the preprocessing time of Algorithm $\mathcal{A}_{i+1}$ would be bounded by $\ogood(n^{r_{i+1} + q_i (1-r_{i+1})}) =  \ogood(n^{q_{i+1}})$.
	
	Table~\ref{table:static} presents the runtime of each algorithm along with parameters that we use for its construction.
	
	By this construction, we always have 
	\begin{equation*}
	\begin{split}
	\frac{q_{i+1}-1}{q_i-1} &= \frac{r_{i+1} + q_i (1-r_{i+1})-1}{q_i-1}\\
							    &= \frac{ q_i / (q_i +3) + q_i (1-q_i / (q_i +3))-1}{q_i-1}\\
							    &= \frac{ q_i / (q_i +3) + q_i (3 / (q_i +3))-1}{q_i-1}\\
							    &= \frac{4q_i / (q_i +3)-1}{q_i-1}\\
							    &= \frac{(3q_i-3) / (q_i +3)}{q_i-1}\\
							    &= \frac{3}{q_i+3}\\
	 							& \leq 3/4 \label{inequality:last}.
	\end{split}
	\end{equation*}
	where the last inequality follows from the fact that $1 < q_i$. Therefore after $k$ levels of recursive calls, the exponent of $n$ in the preprocessing step of Algorithm $A_k$ is smaller than $1+\kappa$. This implies that the overall preprocessing time is bounded by $\ogood(n^{1+\kappa})$. Moreover, by defining $\Delta = \lceil 1/\epsilon'\rceil$, in each level of recursion we lose a factor of at most $(1-\epsilon')^2$ in the approximation and thus in total the approximation factor is at least $(1-\epsilon')^{2k} \geq 1-\epsilon$. Query time for $\mathcal{A}_0$ is $O(\log n)$ and in each level of recursion, there is a multiplicative $O((\log n/\epsilon')^{2k/\epsilon+2k})$ overhead and thus the query time as well as the preprocessing time is multiplied by $O((\log {1/\kappa}\log n/\epsilon)^{O((\log 1/\kappa)^2/\epsilon)})$ for $\mathcal{A}_k$.
\end{proof}

\subsection{A Dynamic Solution for \textsf{LIS}}\label{sec:realdynamic}
We present a dynamic algorithm for \textsf{LIS} that approximates the solution within a $1-\epsilon$ multiplicative factor. This is achieved by using extended grid packing. For the purpose of our technique, we generalize the problem in the following way: In the original dynamic problem, our goal is to maintain an approximation to the size of the longest increasing subsequence. Thus after each operation, one needs to update the solution size and therefore, we can simply assume that after each update the size of the \textsf{LIS} is desired. In our generalization, we still consider the same set of operations. However, our algorithm is required to answer a stronger type of queries: Assuming that we describe the sequence as points on the plane (as discussed earlier), for each query, we provide a rectangle with edges parallel to the axis lines and the algorithm should give us an estimate for the \textsf{LIS} of the points inside the rectangle. This is similar to the queries that we consider in the query-\textsf{LIS} problem.

Since in the previous setting, we only asked the size of the longest increasing subsequence, one could argue that after each update, there is at most one question to be answered. In our new setting, we may need to answer multiple queries after each update and thus it makes sense to separate the concept of query from operations. That is, we may be able to answer each query faster than the time which our algorithm requires for updating the sequence. Thus in our setting, we define the update time to be the time our algorithm needs to update the sequence after an operation arrives and the query time to be the time our algorithm needs to answer a query. We show in the following that we can answer each query faster than the update time and this is an important part of the analysis. 
 
We show that for any $0 < \epsilon,\kappa$ our algorithm is able to provide a $1-\epsilon$ approximation of the solution with update time $\ogood(n^{\kappa})$ and query time $\ogood(1)$. Recall that $\ogood$ hides all the factors that depend only on $\epsilon$, $\kappa$, or $\log n$. An exact dynamic solution for \textsf{LIS} with update time $O(n^5 \log n)$ and query time $O(\log n)$ follows from the straightforward algorithm discussed in Section~\ref{sec:example}. In other words, if after every operation, we compute the solution for all possible rectangles that can be given to the algorithm as a query, then we can answer each query in time $O(\log n)$. Moreover, since there are at most $O(n^4)$ rectangles that cover distinct sets of points, we can update the solution in time $O(n^5 \log n)$ each time an operation arrives. Notice that after answering a query, we do not require to update the solution. To improve the update time, we use the notion of block-based algorithms~\cite{our-stoc-paper}. Roughly speaking, if we design a block-based algorithm whose bound over the update time is amortized, we can then turn the block-based algorithm into an equivalent dynamic algorithm with a worst-case bound on its update time.

For a block-based algorithm, we start with an initial sequence of size $n$. Our algorithm is then allowed to preprocess the input in time $f(n)$. After the preprocessing step, our block-based algorithm is responsible for $g(n)$ operations and all the queries that come prior to the last operation. After $g(n)$ operations, our algorithm terminates. Mitzenmacher and Seddighin~\cite{our-stoc-paper} prove that if the block-based algorithm performs each operations in time $h(n)$ in the worst case, then it can be transformed into a dynamic algorithm whose worst-case update time is $O(f(n)/g(n)+h(n))$. The approximation factor of the dynamic algorithm is exactly the same as the block-based algorithm and also the query time remains asymptotically the same. Since Mitzenmacher and Seddighin~\cite{our-stoc-paper} use a slightly different setting in which queries and operations are treated equivalently, we bring a formal proof for our discussion in Appendix~\ref{appendix:block-based}.

To improve the naive algorithm, we design a block-based algorithm with preprocessing time $f(n) = O(n^{19/7} \log n)$, $g(n) = n^{3/7}$, and $h(n) = \tilde O_{\epsilon}(n^{16/7})$. In the preprocessing step, we construct a grid of size $m \times m$ where $m = n^{4/7}$. Rows and columns of our grid evenly divide the points and thus each row or column covers $O(n^{3/7})$ points. Similar to what we discussed in Section~\ref{sec:example}, we make $2m$ subproblems for dynamic \textsf{LIS} where each subproblem is concerned with the subset of points that is covered by the corresponding row or the corresponding column of the grid. The construction of the grid only requires to sort the points based on $x$ and $y$ coordinates and thus we can do it in time $O(n \log n)$. Moreover, each subproblem includes $O(n^{3/7})$ points and therefore precomputing the solutions for all possible rectangles within each subproblem takes time $O(n^{15/7} \log n)$. Since we run this procedure for all $2m$ subproblems, processing time is $O(n^{19/7} \log n)$.

\begin{figure}

\centering

\includegraphics[width=10cm]{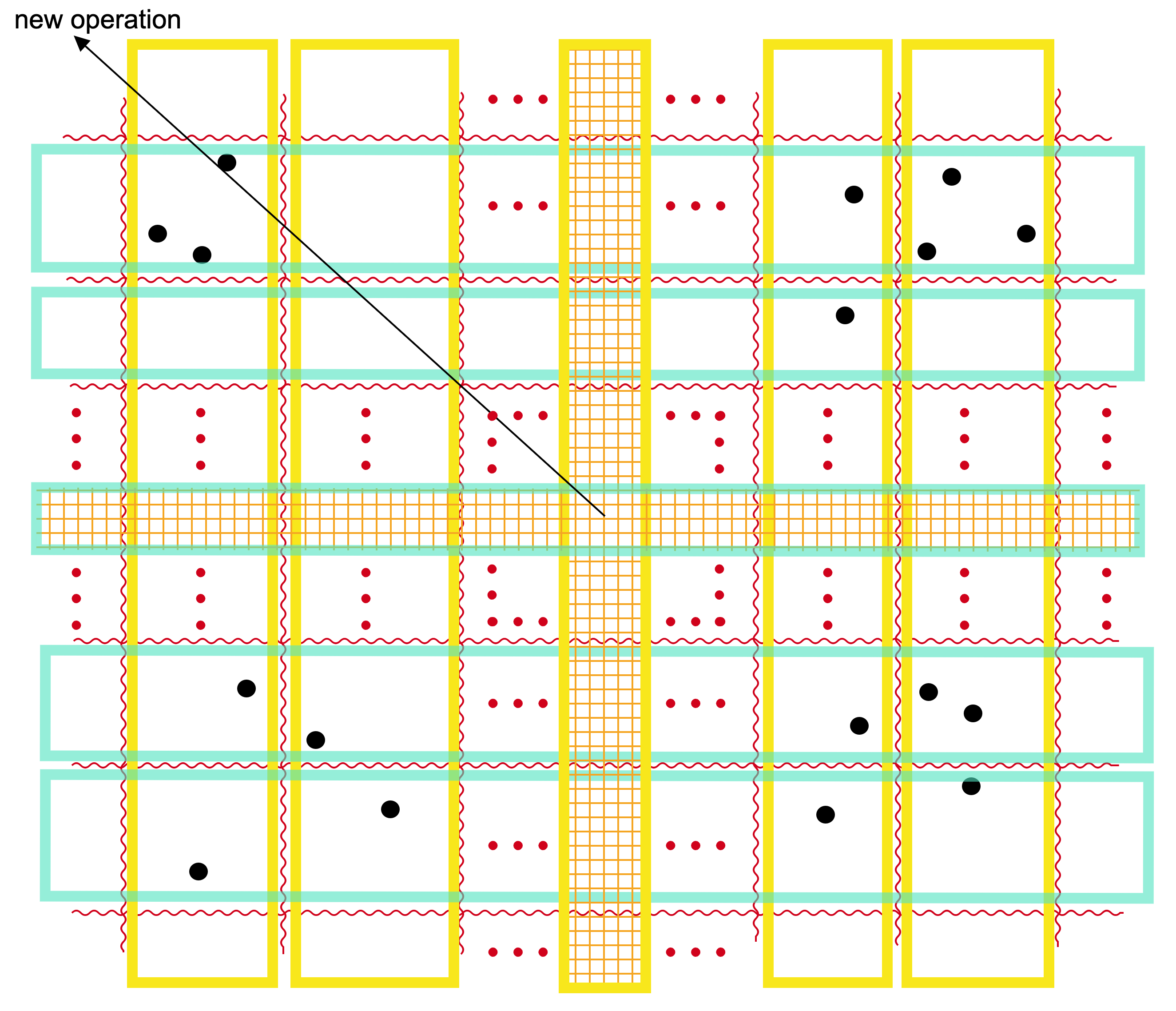}

\caption{The red snake lines show the grid and yellow and blue boxes present the points included in each of the subproblems. Patterned boxes specify the subproblems that are affected by an operation.} \label{fig:dynamicgrid}
\end{figure}

Due to our construction, each time an operation arrives, there is only one row and one column which is affected by the modification. Also, since the size of each subproblem (the number of points covered by each row or column) is bounded by $O(n^{3/7})$ then by precomputing the solution for all possible rectangles, we can update the entries for each subproblem in time $O(n^{15/7} \log n)$. In addition to updating the solution for each subproblem, we also maintain a table of size $m^2 \times m^2$  that keeps an approximation to the value of \textsf{LIS} for each rectangle formed by the cells of the grid. As discussed in Section~\ref{sec:example}, using the extended grid packing, we can approximate the solution for each rectangle within a factor $1-\epsilon$ and doing this for all possible rectangles takes time $\tilde O_{\epsilon}(m^4)$. Therefore, after each operation, we update our table in time $\tilde O_{\epsilon}(m^4) = \tilde O_{\epsilon}(n^{16/7})$. Similar to the algorithm of Section~\ref{sec:example}, each query can be approximated via the precomputed solutions for the rectangles and the solutions for subproblems in time $\tilde O_{\epsilon}(1)$. Since $g(n) = n^{3/7}$, we are sure that after $g(n)$ operations, the size of each subproblem remains bounded by $O(n^{3/7})$ and thus the runtimes do not increase asymptotically as operations add new points to the subproblems. Therefore, our block-based algorithm has preprocessing time $f(n) = O(n^{19/7} \log n)$, $g(n) = n^{3/7}$, and $h(n) = \tilde O_{\epsilon}(n^{16/7})$. This leads to a dynamic solution with worst-case update time $\tilde O_{\epsilon}(n^{16/7})$ and approximation factor $1-\epsilon$. Also, the query complexity of our dynamic algorithm is $\tilde O_{\epsilon}(1)$.

\input{figs/example1}

Since inserting elements to the sequence or removing elements from the sequence may change the indices of other elements, the points on the plane are subject to moves. Each time an operation arrives, we update the solution for the corresponding subproblems. Let us be more specific about this. Initially, $m-1$ vertical lines divide the array into chunks of size roughly $n/m$. As operations arrive, the elements are shifted to the left or to the right (their indices are updated). Each vertical line can be thought of as a separator between two consecutive elements that is also shifted to the left or to the right when elements are added or removed. Thus, although the vertical lines move, each element which is inserted or deleted lies between two thresholds and corresponds to a unique column of the grid. The corresponding row is uniquely determined by the horizontal lines (those lines remain unchanged). Thus, every element insertion or deletion affects only one cell of the grid which is included in at most two subproblems. Mitzenmacher and Seddighin~\cite{our-stoc-paper} show that the shifts can be efficiently done by an $O(\log n)$ overhead in the runtime of the algorithm. More precisely, they give a data structure that is able to insert and delete elements from the sequence while giving access to any position of the sequence in time $O(\log n)$. We use the same data structure and therefore we incorporate an additional $O(\log n)$ overhead in the runtime of our dynamic algorithms.

\input{figs/example2}

In what follows, we bring a recursive application of the above ideas that leads to a solution with update time $\ogood(n^\kappa)$ and approximation factor $1-\epsilon$ for arbitrarily small $0 < \epsilon, \kappa$. The query time of our algorithm is $\ogood(1)$.

\begin{theorem}\label{theorem:second}
	For any $0 < \epsilon, \kappa$, there exists a dynamic algorithm for \textsf{LIS} with worst-case update time $O((\log n/(\epsilon \kappa))^{O(1/(\epsilon \kappa^2))}n^{\kappa})$ and approximation factor $1-\epsilon$ and query time $O((\log n/(\epsilon \kappa))^{O(1/(\epsilon \kappa^2))})$.
\end{theorem}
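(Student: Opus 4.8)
The plan is to bootstrap the generalized rectangle-query dynamic algorithm recursively, in direct analogy with the static construction behind Theorem~\ref{theorem:first}. I would take as the base case $\mathcal{B}_0$ the naive dynamic algorithm discussed above: after every operation it recomputes the $\textsf{LIS}$ of all $O(n^4)$ combinatorially distinct rectangles, and thus solves \textit{query-\textsf{LIS}} exactly with preprocessing and update time $\tilde O(n^5)$ and query time $O(\log n)$. Given $\mathcal{B}_i$, I would build $\mathcal{B}_{i+1}$ exactly as the $\tilde O_\epsilon(n^{16/7})$-update algorithm was built from $\mathcal{B}_0$ earlier in this section: construct an $m_{i+1}\times m_{i+1}$ grid (with $m_{i+1}=n^{r_{i+1}}$, the parameter $r_{i+1}$ fixed below) whose rows and columns evenly divide the current points; create $2m_{i+1}$ subproblems, one per row and one per column, each maintained by a separate copy of $\mathcal{B}_i$; and keep a table storing, for every rectangle whose corners are grid cells, a $(1-\epsilon')^2$-approximate $\textsf{LIS}$, refreshed after each operation by the extended-grid-packing dynamic program of Section~\ref{sec:example} (ranging over all $\Delta$-multisegments with $\Delta=\lceil 1/\epsilon'\rceil$, using the discretization trick, and querying the $\mathcal{B}_i$'s only for row/column $\textsf{LIS}$ values). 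Converting this block-based construction to a worst-case dynamic algorithm (Appendix~\ref{appendix:block-based}) yields $\mathcal{B}_{i+1}$: each operation touches one row- and one column-subproblem, each query costs $\ogood(1)$ subproblem queries plus a table lookup, and subproblem sizes remain $\Theta(n^{1-r_{i+1}})$ across the $g(n)=n^{1-r_{i+1}}$ operations of a block (the column shifts being handled by the balanced-tree data structure of~\cite{our-stoc-paper}).

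Next I would track the exponents through the recursion. Let $u_i$ and $\beta_i$ denote the exponents so that $\mathcal{B}_i$ has update time $\ogood(n^{u_i})$ and block-based preprocessing time $\ogood(n^{\beta_i})$, with $u_0=\beta_0=5$. Accounting for $\mathcal{B}_{i+1}$ gives block-based parameters $f(n)=\ogood(n^{r_{i+1}+(1-r_{i+1})\beta_i}+n^{4r_{i+1}})$, $g(n)=n^{1-r_{i+1}}$, $h(n)=\ogood(n^{(1-r_{i+1})u_i}+n^{4r_{i+1}})$, hence
\[
u_{i+1}=\max\bigl\{\,4r_{i+1},\ (1-r_{i+1})u_i,\ (1-r_{i+1})\beta_i+2r_{i+1}-1\,\bigr\},\qquad \beta_{i+1}=\max\bigl\{\,r_{i+1}+(1-r_{i+1})\beta_i,\ 4r_{i+1}\,\bigr\}.
\]
I would choose $r_{i+1}$ to balance the dominant terms — for the first step this already yields $r_1=4/7$, $u_1=16/7$, $\beta_1=19/7$, reproducing the explicit algorithm — and once the exponents are small the choice settles to $r_{i+1}=\Theta(1/i)$, for which $u_{i+1}\approx 4u_i/(u_i+4)$ and $\beta_{i+1}-1\approx(\beta_i-1)(1-r_{i+1})$. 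The first recurrence gives $u_i=O(1/i)$, and the product $\prod_i(1-\Theta(1/i))$ telescopes so that $\beta_i-1=O(1/i)$ as well; consequently after $k=O(1/\kappa)$ levels both exponents drop below $\kappa$, and $\mathcal{B}_k$ is a dynamic algorithm with update time $\ogood(n^\kappa)$.

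The approximation and the polylogarithmic overheads are accounted for as in Theorem~\ref{theorem:first}. Every level loses at most a factor $(1-\epsilon')^2$ — the $\frac{\Delta-1}{\Delta}\ge 1-\epsilon'$ guarantee of Lemma~\ref{lemma:multi} together with the $(1-\epsilon')$ discretization loss — and crucially the grid-packing argument only needs \emph{approximate} subproblem $\textsf{LIS}$ values, so the errors compose multiplicatively; taking $\epsilon'=\epsilon/(2k)$ yields overall approximation $(1-\epsilon')^{2k}\ge 1-\epsilon$. Each level multiplies the query time and the polylogarithmic part of the update time by $O((\log n/\epsilon')^{O(k/\epsilon)})$ with $\Delta=\lceil 1/\epsilon'\rceil=\Theta(1/(\epsilon\kappa))$; compounded over $k=O(1/\kappa)$ levels this is $O((\log n/(\epsilon\kappa))^{O(k^2/\epsilon)})=O((\log n/(\epsilon\kappa))^{O(1/(\epsilon\kappa^2))})$, and starting from $O(\log n)$ query time for $\mathcal{B}_0$ we obtain exactly the query time $O((\log n/(\epsilon\kappa))^{O(1/(\epsilon\kappa^2))})$ and update time $O((\log n/(\epsilon\kappa))^{O(1/(\epsilon\kappa^2))}\,n^\kappa)$ claimed.

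The main obstacle is not the arithmetic above but making the recursion genuinely close: one must check that it is the \emph{generalized} (rectangle-query) problem that is passed to the subproblems; that the query time really stays $\ogood(1)$ at every level, since this is precisely what makes re-running the $O(m^4)$-cell dynamic program after each operation affordable; that the extended-grid-packing dynamic program is robust to being fed already-approximate subproblem answers so that the per-level $1-\epsilon'$ losses do not amplify; and that the block-based-to-dynamic conversion and the point-shift bookkeeping of~\cite{our-stoc-paper} apply uniformly at each level. With those in place, the balancing choice of $r_{i+1}$ and the telescoping bounds $u_i,\beta_i-1=O(1/i)$ give $k=\Theta(1/\kappa)$ and $\Delta=\Theta(1/(\epsilon\kappa))$, completing the proof.
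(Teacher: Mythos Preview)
Your overall strategy—recursively bootstrapping the generalized rectangle-query dynamic algorithm via an $m\times m$ grid, $2m$ subinstances maintained by the previous-level algorithm, the extended-grid-packing table, and the block-based-to-dynamic conversion—is exactly the paper's approach, and your accounting of the approximation loss and of the polylogarithmic overheads is correct.

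The one substantive difference is the choice of base case and of the grid parameter. You start from the naive $\tilde O(n^5)$ algorithm and let $m_{i+1}=n^{r_{i+1}}$ vary, balancing the terms at every level; this forces you to track both exponents $u_i$ and $\beta_i$ and to argue a harmonic-type decay $u_i,\beta_i-1=O(1/i)$. The paper instead takes $\mathcal{A}_0$ to be the \emph{static} algorithm of Theorem~\ref{theorem:first} re-run after every operation, which already has update/preprocessing exponent $1+\kappa/2$, and then uses a \emph{fixed} $m=n^{\kappa/4}$ at every level; the update-time exponent then decreases geometrically as $(1+\kappa/2)(1-\kappa/4)^i$, which is a cleaner one-line recurrence. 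In effect, the paper offloads the ``drive the exponent from $5$ down to near $1$'' work to the static Theorem~\ref{theorem:first} once and for all, whereas you redo it inside the dynamic recursion. Both routes reach $k=\Theta(1/\kappa)$ levels and the same final bounds. One small imprecision in your write-up: at every step the dominant balance is actually between $4r_{i+1}$ and the $f/g$ term $(1-r_{i+1})\beta_i+2r_{i+1}-1$ (this is already what gives $r_1=4/7$), not between $4r_{i+1}$ and $(1-r_{i+1})u_i$; the resulting recurrence still yields $u_i,\beta_i-1=O(1/i)$, so your conclusion stands.
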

\begin{proof}
Similar to the proof of Theorem~\ref{theorem:first}, we construct several algorithms which we denote by $\mathcal{A}_0, \mathcal{A}_1, \mathcal{A}_2, \ldots$ and $\mathcal{A}'_1, \mathcal{A}'_2, \mathcal{A}'_3, \ldots$. Each $\mathcal{A}'_i$ represents a block-based algorithm and each $\mathcal{A}_i$ is a dynamic algorithm. We begin by a dynamic algorithm $\mathcal{A}_0$  for \textsf{LIS} with preprocessing and update times $\ogood(n^{1+\kappa/2})$ and query time $\ogood(1)$. $\mathcal{A}_0$ basically uses the non-dynamic algorithm of Theorem~\ref{theorem:first} to update the sequence after each operation. At each step, we first construct $\mathcal{A}'_{i+1}$ from $\mathcal{A}_i$ and then using the reduction of~\cite{our-stoc-paper}, we turn $\mathcal{A}'_{i+1}$ into a dynamic algorithm $\mathcal{A}_{i+1}$.

As shown by Mitzenmacher and Seddighin~\cite{our-stoc-paper}, if we construct a dynamic algorithm $\mathcal{A}$ from a block-based algorithm $\mathcal{A}'$ then in order to initialize  $\mathcal{A}$ for a sequence of length $n$, we only need to spend preprocessing time $f(n)$ corresponding to $\mathcal{A}'$. For constructing $\mathcal{A}'_{i+1}$ from $\mathcal{A}_i$ we set $m = n^{\kappa/4}$. In the construction block-based algorithm $\mathcal{A}'_{i+1}$, we set $f(n) = \ogood(n^{1+\kappa/2})$, $g(n) = n^{1-\kappa/4}$ and $h(n) = \ogood(n^{\kappa} + n^{(1+\kappa/2)(1-\kappa/4)^{i+1}})$. As a result, the update time of our dynamic algorithm $\mathcal{A}_{i+1}$ is always bounded by $\ogood(n^{\kappa} + n^{(1+\kappa/2)(1-\kappa/4)^{i+1}})$. Each time we construct a grid and divide the problem into smaller subproblems as explained above. In what follows, we analyze the time bounds.

To initialize Algorithm $\mathcal{A}'_{i+1}$, we draw a grid of size $n^{\kappa/4} \times n^{\kappa/4}$ in time $O(n \log n)$ in a way that each row and each column covers $O(n^{1-\kappa/4})$ points. Then, for each row and column we use Algorithm $\mathcal{A}_i$ to initialize a solution for the corresponding subproblem. Since for $\mathcal{A}_0$ the initialization time is $\ogood(n^{1+\kappa/2})$, the same bound carries over to the preprocessing times of all other algorithms $\mathcal{A}'_1, \mathcal{A}'_2, \ldots$ (The initialization times can only improve as we increase $i$). Also, since $m = n^{\kappa/4}$ the $\ogood(m^4)$ part of the preprocessing time is dominated by the $\ogood(n^{1+\kappa/2})$ term in the preprocessing time.

For each $\mathcal{A}_{i+1}$, in order to update the solution, we need to update the subproblems concerning the corresponding row and the corresponding column. Since each row and each column cover at most $O(n^{1-\kappa/4})$ elements, then this implies that the update time $h(n)$ for $\mathcal{A}'_1$ is equal to $\ogood(n^{\kappa} + (n^{1-\kappa/4})^{1+\kappa/2})$. Moreover, since for $\mathcal{A}_1$, the dominant term in $f(n)/g(n) + h(n)$ is $h(n)$, this implies that the update time of the dynamic Algorithm $\mathcal{A}_{i}$ is equal to $\ogood(n^{\kappa} + (n^{1-\kappa/4})^{1+\kappa/2})$. Thus, we can inductively prove that for each $i \geq 1$, the update time for block-based algorithm $\mathcal{A}'_{i}$ is equal to the update time of the dynamic algorithm $\mathcal{A}_{i}$  which is bounded by $\ogood(n^{\kappa} + (n^{(1+\kappa/2)(1-\kappa/4)^{i}})$. By setting $k = \lceil 20/{\kappa}\rceil$, we can be sure that the update time of Algorithm $\mathcal{A}_k$ is bounded by $\ogood(n^{\kappa})$. Also, the query time remains $\ogood(1)$ for all the algorithms.

In what follows, we discuss the approximation factor and the runtimes that are suppressed by the $\ogood$ notation. We set the approximation factor of our first algorithm $\mathcal{A}_0$ equal to $1-\epsilon/2$ and for the rest of the constructions we use $\epsilon' = \epsilon / (40k)$. Thus, the overall approximation factor will be $(1-\epsilon/2)(1-\epsilon')^k \geq 1-\epsilon$. This also adds a multiplicative overhead $O(\log n/(\epsilon \kappa))^{O(1/(\epsilon \kappa))}$ to the preprocessing time, update time, and query time of each level of recursion and therefore the update time for the final algorithm would be $O((\log n/(\epsilon \kappa))^{O(1/(\epsilon \kappa^2))}n^{\kappa})$.

\end{proof}

By setting $\kappa = 1/(\log n)^{1/3}$ in Theorem~\ref{theorem:second}, we obtain an algorithm with update time $$O((\log n/\epsilon)^{O((\log n)^{2/3}/\epsilon)}).$$ Since in this case the query time and update time are equal, we can use this algorithm for the original dynamic problem wherein the queries and the operations are treated the same way.

\begin{theorem}[a corollary of Theorem~\ref{theorem:second}]\label{theorem:third}
For any $0 < \epsilon$, there exists a dynamic algorithm for \textsf{LIS} with worst-case update time $O((\log n/\epsilon)^{O((\log n)^{2/3}/\epsilon)})$ and approximation factor $1-\epsilon$.
\end{theorem}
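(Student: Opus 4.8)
The plan is to obtain Theorem~\ref{theorem:third} as a direct specialization of Theorem~\ref{theorem:second}, choosing the free parameter $\kappa$ to be a slowly vanishing function of $n$. First I would invoke Theorem~\ref{theorem:second} with $\kappa = 1/(\log n)^{1/3}$. This is a legitimate choice: the statement of Theorem~\ref{theorem:second} only requires $0<\kappa$, and nothing in its block-based construction needs $\kappa$ to be constant — the grid dimensions $n^{\kappa/4}$, the number of recursion levels $\lceil 20/\kappa\rceil$, and the parameters $g(n),h(n)$ are all well-defined for $\kappa=\kappa(n)$. With this choice the polynomial part of the update time becomes $n^{\kappa} = 2^{(\log n)^{2/3}}$ (with $\log$ to base two), while the factor hidden in the $\ogood$-notation of Theorem~\ref{theorem:second}, namely $(\log n/(\epsilon\kappa))^{O(1/(\epsilon\kappa^2))}$, becomes $\big((\log n)^{4/3}/\epsilon\big)^{O((\log n)^{2/3}/\epsilon)}$.

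Next I would fold both quantities into the single bound in the statement. Assuming (as we may, the claim being trivial otherwise) that $\epsilon \le 1$, the routine estimates are $(\log n)^{4/3}/\epsilon \le (\log n/\epsilon)^{2}$, so $\big((\log n)^{4/3}/\epsilon\big)^{O((\log n)^{2/3}/\epsilon)} = (\log n/\epsilon)^{O((\log n)^{2/3}/\epsilon)}$, and $2^{(\log n)^{2/3}} \le (\log n)^{(\log n)^{2/3}} \le (\log n/\epsilon)^{O((\log n)^{2/3}/\epsilon)}$, so the subpolynomial term is absorbed into the polylogarithmic-to-a-polylogarithmic factor. Multiplying, the worst-case update time is $O\big((\log n/\epsilon)^{O((\log n)^{2/3}/\epsilon)}\big)$, and the same computation applied to the query time $(\log n/(\epsilon\kappa))^{O(1/(\epsilon\kappa^2))}$ of Theorem~\ref{theorem:second} shows it is bounded by the identical quantity, so in this regime query time and update time coincide up to constants in the exponent. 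Finally I would transfer back from the generalized problem to plain dynamic \textsf{LIS}: Theorem~\ref{theorem:second} maintains a $(1-\epsilon)$-approximation of the \textsf{LIS} of the points inside any axis-parallel query rectangle, and taking the rectangle to contain all points returns a $(1-\epsilon)$-approximation of the \textsf{LIS} of the whole current sequence. Since the query time equals the update time for our $\kappa$, treating each operation as implicitly followed by this one global query costs nothing asymptotically, which is exactly the original dynamic setting; this yields the claimed algorithm.

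The only genuinely non-routine point — and hence the step to watch — is checking that the exponent $1/(\epsilon\kappa^2)$ in Theorem~\ref{theorem:second} does not overwhelm everything as $\kappa\to 0$, and that $\kappa=(\log n)^{-1/3}$ is the right order of magnitude. Writing both competing terms in the exponent of $2$, the polynomial factor contributes $\kappa\log n = (\log n)^{2/3}$ while the $\ogood$-factor contributes $\tfrac{1}{\epsilon\kappa^2}\log\log n = \tfrac{(\log n)^{2/3}\log\log n}{\epsilon}$; these are balanced up to the $\log\log n$ slack, which is precisely what gets swallowed by the $(\log n/\epsilon)^{O(\cdot)}$ form of the final bound. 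Balancing $\kappa\log n$ against $\kappa^{-2}\log\log n$ in general gives the optimal scaling $\kappa\asymp(\log\log n/\log n)^{1/3}$; taking $\kappa=(\log n)^{-1/3}$ is a clean value in this range that loses only a $\log\log n$ factor inside the $O(\cdot)$ in the exponent, and this is the source of the exponent $2/3$ in the theorem.
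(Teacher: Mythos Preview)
Your proposal is correct and follows exactly the paper's approach: set $\kappa = 1/(\log n)^{1/3}$ in Theorem~\ref{theorem:second}, verify that both the $n^{\kappa}$ term and the $(\log n/(\epsilon\kappa))^{O(1/(\epsilon\kappa^2))}$ factor are absorbed into $(\log n/\epsilon)^{O((\log n)^{2/3}/\epsilon)}$, and observe that query and update times coincide so the generalized problem collapses to the original one. Your write-up is in fact more detailed than the paper's one-line justification, and your discussion of why the exponent $1/3$ balances the two competing terms is a nice addition.
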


	\bibliographystyle{alphaurl}
	\bibliography{draft}
	
	\newpage
	
	\appendix
	\newpage
\section{Block-based Algorithms~\cite{our-stoc-paper}}\label{appendix:block-based}
We present the block-based framework of Mitzenmacher and Seddighin~\cite{our-stoc-paper} in this section. They show a reduction that simplifies the problem with respect to worst-case time constraints. Ultimately, in our algorithms, we prove that the update time of each operation is bounded in the worst case. However, it is more convenient to allow for larger update times in some cases, while keeping a bounded amortized update time. 

In the framework of Mitzenmacher and Seddighin~\cite{our-stoc-paper}, we start with an array $a$ of size $n$ and our algorithm is allowed to make a preprocessing of time $f(n)$. For the next $g(n)$ steps, the processing time of each operation is bounded by $h(n)$ in the worst case. After $g(n)$ steps, our algorithm is no longer responsible for the operations and terminates. We refer to such an algorithm as \textit{block-based}. Note $f(n)$, $g(n)$, and $h(n)$ are determined based only on the size $n$ of the initial sequence $a$. We only consider $g(n) \leq n/2+10$ so that after $g(n)$ operations the size of the sequence remains asymptotically the same. 

We show in the following that a block-based algorithm $\mathcal{A}$ for \textsf{LIS} with identifiers $\langle f,g,h \rangle$ can be used as a black box to obtain a dynamic algorithm $\mathcal{A'}$ with worst-case update time $O(\max\{h(n), f(n) / g(n)\})$. The approximation factor of the algorithm is preserved in this reduction. 
 
\begin{lemma}[proven by Mitzenmacher and Seddighin~\cite{our-stoc-paper}]\label{lemma:reduction}
	Let $\mathcal{A}$ be a block-based algorithm with preprocessing time $f(n)$ that approximates dynamic \textsf{LIS} for up to $g(n)$ many steps with worst-case update time $h(n)$. If $g(n) \leq n/2+10$ then there exists a dynamic algorithm $\mathcal{A}'$ for the same problem whose worst-case update time is bounded by $O(\max\{h(n), f(n) / g(n)\})$ and whose approximation factor is the same as $\mathcal{A}$.
\end{lemma}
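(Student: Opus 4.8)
The plan is to de-amortize the block-based algorithm $\mathcal{A}$ by running a short \emph{pipeline} of $O(1)$ copies of $\mathcal{A}$, a fresh one reincarnated every $L := \lfloor g(n)/4\rfloor$ operations, where $n$ always denotes the current sequence length. At every step the pipeline holds three active copies, in three roles: one being \emph{built} (incrementally preprocessed), one \emph{catching up}, and one \emph{serving} (answering the queries that arrive). A copy born at global time $s$ spends $[s,s+L)$ in the build role, $[s+L,s+2L)$ catching up, and $[s+2L,s+3L)$ serving, and is then discarded; since $3L<g(n)$, each copy processes fewer than $g(n)$ operations during its lifetime and therefore stays inside the validity window of $\mathcal{A}$.

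\textbf{Build phase.} When a copy is born at time $s$ I would take a snapshot of the current sequence — e.g.\ via a persistent version of the balanced search tree used for sequence access, which makes snapshots essentially free, or, observing $f(n)=\Omega(n)$, by simply copying the sequence over the $L$ upcoming operations at cost $O(n/L)=O(f(n)/g(n))$ per operation — and run $\mathcal{A}$'s preprocessing on this snapshot, spreading its total cost $f(n)$ evenly over the $L=\Theta(g(n))$ operations of the build phase, i.e.\ $O(f(n)/g(n))$ per operation. Throughout the build phase I also append every incoming operation to a FIFO queue $Q$ attached to this copy, so that at time $s+L$ the queue holds exactly operations $s+1,\dots,s+L$ in chronological order and $\mathcal{A}$'s preprocessing is finished.

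\textbf{Catch-up and serving.} During $[s+L,s+2L)$, at each step I push the new operation onto the back of $Q$ and then feed the front \emph{two} operations of $Q$ to the copy. Since $Q$ has size $L$ at the start and shrinks by one per step, it is empty after exactly $L$ steps; because $Q$ is FIFO, operations are replayed in chronological order, so at time $s+2L$ the copy's state is exactly that of $\mathcal{A}$ having processed operations $s+1,\dots,s+2L$ — the true current sequence. The per-step cost here is $2h(n)$. During $[s+2L,s+3L)$ the copy just processes each incoming operation as it arrives ($h(n)$ per step) and answers each query with $\mathcal{A}$'s query procedure; I route every query to whichever copy currently holds the serving role, of which — by the timing above — there is exactly one at every step $\ge 3L$ (the first $3L$ operations are handled by an initial bootstrap copy, or by direct recomputation while the sequence is still short). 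Summing the three concurrent roles gives worst-case per-operation time $O(f(n)/g(n)+h(n))=O(\max\{h(n),f(n)/g(n)\})$; the query time equals that of $\mathcal{A}$ up to a constant, and the approximation factor is preserved because every answer comes from a genuine execution of $\mathcal{A}$ on the true operation sequence within its guaranteed $g(n)$-operation window.

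\textbf{Bookkeeping and the hard part.} Because $g(n)\le n/2+10$, over the lifetime of any copy the length changes by at most a factor $2$ (plus an additive constant), so evaluating $f,g,h$ at the length frozen at birth versus the current length changes them only by constant factors, using that these functions are non-decreasing and smooth under constant-factor scaling of the argument; the global $n$ is tracked so that $L$ and the amortization rate are recomputed as the sequence grows or shrinks. I expect the main obstacle to be exactly this de-amortization against a moving target: one cannot preprocess a fresh copy in the background directly on the live sequence, and the resolution — snapshot at birth, then drain a FIFO buffer at double speed so the copy provably catches up within the next $L$ steps while never exceeding its $g(n)$-operation budget — is where the pipeline timing has to be pinned down precisely.
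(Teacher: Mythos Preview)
Your proposal is correct and takes essentially the same approach as the paper: run overlapping copies of $\mathcal{A}$, spread each copy's preprocessing over $\Theta(g(n))$ steps, buffer the incoming operations during that time, and then drain the buffer at double speed so the new copy is synchronized exactly when the old one retires. The only cosmetic differences are that the paper keeps two copies alive at once (the new one is born when the old one is $9/10$ through its window, with the remaining $g(n)/10$ steps split half for preprocessing and half for double-speed catch-up) whereas you keep three in a regular $L=g(n)/4$ pipeline; the timing arithmetic and the resulting $O(f(n)/g(n)+h(n))$ bound are the same.
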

\begin{figure}[ht]

\tikzset{every picture/.style={line width=0.75pt}} 

\begin{tikzpicture}[x=0.75pt,y=0.75pt,yscale=-1,xscale=1]

\draw   (39,29) -- (301.5,29) -- (301.5,52) -- (39,52) -- cycle ;
\draw [color={rgb, 255:red, 208; green, 2; blue, 27 }  ,draw opacity=1 ] [dash pattern={on 4.5pt off 4.5pt}]  (239,29) -- (239,79) ;
\draw [shift={(239,81)}, rotate = 270] [fill={rgb, 255:red, 208; green, 2; blue, 27 }  ,fill opacity=1 ][line width=0.75]  [draw opacity=0] (10.72,-5.15) -- (0,0) -- (10.72,5.15) -- (7.12,0) -- cycle    ;

\draw   (239,88) -- (501.5,88) -- (501.5,111) -- (239,111) -- cycle ;
\draw [color={rgb, 255:red, 208; green, 2; blue, 27 }  ,draw opacity=1 ] [dash pattern={on 4.5pt off 4.5pt}]  (439,32) -- (439,138) ;
\draw [shift={(439,140)}, rotate = 270] [fill={rgb, 255:red, 208; green, 2; blue, 27 }  ,fill opacity=1 ][line width=0.75]  [draw opacity=0] (10.72,-5.15) -- (0,0) -- (10.72,5.15) -- (7.12,0) -- cycle    ;

\draw   (436,143) -- (698.5,143) -- (698.5,166) -- (436,166) -- cycle ;
\draw [color={rgb, 255:red, 189; green, 16; blue, 224 }  ,draw opacity=1 ][fill={rgb, 255:red, 189; green, 16; blue, 224 }  ,fill opacity=1 ][line width=1.5]    (246,41) -- (296.5,41) ;
\draw [shift={(299.5,41)}, rotate = 180] [color={rgb, 255:red, 189; green, 16; blue, 224 }  ,draw opacity=1 ][line width=1.5]    (14.21,-6.37) .. controls (9.04,-2.99) and (4.3,-0.87) .. (0,0) .. controls (4.3,0.87) and (9.04,2.99) .. (14.21,6.37)   ;
\draw [shift={(243,41)}, rotate = 0] [color={rgb, 255:red, 189; green, 16; blue, 224 }  ,draw opacity=1 ][line width=1.5]    (14.21,-6.37) .. controls (9.04,-2.99) and (4.3,-0.87) .. (0,0) .. controls (4.3,0.87) and (9.04,2.99) .. (14.21,6.37)   ;
\draw [color={rgb, 255:red, 189; green, 16; blue, 224 }  ,draw opacity=1 ][fill={rgb, 255:red, 189; green, 16; blue, 224 }  ,fill opacity=1 ][line width=1.5]    (446,100) -- (496.5,100) ;
\draw [shift={(499.5,100)}, rotate = 180] [color={rgb, 255:red, 189; green, 16; blue, 224 }  ,draw opacity=1 ][line width=1.5]    (14.21,-6.37) .. controls (9.04,-2.99) and (4.3,-0.87) .. (0,0) .. controls (4.3,0.87) and (9.04,2.99) .. (14.21,6.37)   ;
\draw [shift={(443,100)}, rotate = 0] [color={rgb, 255:red, 189; green, 16; blue, 224 }  ,draw opacity=1 ][line width=1.5]    (14.21,-6.37) .. controls (9.04,-2.99) and (4.3,-0.87) .. (0,0) .. controls (4.3,0.87) and (9.04,2.99) .. (14.21,6.37)   ;
\draw  [dash pattern={on 0.84pt off 2.51pt}]  (116.45,134.57) -- (272.5,100) ;
\draw [shift={(272.5,100)}, rotate = 347.51] [color={rgb, 255:red, 0; green, 0; blue, 0 }  ][fill={rgb, 255:red, 0; green, 0; blue, 0 }  ][line width=0.75]      (0, 0) circle [x radius= 3.35, y radius= 3.35]   ;
\draw [shift={(114.5,135)}, rotate = 347.51] [fill={rgb, 255:red, 0; green, 0; blue, 0 }  ][line width=0.75]  [draw opacity=0] (8.93,-4.29) -- (0,0) -- (8.93,4.29) -- cycle    ;
\draw  [dash pattern={on 0.84pt off 2.51pt}]  (116.5,135.12) -- (469.5,156) ;
\draw [shift={(469.5,156)}, rotate = 3.39] [color={rgb, 255:red, 0; green, 0; blue, 0 }  ][fill={rgb, 255:red, 0; green, 0; blue, 0 }  ][line width=0.75]      (0, 0) circle [x radius= 3.35, y radius= 3.35]   ;
\draw [shift={(114.5,135)}, rotate = 3.39] [fill={rgb, 255:red, 0; green, 0; blue, 0 }  ][line width=0.75]  [draw opacity=0] (8.93,-4.29) -- (0,0) -- (8.93,4.29) -- cycle    ;
\draw  [fill={rgb, 255:red, 155; green, 155; blue, 155 }  ,fill opacity=1 ] (20,124) -- (64.5,124) -- (64.5,147) -- (20,147) -- cycle ;
\draw  [fill={rgb, 255:red, 255; green, 255; blue, 255 }  ,fill opacity=1 ] (65,124) -- (109.5,124) -- (109.5,147) -- (65,147) -- cycle ;
\draw [line width=0.75]  [dash pattern={on 4.5pt off 4.5pt}]  (301.5,56) -- (301.5,116) ;

\draw [line width=0.75]  [dash pattern={on 4.5pt off 4.5pt}]  (501.5,30) -- (501.5,167) ;

\draw    (94.5,138) -- (94.5,167) ;
\draw [shift={(94.5,169)}, rotate = 270] [fill={rgb, 255:red, 0; green, 0; blue, 0 }  ][line width=0.75]  [draw opacity=0] (8.93,-4.29) -- (0,0) -- (8.93,4.29) -- cycle    ;

\draw    (42.5,113) -- (42.5,142) ;

\draw [shift={(42.5,111)}, rotate = 90] [fill={rgb, 255:red, 0; green, 0; blue, 0 }  ][line width=0.75]  [draw opacity=0] (8.93,-4.29) -- (0,0) -- (8.93,4.29) -- cycle    ;

\draw (23,39) node   {$\mathcal{B}_{1}$};
\draw (41,66) node [scale=0.9]  {$a^{(1)}$};
\draw (223,98) node   {$\mathcal{B}_{2}$};
\draw (420,153) node   {$\mathcal{B}_{3}$};
\draw (556,153) node [scale=2.488]  {$\dotsc $};
\draw (67,96) node  [align=left] {preprocessing};
\draw (132,176) node  [align=left] {updating two operations in each step};
\draw (269,62) node [scale=0.7,color={rgb, 255:red, 144; green, 19; blue, 254 }  ,opacity=1 ]  {$g(n_1) /10$};
\draw (470,121) node [scale=0.7,color={rgb, 255:red, 144; green, 19; blue, 254 }  ,opacity=1 ]  {$g(n_2) /10$};
\draw (242,126) node [scale=0.9]  {$a^{(2)}$};
\draw (439,181) node [scale=0.9]  {$a^{(3)}$};
\draw (43,15) node [scale=0.7]  {$1$};
\draw (240,15) node [scale=0.7]  {$\frac{9g(n_1)}{10}$};
\draw (301,13) node [scale=0.7]  {$g(n_1)$};
\draw (512,15) node [scale=0.7]  {$\frac{9g(n_1)}{10} +g(n_2)$};
\draw (428,15) node [scale=0.7]  {$\frac{9( g(n_1) +g(n_2))}{10}$};

\end{tikzpicture}

\caption{The reduction is shown in this figure.} \label{fig:reduction}
\end{figure}
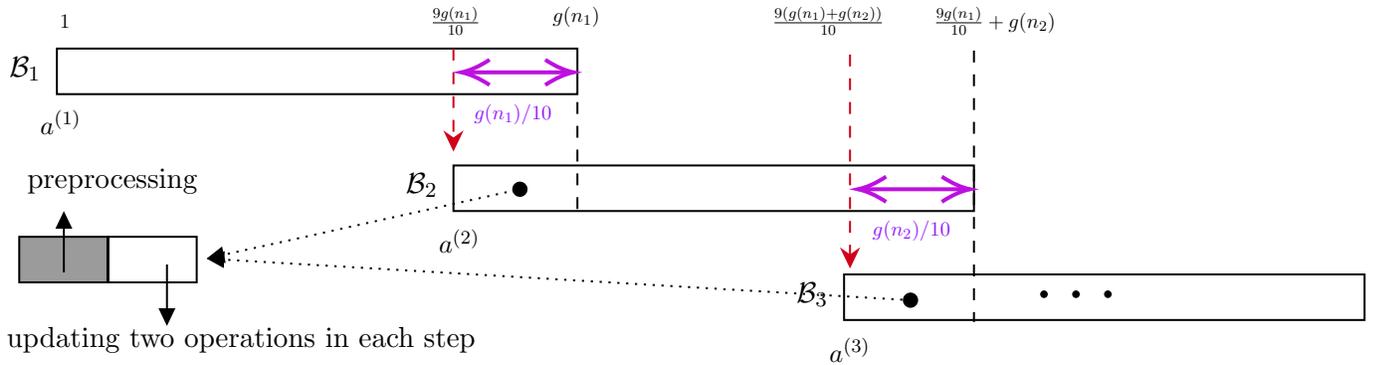

Figure \ref{fig:reduction} gives a pictorial depiction of the proof idea for Lemma~\ref{lemma:reduction}. Mitzenmacher and Seddighin~\cite{our-stoc-paper} construct an algorithm $\mathcal{A'}$ in the following way: $\mathcal{A}'$ uses algorithm $\mathcal{A}$ repeatedly. To distinguish between multiple instances of $\mathcal{A}$, we add subscripts; the first time we use algorithm $\mathcal{A}$ we call it $\mathcal{B}_1$. Every $\mathcal{B}_i$ is basically a copy of the block-based algorithm $\mathcal{A}$ which is modified slightly to execute the preprocessing part in multiple steps. We begin with using our block-based algorithm $\mathcal{B}_1$ at step 1. At this point we call the initial array (which is empty) $a^{(1)}$. Also, we refer to its size by $n_1$ which is equal to 0. Since the size of the array is constant, so is the preprocessing time and therefore we can ignore it when bounding the time complexity. For $g(n_1)$ many steps, we use algorithm $\mathcal{B}_1$ to preserve an approximate solution and from then on, we use a separate algorithm for the rest of the operations, namely $\mathcal{B}_2$. The construction of $\mathcal{B}_2$ is given below:

When $\mathcal{B}_1$ has gone $9/10$ of the way and is only responsible for $g(n_1)/10$ more operations, we initiate algorithm $\mathcal{B}_2$. Let $a^{(2)}$ be the array at this point and $n_2$ be its size. $\mathcal{B}_2$ needs to run the preprocessing step which requires $f(n_2)$ many operations. This may not be possible in a single step, therefore, we break the computation into $g(n_1)/20$ pieces and execute each piece in the next $g(n_1)/20$ steps. Moreover, in the next $g(n_1)/20$ steps operations that arrive after the construction of $\mathcal{B}_2$ are processed: two operations in each step. While this is happening, algorithm $\mathcal{B}_1$ processes the operations and updates the solution size. $g(n_1)/10$ many steps after the construction of $\mathcal{B}_2$, algorithm $\mathcal{B}_2$ has already finished the preprocessing and all the operations that have arrived so far are applied to it.  This is exactly the time that $\mathcal{B}_1$ terminates, and from then on, we use algorithm $\mathcal{B}_2$ to process each operation. Similarly, $\mathcal{B}_3$ is constructed when $\mathcal{B}_2$ has applied $g(n_2) 9/10$ operations. This construction goes on as long as operations arrive.

We emphasize that there is a constant-factor overhead in the update-time of the reduction which is hidden in the $O$ notation. Moreover, the constant factor is regardless of algorithm $\mathcal{A}$ and is the same for all possible algorithms. It follows from the same idea that if we separate the notion of query from the notion of operation (as we do in Section~\ref{sec:dynamic}), still the same reduction gives us a dynamic algorithm from a block-based algorithm.

	\newpage
\section{The Algorithm of Chen, Chu, and Pinsker~\cite{chen2013dynamic}}\label{sec:chen-ap}
For formal proofs, we refer the reader to~\cite{chen2013dynamic}. Chen, Chu, and Pinsker~\cite{chen2013dynamic} propose the following algorithm to maintain a solution for dynamic \textsf{LIS}.

For each element $i$ of the array, define $l(i)$ to be the size of the longest increasing subsequence ending at element $a_i$ of the array. Chen, Chu, and Pinsker~\cite{chen2013dynamic} refer to this quantity as the \textit{level} of element $i$. Notice that $l(i)$ can be computed in time $\tilde O(n)$ for all elements of the array using the patience sorting algorithm.

Define $L'_k$ to be the set of elements whose levels are equal to $k$. The algorithm of Chen, Chu, and Pinsker~\cite{chen2013dynamic} maintains a balanced binary tree for each $L'_k$ that contains the corresponding elements. One key observation is that for each $k$, all the elements  of $L'_k$ are decreasing, otherwise their levels would not be the same.

When a new element is added to the array, $L'_k$'s may change. More precisely, after an element addition, the levels of some elements may change (but only by 1). Similarly, element removal may change the levels of the elements of the array but again the change is bounded by $1$. Chen, Chu, and Pinsker~\cite{chen2013dynamic}, show that after an insertion, for each $L'_k$, the levels of only one interval of the elements may increase. In other words, for each $L'_k$, there are two numbers $\alpha$ and $\beta$ such that all the elements whose values are within $[\alpha,\beta]$ increase their levels and the rest remain in $L'_k$.

Thus, they use a special balanced tree structure that allows for interval deletion and interval addition in logarithmic time. Therefore, all that remains is to detect which interval of each $L'_k$ changes after each operation. They show that this can be computed in time $O(\log n)$ for all $L'_k$'s via binary search. Since the number of different levels is equal to the size of the \textsf{LIS}, their update time depends on the size of the solution.

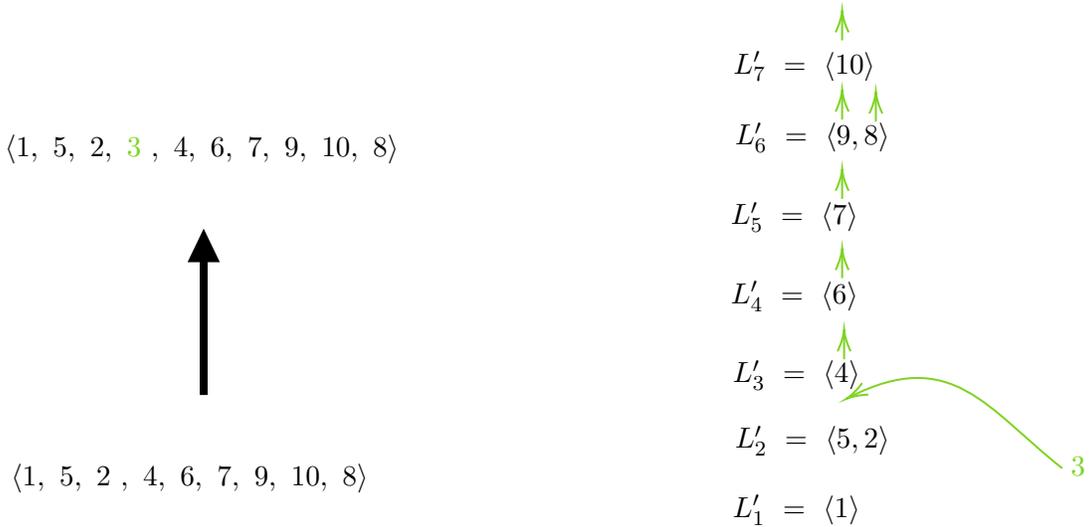
\begin{figure}[ht]

\centering

\tikzset{every picture/.style={line width=0.75pt}} 

\begin{tikzpicture}[x=0.75pt,y=0.75pt,yscale=-1,xscale=1]

\draw [line width=3]    (163,205) -- (163,126) ;
\draw [shift={(163,121)}, rotate = 450] [fill={rgb, 255:red, 0; green, 0; blue, 0 }  ][line width=3]  [draw opacity=0] (16.97,-8.15) -- (0,0) -- (16.97,8.15) -- cycle    ;

\draw [color={rgb, 255:red, 126; green, 211; blue, 33 }  ,draw opacity=1 ]   (596,242) .. controls (558.38,212.3) and (540.36,179.66) .. (488.58,206.18) ;
\draw [shift={(487,207)}, rotate = 332.15] [color={rgb, 255:red, 126; green, 211; blue, 33 }  ,draw opacity=1 ][line width=0.75]    (10.93,-3.29) .. controls (6.95,-1.4) and (3.31,-0.3) .. (0,0) .. controls (3.31,0.3) and (6.95,1.4) .. (10.93,3.29)   ;

\draw [color={rgb, 255:red, 126; green, 211; blue, 33 }  ,draw opacity=1 ][fill={rgb, 255:red, 126; green, 211; blue, 33 }  ,fill opacity=1 ]   (486,187) -- (486,174) ;
\draw [shift={(486,172)}, rotate = 450] [color={rgb, 255:red, 126; green, 211; blue, 33 }  ,draw opacity=1 ][line width=0.75]    (10.93,-3.29) .. controls (6.95,-1.4) and (3.31,-0.3) .. (0,0) .. controls (3.31,0.3) and (6.95,1.4) .. (10.93,3.29)   ;

\draw [color={rgb, 255:red, 126; green, 211; blue, 33 }  ,draw opacity=1 ][fill={rgb, 255:red, 126; green, 211; blue, 33 }  ,fill opacity=1 ]   (485,146) -- (485,133) ;
\draw [shift={(485,131)}, rotate = 450] [color={rgb, 255:red, 126; green, 211; blue, 33 }  ,draw opacity=1 ][line width=0.75]    (10.93,-3.29) .. controls (6.95,-1.4) and (3.31,-0.3) .. (0,0) .. controls (3.31,0.3) and (6.95,1.4) .. (10.93,3.29)   ;

\draw [color={rgb, 255:red, 126; green, 211; blue, 33 }  ,draw opacity=1 ][fill={rgb, 255:red, 126; green, 211; blue, 33 }  ,fill opacity=1 ]   (485,106) -- (485,93) ;
\draw [shift={(485,91)}, rotate = 450] [color={rgb, 255:red, 126; green, 211; blue, 33 }  ,draw opacity=1 ][line width=0.75]    (10.93,-3.29) .. controls (6.95,-1.4) and (3.31,-0.3) .. (0,0) .. controls (3.31,0.3) and (6.95,1.4) .. (10.93,3.29)   ;

\draw [color={rgb, 255:red, 126; green, 211; blue, 33 }  ,draw opacity=1 ][fill={rgb, 255:red, 126; green, 211; blue, 33 }  ,fill opacity=1 ]   (485,66) -- (485,53) ;
\draw [shift={(485,51)}, rotate = 450] [color={rgb, 255:red, 126; green, 211; blue, 33 }  ,draw opacity=1 ][line width=0.75]    (10.93,-3.29) .. controls (6.95,-1.4) and (3.31,-0.3) .. (0,0) .. controls (3.31,0.3) and (6.95,1.4) .. (10.93,3.29)   ;

\draw [color={rgb, 255:red, 126; green, 211; blue, 33 }  ,draw opacity=1 ][fill={rgb, 255:red, 126; green, 211; blue, 33 }  ,fill opacity=1 ]   (502,67) -- (502,54) ;
\draw [shift={(502,52)}, rotate = 450] [color={rgb, 255:red, 126; green, 211; blue, 33 }  ,draw opacity=1 ][line width=0.75]    (10.93,-3.29) .. controls (6.95,-1.4) and (3.31,-0.3) .. (0,0) .. controls (3.31,0.3) and (6.95,1.4) .. (10.93,3.29)   ;

\draw [color={rgb, 255:red, 126; green, 211; blue, 33 }  ,draw opacity=1 ][fill={rgb, 255:red, 126; green, 211; blue, 33 }  ,fill opacity=1 ]   (485,26) -- (485,13) ;
\draw [shift={(485,11)}, rotate = 450] [color={rgb, 255:red, 126; green, 211; blue, 33 }  ,draw opacity=1 ][line width=0.75]    (10.93,-3.29) .. controls (6.95,-1.4) and (3.31,-0.3) .. (0,0) .. controls (3.31,0.3) and (6.95,1.4) .. (10.93,3.29)   ;

\draw (462,263) node   {$L'_{1} \ =\ \langle 1\rangle$};
\draw (470,228) node   {$L'_{2} \ =\ \langle 5,2\rangle$};
\draw (462,195) node   {$L'_{3} \ =\ \langle 4\rangle$};
\draw (162,81) node   {$\langle 1,\ 5,\ 2,\ \color{rgb, 255:red, 126; green, 211; blue, 33 }3\color{black}\ ,\ 4,\ 6,\ 7,\ 9,\ 10,\ 8\rangle$};
\draw (156,247) node   {$\langle 1,\ 5,\ 2\ ,\ 4,\ 6,\ 7,\ 9,\ 10,\ 8\rangle$};
\draw (461,155) node   {$L'_{4} \ =\ \langle 6\rangle$};
\draw (461,115) node   {$L'_{5} \ =\ \langle 7\rangle$};
\draw (470,75) node   {$L'_{6} \ =\ \langle 9,8\rangle$};
\draw (466,39) node   {$L'_{7} \ =\ \langle 10\rangle$};
\draw (604,241) node [color={rgb, 255:red, 126; green, 211; blue, 33 }  ,opacity=1 ]  {$3$};

\end{tikzpicture}

\caption{This example illustrates how an element addition is handled in the algorithm of Chen, Chu, and Pinsker~\cite{chen2013dynamic}. Inserting element $3$ to the array changes the levels of the elements. Upward arrows show that the level of the corresponding element increases after we add $3$ to the array.}\label{fig:chan}
\end{figure}

When $n$ elements are given, their runtime for constructing the data structure is $\tilde O(n)$ since patience sorting gives us all the levels in time $\tilde O(n)$ and the balanced trees can be constructed in time $\tilde O(n)$ for all $L'_k$.

	
\end{document}